\documentclass[10pt]{article}

\usepackage{amsfonts,amsmath,amssymb,amsthm}
\usepackage[pdftex]{graphicx}
\usepackage[hmargin=1in,vmargin=1in]{geometry}
\usepackage{natbib,setspace,enumerate,mathtools}
\usepackage{multirow}
\usepackage{booktabs}
\allowdisplaybreaks
\usepackage[toc,title,titletoc,header]{appendix}
\usepackage{adjustbox}
\usepackage[colorlinks,citecolor=blue]{hyperref}
\usepackage{etoolbox} 
\usepackage{threeparttable}
\bibliographystyle{ims}

\def\qed{\rule{2mm}{2mm}}
\def\independent{\perp \!\!\! \perp}

\parskip = 1.5ex plus 0.5 ex minus0.2 ex

\newtheorem{theorem}{Theorem}[section]
\newtheorem{lemma}{Lemma}[section]

\theoremstyle{definition}

\newtheorem{remark}{Remark}[section]
\newtheorem{assumption}{Assumption}[section]

\AtEndEnvironment{remark}{~\qed}
\AtEndEnvironment{example}{~\qed}

\DeclareMathOperator*{\var}{Var}
\DeclareMathOperator*{\cov}{Cov}

\newcommand{\mycomment}[1]{}

\begin{document}

\author{
Yuehao Bai \\
Department of Economics\\
University of Southern California \\
\url{yuehao.bai@usc.edu}
\and
Jizhou Liu \\
Booth School of Business \\
\textcolor{white}{111} University of Chicago \textcolor{white}{11}\\
\url{jliu32@chicagobooth.edu}
\and
Azeem M.\ Shaikh\\
Department of Economics\\
University of Chicago \\
\url{amshaikh@uchicago.edu}
\and
Max Tabord-Meehan\\
Department of Economics\\
University of Chicago \\
\url{maxtm@uchicago.edu}
}

\bigskip

\title{Inference in Cluster Randomized Trials with Matched Pairs \thanks{We would like to thank seminar and conference participants at Aarhus University, Canadian Economics Association Conference, CIREQ, Indiana University, NAWMES, NYU, Ohio State University, Princeton University, Southern Economic Assocation Conference, University of Southern California, University of Wisconsin-Madison, and Yale University for helpful comments on this paper. We thank Xun Huang for providing excellent research assistance. The fourth author acknowledges support from NSF grant SES-2149408.}}

\maketitle

\vspace{-0.3in}

\begin{spacing}{1.2}
\begin{abstract}
This paper studies inference in cluster randomized trials where treatment status is determined according to a ``matched pairs'' design.  Here, by a cluster randomized experiment, we mean one in which treatment is assigned at the level of the cluster; by a ``matched pairs'' design, we mean that a sample of clusters is paired according to baseline, cluster-level covariates and, within each pair, one cluster is selected at random for treatment. We study the large-sample behavior of a weighted difference-in-means estimator and derive two distinct sets of results depending on if the matching procedure does or does not match on cluster size. We then propose a single variance estimator which is consistent in either regime. Combining these results establishes the asymptotic exactness of tests based on these estimators. Next, we consider the properties of two common testing procedures based on $t$-tests constructed from linear regressions, and argue that both are generally conservative in our framework. We additionally study the behavior of a randomization test which permutes the treatment status for clusters within pairs, and establish its finite-sample and asymptotic validity for testing specific null hypotheses. Finally, we propose a covariate-adjusted estimator which adjusts for additional baseline covariates not used for treatment assignment, and establish conditions under which such an estimator leads to strict improvements in precision. A simulation study confirms the practical relevance of our theoretical results.
\end{abstract}
\end{spacing}

\noindent KEYWORDS: Experiment, matched pairs, cluster-level randomization, randomized controlled trial, treatment assignment

\noindent JEL classification codes: C12, C14

\thispagestyle{empty} 
\newpage
\setcounter{page}{1}

\section{Introduction}
This paper studies the problem of inference in cluster randomized experiments where treatment status is determined according to a ``matched pairs'' design.  Here, by a cluster randomized experiment, we mean one in which treatment is assigned at the level of the cluster; by a ``matched pairs'' design, we mean that the sample of clusters is paired according to baseline, cluster-level covariates and, within each pair, one cluster is selected at random for treatment. Cluster matched pair designs feature prominently in all parts of the sciences: examples in economics include \cite{angrist2009effects}, \cite{fryer2014injecting}, \cite{banerjee2015miracle}, \cite{crepon2015estimating}, \cite{bruhn2016impact}, \cite{glewwe2016better}, \cite{fryer2018pupil} and \cite{romero2020outsourcing}.

Following recent work in \cite{bugni2024inference}, we develop our results in a sampling framework where clusters are realized as a random sample from a population of clusters. Importantly, in this framework cluster sizes are modeled as random and ``non-ignorable," meaning that ``large" clusters and ``small" clusters may be heterogeneous, and, in particular, the effects of the treatment may vary across clusters of differing sizes. The framework additionally allows for the possibility of two-stage sampling, in which a subset of units is sampled from the set of units within each sampled cluster.

We first study the large-sample behavior of a weighted difference-in-means estimator under two distinct sets of assumptions on the matching procedure. Specifically, we distinguish between settings where the matching procedure does or does not match on a function of cluster size. For both cases, we establish conditions under which our estimator is asymptotically normal and derive simple, closed-form expressions for the asymptotic variance. Using these results, we establish formally that employing cluster size as a matching variable in addition to baseline covariates delivers a weak (and often strict) improvement in asymptotic efficiency relative to matching on baseline covariates alone, and in fact achieves full efficiency in a broad class of experimental designs: see Remark \ref{rem:N_efficient} for further discussion. We then propose a variance estimator which is consistent for either asymptotic variance depending on the nature of the matching procedure. Combining these results establishes the asymptotic exactness of tests based on our estimators. 
 
 We then consider the asymptotic properties of two commonly recommended inference procedures based on linear regressions of the individual-level outcomes on a constant and cluster-level treatment. The first inference procedure clusters at the level of treatment assignment. The second inference procedure clusters at the level of assignment pairs, as recently recommended in \cite{de2019level}. We establish that both procedures are generally conservative in our framework. 

Next, we study the behavior of a randomization test which permutes the treatment status for clusters within pairs. We establish the finite-sample validity of such a test for testing a certain null hypothesis related to the equality of potential outcome distributions under treatment and control, and then establish asymptotic validity for testing null hypotheses about the size-weighted average treatment effect. We emphasize, however, that the latter result relies heavily on our choice of test statistic, which is studentized using our novel variance estimator. In simulations, we find that this randomization test controls size more reliably than any of the other inference procedures we consider in the paper, while delivering comparable power.

Finally, we derive large-sample results for a covariate-adjusted version of our estimator, which is designed to improve precision by exploiting additional baseline covariates which were not used for treatment assignment. As discussed in \cite{bai2024covariate} and \cite{cytrynbaum2023covariate}, standard covariate adjustments based on a regression using treatment-covariate interactions \citep[see, for instance,][for a succinct treatment]{negi2021revisiting} are not guaranteed to improve efficiency when treatment assignment is not completely randomized. For this reason, we consider a modified version of the estimator developed in \cite{bai2024covariate} for individual-level matched pair experiments. Our results show that our covariate-adjusted estimator is guaranteed to improve asymptotic efficiency relative to the unadjusted estimator. 

The analysis of data from cluster randomized experiments and data from experiments with matched pairs has received considerable attention \citep[see][for general overviews]{donner2000design, athey2017econometrics, hayes2017cluster}, but most recent work has focused on only one of these two features at a time. Recent work on the analysis of cluster randomized experiments includes  \cite{middleton2015unbiased}, \cite{su2021model}, \cite{schochet2021design}, and \cite{wang2022model} \citep[see][for a general discussion of this literature as well as further references]{bugni2024inference}. We note in particular that both \cite{middleton2015unbiased} and \cite{su2021model} discuss the benefits of using cluster size as a covariate in regression adjustment in the context of completely randomized experiments. Recent work on the analysis of matched pairs experiments includes \cite{jiang2020bootstrap}, \cite{cytrynbaum2021designing}, \cite{bai2024inference}, and \cite{bai2022optimality} \citep[see][for a discussion of this literature as well as further references]{bai2022inference}. Two papers which focus specifically on the analysis of cluster randomized experiments with matched pairs are \cite{imai2009essential} and \cite{de2019level}. Both papers maintain a finite-population perspective, where the primary source of uncertainty is ``design-based," stemming from the randomness in treatment assignment. In such a framework, both papers study the finite and large-sample behavior of difference-in-means type estimators and propose corresponding variance estimators which are shown to be conservative. In contrast, our paper maintains a ``super-population" sampling framework and proposes a novel variance estimator which is shown to be asymptotically exact in our setting. In Appendix \ref{sec:sims-finpop}, we repeat some of the simulation exercises we consider in the main text in a design-based framework. There we illustrate that our estimator may have benefits in the design-based framework as well.

The remainder of the paper is organized as follows. In Section \ref{sec:setup} we describe our setup and notation. Section \ref{sec:main} presents our main results. Section \ref{sec:simulations} studies the finite-sample behavior of our proposed tests via a simulation study. We conclude with recommendations for empirical practice in Section \ref{sec:recommendations}.

\section{Setup and Notation} \label{sec:setup}
In this section we introduce the notation and assumptions which are common to both matching procedures considered in Section \ref{sec:main}. We broadly follow the setup and notation developed in \cite{bugni2024inference}. Let $Y_{i,g} \in \mathbf{R}$ denote the (observed) outcome of interest for the $i$th unit in the $g$th cluster, $D_g \in \{0, 1\}$ denote the treatment received by the $g$th cluster, $X_g \in \mathbf{R}^k$ the observed, baseline covariates for the $g$th cluster, and $N_g \in \mathbf{Z}_+$ the size of the $g$th cluster. In what follows we sometimes refer to the vector $(X_g, N_g)$ as $W_g$. Further denote by $Y_{i,g}(d)$ the potential outcome of the $i$th unit in cluster $g$, when all units in the $g$th cluster receive treatment $d \in \{0, 1\}$. As usual, the observed outcome and potential outcomes are related to treatment assignment by the relationship
\begin{equation}\label{eq:potential}
  Y_{i,g} = Y_{i,g}(1)D_g + Y_{i,g}(0)(1 - D_g)~.
\end{equation}
In addition, define $\mathcal{M}_g$ to be the (possibly random) subset of $\{1, 2, \ldots, N_g\}$ corresponding to the observations within the $g$th cluster that are sampled by the researcher. We emphasize that a realization of $\mathcal{M}_g$ is a \emph{set} whose cardinality we denote by $|\mathcal{M}_g|$, whereas a realization of $N_g$ is a positive integer. For example, in the event that all observations in a cluster are sampled, $\mathcal{M}_g = \{1, \ldots, N_g\}$ and $|\mathcal{M}_g| = N_g$. We assume throughout that our sample consists of $2G$ clusters and denote by $P_G$ the distribution of the observed data $$Z^{(G)} := (((Y_{i,g} : i \in \mathcal{M}_g), D_g, X_g, N_g) : 1 \leq g \leq 2G )~,$$
and by $ Q_G$ the distribution of $$(((Y_{i,g}(1),Y_{i,g}(0) : 1 \le i \le N_g), \mathcal{M}_g, X_g, N_g) : 1 \leq g \leq 2G )~.$$  Note that $P_G$ is determined jointly by \eqref{eq:potential} together with the distribution of $D^{(G)} := (D_g : 1 \leq g \leq 2G)$ and $Q_G$, so we will state our assumptions below in terms of these two quantities.

We now describe some preliminary assumptions on $Q_G$ that we maintain throughout the paper.  In order to do so, it is useful to introduce some further notation.  To this end, for $d \in \{0,1\}$, define $$\bar Y_g(d) := \frac{1}{|\mathcal{M}_g|} \sum_{i \in \mathcal{M}_g} Y_{i,g}(d)~.$$  Further define $R_G(\mathcal{M}_g^{(G)}, X^{(G)}, N^{(G)})$ to be the distribution of $$((Y_{i,g}(1), Y_{i,g}(0) : 1 \le i \le N_g) : 1 \leq g \leq 2G) ~\big |~ \mathcal{M}_g^{(G)}, X^{(G)}, N^{(G)}~,$$ where $\mathcal{M}_g^{(G)} := (\mathcal{M}_g : 1 \leq g \leq 2G)$, $X^{(G)} := (X_g : 1 \leq g \leq 2G)$ and $N^{(G)} := (N_g : 1 \leq g \leq 2G)$.  Note that $Q_G$ is completely determined by $R_G(\mathcal{M}_g^{(G)}, X^{(G)}, N^{(G)})$ and the distribution of $(\mathcal{M}_g^{(G)}, X^{(G)}, N^{(G)})$.  The following assumption states our main requirements on $Q_G$ using this notation.
\begin{assumption} \label{ass:QG}
The distribution $Q_G$ is such that \vspace{-.25cm}
\begin{itemize}
\item[(a)] $\{(\mathcal{M}_g,X_g,N_g), 1 \leq g \leq 2G\}$ is an i.i.d.\ sequence of random variables.
\item[(b)] For some family of distributions $\{R(m,x,n) : (m,x,n) \in \text{supp}(\mathcal{M}_g,X_g,N_g)\}$, $$R_G(\mathcal{M}_g^{(G)}, X^{(G)}, N^{(G)}) = \prod_{1 \leq g \leq 2G} R(\mathcal{M}_g,X_g,N_g)~.$$
\item[(c)] $P\{|\mathcal{M}_g| \geq 1\} = 1$ and $E[N_g^2] < \infty$.
\item[(d)] For some $c < \infty$, $P\{E[Y^2_{i,g}(d)| X_g, N_g] \leq c \text{ for all } 1 \leq i \leq N_g \} = 1$ for all $d \in \{0,1\}$ and $1 \leq g \leq 2G$.
\item[(e)] $\mathcal{M}_g \independent (Y_{i,g}(1),Y_{i,g}(0) : 1 \leq i \leq N_g) ~ \big | ~ X_g, N_g$ for all $1 \leq g \leq 2G$.
\item[(f)] For $d \in \{0,1\}$ and $1 \leq g \leq 2G$, $$E[\bar Y_g(d)|N_g] = E\left [\frac{1}{N_g}\sum_{1 \leq i \leq N_g} Y_{i,g}(d) \Big |N_g \right] ~\text{w.p.1} ~.$$
\end{itemize}
\end{assumption}
For completeness, we reproduce some of the observations from \cite{bugni2024inference} regarding these assumptions. Assumptions \ref{ass:QG}(a)--(b) formalize the idea that our sample consists of an i.i.d\ sample of clusters whose cluster sizes are random and potentially related to the potential outcomes. As shown in \cite{bugni2024inference}, an important implication of Assumptions \ref{ass:QG}(a)--(b) for our purposes is that 
\begin{equation} \label{eq:iidclusters}
\left \{(\bar{Y}_g(1), \bar{Y}_g(0), |\mathcal{M}_g|, X_g, N_g \right), 1 \leq g \leq 2G\}~,
\end{equation}
is an i.i.d.\ sequence of random vectors. Assumptions \ref{ass:QG}(c)--(d) impose some mild regularity on the (conditional) moments of the distribution of cluster sizes and potential outcomes, in order to permit the application of relevant laws of large numbers and central limit theorems. Note that Assumption \ref{ass:QG}(c) does not rule out the possibility of observing arbitrarily large clusters, but does place restrictions on the heterogeneity of cluster sizes. For instance, two consequences of Assumptions \ref{ass:QG}(a) and (c) are that
\[\frac{\sum_{1\le g \le G}N_g^2}{\sum_{1\le g\le G}N_g} = O_P(1)~,\]
and
\[\frac{\max_{1\le g \le G}N_g^2}{\sum_{1 \le g \le G}N_g} \xrightarrow{P} 0~,\]
which mirror heterogeneity restrictions imposed in the analysis of clustered data when cluster sizes are modeled as non-random \citep[see for example Assumption 2 in][]{hansen2019asymptotic}. We use Assumption \ref{ass:QG}(c) extensively when establishing asymptotic normality in Theorems \ref{thm:normal_X} and \ref{thm:normal_N}; recent work by \cite{sasaki2022non} and \cite{chiang2023cluster}, however, suggests that one may be able to sometimes obtain asymptotic normality even when $E[N_g^2] = \infty$, provided that certain delicate conditions about the tail behavior of $N_g$ are satisfied. When the tails of the distribution of $N_g$ are so heavy that asymptotic normality fails, it may be possible to extend the recent work on subsampling based inference in \cite{chiang2023cluster} to our setting, but we leave this extension for future work. 

Assumptions \ref{ass:QG}(e)--(f) impose high-level restrictions on the two-stage sampling procedure. Assumption \ref{ass:QG}(e) allows the subset of observations sampled by the experimenter to depend on $X_g$ and $N_g$, but rules out dependence on the potential outcomes within the cluster itself.
Assumption \ref{ass:QG}(f) is a high-level assumption which guarantees that we can extrapolate from the observations that are sampled to the observations that are not sampled. It can be shown that Assumptions \ref{ass:QG}(e)--(f) are satisfied if $\mathcal{M}_g$ is drawn as a random sample without replacement from $\{1, 2, \ldots, N_g\}$ in an appropriate sense \citep[see Lemma 2.1 in][]{bugni2024inference}.

Our object of interest is the size-weighted cluster-level average treatment effect, which may be expressed in our notation as
\[\Delta(Q_G) =E\left[ \frac{N_g}{E[N_g]}\left(\frac{1}{N_g}\sum_{1 \le i \le N_g}(Y_{i,g}(1) - Y_{i,g}(0))\right)\right] = E\left[ \frac{1}{E[N_g]}\sum_{1 \le i \le N_g}(Y_{i,g}(1) - Y_{i,g}(0))\right]~.\]
This parameter, which weights the cluster-level average treatment effects proportional to cluster size, can be thought of as the average treatment effect where individuals are the unit of interest. Note that Assumptions \ref{ass:QG}(a)--(b) imply that we may express $\Delta(Q_G)$ as a function of $R$ and the common distribution of $(\mathcal{M}_g, X_g, N_g)$. In particular, this implies that $\Delta(Q_G)$ does not depend on $G$. Accordingly, in what follows we simply denote $\Delta = \Delta(Q_G)$. 

In Sections \ref{sec:normal}--\ref{sec:rand-test}, we study the asymptotic behavior of the following size-weighted difference-in-means estimator:
\begin{equation} \label{eq:size-weighted}
    \hat{\Delta}_G := \hat{\mu}_G(1) - \hat{\mu}_G(0)~,
\end{equation}
where
\[\hat{\mu}_G(d) := \frac{1}{N(d)}\sum_{1 \le g \le 2G}I\{D_g = d\}\frac{N_g}{|\mathcal{M}_g|}\sum_{i\in \mathcal{M}_g}Y_{i,g}~,\]
with
\[N(d) := \sum_{1 \le g \le 2G}N_gI\{D_g = d\}~.\]
Note that this estimator may be obtained as the estimator of the coefficient of $D_g$ in a weighted least squares regression of $Y_{i,g}$ on a constant and $D_g$ with weights equal to $N_g/|\mathcal{M}_g|$.
In the special case that all observations in each cluster are sampled, so that $\mathcal{M}_g = \{1, 2, \ldots, N_g\}$ for all $1 \le g \le G$ with probability one, this estimator collapses to the standard difference-in-means estimator. However, it is important to note that outside of this special case, the standard difference-in-means estimator is \emph{not} consistent for the size-weighted average treatment effect $\Delta$, and is instead consistent for an ``$|\mathcal{M}_g|$-weighted" treatment effect; see \cite{bugni2024inference} for details. In Section \ref{sec:adjust} we consider a covariate-adjusted modification of $\hat{\Delta}_G$ which is designed to incorporate additional baseline covariates which were not used for treatment assignment. 


\begin{remark}\label{rem:BFE}
Following the recommendations in \cite{bruhn2009pursuit} and \cite{glennerster2013running}, it is common practice to conduct inference in matched pair experiments using the standard errors obtained from a regression of individual level outcomes on treatment and a collection of pair-level fixed effects. We do not analyze the asymptotic properties of such an approach for two reasons. First, in the context of individual-level randomized experiments, \cite{bai2022inference} and \cite{bai2024inference} argue that such a regression estimator is in fact numerically equivalent to the simple difference-in-means estimator, but that the resulting standard errors are generally conservative (and in some cases possibly invalid). This result generalizes immediately to the clustered setting in the special case where all clusters are the same size and $\mathcal{M}_g = \{1, 2, \ldots, N_g\}$ so that all units in each cluster are sampled. Second, when cluster sizes vary, this numerical equivalence no longer holds, and in such cases \cite{de2019level} argue (in an alternative inferential framework) that the corresponding regression estimator may no longer be consistent for the average treatment effect of interest. 
\end{remark}

\begin{remark}
\cite{bugni2024inference} also define an alternative treatment effect parameter given by
\[\Delta^{\rm eq}(Q_G) = E\left[ \frac{1}{N_g}\sum_{1 \le i \le N_g}(Y_{i,g}(1) - Y_{i,g}(0))\right]~.\]
This parameter, which weights the cluster-level average treatment effects equally regardless of cluster size, can be thought of as the average treatment effect where the clusters themselves are the units of interest. Note that since we do not assume that cluster sizes are ``ignorable," i.e. we allow for the average treatment effect to vary with cluster size, $\Delta^{\rm eq}$ and $\Delta$ are indeed distinct parameters with differing policy implications; see \cite{bugni2024inference} for a detailed discussion and relevant empirical examples. We focus exclusively on the analysis of $\Delta$ for two reasons: first, as discussed further in \cite{bugni2024inference}, we view  $\Delta$ as the parameter most likely to be of practical interest; second, because the analysis of $\Delta^{\rm eq}$ for matched-pair designs follows directly from the analysis for individual-level randomized experiments developed in \cite{bai2022inference}, by applying their results to the data obtained from the cluster-level averages $\{(\bar{Y}_g, D_g, X_g, N_g): 1 \le g \le 2G\}$, where $\bar{Y}_g = \frac{1}{|\mathcal{M}_g|}\sum_{i \in \mathcal{M}_g}Y_{i,g}$. As a result, we do not pursue a detailed description of inference for this parameter in the paper. 
\end{remark}

\section{Main Results} \label{sec:main}
\subsection{Asymptotic Behavior of $\hat{\Delta}_G$ for Cluster-Matched Pair Designs}\label{sec:normal}
In this section, we consider the asymptotic behavior of $\hat{\Delta}_G$ for two distinct types of cluster-matched pair designs. Section \ref{sec:no_size} studies a setting where cluster size is \emph{not} used as a matching variable when forming pairs. Section \ref{sec:size} considers the setting where we do allow for pairs to be matched based on cluster size in an appropriate sense made formal below.
\subsubsection{Not Matching on Cluster Size}\label{sec:no_size}
In this section, we consider a setting where cluster size is not used as a matching variable. First, we describe our formal assumptions on the mechanism determining treatment assignment. The $G$ pairs of matched clusters may be represented by the sets 
\[\{\pi(2j - 1), \pi(2j)\} \text{ for } j = 1, \dots, G,\]
where $\pi = \pi_G(X^{(G)})$ is a permutation of $2G$ elements, and the right-hand side of this equality emphasizes that, since the permutation represents the result of the matching procedure, it is in fact a function of the cluster-level covariates $X^{(G)}$. Given such a $\pi$, we assume that treatment status is assigned as follows:
\begin{assumption}\label{ass:indep_pairsX} 
Treatment status is assigned so that 
\[\left\{\left((Y_{i,g}(1), Y_{i,g}(0): 1 \le i \le N_g), N_g, \mathcal{M}_g\right)\right\}_{g=1}^{2G} \independent D^{(G)} | X^{(G)}~.\]
Conditional on $X^{(G)}$, $(D_{\pi(2j-1)}, D_{\pi(2j)})$, $j = 1, ..., G$ are i.i.d.\ and each uniformly distributed over $\{(0,1), (1,0)\}$.
\end{assumption}

Assumption \ref{ass:indep_pairsX} states that, after pairs are formed according to the baseline covariates, which cluster is treated in a pair is determined by a coin flip independently of all other variables. We further require that the clusters in each pair be ``close" in terms of their baseline covariates in the following sense:
\begin{assumption}\label{ass:pair_formX}
The pairs used in determining treatment assignment satisfy
\[\frac{1}{G}\sum_{1 \leq j \leq G} \|X_{\pi(2j)} - X_{\pi(2j-1)}\|^2 \xrightarrow{P} 0~,\]
as $G \to \infty$.
\end{assumption}
\cite{bai2022inference} provide results which facilitate the construction of pairs which satisfy Assumption \ref{ass:pair_formX}. For instance, if $\mathrm{dim}(X_g) = 1$, then by simply pairing clusters by ordering them from smallest to largest according to $X_g$ and then pairing adjacent clusters, it follows from Theorem 4.1 in \cite{bai2022inference} that Assumption \ref{ass:pair_formX} is satisfied if $E[X_g^2] < \infty$. When $\dim(X_g) > 1$ and a suitable matching procedure is used (for instance the {\tt nbpmatching} package in \texttt{R}), it follows from the discussion in Appendix \ref{sec:match_sufficient} that Assumption \ref{ass:pair_formX} is satisfied when $E[\|X_g\|^d] < \infty$ for $d \geq \mathrm{dim}(X_g) + 1 $ . 

Next, we state the additional assumptions on $Q_G$ we require beyond those stated in Assumption \ref{ass:QG}:
\begin{assumption}\label{ass:DGP2}
The distribution $Q_G$ is such that 
\begin{itemize}
\item[(a)] $E[\bar{Y}_g^r(d)N_g^\ell|X_g = x]$, are Lipschitz for $d \in \{0, 1\}$, $r, \ell \in \{0, 1, 2\}$~,
\item[(b)] For some $C < \infty$, $P\{E[N_g|X_g] \le C\} = 1$~. 
\end{itemize}
\end{assumption}
Assumption \ref{ass:DGP2}(a) is a smoothness requirement analogous to Assumption 2.1(c) in \cite{bai2022inference} that ensures that units within clusters which are ``close" in terms of their baseline covariates are suitably comparable. If $X_g$ is discrete and clusters are matched perfectly in that the distance between pairs in Assumption \ref{ass:pair_formX} is zero, Assumption \ref{ass:DGP2}(a) is not needed. Assumption \ref{ass:DGP2}(b) imposes an additional restriction on the distribution of cluster sizes beyond what is stated in Assumption \ref{ass:QG}(c). Under these assumptions, we obtain the following result:
\begin{theorem}\label{thm:normal_X}
Under Assumptions \ref{ass:QG} and \ref{ass:indep_pairsX}--\ref{ass:DGP2},
\[\sqrt{G}(\hat{\Delta}_G - \Delta) \xrightarrow{d} N(0, \omega^2)\]
as $G \rightarrow \infty$, where 
\[\omega^2 = E[\tilde Y_g^2(1)] + E[\tilde Y_g^2(0)] - \frac{1}{2} E[(E[\tilde{Y}_g(1) + \tilde{Y}_g(0) | X_g])^2]~,\]
with
\[\tilde Y_g(d) = \frac{N_g}{E[N_g]} \left ( \bar{Y}_g(d) - \frac{E[\bar{Y}_g(d) N_g]}{E[N_g]} \right )~.\]
\end{theorem}
The proof of Theorem \ref{thm:normal_X} proceeds by studying the joint distribution of the random numerators and denominators of $\hat{\mu}_G(d)$ for $d \in \{0, 1\}$ using techniques similar to those used in \cite{bai2022inference}, carefully taking into consideration the potential dependence between cluster sizes and outcomes, and then applying the Delta method. Remarkably, the resulting asymptotic variance we obtain in Theorem \ref{thm:normal_X} corresponds exactly to the asymptotic variance of the difference-in-means estimator for matched pairs designs with individual-level assignment \citep[as derived in][]{bai2022inference}, but with transformed cluster-level potential outcomes given by $\tilde{Y}_g(d)$. Accordingly, our result collapses exactly to theirs when $P\{N_g = 1\} = 1$. 

\begin{remark}\label{rem:X_efficient}
Theorem \ref{thm:normal_X} also quantifies the gain in precision obtained from using a matched pairs design versus complete randomization (i.e., assigning half of the clusters to treatment at random): it can be shown that the limiting distribution of $\hat{\Delta}_G$ under complete randomization is given by 
\[\sqrt{G}(\hat{\Delta}_G - \Delta) \xrightarrow{d} N(0, \omega^2_0)~,\]
where $\omega^2_0 = E[\tilde{Y}_g^2(1)]+ E[\tilde{Y}_g^2(0)]$. We thus immediately obtain that $\omega^2 \le \omega^2_0$. Moreover, this inequality is strict unless $E[\tilde{Y}_g(1) + \tilde{Y}_g(0) | X_g] = 0$, which holds for instance when the whole vector of individual potential outcomes, the cluster size, and sampling indicators are independent from $X_g$. This gain in precision echos similar findings for individual-level randomization in \cite{bai2022inference} and \cite{bai2022optimality}.
\end{remark}

\subsubsection{Matching on Cluster Size}\label{sec:size}
In this section, we repeat the exercise in Section \ref{sec:no_size} in a setting where the assignment mechanism matches on baseline characteristics \emph{and} (some function of) cluster size in an appropriate sense to be made formal below. Recall the definition $W_g = (X_g, N_g)$, and let $W^{(G)}:=(W_g: 1 \le g \le 2G)$. First, we describe how to modify our assumptions on the mechanism determining treatment assignment. The $G$ pairs of clusters are still represented by the sets 
\[\{\pi(2j - 1), \pi(2j)\} \text{ for } j = 1, \dots, G~,\]
however, now we allow the permutation $\pi = \pi_G(W^{(G)})$ which determines the pairing to depend on cluster sizes as well as $X^{(G)}$. Given such a $\pi$, we now assume that treatment status is assigned as follows:
\begin{assumption}\label{ass:indep_pairs} 
Treatment status is assigned so that 
\[\{((Y_{i,g}(1), Y_{i,g}(0): 1 \le i \le N_g),\mathcal{M}_g)\}_{g=1}^{2G} \independent D^{(G)} | W^{(G)}~.\]
Conditional on $W^{(G)}$, $(D_{\pi(2g-1)}, D_{\pi(2g)})$, $g = 1, ..., G$ are i.i.d.\ and each uniformly distributed over $\{(0,1), (1,0)\}$.
\end{assumption}

We also require some modifications on our regularity conditions for how pairs are formed and our smoothness requirements on the potential outcomes; we provide further discussion in Remark \ref{rem:N_regular} below:

\begin{assumption}\label{ass:pair_form}
The pairs used in determining treatment assignment satisfy $E[N_g^4] < \infty$ and
\begin{equation} \label{eq:r1234}
   \frac{1}{G}\sum_{1 \leq j \leq G} \|W_{\pi(2j)} - W_{\pi(2j-1)}\|^4 \xrightarrow{P} 0~. 
\end{equation}
\end{assumption}

\begin{assumption}\label{ass:DGP} The distribution $Q_G$ is such that $E[\bar{Y}_g^r(d)|W_g = w]$ are Lipschitz for $d \in \{0, 1\}$, $r \in \{1, 2\}$.
\end{assumption}
\begin{remark}\label{rem:match_sufficient}
    We show in Appendix \ref{sec:match_sufficient} that a sufficient condition for \eqref{eq:r1234} when using suitable matching algorithms is that $E[\|W_g\|^d] < \infty$ for some $d \geq \mathrm{dim}(W_g) + 3 = \mathrm{dim}(X_g) + 4$. Note further that if $W_g$ is bounded, then
    \[ \frac{1}{G}\sum_{1 \le j \le G}\|W_{\pi(2j)} - W_{\pi(2j-1)}\|^4 \le C\left(\frac{1}{G}\sum_{1 \le j \le G}\|W_{\pi(2j)} - W_{\pi(2j-1)}\|^2\right)~, \]
    for some constant $C > 0$, and therefore any algorithm that minimizes the right-hand of the above display (for instance, the {\tt nbpmatching} algorithm in \texttt{R}) will satisfy Assumption \ref{ass:pair_form}.
\end{remark}

Under our modified matching procedure and regularity conditions, we obtain the following analog to Theorem \ref{thm:normal_X}:
\begin{theorem}\label{thm:normal_N}
Under Assumptions \ref{ass:QG} and \ref{ass:indep_pairs}--\ref{ass:DGP}, 
\[\sqrt{G}(\hat\Delta_G - \Delta) \xrightarrow{d} N(0, \nu^2)~,\]
as $G \rightarrow \infty$, where 
\begin{equation} \label{eq:limitvar-n}
    \nu^2 = E[\tilde Y_g^2(1)] + E[\tilde Y_g^2(0)] - \frac{1}{2} E[(E[\tilde{Y}_g(1) + \tilde{Y}_g(0) | X_g, N_g])^2]~,
\end{equation}
with
\[\tilde Y_g(d) = \frac{N_g}{E[N_g]} \left ( \bar{Y}_g(d) - \frac{E[\bar{Y}_g(d) N_g]}{E[N_g]} \right )~.\]
\end{theorem}
Note that the asymptotic variance $\nu^2$ has exactly the same form as $\omega^2$ from Section \ref{sec:no_size}, with the only difference being that the final term of the expression conditions on both cluster characteristics $X_g$ and cluster size $N_g$. 

\begin{remark}\label{rem:N_efficient}
Theorem \ref{thm:normal_N} demonstrates the gain in precision obtained from matching on cluster size and cluster characteristics versus simply matching on cluster characteristics, thus formalizing a conjecture presented in \cite{imbens2011experimental}. To see this, note that by comparing $\omega^2$ and $\nu^2$ we obtain that 
\[\omega^2 - \nu^2 = -\frac{1}{2}\left(E[E[\tilde{Y}_g(1) + \tilde{Y}_g(0)|X_g]^2] - E[E[\tilde{Y}_g(1) + \tilde{Y}_g(0)|X_g, N_g]^2]\right)~.\]
It then follows by the law of iterated expectations and Jensen's inequality that $\omega^2 \ge \nu^2$, and the inequality is strict unless $E[\tilde{Y}_g(1) + \tilde{Y}_g(0)|X_g,N_g] = E[\tilde{Y}_g(1) + \tilde{Y}_g(0)|X_g]$ with probability one. A simplified sufficient condition for this to hold is that $E[N_g\bar{Y}_g(d)|X_g, N_g] = E[N_g\bar{Y}_g(d)|X_g]$ for $d \in \{0, 1\}$ and $N_g = E[N_g|X_g]$; the latter condition essentially implying that $N_g$ can be perfectly predicted by $X_g$. Moreover, it can be shown that $\nu^2$ attains the efficiency bound derived in \cite{bai2024efficiency} over a broad class of treatment assignments which maintain that each cluster is treated with marginal probability one-half, including in particular matched pairs as a special case. 
\end{remark}

\begin{remark}\label{rem:N_regular}
We note that Assumptions \ref{ass:pair_formX}--\ref{ass:DGP2} differ from Assumptions \ref{ass:pair_form}--\ref{ass:DGP} because of the special role that $N_g$ plays in the definition of $\Delta$ relative to the other observable characteristics. For instance, we impose Assumption \ref{ass:DGP} instead of \ref{ass:DGP2} to avoid assuming that $E[N_g^2\bar{Y}_g(d)|W_g = w]$ is a Lipschitz function in $w$, which would fail unless $N_g$ were bounded since $N_g$ is part of $W_g$.
\end{remark}

\subsection{Variance Estimation}\label{sec:variance-estimate}
In this section, we construct variance estimators for the asymptotic variances $\omega^2$ and $\nu^2$ obtained in Section \ref{sec:normal}. In fact, we propose a \emph{single} variance estimator that is consistent for \emph{both} $\omega^2$ and $\nu^2$ depending on the nature of the matching procedure. As noted in the discussion following Theorem \ref{thm:normal_X}, the expressions for $\omega^2$ and $\nu^2$ correspond exactly to the asymptotic variance obtained in \cite{bai2022inference} with the individual-level outcome replaced by a cluster-level transformed outcome. We thus follow the variance construction from \cite{bai2022inference}, but replace the individual outcomes with feasible versions of these transformed outcomes. To that end, consider the observed adjusted outcome defined as:
\begin{align*}
    \hat Y_{g} = \frac{N_{g}}{ \frac{1}{2G}\sum_{1\leq j \leq 2G} N_j }\left(\bar{Y}_{g}-\frac{\frac{1}{G}\sum_{1\leq j \leq 2G} \bar{Y}_{j} I\{D_j = D_g\} N_j}{\frac{1}{G}\sum_{1\leq j \leq 2G} I\{D_j = D_g\} N_j}\right)~,
\end{align*}
where
\begin{equation*}
    \bar{Y}_{g} = \frac{1}{|\mathcal{M}_g|} \sum_{i \in \mathcal{M}_g} Y_{i,g}~.
\end{equation*}
We then propose the following variance estimator:
\begin{equation}\label{eqn:define-variance-estimator}
    \hat{v}_{G}^{2}=\hat{\tau}_{G}^{2}-\frac{1}{2}\hat{\lambda}_{G}^{2}~,
\end{equation}
where
\begin{align*}
\hat{\tau}_{G}^{2}=& \frac{1}{G} \sum_{1 \leq j \leq G}\left(\hat Y_{\pi(2 j)}-\hat Y_{\pi(2 j-1)}\right)^{2} \\
\hat{\lambda}_{G}^{2}=& \frac{2}{G} \sum_{1 \leq j \leq\lfloor G/2\rfloor} \left(\hat Y_{\pi(4 j-3)}-\hat Y_{\pi(4 j-2)}\right) \left(\hat Y_{\pi(4 j-1)}- \hat Y_{\pi(4j)} \right)(D_{\pi(4 j-3)}-D_{\pi(4 j-2)})(D_{\pi(4 j-1)}-D_{\pi(4 j)})~.
\end{align*}
Note that the construction of $\hat{v}^2_G$ can be motivated using the same intuition as the variance estimators studied in \cite{bai2022inference} and \cite{bai2024inference}: to consistently estimate quantities like (for instance) $E[E[\tilde{Y}_g(1)|X_g]^2]$ which appear in $\omega^2$, ideally we would like to average over the products of the average outcomes of two treated clusters with similar values of covariates. By construction, however, only one cluster in each pair is treated, and our solution is to instead average across ``pairs of pairs" of clusters. As a consequence, we will additionally require that the matching algorithm satisfy the condition that ``pairs of pairs" of clusters are sufficiently close in terms of their baseline covariates/cluster size, as formalized in the following two assumptions:

\begin{assumption}\label{ass:close-pairs-of-pairsX}
The pairs used in determining treatment status satisfy
\begin{equation*}
    \frac{1}{G} \sum_{1 \leq j \leq\left\lfloor\frac{G}{2}\right\rfloor}\left\|X_{\pi(4 j-k)}-X_{\pi(4 j-\ell)}\right\|^{2} \stackrel{P}{\rightarrow} 0
\end{equation*}
for any $k \in \{2, 3\}$  and $\ell \in \{0, 1\}$.
\end{assumption}

\begin{assumption}\label{ass:close-pairs-of-pairs}
The pairs used in determining treatment status satisfy $E[N_g^4] < \infty$
\begin{equation*}
    \frac{1}{G} \sum_{1 \leq j \leq\left\lfloor\frac{G}{2}\right\rfloor} \left\|W_{\pi(4 j-k)}-W_{\pi(4 j-\ell)}\right\|^{4} \stackrel{P}{\rightarrow} 0
\end{equation*}
for any $k \in \{2, 3\}$  and $\ell \in \{0, 1\}$.
\end{assumption}

As noted in \cite{bai2022inference}, given pairs which satisfy Assumptions \ref{ass:pair_formX} or \ref{ass:pair_form}, it is possible to reorder the pairs so that Assumptions \ref{ass:close-pairs-of-pairsX} or \ref{ass:close-pairs-of-pairs} are satisfied. We then obtain the following two consistency results for the estimator $\hat{v}_G^2$:

\begin{theorem}\label{thm:variance-estimator}
Suppose Assumption \ref{ass:QG} holds. If additionally Assumptions \ref{ass:indep_pairsX}--\ref{ass:DGP2} and \ref{ass:close-pairs-of-pairsX} hold, then
\begin{equation*}
    \hat{v}_{G}^{2} \xrightarrow{P} \omega^2~.
\end{equation*}
Alternatively, if Assumptions \ref{ass:indep_pairs}--\ref{ass:DGP} and \ref{ass:close-pairs-of-pairs} hold, then
\begin{equation*}
    \hat{v}_{G}^{2} \xrightarrow{P} \nu^2~.
\end{equation*}
\end{theorem}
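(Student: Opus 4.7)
\textbf{Proof plan for Theorem \ref{thm:variance-estimator}.}

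The plan is to reduce the problem to the existing matched-pairs variance-estimation arguments of \cite{bai2021inference} applied to the transformed outcomes $\tilde Y_g(d)$. To that end, introduce the infeasible adjusted outcome
\[
\tilde Y_g^* \;=\; \tilde Y_g(1)\, D_g + \tilde Y_g(0)(1-D_g)
\;=\; \frac{N_g}{E[N_g]}\left(\bar Y_g - \frac{E[\bar Y_g(D_g) N_g]}{E[N_g]}\right),
\]
and define analogously $\tilde\tau_G^2$ and $\tilde\lambda_G^2$ from $\hat\tau_G^2,\hat\lambda_G^2$ with $\hat Y_g$ replaced by $\tilde Y_g^*$. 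The first step is to show that the empirical rescaling/centering quantities in $\hat Y_g$ are consistent for their population counterparts. Specifically, using Assumption \ref{ass:QG} together with the fact that exactly $G$ clusters have each treatment value, the weak law of large numbers for the i.i.d.\ sequence \eqref{eq:iidclusters} and a straightforward conditioning argument on $D^{(G)}$ give
\[
\tfrac{1}{2G}\sum_{j=1}^{2G} N_j \xrightarrow{P} E[N_g], \qquad
\tfrac{1}{G}\sum_{j=1}^{2G} I\{D_j=d\} N_j \xrightarrow{P} E[N_g], \qquad
\tfrac{1}{G}\sum_{j=1}^{2G} I\{D_j=d\} \bar Y_j N_j \xrightarrow{P} E[\bar Y_g(d) N_g],
\]
for $d\in\{0,1\}$. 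Combined with Assumption \ref{ass:QG}(c)--(d) (second moments of $N_g$ and conditional moments of $\bar Y_g(d)$), this implies $\hat Y_g - \tilde Y_g^* = R_g$ where the multiplicative error factor is $o_P(1)$ uniformly and $\frac{1}{G}\sum_g R_g^2 = o_P(1)$. By Cauchy--Schwarz, this reduces the convergence of $\hat\tau_G^2$ and $\hat\lambda_G^2$ to that of $\tilde\tau_G^2$ and $\tilde\lambda_G^2$.

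The second step handles $\tilde\tau_G^2$. Since the assignments within pairs are exchangeable under Assumption \ref{ass:indep_pairsX} (resp.\ \ref{ass:indep_pairs}), $\tilde\tau_G^2$ is structurally the same quantity studied in \cite{bai2021inference}, but with the cluster-level transformed outcomes $\tilde Y_g(d)$ playing the role of the individual outcomes. Invoking Assumption \ref{ass:pair_formX} (or \ref{ass:pair_form}) together with the Lipschitz smoothness in Assumption \ref{ass:DGP2} (or \ref{ass:DGP}) yields that within-pair differences $\tilde Y_{\pi(2j)}^* - \tilde Y_{\pi(2j-1)}^*$ behave as if the covariates in a pair were identical, and the matched-pairs LLN from \cite{bai2021inference} gives
\[
\tilde\tau_G^2 \xrightarrow{P} E[\tilde Y_g^2(1)] + E[\tilde Y_g^2(0)].
\]
The main things to verify here are the required second-moment conditions on $\tilde Y_g(d)$, which follow from Assumption \ref{ass:QG}(c)--(d) together with the extra bound in Assumption \ref{ass:DGP2}(b) in the first case.

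The third and most delicate step is the convergence of $\tilde\lambda_G^2$, which involves sums over pairs of pairs. Conditioning on $X^{(G)}$ (or $W^{(G)}$), the sign factors $(D_{\pi(4j-3)}-D_{\pi(4j-2)})(D_{\pi(4j-1)}-D_{\pi(4j)})$ are i.i.d.\ uniform on $\{-1,+1\}$, and taking expectations under the randomization replaces each within-pair difference by the corresponding conditional ``expected outcome'' function evaluated at the pair's covariates. Using Assumption \ref{ass:close-pairs-of-pairsX} (or \ref{ass:close-pairs-of-pairs}) to argue that paired pairs have similar covariates (resp.\ covariates and cluster size), together with the Lipschitz smoothness of Assumption \ref{ass:DGP2} (resp.\ \ref{ass:DGP}), the sum collapses to an estimator of $E[(E[\tilde Y_g(1)+\tilde Y_g(0)|X_g])^2]$ (resp.\ $E[(E[\tilde Y_g(1)+\tilde Y_g(0)|X_g,N_g])^2]$). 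The main obstacle is the last step: controlling the pairs-of-pairs approximation error when the transformed outcomes involve the random weight $N_g/E[N_g]$. In the matching-on-size case this is handled cleanly because Assumption \ref{ass:close-pairs-of-pairs} already includes the $N_{\pi(4j-k)}^2$ weight, whereas in the no-size-matching case Assumption \ref{ass:DGP2}(b) (boundedness of $E[N_g|X_g]$) together with the unweighted Assumption \ref{ass:close-pairs-of-pairsX} is needed to keep this error asymptotically negligible. Combining the three steps gives $\hat v_G^2 = \tilde\tau_G^2 - \tfrac12 \tilde\lambda_G^2 + o_P(1) \xrightarrow{P} \omega^2$ or $\nu^2$ as claimed.
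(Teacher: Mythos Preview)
Your overall strategy---replace $\hat Y_g$ by the infeasible $\tilde Y_g^*$ using consistency of the sample moments, then invoke the matched-pairs machinery of \cite{bai2021inference} on the transformed cluster-level outcomes---is exactly what the paper does (via Lemmas \ref{lemma:mu_G}--\ref{lemma:lambda_G} and Lemma \ref{lemma:assumptions-for-bai-inference}). However, the probability limits you assert for $\tilde\tau_G^2$ and $\tilde\lambda_G^2$ individually are both wrong, and the heuristics you give for them do not hold up.

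For $\tilde\tau_G^2$: within each pair one cluster is treated and the other is control, so $(\tilde Y_{\pi(2j)}^* - \tilde Y_{\pi(2j-1)}^*)^2$ is always the square of a treated-minus-control difference. Expanding, $\tilde\tau_G^2 = \frac{1}{G}\sum_g (\tilde Y_g^*)^2 - \frac{2}{G}\sum_j \tilde Y_{\pi(2j)}^*\tilde Y_{\pi(2j-1)}^*$. The cross term does \emph{not} vanish: each summand is a product of a treated and a control transformed outcome from clusters with close covariates, and by Lemma S.1.6 of \cite{bai2021inference} it converges to $2E[\mu_1(X_g)\mu_0(X_g)]$ (resp.\ with $W_g$), where $\mu_d(\cdot)=E[\tilde Y_g(d)\mid\cdot]$. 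The correct limit is therefore $E[\tilde Y_g^2(1)]+E[\tilde Y_g^2(0)] - 2E[\mu_1\mu_0]$, not $E[\tilde Y_g^2(1)]+E[\tilde Y_g^2(0)]$; this is precisely the content of the paper's Lemma \ref{lemma:tau_G}.

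For $\tilde\lambda_G^2$: the factor $(D_{\pi(4j-3)}-D_{\pi(4j-2)})(\tilde Y_{\pi(4j-3)}^*-\tilde Y_{\pi(4j-2)}^*)$ always equals the treated outcome minus the control outcome in that pair, regardless of which cluster was assigned treatment. After the pairs-of-pairs approximation the product therefore behaves like $(\mu_1-\mu_0)^2$, and the limit is $E[(E[\tilde Y_g(1)\mid\cdot]-E[\tilde Y_g(0)\mid\cdot])^2]$ with a \emph{minus}, not the plus you wrote (cf.\ Lemma \ref{lemma:lambda_G}). Your two sign errors happen to cancel algebraically when forming $\tilde\tau_G^2-\tfrac12\tilde\lambda_G^2$, so your final limit is still $\omega^2$ (or $\nu^2$); but as written the intermediate claims are false, and a reader checking either piece in isolation would reject the argument.
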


By combining Theorems \ref{thm:normal_X}--\ref{thm:normal_N} with Theorem \ref{thm:variance-estimator}, asymptotically exact tests and confidence intervals can be constructed using a $t$-statistic studentized by $\hat{v}_G$. Next, we derive the limits in probability of two commonly recommended variance estimators obtained from a (weighted) linear regression of the individual-level outcomes $Y_{i,g}$ on a constant and cluster-level treatment $D_g$. The first variance estimator we consider, which we denote by $\hat{\omega}^2_{\rm CR,G}$, is simply the cluster-robust variance estimator of the coefficient of $D_g$ as defined in equation \eqref{eq:omega_CR} in the appendix. Theorem \ref{thm:CR_limit} derives the limit in probability of $\hat{\omega}^2_{\rm CR,G}$ under a matched pair design which matches on baseline covariates as defined in Section \ref{sec:no_size}, and shows that it is generally too large relative to $\omega^2$.

\begin{theorem}\label{thm:CR_limit}
Under Assumptions \ref{ass:QG} and \ref{ass:indep_pairsX}--\ref{ass:DGP2},
\[\hat{\omega}^2_{\rm CR, G} \xrightarrow{P} E[\tilde{Y}_g(1)^2] + E[\tilde{Y}_g(0)^2] \ge \omega^2~,\]
with equality if and only if 
\begin{equation}\label{eq:CR_exact}
E[\tilde{Y}_g(1) + \tilde{Y}_g(0)|X_g] = 0~.
\end{equation}
\end{theorem}

The next variance estimator we consider, which we denote by $\hat{\omega}^2_{\rm PCVE,G}$, is the variance estimator of the coefficient of $D_g$ obtained from clustering on the assignment \emph{pairs} of clusters as defined in equation \eqref{eq:PCVE} in the appendix. \cite{de2019level} call this the pair-cluster variance estimator (PCVE)\footnote{We emphasize, however, that \cite{de2019level} propose their variance estimator in a finite population ``design-based" inferential framework, which is distinct from the superpopulation framework we consider here. In Appendix \ref{sec:sims-finpop} we repeat some of the simulation exercises we consider in Section \ref{sec:sims-unadj} in a design-based framework. There we illustrate that our estimator may have benefits in the design-based framework as well.}. Theorem \ref{thm:PCVE_limit} derives the limit in probability of $\hat{\omega}^2_{\rm PCVE,G}$ in the special case where $N_g = n$ for $g = 1,\ldots, 2G$ for some fixed $n$ and $|\mathcal{M}_g| = N_g$, and shows that it is generally too large relative to $\omega^2$.

\begin{theorem}\label{thm:PCVE_limit}
Suppose Assumptions \ref{ass:QG} and \ref{ass:indep_pairsX}--\ref{ass:DGP2} hold. If in addition we impose that $N_g = n$ for $g = 1, \ldots, 2G$ for some fixed positive integer $n$ and that $|\mathcal{M}_g| = N_g$, then
\[\hat{\omega}^2_{\rm PCVE, G} \xrightarrow{P} \omega^2 + \frac{1}{2}E\left[(E[\tilde{Y}_g(1) - \tilde{Y}_g(0)|X_g])^2\right] \ge \omega^2~,\]
with equality if and only if 
\begin{equation}\label{eq:PCVE_exact}
E[\tilde{Y}_g(1) - \tilde{Y}_g(0)|X_g] = 0~.
\end{equation}
\end{theorem}

Although we do not derive the limit in probability of $\hat{\omega}^2_{\rm PCVE, G}$ in the general case, our simulation evidence in Section \ref{sec:simulations} suggests that the limit of $\hat{\omega}^2_{\rm PCVE, G}$ remains conservative, and that the conditions under which it is consistent for $\omega^2$ are the same as those in equation \eqref{eq:PCVE_exact}. From Theorems \ref{thm:CR_limit} and \ref{thm:PCVE_limit} we obtain that neither cluster-robust standard error is consistent for $\omega^2$ unless the baseline covariates are irrelevant for the potential outcomes in an appropriate sense. In particular, equation \eqref{eq:PCVE_exact} holds when the average treatment difference for the sampled units in a cluster are homogeneous, in the sense that $\bar{Y}_g(1) - \bar{Y}_g(0)$ is constant. We further note that the conditions under which $\hat{\omega}^2_{\rm CR,G}$ and $\hat{\omega}^2_{\rm PCVE,G}$ are consistent for $\omega^2$ are exactly analogous to the conditions under which \cite{bai2022inference} derive (in the setting of an individual-level matched pairs experiment) that the two-sample $t$-test and matched pairs $t$-test are asymptotically exact, respectively.

\subsection{Randomization Tests}\label{sec:rand-test}
In this section, we study the properties of a randomization test based on the idea of permuting the treatment assignments for clusters within pairs. In Section \ref{sec:fin_sample} we present some finite-sample properties of our proposed test, and in Section \ref{sec:large_sample} we establish its large sample validity for testing the null hypothesis $H_0: \Delta = 0$.

First, we define the test. In words, the randomization test constructs its critical value from the empirical distribution of the test statistic obtained by permuting the treatment assignments within pairs. In practice, such a distribution can be approximated by randomly permuting the treatment status of clusters within the same pair: for each pair of clusters, the treatment status of the two clusters remains the same with probability one-half and is flipped otherwise. The test statistic is then calculated based on these permuted treatment assignments and the critical value is determined by the $1 - \alpha$ quantile of resulting distribution of all such permutation statistics. Formally, denote by $\mathbf{H}_G$ the group of all permutations on $2G$ elements and by $\mathbf{H}_G(\pi)$ the subgroup that only permutes elements within pairs defined by $\pi$:
\[\mathbf{H}_G(\pi) = \{h \in \mathbf{H}_G: \{\pi(2j-1), \pi(2j)\} = \{h(\pi(2j-1)), h(\pi(2j))\} \text{ for } 1 \le j \le G\}~.\]
Define the action of $h \in \mathbf{H}_G(\pi)$ on $Z^{(G)}$ as follows:
\[hZ^{(G)} = \{((Y_{i,g}: i \in \mathcal{M}_g), D_{h(g)}, X_g, N_g): 1 \le g \le 2G\}~.\]
The randomization test we consider is then given by
\[\phi_G^{\rm rand}(Z^{(G)}) = I\{T_G(Z^{(G)}) > \hat{R}^{-1}_G(1 - \alpha)\}~,\]
where 
\[\hat{R}_G(t) = \frac{1}{|\mathbf{H}_G(\pi)|}\sum_{h \in \mathbf{H}_G(\pi)}I\{T_G(hZ^{(G)}) \le t\}~,\]
with
\[T_G(Z^{(G)}) = \left|\frac{\sqrt{G}\hat{\Delta}_G}{\hat{v}_G}\right|~.\]
\begin{remark}\label{rem:random_draw}
As is often the case for randomization tests, $\hat{R}_G(t)$ may be difficult to compute in situations where $|\mathbf{H}_G(\pi)| = 2^{G}$ is large. In such cases, we may replace $\mathbf{H}_G(\pi)$ with a stochastic approximation $\hat{\mathbf{H}}_G = \{h_1, h_2, \ldots, h_B\}$, where $h_1$ is the identity transformation and $h_2, \ldots, h_B$ are i.i.d.\ uniform draws from $\mathbf{H}_G(\pi)$. The results in Section \ref{sec:fin_sample} continue to hold with such an approximation; the results in Section \ref{sec:large_sample} continue to hold provided $B \rightarrow \infty$ as $G \rightarrow \infty$.
\end{remark}

\subsubsection{Finite-Sample Results}\label{sec:fin_sample}
In this section we present some finite-sample properties of the proposed test. Consider testing the null hypothesis that the distribution of potential outcomes within a cluster are equal across treatment and control conditional on observable characteristics and cluster size:
\begin{equation}\label{eq:dist_XN}
H_0^{X, N}: (Y_{i,g}(1): 1 \le i \le N_g)|(X_g, N_g) \stackrel{d}{=} (Y_{i,g}(0): 1 \le i \le N_g) | (X_g, N_g)~.
\end{equation}
Note \eqref{eq:dist_XN} is stronger than the statement that the average treatment effect $\Delta = 0$. As a consequence, we are able to establish the following result on the finite sample validity of our randomization test for testing \eqref{eq:dist_XN}:
\begin{theorem}\label{thm:rand_finite}
Suppose Assumption \ref{ass:QG} holds and that the treatment assignment mechanism satisfies Assumption \ref{ass:indep_pairsX} or \ref{ass:indep_pairs}. Then, for the problem of testing \eqref{eq:dist_XN} at level $\alpha \in (0, 1)$, $\phi_G^{\rm rand}(Z^{(G)})$ satisfies
\[E[\phi^{\rm rand}_G(Z^{(G)})] \le \alpha~,\]
under the null hypothesis.
\end{theorem}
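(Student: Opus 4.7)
My plan is to invoke the standard group-invariance argument for randomization tests conditionally on $\mathcal{F} := \sigma(X^{(G)}, N^{(G)})$, reducing the problem to verifying a single distributional invariance of the observed data under the null.

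First, I would note that under either Assumption \ref{ass:indep_pairsX} or Assumption \ref{ass:indep_pairs} the pairing $\pi$, and hence the subgroup $\mathbf{H}_G(\pi)$, is $\mathcal{F}$-measurable. In both cases, conditional on $\mathcal{F}$, the treatment vector $D^{(G)}$ is independent of the potential outcomes and of $S^{(G)}$, with $(D_{\pi(2g-1)}, D_{\pi(2g)})$ i.i.d.\ uniform on $\{(0,1),(1,0)\}$ across pairs. Under Assumption \ref{ass:indep_pairs} this is immediate; under Assumption \ref{ass:indep_pairsX} it uses that the assumed conditional independence of $D^{(G)}$ from $(N^{(G)}, S^{(G)}, \text{potential outcomes})$ given $X^{(G)}$ implies the conditional law of $D^{(G)}$ is the same given $X^{(G)}$ or given $\mathcal{F}$.

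Next, the core step is to show that under $H_0^{X,N}$, $Z^{(G)} \stackrel{d}{=} hZ^{(G)}$ conditional on $\mathcal{F}$ for every $h \in \mathbf{H}_G(\pi)$. By Assumptions \ref{ass:QG}(a)--(b), the families $\{((Y_{i,g}(0), Y_{i,g}(1))_{i=1}^{N_g}, S_g)\}_{g=1}^{2G}$ are mutually independent given $\mathcal{F}$, each with cluster-$g$ marginal $R(S_g, X_g, N_g)$. The null hypothesis \eqref{eq:dist_XN} asserts that within each cluster the conditional law of $(Y_{i,g}(0))_{i=1}^{N_g}$ given $(X_g, N_g)$ coincides with that of $(Y_{i,g}(1))_{i=1}^{N_g}$; combined with Assumption \ref{ass:QG}(e) and SUTVA \eqref{eq:potential}, this implies that the conditional distribution of $(Y_{i,g} : i \in S_g)$ given $(\mathcal{F}, D^{(G)})$ does not depend on $D_g$. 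Hence $((Y_{i,g} : i \in S_g))_{g=1}^{2G}$ is conditionally independent of $D^{(G)}$ given $\mathcal{F}$. Because $h$ permutes only the $D_g$'s within pairs while leaving each $((Y_{i,g}), X_g, N_g)$ in place, and because the conditional law of $D^{(G)}$ given $\mathcal{F}$ is itself invariant under within-pair permutations, the claimed invariance $Z^{(G)} \stackrel{d}{=} hZ^{(G)} \mid \mathcal{F}$ follows.

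With the invariance in hand, the classical randomization-test argument (applied conditionally on $\mathcal{F}$ with the finite group $\mathbf{H}_G(\pi)$ and test statistic $T_G$) yields $P\{T_G(Z^{(G)}) > \hat{R}_G^{-1}(1-\alpha) \mid \mathcal{F}\} \le \alpha$ almost surely, and integrating out $\mathcal{F}$ gives the theorem. The main obstacle is the conditional-independence claim in the previous paragraph: because the two clusters within a pair typically have different realizations of $(X_g, N_g)$, the invariance cannot be obtained from exchangeability of matched clusters themselves. It instead relies on the fact that, under the null, each cluster's observed-outcome law given $(X_g, N_g)$ does not depend on its own treatment assignment, together with the conditional independence across clusters afforded by Assumptions \ref{ass:QG}(a)--(b), (e).
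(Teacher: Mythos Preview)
Your proposal is correct. The underlying idea---showing that under $H_0^{X,N}$ the conditional law of the observed outcomes does not depend on treatment, and hence that $Z^{(G)}$ is distributionally invariant under within-pair permutations---is the same as the paper's, but the execution differs. The paper first uses the null together with Assumption \ref{ass:QG}(e) to obtain $\bar Y_g(1)\mid (X_g,N_g)\stackrel{d}{=}\bar Y_g(0)\mid (X_g,N_g)$ and then invokes Theorem~3.4 of \cite{bai2021inference} applied to the cluster-level data $\{(\bar Y_g,D_g,X_g,N_g)\}$; the two assumption regimes are handled separately, and for Assumption~\ref{ass:indep_pairsX} an additional step reformulates the null as $(\bar Y_g(1),N_g)\mid X_g\stackrel{d}{=}(\bar Y_g(0),N_g)\mid X_g$ so that the cited result applies with matching variable $X_g$ alone. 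Your argument instead conditions on $\mathcal F=\sigma(X^{(G)},N^{(G)})$ from the outset, observes that in both regimes $\pi$ is $\mathcal F$-measurable and the conditional law of $D^{(G)}$ given $\mathcal F$ is the matched-pairs law, and then proves the invariance of the full observed data $Z^{(G)}$ directly via the classical group-invariance argument. This buys a unified treatment of both cases and a self-contained proof that does not appeal to an external theorem; it also establishes invariance for the entire $Z^{(G)}$ rather than just the cluster-averaged data, which is exactly the strengthening noted in the paper's remark following the theorem. The paper's route is shorter by outsourcing the randomization-test machinery to prior work.
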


\begin{remark}
The proof of Theorem \ref{thm:rand_finite} follows classical arguments that underlie the finite sample validity of randomization tests more generally. Accordingly, as in those arguments, the result continues to hold if the test statistic $T_G$ is replaced by any other test statistic which is a function of $Z^{(G)}$. 
\end{remark}

\subsubsection{Large-Sample Results}\label{sec:large_sample}
In this section, we establish the large-sample validity of the randomization test $\phi_G^{\rm rand}$ for testing the null hypothesis
\begin{equation} \label{eq:delta_zero}
H_0: \Delta = 0~.
\end{equation}
Note \eqref{eq:delta_zero} is implied by \eqref{eq:dist_XN}. In Remark \ref{rem:nonzero_null} we describe how to modify the test for testing non-zero null hypotheses.
\begin{theorem}\label{thm:rand_large}
Suppose $Q_G$ satisfies Assumption \ref{ass:QG}, and either
\begin{itemize}
    \item Assumption \ref{ass:DGP2} with treatment assignment mechanism satisfying Assumption \ref{ass:indep_pairsX} and \ref{ass:close-pairs-of-pairsX}~,
    \item Assumption \ref{ass:DGP} with treatment assignment mechanism satisfying Assumptions \ref{ass:indep_pairs} and \ref{ass:close-pairs-of-pairs}~.
\end{itemize}
Further, suppose that the probability limit of $\hat{v}^2_G$ is positive, then
for the problem of testing \eqref{eq:delta_zero} at level $\alpha \in (0,1)$, $\phi_G^{\rm rand}(Z^{(G)})$ satisfies
\[\lim_{G \rightarrow \infty} E[\phi^{\rm rand}_G(Z^{(G)})] = \alpha~,\]
under the null hypothesis.
\end{theorem}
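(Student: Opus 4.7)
The plan is to combine two convergences: first, under $H_0: \Delta = 0$, show $T_G \xrightarrow{d} |N(0,1)|$, which follows immediately from Theorem \ref{thm:normal_X} or \ref{thm:normal_N} together with Theorem \ref{thm:variance-estimator} and Slutsky's lemma (using the positivity of the probability limit of $\hat{v}_G^2$, which is assumed); second, show that the randomization distribution $\hat{R}_G(t)$ converges in probability, uniformly in $t$, to $\Phi(t) - \Phi(-t)$, \emph{regardless} of whether the null holds. Given the latter, the uniform-convergence conclusion is immediate, and the asymptotic level result $E[\phi_G^{\rm rand}] \to \alpha$ follows by a standard Chung--Romano type argument: the randomization critical value $\hat{R}_G^{-1}(1-\alpha)$ converges to $z_{1-\alpha/2}$ by continuity of the limit CDF, and then $P\{T_G > \hat{R}_G^{-1}(1-\alpha)\} \to \alpha$ under the null by Slutsky.

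To establish the convergence of the randomization distribution, I would parametrize a uniformly random $h \in \mathbf{H}_G(\pi)$ by i.i.d.\ Rademacher signs $\epsilon_1, \ldots, \epsilon_G$ (one per pair, indicating whether the treatment labels within that pair are flipped). Expanding $\hat{\Delta}_G(hZ^{(G)})$ around its within-pair representation, I would show the decomposition
\[
\sqrt{G}\,\hat{\Delta}_G(hZ^{(G)}) = \frac{1}{\sqrt{G}}\sum_{g=1}^G \epsilon_g A_g + o_P(1),
\]
where the $A_g$ are measurable pair-level quantities built from the adjusted outcomes $\hat Y_g$, cluster sizes, and the original treatment indicators. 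A conditional Lindeberg CLT (verifiable using Assumptions \ref{ass:QG}(c)--(d) and the smoothness/matching conditions that ensure the pair differences have a well-behaved second moment) then yields, conditional on $Z^{(G)}$ and in probability over $Z^{(G)}$, $\sqrt{G}\,\hat{\Delta}_G(hZ^{(G)}) \xrightarrow{d} N(0, \sigma^2_{\rm rand})$, where $\sigma^2_{\rm rand}$ is the probability limit of $G^{-1}\sum_g A_g^2$.

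In parallel, I would analyze the permuted denominator $\hat{v}_G^2(hZ^{(G)})$. Since $\hat{\tau}_G^2$ averages squared within-pair differences of the $\hat Y_g$'s and these differences are unaffected by the sign flips $\epsilon_g$, $\hat{\tau}_G^2(hZ^{(G)})$ has the same probability limit as $\hat{\tau}_G^2$. The cross-pair term $\hat{\lambda}_G^2$ is more delicate because the product of treatment-indicator differences is sign-sensitive; however, using the same pair-of-pairs arguments and closeness conditions (Assumption \ref{ass:close-pairs-of-pairsX} or \ref{ass:close-pairs-of-pairs}) invoked in the proof of Theorem \ref{thm:variance-estimator}, one shows that after averaging over the $\epsilon_g$'s the limit of $\hat{v}_G^2(hZ^{(G)})$ is exactly $\sigma^2_{\rm rand}$. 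Combining this with the previous display via conditional Slutsky gives $T_G(hZ^{(G)}) \mid Z^{(G)} \xrightarrow{d} |N(0,1)|$ in probability, which, by Polya's theorem and continuity of $\Phi$, is equivalent to $\sup_t|\hat{R}_G(t) - (\Phi(t) - \Phi(-t))| \xrightarrow{P} 0$.

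The main obstacle will be the exact matching of the two randomization limits, namely that the probability limit of $\hat{v}_G^2(hZ^{(G)})$ coincides with the conditional variance $\sigma^2_{\rm rand}$ of the permuted numerator, rather than with $\omega^2$ or $\nu^2$. This is the algebraic fact that encodes why studentization makes the randomization test asymptotically valid beyond the strict exchangeability null of Theorem \ref{thm:rand_finite}, and it will require careful bookkeeping of how the Rademacher sign flips propagate through the pair and pair-of-pairs structure of $\hat{\lambda}_G^2$. Once this matching is in hand, the remainder of the argument is routine.
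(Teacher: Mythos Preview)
Your proposal is correct and mirrors the paper's argument: the paper likewise parametrizes $h$ by i.i.d.\ Rademacher signs, establishes a conditional CLT for the permuted numerator (Lemmas~\ref{lem:rand_numerator} and~\ref{lem:rand_joint}), shows the permuted variance estimator converges to the same limit by arguing that $\hat{\tau}_G^2$ is permutation-invariant while $\check{\lambda}_G^2(\epsilon)\xrightarrow{P}0$ (Lemma~\ref{lem:rand_denom}), and finishes via the Chung--Romano device. Two small refinements for the execution: in the not-matching-on-size case $\sqrt{G}\hat{\Delta}_G(hZ^{(G)})$ does not linearize into a single Rademacher sum---the paper instead proves a bivariate conditional CLT for the within-pair differences of $(N_g\bar{Y}_g,\,N_g)$ and then applies Slutsky---and the identification of the resulting limit variance with the probability limit of $\hat{\tau}_G^2$ actually uses $H_0$ (Lemma~\ref{lem:tau_algebra}), so your ``regardless of whether the null holds'' is a slight overstatement in that case.
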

Theorems \ref{thm:rand_finite} and \ref{thm:rand_large} highlight that the randomization test $\phi^{\rm rand}_G(Z^{(G)})$ is asymptotically valid for testing \eqref{eq:delta_zero} while additionally retaining the finite-sample validity described in Section \ref{sec:fin_sample} under the null hypothesis \eqref{eq:dist_XN}. In Section \ref{sec:sims-unadj} we illustrate the benefit of this additional robustness on the small-sample behavior of $\phi^{\rm rand}_G(Z^{(G)})$ relative to tests constructed using Gaussian critical values. We note that, unlike for the null hypothesis considered in Section \ref{sec:fin_sample}, the choice of test statistic $T_G$ is crucial for establishing Theorem \ref{thm:rand_large}. Similar observations have been made in related contexts in \cite{janssen1997studentized}, \cite{chung2013exact}, \cite{bugni2018inference} and \cite{bai2022inference}. 

\begin{remark}\label{rem:nonzero_null}
We briefly describe how to modify the test $\phi^{\rm rand}_G$ for testing general null hypotheses of the form
\[H_0: \Delta = \Delta_0~.\]
To this end, let 
$$\tilde{Z}^{(G)} := (((Y_{i,g} - D_g\Delta_0 : i \in \mathcal{M}_g), D_g, X_g, N_g) : 1 \leq g \leq 2G )~,$$
then it can be shown that under the assumptions given in Theorem \ref{thm:rand_large}, the test $\phi^{\rm rand}_G(\tilde{Z}^{(G)})$ obtained by replacing $Z^{(G)}$ with $\tilde{Z}^{(G)}$ satisfies
\[\lim_{G \rightarrow \infty} E[\phi^{\rm rand}_G(\tilde{Z}^{(G)})] = \alpha~,\]
under the null hypothesis.
\end{remark}

\subsection{Covariate Adjustment}\label{sec:adjust}
In this section, we consider a linearly covariate-adjusted modification of $\hat{\Delta}_G$ that is designed to improve precision by exploiting additional observed baseline covariates that were not used for treatment assignment. To that end, we consider a setting in which we observe two sets of baseline covariates, $X_g$ and $C_g$, where $X_g \in \mathbf R^k$ denotes the original set of baseline covariates used for treatment assignment, and $C_g \in \mathbf R^{\ell}$ denotes the covariates in addition to $X_g$ that were not used for treatment assignment. Note that $C_g$ could also include cluster-level aggregates of individual-level outcomes, including intracluster means and quantiles. Before proceeding, we note that for the remainder of Section \ref{sec:adjust}, Assumption \ref{ass:QG} should be understood to hold with $(X_g, C_g)$ in place of $X_g$. 

Our primary focus will be on settings in which the cluster size $N_g$ is used in determining the pairs.  We note that similar results continue to hold under suitable modifications of our assumptions when $N_g$ is not used in determining pairs by simply replacing $W_g$ with $X_g$ throughout. As in Section \ref{sec:size}, let $\pi = \pi_G(W^{(G)})$ denote the permutation that determines the pairs. We then assume that treatment status is assigned as follows:

\begin{assumption} \label{ass:indep_xz}
Treatment status is assigned so that
\[ \{((Y_{i, g}(1), Y_{i, g}(0): 1 \leq i \leq N_g), \mathcal M_g, C_g)\}_{g = 1}^{2G} \independent D^{(G)} | W^{(G)}~. \]
Conditional on $W^{(G)}$, $(D_{\pi(2g-1)}, D_{\pi(2g)})$, $g = 1, ..., G$ are i.i.d.\ and each uniformly distributed over $\{(0,1), (1,0)\}$.
\end{assumption}
We consider a linearly covariate-adjusted estimator of $\Delta$ based on a set of regressors generated by $W_g$ and $C_g$; define $\psi_g = \psi(W_g, C_g)$, where $\psi: (\mathbf R^k \times \mathbf Z_+) \times \mathbf R^\ell \to \mathbf R^p$. We impose the following assumptions on $\psi$:

\begin{assumption} \label{ass:psi}
    The function $\psi$ is such that
    \begin{enumerate}[(a)]
        \item No component of $\psi$ is a constant and $E[\var[\psi_g | W_g]]$ is nonsingular.
        \item $\var[\psi_g] < \infty$.
        \item For some $c < \infty$, $P \{E[\|\psi_g\|^2 \bar Y_g^2(d) | W_g] \leq c\} = 1$ for $d \in \{0, 1\}$.
        \item $E[\psi_g | W_g = w]$, $E[\psi_g \psi_g' | W_g = w]$, and $E[\psi_g \bar Y_g^r(d) | W_g = w]$ for $d \in \{0, 1\}$ and $r \in \{1, 2\}$ are Lipschitz.
    \end{enumerate}
\end{assumption}

Assumption \ref{ass:psi}(a) implies that none of the components of $\psi_g$ can be perfectly predicted only by $W_g$. Assumptions \ref{ass:psi}(b)--(c) form the counterpart to Assumption \ref{ass:QG}(d), and Assumption \ref{ass:psi}(d) is the counterpart to Assumption \ref{ass:DGP}.

As discussed in \cite{bai2024covariate} and \cite{cytrynbaum2023covariate}, standard covariate adjustments based on a regression using treatment-covariate interactions \citep[see, for instance,][for a succinct treatment]{negi2021revisiting} are not guaranteed to improve efficiency when treatment assignment is not completely randomized. For this reason, we consider a modified version of the adjusted estimator developed in \cite{bai2024covariate} for individual-level matched pair experiments. Let $\hat \beta_G$ denote the OLS estimator of the slope coefficient in the linear regression of $\left ( \frac{1}{2G} \sum_{1 \leq g \leq 2G} N_g \right ) (\hat Y_{\pi(2g - 1)} - \hat Y_{\pi(2g)})(D_{\pi(2g - 1)} - D_{\pi(2g)})$ on a constant and $(\psi_{\pi(2g - 1)} - \psi_{\pi(2g)})(D_{\pi(2g - 1)} - D_{\pi(2g)})$. We then define our covariate-adjusted estimator as
\begin{equation} \label{eq:adj}
\hat \Delta_G^{\rm adj} = \frac{\frac{1}{G} \sum_{1 \leq g \leq 2G} (\bar Y_g N_g - (\psi_g - \bar \psi_G)' \hat \beta_G) D_g}{\frac{1}{G} \sum_{1 \leq g \leq 2G} N_g D_g} - \frac{\frac{1}{G} \sum_{1 \leq g \leq 2G} (\bar Y_g N_g - (\psi_g - \bar \psi_G)' \hat \beta_G) (1 - D_g)}{\frac{1}{G} \sum_{1 \leq g \leq 2G} N_g (1 - D_g)}~,
\end{equation}
where
\[ \bar \psi_G = \frac{1}{2G} \sum_{1 \leq g \leq 2G} \psi_g~. \]
Theorem \ref{thm:adj} derives the limiting distribution of $\hat \Delta_G^{\rm adj}$, and, importantly, it shows that the limiting variance of $\hat \Delta_G^{\rm adj}$ is no larger than that of $\hat \Delta_G$ in \eqref{eq:size-weighted} and is strictly smaller unless $\psi$ is ``irrelevant'' for $\tilde Y_g(1) + \tilde Y_g(0)$ after ``controlling'' for $W_g$, in the sense made precise below.

\begin{theorem} \label{thm:adj}
    Under Assumptions \ref{ass:QG},  \ref{ass:pair_form}, \ref{ass:DGP}, \ref{ass:indep_xz}, and \ref{ass:psi},
    \[ \sqrt G(\hat \Delta_G^{\rm adj} - \Delta) \stackrel{d}{\to} N(0, \varsigma^2) \]
    as $G \to \infty$, where
    \begin{equation}\label{eq:varsigma}
    \varsigma^2 = E[\var[Y_g^\ast(1) | W_g]] + E[\var[Y_g^\ast(0) | W_g]] + \frac{1}{2} E[(E[Y_g^\ast(1) - Y_g^\ast(0) | W_g] - \Delta)^2]~, 
    \end{equation}
    with
    \[ Y_g^\ast(d) = \frac{\bar Y_g(d) N_g - (\psi_g - E[\psi_g])' \beta^\ast}{E[N_g]} - \frac{N_g}{E[N_g]} \frac{E[\bar Y_g(d) N_g - (\psi_g - E[\psi_g])' \beta^\ast]}{E[N_g]} = \tilde Y_g(d) - \frac{(\psi_g - E[\psi_g])' \beta^\ast}{E[N_g]}~, \]
    and
    \begin{equation} \label{eq:betastar}
        \beta^\ast = \left(2 E[\var[\psi_g | W_g]] \right)^{-1} E[  \cov [ \psi_g, \tilde{Y}_g(1) + \tilde{Y}_g(0) | W_g] ] E[N_g]~.
    \end{equation}
    Moreover, 
    \begin{equation*}\label{eq:kappa}
    \varsigma^2 = \nu^2 - \kappa^2~,
    \end{equation*} where $\nu^2$ is as in \eqref{eq:limitvar-n} and
    \[
    \kappa^2 = \frac{2}{E[N_g]^2} E\left[\var\left[\psi_g^\prime \beta^\ast |W_g\right]\right]~.
    \]
    As a consequence, $\varsigma^2 \le \nu^2$, with equality if and only if $\kappa^2 = 0$.
\end{theorem}

Note that the asymptotic variance $\varsigma^2$ has the same form as the variance $\nu^2$, but with new transformed outcomes $Y^*_g(d)$ which can be expressed as covariate-adjusted versions of the original transformed outcomes $\tilde{Y}_g(d)$. Exploiting this observation is what allows us to establish that $\varsigma^2 = \nu^2 - \kappa^2$. As a consequence, we find that the asymptotic variance of $\hat{\Delta}^{\rm adj}_G$ is lower than that of $\hat{\Delta}_G$ whenever the adjustment is appropriately ``relevant," in the sense that $\kappa^2 \ne 0$. 

\begin{remark}
    Although the estimator in \eqref{eq:adj} is closely related to the class of covariate-adjusted estimators in \cite{bai2024covariate}, we cannot directly apply their results in our context because the two denominators in \eqref{eq:adj} are the average cluster sizes of treated and untreated clusters and are therefore random. As a result, unlike in \cite{bai2024covariate}, the demeaning of $\psi$ in \eqref{eq:adj} is crucial for the results in Theorem \ref{thm:adj} to hold. In particular, some remainder terms in the proof of Theorem \ref{thm:adj} are no longer $o_P(1)$ without the demeaning. Moreover, unlike for individual-level experiments, $\hat\Delta_G^{\rm adj}$ cannot be interpreted as the intercept of a linear regression as in \cite{bai2024covariate}.
\end{remark}

For variance estimation, define
\begin{align*}
    \mathring Y_g = \frac{1}{\frac{1}{2G}\sum_{1\leq j \leq 2G} N_j }\left(N_g \bar Y_g- N_g \frac{\frac{1}{G}\sum_{1\leq j \leq 2G} \bar Y_j I\{D_j = D_g\} N_j}{\frac{1}{G}\sum_{1\leq j \leq 2G} I\{D_j = D_g\} N_j} - \psi_g' \hat \beta_G \right)~.
\end{align*}
We then propose the following variance estimator:
\begin{equation}\label{eq:adj-var}
    \mathring \varsigma_G^2 = \mathring \tau_G^2 - \frac{1}{2}\mathring \lambda_G^2~,
\end{equation}
where
\begin{align*}
\mathring \tau_G^2=& \frac{1}{G} \sum_{1 \leq j \leq G}\left(\mathring Y_{\pi(2j)}-\mathring Y_{\pi(2j-1)}\right)^{2} \\
\mathring \lambda_G^2 =& \frac{2}{G} \sum_{1 \leq j \leq\lfloor G/2\rfloor} \left(\mathring Y_{\pi(4j-3)}-\mathring Y_{\pi(4j-2)}\right) \left(\mathring Y_{\pi(4j-1)}- \mathring Y_{\pi(4j)} \right)(D_{\pi(4j-3)}-D_{\pi(4j-2)})(D_{\pi(4j-1)}-D_{\pi(4j)})~.
\end{align*}
The following theorem establishes the consistency of the variance estimator:

\begin{theorem} \label{thm:adj-var}
    Under Assumptions \ref{ass:pair_form}, \ref{ass:DGP}, \ref{ass:close-pairs-of-pairs}, \ref{ass:indep_xz}, and \ref{ass:psi},
    \[ \mathring \varsigma_G^2 \stackrel{P}{\to} \varsigma^2~. \]
\end{theorem}

\section{Simulations} \label{sec:simulations}
\subsection{Unadjusted Estimation}\label{sec:sims-unadj}
In this section, we examine the finite-sample behavior of the estimation and inference procedures considered in Sections \ref{sec:normal}-\ref{sec:rand-test}. We further compare these procedures to tests and confidence intervals constructed using the standard cluster-robust variance estimator (CR) and the pair cluster variance estimator (PCVE) proposed in \cite{de2019level}. For $d \in \{0, 1\}$, $1 \le g \le 2G$, the potential outcomes are generated according to the equation
\[Y_{i,g}(d) = \mu_d(X_g, C_g) + 2\epsilon_{d,i,g}~.\]
Where, in each specification, $(X_g, C_g)$, $g = 1, \ldots, 2G$ are i.i.d.\ with $X_g, C_g\sim\ Beta(2,4)$, and $(\epsilon_{0,i,g}, \epsilon_{1,i,g})$, $g = 1, \ldots, 2G$, $i = 1, \ldots, N_g$ are i.i.d.\ with $\epsilon_{0,i,g}, \epsilon_{1,i,g} \sim N(0,1)$ independently. Note that $C_g$ are additional cluster level covariates which are used to determine the cluster size $N_g$, but are not used directly for matching. Throughout Section \ref{sec:simulations} we assume that we observe the entire cluster, that is, we assume $\mathcal{M}_g = \{1, 2, \ldots, N_g\}$; in Appendix \ref{sec:sims-sample} we repeat the simulation exercise in Section \ref{sec:sims-unadj} for other choices of $\mathcal{M}_g$. We consider the following two specifications for $\mu_d$:

\begin{enumerate}[{\bf Model} 1:]
\item $\mu_{1}(X_g, C_g) = \mu_{0}(X_g, C_g) = 10(X_g-1/3)+6(C_g-1/3)+2$~.
\item $\mu_1(X_g, C_g) = 10(X_g^2-1/7)+6(C_g-1/3)+2$ and $\mu_0(X_g, C_g) = 0$~.
\end{enumerate}

Note that Model 1 satisfies the homogeneity condition in \eqref{eq:PCVE_exact} whereas Model 2 does not. In both cases, $N_g$, $g=1,\dots,2G$ are i.i.d.\ with $N_g \sim\ Binomial(R, C_g)+(500-R)$, where $R$ determines the difference in maximum and minimum cluster sizes. In particular $R$ satisfies the property that $N_g \in [N_{min}, N_{max}]$ with $N_{max}-N_{min}=R$ and we consider $R \in \{49,149,249,349,449\}$ with $N_{max}=500$ fixed. For each model and distribution of cluster sizes, we consider two alternative pair-matching procedures. First, we consider a design which matches clusters using $X_g$ only. To construct these pairs, we sort the clusters according to $X_g$ and pair adjacent clusters. Next, we consider a design which matches clusters using both $X_g$ and $N_g$. To construct these pairs, we match the clusters according to their Mahalanobis distance using the non-bipartite matching algorithm from the R package {\tt nbpMatching}.

Tables \ref{tab:model1X}--\ref{tab:model4XN} report the coverage and average length of $95\%$ confidence intervals constructed using our variance estimator as well as the CR and PCVE estimators. For Model 1 in Table \ref{tab:model1X}, we find that, in accordance with Theorems \ref{thm:variance-estimator}--\ref{thm:PCVE_limit}, the CR variance estimator is extremely conservative, whereas our proposed variance estimator (denoted $\hat{v}_G^2$) and the PCVE variance estimator have exact coverage asymptotically. This feature translates to significantly smaller confidence intervals: on average the confidence intervals constructed using $\hat{v}_G^2$ or PCVE are almost half the length of those constructed using CR when $G \ge 50$. However, the confidence intervals constructed using $\hat{v}_G^2$ or PCVE undercover when $G < 50$. We find similar results when matching on both $X_g$ and $N_g$ in Table \ref{tab:model1XN}. Comparing across Tables \ref{tab:model1X} and \ref{tab:model1XN} we find that, in line with the discussions following Theorems \ref{thm:normal_X} and \ref{thm:normal_N}, matching on $N_g$ in addition to $X_g$ results in a large reduction in the average length of confidence intervals constructed using $\hat{v}_G^2$ (or PCVE), but no change in the average length of confidence intervals constructed using CR.

Moving to Model 2 in Tables \ref{tab:model4X} and \ref{tab:model4XN}, here we find that confidence intervals constructed using CR continue to be conservative, but now the confidence intervals constructed using PCVE are \emph{also} conservative, and numerically very similar to those constructed using CR. In contrast, the confidence intervals constructed using $\hat{v}_G^2$ remain exact asymptotically. Once again this translates to smaller confidence intervals for $\hat{v}_G^2$: on average the confidence intervals constructed using $\hat{v}_G^2$ are approximately $25\%$ smaller than those constructed using CR or PCVE when $G \ge 50$. However, once again we find that the confidence intervals constructed using $\hat{v}^2_G$ can undercover when $G < 50$, with the size of the distortion growing as a function of the cluster size heterogeneity. 

Next, to further address the small-sample coverage distortions observed in Tables \ref{tab:model1X}-\ref{tab:model4XN}, we study the size and power of $0.05$-level hypothesis tests conducted using our proposed randomization test, as well as standard $t$-tests constructed using the CR and PCVE estimators, in Tables \ref{tab:RT_m1}--\ref{tab:RT_m4} below.\footnote{Here we move to studying the properties of hypothesis tests instead of confidence intervals to avoid having to perform test-inversion for our randomization test, but we expect that similar results would continue to hold for confidence intervals as well.} In Table \ref{tab:RT_m1} we find that tests based on the CR variance estimator are extremely conservative, and this translates to having essentially no power against our chosen alternative. Tests based on the PCVE estimator produce non-trivial power, but also size-distortions in small samples. In contrast, since Model 1 satisfies the null hypothesis considered in \eqref{eq:dist_XN}, our randomization test is valid in finite samples by construction, and displays comparable power to the PCVE-based test even when the latter does not control size. When moving to Model 2 in Table \ref{tab:RT_m4} we are only guaranteed that the randomization test is asymptotically valid, but we find that the test is still able to control size in small samples as long as cluster-size heterogeneity is not too large.  Importantly, in such cases, both the CR and PCVE-based tests also fail to control size. Finally, the randomization test displays favorable power relative to both the CR and PCVE-based tests throughout Table \ref{tab:RT_m4} except for some cases when $G = 12$.

\subsection{Covariate-Adjusted Estimation}\label{sec:simulations-adj}
In this section, we examine the finite-sample behavior of the covariate-adjusted estimator considered in Section \ref{sec:adjust}. We consider the following modification of Model 2: let $C_g = (C_{1g}, C_{2g})$,
\begin{enumerate}[{\bf Model} {\rm Adj.}:]
\item $\mu_1(X_g, C_{1g}, C_{2g}) = 10(C_{1g}^2-1/7)+6(C_{2g}-1/3)+25$ and $\mu_0(X_g, C_{1g}, C_{2g}) = 0$~,
\end{enumerate}
 with $X_g \sim U[0,1]$ generated independently of all other variables, and modify the distribution of $N_g$ so that $N_g \sim Binomial(R, 1 - C_{2g}) + (500 - R)$. 
 
Tables \ref{tab:modelH} and \ref{tab:modelHN} report the coverage and average length of $95\%$ confidence intervals constructed using our variance estimators when matching using $X_g$ and both $X_g$ and $N_g$, respectively, for $\hat{\Delta}_G$ versus $\hat{\Delta}^{\rm adj}_G$ with $\psi_g = C_g$. In accordance with Theorem \ref{thm:adj}, we find that for moderate to large samples ($G \ge 50$), covariate adjustment leads to smaller average CI lengths. 

\section{Recommendations for Empirical Practice}\label{sec:recommendations}

Based on our theoretical results as well as the simulation study above, we conclude with some recommendations for practitioners when conducting inference for cluster matched pair designs. The methods in this paper are primarily tailored for inference in a super-population framework; as explained in \cite{bai2024primer}, such a sampling framework may be viewed as an approximation to a regime where a small fraction of the total population of clusters is sampled. Simulation evidence in Appendix \ref{sec:sims-finpop}, however, suggests that our methods compare favorably against existing methods even in finite-population settings. Formal results in a finite population framework can be established by following the general strategy presented in Appendix A.1 in \cite{bai2024primer}.

Our recommendations depend on whether the number of clusters is moderately large (e.g., at least 50 pairs) or small (e.g., less than 50 pairs). If the number of clusters is moderately large, then our recommendation is that practitioners should employ either the covariate-adjusted tests based on the covariate-adjusted estimator $\hat{\Delta}^{\rm adj}_G$ defined in Section \ref{sec:adjust} paired with its corresponding variance estimator $\mathring\varsigma_G^2$ and a normal critical value or the unadjusted tests based on the unadjusted estimator $\hat{\Delta}_G$ introduced in Section \ref{sec:setup} paired with its corresponding variance estimator $\hat{v}^2_G$ and a normal critical value. 



If, on the other hand, the number of clusters is small, then we recommend instead that practitioners use the randomization test based on the un-adjusted estimator $\hat{\Delta}_G$ paired with its corresponding variance estimator $\hat{v}^2_G$ outlined in Section \ref{sec:rand-test}. In our simulations, this test controlled size more reliably than any of the other inference procedures we considered in the paper, while delivering comparable power. Note that by modifying the test as in Remark \ref{rem:nonzero_null}, the test could also be inverted to construct confidence intervals if desired.

In general, all of our results crucially hinge on the assumption that clusters in a pair are sufficiently “close” (Assumptions \ref{ass:pair_formX} and \ref{ass:pair_form}), and such a condition becomes difficult to satisfy as the dimension of $X_g$ increases. For this reason, we recommend that practitioners construct their pairs using a small subset of the baseline covariates that they believe have the highest explanatory power (including possibly cluster size itself). The experimental data can then be analyzed by using either the un-adjusted or adjusted estimators we propose in this paper.


\begin{table}[htbp]\centering
\begin{threeparttable}[b]
\def\sym#1{\ifmmode^{#1}\else\(^{#1}\)\fi}
\caption{Model 1 - Matching on $X_g$\tnote{*}}
\begin{tabular}{c*{9}{c}}
\hline\hline
    \multicolumn{1}{c}{$N_{max}/N_{min}$}&\multicolumn{1}{c}{}
    &\multicolumn{1}{c}{$G=12$}&\multicolumn{1}{c}{$G=26$}&\multicolumn{1}{c}{$G=50$}
    &\multicolumn{1}{c}{$G=100$}&\multicolumn{1}{c}{$G=150$}
    &\multicolumn{1}{c}{$G=200$}&\multicolumn{1}{c}{$G=250$}\\
\hline
\multicolumn{9}{c}{} \\
\multicolumn{9}{c}{\bf Coverage} \\
\hline
\multirow{3}{*}{1.11}
& $\hat{v}_G^2$ & $0.9185$ & $0.9290$ & $0.9420$ & $0.9465$ & $0.9375$ & $0.9460$ & $0.9515$ \\
& CR & $0.9985$ & $0.9990$ & $0.9995$ & $1$ & $1$ & $1$ & $1$ \\
& PCVE & $0.9230$ & $0.9310$ & $0.9385$ & $0.9405$ & $0.9395$ & $0.9480$ & $0.9520$ \\
\multicolumn{9}{l}{} \\
\multirow{3}{*}{1.42}
& $\hat{v}_G^2$ & $0.9005$ & $0.9345$ & $0.9345$ & $0.9480$ & $0.9490$ & $0.9545$ & $0.9615$ \\
& CR & $0.9980$ & $0.9995$ & $0.9985$ & $0.9995$ & $0.9995$ & $1$ & $1$ \\
& PCVE & $0.9035$ & $0.9380$ & $0.9375$ & $0.9490$ & $0.9495$ & $0.9550$ & $0.9595$ \\
\multicolumn{9}{l}{} \\
\multirow{3}{*}{1.99}
& $\hat{v}_G^2$ & $0.9130$ & $0.9330$ & $0.9380$ & $0.9385$ & $0.9490$ & $0.9455$ & $0.9365$ \\
& CR & $0.9985$ & $0.9985$ & $1$ & $1$ & $1$ & $1$ & $0.9995$ \\
& PCVE & $0.9095$ & $0.9230$ & $0.9420$ & $0.9420$ & $0.9495$ & $0.9460$ & $0.9350$ \\
\multicolumn{9}{l}{} \\
\multirow{3}{*}{3.31}
& $\hat{v}_G^2$ & $0.9065$ & $0.9180$ & $0.9340$ & $0.9415$ & $0.9470$ & $0.9450$ & $0.9520$ \\
& CR & $0.9950$ & $0.9980$ & $0.9980$ & $0.9985$ & $1$ & $0.9985$ & $0.9995$ \\
& PCVE & $0.8980$ & $0.9155$ & $0.9330$ & $0.9380$ & $0.9465$ & $0.9470$ & $0.9500$ \\
\multicolumn{9}{l}{} \\
\multirow{3}{*}{9.80}
& $\hat{v}_G^2$ & $0.9035$ & $0.9230$ & $0.9420$ & $0.9340$ & $0.9440$ & $0.9415$ & $0.9495$ \\
& CR & $0.9925$ & $0.9940$ & $0.9970$ & $0.9985$ & $0.9975$ & $0.9995$ & $0.9990$ \\
& PCVE & $0.8925$ & $0.9100$ & $0.9365$ & $0.9330$ & $0.9425$ & $0.9385$ & $0.9475$ \\
\hline
\multicolumn{9}{c}{} \\
\multicolumn{9}{c}{{\bf Average Length}} \\
\hline
\multirow{3}{*}{1.11}
& $\hat{v}_G^2$ & $1.72150$ & $1.16078$ & $0.84582$ & $0.59830$ & $0.48784$ & $0.42466$ & $0.37936$ \\
& CR & $3.20593$ & $2.21689$ & $1.61886$ & $1.15015$ & $0.94053$ & $0.81591$ & $0.73010$ \\
& PCVE & $1.69494$ & $1.15171$ & $0.84119$ & $0.59746$ & $0.48744$ & $0.42415$ & $0.37895$ \\
\multicolumn{9}{l}{} \\
\multirow{3}{*}{1.42}
& $\hat{v}_G^2$ & $1.75019$ & $1.18859$ & $0.86476$ & $0.61378$ & $0.50112$ & $0.43567$ & $0.38917$ \\
& CR & $3.21821$ & $2.22957$ & $1.62982$ & $1.15829$ & $0.94732$ & $0.82180$ & $0.73543$ \\
& PCVE & $1.72075$ & $1.17840$ & $0.86140$ & $0.61286$ & $0.50024$ & $0.43527$ & $0.38897$ \\
\multicolumn{9}{l}{} \\
\multirow{3}{*}{1.99}
& $\hat{v}_G^2$ & $1.80502$ & $1.23175$ & $0.89937$ & $0.63958$ & $0.52250$ & $0.45322$ & $0.40566$ \\
& CR & $3.24165$ & $2.25077$ & $1.64811$ & $1.17207$ & $0.95862$ & $0.83166$ & $0.74408$ \\
& PCVE & $1.77287$ & $1.21936$ & $0.89602$ & $0.63843$ & $0.52133$ & $0.45352$ & $0.40524$ \\
\multicolumn{9}{l}{} \\
\multirow{3}{*}{3.31}
& $\hat{v}_G^2$ & $1.90111$ & $1.30589$ & $0.96060$ & $0.68446$ & $0.55910$ & $0.48664$ & $0.43505$ \\
& CR & $3.27892$ & $2.28895$ & $1.68064$ & $1.19654$ & $0.97928$ & $0.84959$ & $0.76030$ \\
& PCVE & $1.85679$ & $1.29128$ & $0.95566$ & $0.68299$ & $0.55824$ & $0.48568$ & $0.43437$ \\
\multicolumn{9}{l}{} \\
\multirow{3}{*}{9.80}
& $\hat{v}_G^2$ &  $2.09510$ & $1.45719$ & $1.08057$ & $0.77340$ & $0.63320$ & $0.55071$ & $0.49226$ \\
& CR & $3.35580$ & $2.36729$ & $1.75068$ & $1.24963$ & $1.02275$ & $0.88759$ & $0.79443$ \\
& PCVE & $2.03228$ & $1.43576$ & $1.07565$ & $0.77259$ & $0.63171$ & $0.54976$ & $0.49203$ \\
\hline\hline
\end{tabular}
\label{tab:model1X}
\begin{tablenotes}
\item [*] Number of clusters $=2G$ with $G=12, 26, 50, 100, 150, 200, 250$. Number of replications for each $G$ is $2000$. $N_{max}=500$.
\end{tablenotes}
\end{threeparttable}
\end{table}

\begin{table}[htbp]\centering
\begin{threeparttable}[b]
\def\sym#1{\ifmmode^{#1}\else\(^{#1}\)\fi}
\caption{Model 1 - Matching on $X_g$ and $N_g$\tnote{*}}
\begin{tabular}{c*{9}{c}}
\hline\hline
    \multicolumn{1}{c}{$N_{max}/N_{min}$}&\multicolumn{1}{c}{}
    &\multicolumn{1}{c}{$G=12$}&\multicolumn{1}{c}{$G=26$}&\multicolumn{1}{c}{$G=50$}
    &\multicolumn{1}{c}{$G=100$}&\multicolumn{1}{c}{$G=150$}
    &\multicolumn{1}{c}{$G=200$}&\multicolumn{1}{c}{$G=250$}\\
\hline
\multicolumn{9}{c}{} \\
\multicolumn{9}{c}{\bf Coverage} \\
\hline
\multirow{3}{*}{1.11}
& $\hat{v}_G^2$ & $0.9105$ & $0.9285$ & $0.9345$ & $0.9430$ & $0.9470$ & $0.9495$ & $0.9565$ \\
& CR & $1$ & $1$ & $1$ & $1$ & $1$ & $1$ & $1$ \\
& PCVE & $0.9100$ & $0.9260$ & $0.9360$ & $0.9460$ & $0.9460$ & $0.9480$ & $0.9555$ \\
\multicolumn{9}{l}{} \\
\multirow{3}{*}{1.42}
& $\hat{v}_G^2$ & $0.9210$ & $0.9410$ & $0.9400$ & $0.9510$ & $0.9490$ & $0.9300$ & $0.9445$ \\
& CR & $1$ & $1$ & $1$ & $1$ & $1$ & $1$ & $1$ \\
& PCVE & $0.9215$ & $0.9405$ & $0.9425$ & $0.9555$ & $0.9465$ & $0.9325$ & $0.9425$ \\
\multicolumn{9}{l}{} \\
\multirow{3}{*}{1.99}
& $\hat{v}_G^2$ & $0.9170$ & $0.9460$ & $0.9420$ & $0.9505$ & $0.9485$ & $0.9495$ & $0.9570$ \\
& CR & $1$ & $1$ & $1$ & $1$ & $1$ & $1$ & $1$ \\
& PCVE & $0.9110$ & $0.9440$ & $0.9395$ & $0.9520$ & $0.9490$ & $0.9510$ & $0.9555$ \\
\multicolumn{9}{l}{} \\
\multirow{3}{*}{3.31}
& $\hat{v}_G^2$ & $0.9220$ & $0.9280$ & $0.9295$ & $0.9430$ & $0.9440$ & $0.9480$ & $0.9390$ \\
& CR & $1$ & $1$ & $1$ & $1$ & $1$ & $1$ & $1$ \\
& PCVE & $0.9150$ & $0.9290$ & $0.9325$ & $0.9470$ & $0.9435$ & $0.9510$ & $0.9405$ \\
\multicolumn{9}{l}{} \\
\multirow{3}{*}{9.80}
& $\hat{v}_G^2$ & $0.9015$ & $0.9260$ & $0.9320$ & $0.9505$ & $0.9485$ & $0.9405$ & $0.9435$ \\
& CR & $1$ & $1$ & $1$ & $1$ & $1$ & $1$ & $1$ \\
& PCVE & $0.8860$ & $0.9225$ & $0.9380$ & $0.9495$ & $0.9485$ & $0.9420$ & $0.9475$ \\
\hline
\multicolumn{9}{c}{} \\
\multicolumn{9}{c}{{\bf Average Length}} \\
\hline
\multirow{3}{*}{1.11}
& $\hat{v}_G^2$ & $1.20496$ & $0.64428$ & $0.39514$ & $0.24765$ & $0.19157$ & $0.16045$ & $0.14069$ \\
& CR & $3.21594$ & $2.22170$ & $1.62079$ & $1.15081$ & $0.94092$ & $0.81621$ & $0.73031$ \\
& PCVE & $1.18192$ & $0.63873$ & $0.39376$ & $0.24689$ & $0.19111$ & $0.16028$ & $0.14062$ \\
\multicolumn{9}{l}{} \\
\multirow{3}{*}{1.42}
& $\hat{v}_G^2$ & $1.16805$ & $0.58866$ & $0.34117$ & $0.19821$ & $0.14670$ & $0.12020$ & $0.10335$ \\
& CR & $3.23229$ & $2.23499$ & $1.63182$ & $1.15901$ & $0.94776$ & $0.82214$ & $0.73561$ \\
& PCVE & $1.14574$ & $0.58388$ & $0.34065$ & $0.19783$ & $0.14622$ & $0.12000$ & $0.10327$ \\
\multicolumn{9}{l}{} \\
\multirow{3}{*}{1.99}
& $\hat{v}_G^2$ & $1.18988$ & $0.60685$ & $0.34699$ & $0.19474$ & $0.14244$ & $0.11466$ & $0.09729$ \\
& CR & $3.25786$ & $2.25761$ & $1.65083$ & $1.17312$ & $0.95917$ & $0.83201$ & $0.74440$ \\
& PCVE & $1.16373$ & $0.59889$ & $0.34582$ & $0.19426$ & $0.14229$ & $0.11456$ & $0.09728$ \\
\multicolumn{9}{l}{} \\
\multirow{3}{*}{3.31}
& $\hat{v}_G^2$ & $1.27089$ & $0.64963$ & $0.37337$ & $0.20857$ & $0.15167$ & $0.12110$ & $0.10157$ \\
& CR & $3.29929$ & $2.29885$ & $1.68464$ & $1.19841$ & $0.98016$ & $0.85013$ & $0.76067$ \\
& PCVE & $1.23316$ & $0.64188$ & $0.37129$ & $0.20767$ & $0.15108$ & $0.12084$ & $0.10134$ \\
\multicolumn{9}{l}{} \\
\multirow{3}{*}{9.80}
& $\hat{v}_G^2$ & $1.41981$ & $0.75053$ & $0.43329$ & $0.24285$ & $0.17464$ & $0.13851$ & $0.11558$ \\
& CR & $3.38816$ & $2.38329$ & $1.75642$ & $1.25248$ & $1.02442$ & $0.88868$ & $0.79508$ \\
& PCVE & $1.36449$ & $0.73612$ & $0.42992$ & $0.24197$ & $0.17401$ & $0.13826$ & $0.11549$ \\
\hline\hline
\end{tabular}
\label{tab:model1XN}
\begin{tablenotes}
\item [*] Number of clusters $=2G$ with $G=12, 26, 50, 100, 150, 200, 250$. Number of replications for each $G$ is $2000$. $N_{max}=500$.
\end{tablenotes}
\end{threeparttable}
\end{table}

\begin{table}[htbp]\centering
\begin{threeparttable}[b]
\def\sym#1{\ifmmode^{#1}\else\(^{#1}\)\fi}
\caption{Model 2 - Matching on $X_g$\tnote{*}}
\begin{tabular}{c*{9}{c}}
\hline\hline
    \multicolumn{1}{c}{$N_{max}/N_{min}$}&\multicolumn{1}{c}{}
    &\multicolumn{1}{c}{$G=12$}&\multicolumn{1}{c}{$G=26$}&\multicolumn{1}{c}{$G=50$}
    &\multicolumn{1}{c}{$G=100$}&\multicolumn{1}{c}{$G=150$}
    &\multicolumn{1}{c}{$G=200$}&\multicolumn{1}{c}{$G=250$}\\
\hline
\multicolumn{9}{c}{} \\
\multicolumn{9}{c}{\bf Coverage} \\
\hline
\multirow{3}{*}{1.11}
& $\hat{v}_G^2$ & $0.9260$ & $0.9375$ & $0.9420$ & $0.9420$ & $0.9460$ & $0.9465$ & $0.9510$ \\
& CR & $0.9570$ & $0.9635$ & $0.9755$ & $0.9790$ & $0.9825$ & $0.9835$ & $0.9800$ \\
& PCVE & $0.9560$ & $0.9645$ & $0.9750$ & $0.9785$ & $0.9825$ & $0.9835$ & $0.9805$ \\
\multicolumn{9}{l}{} \\
\multirow{3}{*}{1.42}
& $\hat{v}_G^2$ & $0.9280$ & $0.9395$ & $0.9455$ & $0.9405$ & $0.9490$ & $0.9495$ & $0.9490$ \\
& CR & $0.9525$ & $0.9705$ & $0.9705$ & $0.9715$ & $0.9795$ & $0.9860$ & $0.9820$ \\
& PCVE & $0.9535$ & $0.9710$ & $0.9705$ & $0.9735$ & $0.9795$ & $0.9860$ & $0.9820$ \\
\multicolumn{9}{l}{} \\
\multirow{3}{*}{1.99}
& $\hat{v}_G^2$ & $0.9180$ & $0.9325$ & $0.9385$ & $0.9455$ & $0.9480$ & $0.9420$ & $0.9465$ \\
& CR & $0.9415$ & $0.9595$ & $0.9680$ & $0.9765$ & $0.9770$ & $0.9805$ & $0.9800$ \\
& PCVE & $0.9415$ & $0.9605$ & $0.9675$ & $0.9770$ & $0.9780$ & $0.9800$ & $0.9805$ \\
\multicolumn{9}{l}{} \\
\multirow{3}{*}{3.31}
& $\hat{v}_G^2$ & $0.8965$ & $0.9290$ & $0.9390$ & $0.9480$ & $0.9440$ & $0.9400$ & $0.9495$ \\
& CR & $0.9325$ & $0.9615$ & $0.9700$ & $0.9750$ & $0.9775$ & $0.9750$ & $0.9765$ \\
& PCVE & $0.9315$ & $0.9615$ & $0.9685$ & $0.9755$ & $0.9780$ & $0.9745$ & $0.9770$ \\
\multicolumn{9}{l}{} \\
\multirow{3}{*}{9.80}
& $\hat{v}_G^2$ & $0.8850$ & $0.9085$ & $0.9295$ & $0.9380$ & $0.9360$ & $0.9375$ & $0.9445$ \\
& CR & $0.9155$ & $0.9460$ & $0.9640$ & $0.9660$ & $0.9660$ & $0.9685$ & $0.9755$ \\
& PCVE & $0.9175$ & $0.9450$ & $0.9635$ & $0.9660$ & $0.9665$ & $0.9680$ & $0.9755$ \\
\hline
\multicolumn{9}{c}{} \\
\multicolumn{9}{c}{{\bf Average Length}} \\
\hline
\multirow{3}{*}{1.11}
& $\hat{v}_G^2$ & $1.64579$ & $1.11414$ & $0.80852$ & $0.57317$ & $0.46677$ & $0.40525$ & $0.36269$ \\
& CR & $1.88285$ & $1.31397$ & $0.96438$ & $0.68747$ & $0.56044$ & $0.48713$ & $0.43634$ \\
& PCVE & $1.88367$ & $1.31373$ & $0.96432$ & $0.68752$ & $0.56044$ & $0.48718$ & $0.43636$ \\
\multicolumn{9}{l}{} \\
\multirow{3}{*}{1.42}
& $\hat{v}_G^2$ & $1.67055$ & $1.13171$ & $0.81934$ & $0.58015$ & $0.47436$ & $0.41154$ & $0.36739$ \\
& CR & $1.90602$ & $1.32885$ & $0.97303$ & $0.69262$ & $0.56755$ & $0.49258$ & $0.44032$ \\
& PCVE & $1.90579$ & $1.32897$ & $0.97283$ & $0.69257$ & $0.56751$ & $0.49262$ & $0.44026$ \\
\multicolumn{9}{l}{} \\
\multirow{3}{*}{1.99}
& $\hat{v}_G^2$ & $1.67377$ & $1.14094$ & $0.83413$ & $0.59068$ & $0.48377$ & $0.41909$ & $0.37493$ \\
& CR & $1.90337$ & $1.33455$ & $0.98635$ & $0.70162$ & $0.57506$ & $0.49879$ & $0.44584$ \\
& PCVE & $1.90395$ & $1.33471$ & $0.98606$ & $0.70146$ & $0.57506$ & $0.49874$ & $0.44586$ \\
\multicolumn{9}{l}{} \\
\multirow{3}{*}{3.31}
& $\hat{v}_G^2$ & $1.69386$ & $1.16940$ & $0.85636$ & $0.61062$ & $0.49954$ & $0.43424$ & $0.38770$ \\
& CR & $1.91395$ & $1.35515$ & $1.00133$ & $0.71846$ & $0.58755$ & $0.51145$ & $0.45702$ \\
& PCVE & $1.91241$ & $1.35461$ & $1.00137$ & $0.71861$ & $0.58755$ & $0.51149$ & $0.45699$ \\
\multicolumn{9}{l}{} \\
\multirow{3}{*}{9.80}
& $\hat{v}_G^2$ & $1.74999$ & $1.23124$ & $0.90607$ & $0.64424$ & $0.52971$ & $0.45990$ & $0.41091$ \\
& CR & $1.95803$ & $1.40591$ & $1.04446$ & $0.74668$ & $0.61421$ & $0.53318$ & $0.47665$ \\
& PCVE & $1.95767$ & $1.40633$ & $1.04420$ & $0.74671$ & $0.61422$ & $0.53315$ & $0.47665$ \\
\hline\hline
\end{tabular}
\label{tab:model4X}
\begin{tablenotes}
\item [*] Number of clusters $=2G$ with $G=12, 26, 50, 100, 150, 200, 250$. Number of replications for each $G$ is $2000$. $N_{max}=500$.
\end{tablenotes}
\end{threeparttable}
\end{table}

\begin{table}[htbp]\centering
\begin{threeparttable}[b]
\def\sym#1{\ifmmode^{#1}\else\(^{#1}\)\fi}
\caption{Model 2 - Matching on $X_g$ and $N_g$\tnote{*}}
\begin{tabular}{c*{9}{c}}
\hline\hline
    \multicolumn{1}{c}{$N_{max}/N_{min}$}&\multicolumn{1}{c}{}
    &\multicolumn{1}{c}{$G=12$}&\multicolumn{1}{c}{$G=26$}&\multicolumn{1}{c}{$G=50$}
    &\multicolumn{1}{c}{$G=100$}&\multicolumn{1}{c}{$G=150$}
    &\multicolumn{1}{c}{$G=200$}&\multicolumn{1}{c}{$G=250$}\\
\hline
\multicolumn{9}{c}{} \\
\multicolumn{9}{c}{\bf Coverage} \\
\hline
\multirow{3}{*}{1.11}
& $\hat{v}_G^2$ & $0.9420$ & $0.9480$ & $0.9545$ & $0.9495$ & $0.9455$ & $0.9530$ & $0.9530$ \\
& CR & $0.9670$ & $0.9845$ & $0.9875$ & $0.9900$ & $0.9915$ & $0.9950$ & $0.9935$ \\
& PCVE & $0.9680$ & $0.9850$ & $0.9865$ & $0.9900$ & $0.9910$ & $0.9950$ & $0.9935$ \\
\multicolumn{9}{l}{} \\
\multirow{3}{*}{1.42}
& $\hat{v}_G^2$ & $0.9315$ & $0.9475$ & $0.9515$ & $0.9530$ & $0.9515$ & $0.9580$ & $0.9510$ \\
& CR & $0.9665$ & $0.9850$ & $0.9850$ & $0.9895$ & $0.9915$ & $0.9955$ & $0.9955$ \\
& PCVE & $0.9660$ & $0.9850$ & $0.9845$ & $0.9900$ & $0.9915$ & $0.9960$ & $0.9955$ \\
\multicolumn{9}{l}{} \\
\multirow{3}{*}{1.99}
& $\hat{v}_G^2$ & $0.9270$ & $0.9430$ & $0.9510$ & $0.9520$ & $0.9480$ & $0.9575$ & $0.9520$ \\
& CR & $0.9650$ & $0.9825$ & $0.9885$ & $0.9905$ & $0.9930$ & $0.9970$ & $0.9945$ \\
& PCVE & $0.9670$ & $0.9815$ & $0.9880$ & $0.9900$ & $0.9930$ & $0.9970$ & $0.9945$ \\
\multicolumn{9}{l}{} \\
\multirow{3}{*}{3.31}
& $\hat{v}_G^2$ & $0.9160$ & $0.9365$ & $0.9525$ & $0.9480$ & $0.9510$ & $0.9525$ & $0.9485$ \\
& CR & $0.9580$ & $0.9795$ & $0.9890$ & $0.9885$ & $0.9930$ & $0.9955$ & $0.9940$ \\
& PCVE & $0.9580$ & $0.9800$ & $0.9890$ & $0.9890$ & $0.9930$ & $0.9955$ & $0.9940$ \\
\multicolumn{9}{l}{} \\
\multirow{3}{*}{9.80}
& $\hat{v}_G^2$ & $0.9065$ & $0.9330$ & $0.9430$ & $0.9510$ & $0.9515$ & $0.9495$ & $0.9510$ \\
& CR & $0.9410$ & $0.9765$ & $0.9845$ & $0.9890$ & $0.9880$ & $0.9955$ & $0.9915$ \\
& PCVE & $0.9430$ & $0.9755$ & $0.9830$ & $0.9890$ & $0.9875$ & $0.9955$ & $0.9915$ \\
\hline
\multicolumn{9}{c}{} \\
\multicolumn{9}{c}{{\bf Average Length}} \\
\hline
\multirow{3}{*}{1.11}
& $\hat{v}_G^2$ & $1.57502$ & $1.02869$ & $0.73036$ & $0.51031$ & $0.41388$ & $0.35765$ & $0.31902$ \\
& CR & $1.89796$ & $1.31976$ & $0.96665$ & $0.68810$ & $0.56233$ & $0.48793$ & $0.43636$ \\
& PCVE & $1.89800$ & $1.31982$ & $0.96657$ & $0.68813$ & $0.56236$ & $0.48790$ & $0.43634$ \\
\multicolumn{9}{l}{} \\
\multirow{3}{*}{1.42}
& $\hat{v}_G^2$ & $1.58361$ & $1.03237$ & $0.73193$ & $0.50975$ & $0.41335$ & $0.35758$ & $0.31856$ \\
& CR & $1.91602$ & $1.33100$ & $0.97594$ & $0.69418$ & $0.56753$ & $0.49302$ & $0.44052$ \\
& PCVE & $1.91549$ & $1.33128$ & $0.97597$ & $0.69423$ & $0.56756$ & $0.49301$ & $0.44049$ \\
\multicolumn{9}{l}{} \\
\multirow{3}{*}{1.99}
& $\hat{v}_G^2$ & $1.61080$ & $1.04567$ & $0.74313$ & $0.51722$ & $0.41903$ & $0.36217$ & $0.32297$ \\
& CR & $1.93406$ & $1.34395$ & $0.98875$ & $0.70392$ & $0.57534$ & $0.49967$ & $0.44684$ \\
& PCVE & $1.93403$ & $1.34409$ & $0.98881$ & $0.70388$ & $0.57529$ & $0.49964$ & $0.44680$ \\
\multicolumn{9}{l}{} \\
\multirow{3}{*}{3.31}
& $\hat{v}_G^2$ & $1.63660$ & $1.07550$ & $0.76774$ & $0.53170$ & $0.43114$ & $0.37227$ & $0.33175$ \\
& CR & $1.94629$ & $1.37114$ & $1.01341$ & $0.72038$ & $0.58976$ & $0.51183$ & $0.45771$ \\
& PCVE & $1.94802$ & $1.37098$ & $1.01337$ & $0.72047$ & $0.58984$ & $0.51198$ & $0.45771$ \\
\multicolumn{9}{l}{} \\
\multirow{3}{*}{9.80}
& $\hat{v}_G^2$ & $1.70687$ & $1.13039$ & $0.80947$ & $0.55966$ & $0.45337$ & $0.39151$ & $0.34801$ \\
& CR & $1.98400$ & $1.41410$ & $1.05392$ & $0.75111$ & $0.61528$ & $0.53484$ & $0.47768$ \\
& PCVE & $1.98403$ & $1.41488$ & $1.05356$ & $0.75103$ & $0.61532$ & $0.53482$ & $0.47769$ \\
\hline\hline
\end{tabular}
\label{tab:model4XN}
\begin{tablenotes}
\item [*] Number of clusters $=2G$ with $G=12, 26, 50, 100, 150, 200, 250$. Number of replications for each $G$ is $2000$. $N_{max}=500$.
\end{tablenotes}
\end{threeparttable}
\end{table}

\begin{table}[htbp]\centering
\begin{threeparttable}[b]
\def\sym#1{\ifmmode^{#1}\else\(^{#1}\)\fi}
\caption{Model 1 - Randomization Test (RT) vs. CR/PCVE \tnote{*}}\label{tab:RT_m1}
\begin{tabular}{ccccc|ccc}
\hline\hline
    &&\multicolumn{3}{c|}{Size under $H_0$}&\multicolumn{3}{c}{Power under $H_1: \Delta_0 + 1/4$} \\
    \multicolumn{1}{c}{$N_{max}/N_{min}$}&\multicolumn{1}{c}{}
    &\multicolumn{1}{c}{$G=12$}&\multicolumn{1}{c}{$G=26$}&\multicolumn{1}{c|}{$G=50$}
    &\multicolumn{1}{c}{$G=12$}&\multicolumn{1}{c}{$G=26$}&\multicolumn{1}{c}{$G=50$}\\
\hline
\multicolumn{8}{c}{} \\
\multicolumn{8}{c}{\bf Matching on $X_g$} \\
\hline
\multirow{3}{*}{1.11}
& RT & $0.0395$ & $0.0560$ & $0.0505$ & $0.0755$ & $0.1220$ & $0.2030$  \\
& CR & $0.0015$ & $0.0010$ & $0.0005$ & $0.0095$ & $0.0105$ & $0.0160$   \\
& PCVE & $0.0770$ & $0.0690$ & $0.0615$ & $0.1195$ & $0.1410$ & $0.1995$  \\
\multicolumn{5}{l|}{}&\multicolumn{3}{l}{} \\
\multirow{3}{*}{1.42}
& RT & $0.0610$ & $0.0445$ & $0.0540$ & $0.0935$ & $0.1055$ & $0.1970$  \\
& CR & $0.0020$ & $0.0005$ & $0.0015$ & $0.0105$ & $0.0105$ & $0.0210$  \\
& PCVE & $0.0965$ & $0.0620$ & $0.0625$ & $0.1365$ & $0.1220$ & $0.1955$  \\
\multicolumn{5}{l|}{}&\multicolumn{3}{l}{} \\
\multirow{3}{*}{1.99}
& RT & $0.0505$ & $0.0505$ & $0.0505$ & $0.0770$ & $0.1130$ & $0.1820$ \\
& CR & $0.0015$ & $0.0015$ & $0$ & $0.0130$ & $0.0100$ & $0.0195$  \\
& PCVE & $0.0905$ & $0.0770$ & $0.0580$ & $0.1195$ & $0.1260$ & $0.1825$  \\
\multicolumn{5}{l|}{}&\multicolumn{3}{l}{} \\
\multirow{3}{*}{3.31}
& RT & $0.0570$ & $0.0595$ & $0.0555$ & $0.0745$ & $0.1130$ & $0.1670$  \\
& CR & $0.0050$ & $0.0020$ & $0.0020$ & $0.0145$ & $0.0190$ & $0.0270$  \\
& PCVE & $0.1020$ & $0.0845$ & $0.0670$ & $0.1220$ & $0.1340$ & $0.1760$  \\
\multicolumn{5}{l|}{}&\multicolumn{3}{l}{} \\
\multirow{3}{*}{9.80}
& RT & $0.0455$ & $0.0500$ & $0.0475$ & $0.0715$ & $0.1105$ & $0.1410$  \\
& CR & $0.0075$ & $0.0060$ & $0.0030$ & $0.0280$ & $0.0230$ & $0.0305$  \\
& PCVE & $0.1075$ & $0.0900$ & $0.0635$ & $0.1335$ & $0.1380$ & $0.1605$  \\
\hline
\multicolumn{8}{c}{} \\
\multicolumn{8}{c}{\bf Matching on $X_g$ and $N_g$} \\
\hline
\multirow{3}{*}{1.11}
& RT & $0.0490$ & $0.0535$ & $0.0585$ & $0.1165$ & $0.3050$ & $0.6760$  \\
& CR & $0$ & $0$ & $0$ & $0$ & $0$ & $0$  \\
& PCVE & $0.0900$ & $0.0740$ & $0.0640$ & $0.1540$ & $0.2395$ & $0.5015$  \\
\multicolumn{5}{l|}{}&\multicolumn{3}{l}{} \\
\multirow{3}{*}{1.42}
& RT & $0.0440$ & $0.0475$ & $0.0480$ & $0.1290$ & $0.3595$ & $0.7820$  \\
& CR & $0$ & $0$ & $0$ & $0$ & $0$ & $0$  \\
& PCVE & $0.0785$ & $0.0595$ & $0.0575$ & $0.1635$ & $0.2810$ & $0.5705$ \\
\multicolumn{5}{l|}{}&\multicolumn{3}{l}{} \\
\multirow{3}{*}{1.99}
& RT & $0.0510$ & $0.0400$ & $0.0480$ & $0.1255$ & $0.3380$ & $0.7795$ \\
& CR & $0$ & $0$ & $0$ & $0$ & $0$ & $0$  \\
& PCVE & $0.0890$ & $0.0560$ & $0.0605$ & $0.1580$ & $0.2630$ & $0.5785$  \\
\multicolumn{5}{l|}{}&\multicolumn{3}{l}{} \\
\multirow{3}{*}{3.31}
& RT & $0.0440$ & $0.0500$ & $0.0555$ & $0.1185$ & $0.3370$ & $0.7075$  \\
& CR & $0$ & $0$ & $0$ & $0$ & $0$ & $0$  \\
& PCVE & $0.0850$ & $0.0710$ & $0.0675$ & $0.1590$ & $0.2825$ & $0.5220$  \\
\multicolumn{5}{l|}{}&\multicolumn{3}{l}{} \\
\multirow{3}{*}{9.80}
& RT & $0.0525$ & $0.0550$ & $0.0500$ & $0.1180$ & $0.2780$ & $0.5965$  \\
& CR & $0$ & $0$ & $0$ & $0.0005$ & $0$ & $0$  \\
& PCVE & $0.1140$ & $0.0775$ & $0.0620$ & $0.1750$ & $0.2540$ & $0.4625$ \\
\hline\hline
\end{tabular}
\begin{tablenotes}
\item [*] Number of clusters $=2G$ with $G=12, 26, 50$. Number of replications for each $G$ is $2000$. $N_{max}=500$.
\end{tablenotes}
\end{threeparttable}
\end{table}

\begin{table}[htbp]\centering
\begin{threeparttable}[b]
\def\sym#1{\ifmmode^{#1}\else\(^{#1}\)\fi}
\caption{Model 2 - Randomization Test (RT) vs. CR/PCVE\tnote{*}}\label{tab:RT_m4}
\begin{tabular}{ccccc|ccc}
\hline\hline
    &&\multicolumn{3}{c|}{Size under $H_0$}&\multicolumn{3}{c}{Power under $H_1: \Delta_0 + 1/4$} \\
    \multicolumn{1}{c}{$N_{max}/N_{min}$}&\multicolumn{1}{c}{}
    &\multicolumn{1}{c}{$G=12$}&\multicolumn{1}{c}{$G=26$}&\multicolumn{1}{c|}{$G=50$}
    &\multicolumn{1}{c}{$G=12$}&\multicolumn{1}{c}{$G=26$}&\multicolumn{1}{c}{$G=50$}\\
\hline
\multicolumn{8}{c}{} \\
\multicolumn{8}{c}{\bf Matching on $X_g$} \\
\hline
\multirow{3}{*}{1.11}
& RT & $0.0345$ & $0.0425$ & $0.0480$ & $0.0305$ & $0.0790$ & $0.1650$  \\
& CR & $0.0430$ & $0.0365$ & $0.0245$ & $0.0540$ & $0.0645$ & $0.1120$  \\
& PCVE & $0.0440$ & $0.0355$ & $0.0250$ & $0.0550$ & $0.0655$ & $0.1115$  \\
\multicolumn{5}{l|}{}&\multicolumn{3}{l}{} \\
\multirow{3}{*}{1.42}
& RT & $0.0370$ & $0.0365$ & $0.0445$ & $0.0370$ & $0.0675$ & $0.1685$ \\
& CR & $0.0475$ & $0.0295$ & $0.0295$ & $0.0575$ & $0.0560$ & $0.1125$  \\
& PCVE & $0.0465$ & $0.0290$ & $0.0295$ & $0.0560$ & $0.0540$ & $0.1145$  \\
\multicolumn{5}{l|}{}&\multicolumn{3}{l}{} \\
\multirow{3}{*}{1.99}
& RT & $0.0465$ & $0.0445$ & $0.0490$ & $0.0385$ & $0.0785$ & $0.1485$  \\
& CR & $0.0585$ & $0.0405$ & $0.0320$ & $0.0620$ & $0.0675$ & $0.1005$  \\
& PCVE & $0.0585$ & $0.0395$ & $0.0325$ & $0.0615$ & $0.0675$ & $0.1005$  \\
\multicolumn{5}{l|}{}&\multicolumn{3}{l}{} \\
\multirow{3}{*}{3.31}
& RT & $0.0565$ & $0.0495$ & $0.0520$ & $0.0390$ & $0.0660$ & $0.1360$  \\
& CR & $0.0675$ & $0.0385$ & $0.0300$ & $0.0610$ & $0.0620$ & $0.1010$  \\
& PCVE & $0.0685$ & $0.0385$ & $0.0315$ & $0.0595$ & $0.0625$ & $0.1025$  \\
\multicolumn{5}{l|}{}&\multicolumn{3}{l}{} \\
\multirow{3}{*}{9.80}
& RT & $0.0700$ & $0.0660$ & $0.0600$ & $0.0405$ & $0.0550$ & $0.1140$  \\
& CR & $0.0845$ & $0.0540$ & $0.0360$ & $0.0585$ & $0.0600$ & $0.0895$  \\
& PCVE & $0.0825$ & $0.0550$ & $0.0365$ & $0.0595$ & $0.0580$ & $0.0895$  \\
\hline
\multicolumn{8}{c}{} \\
\multicolumn{8}{c}{\bf Matching on $X_g$ and $N_g$} \\
\hline
\multirow{3}{*}{1.11}
& RT & $0.0250$ & $0.0310$ & $0.0370$ & $0.0195$ & $0.0735$ & $0.1800$  \\
& CR & $0.0330$ & $0.0155$ & $0.0125$ & $0.0240$ & $0.0365$ & $0.0765$  \\
& PCVE & $0.0320$ & $0.0150$ & $0.0135$ & $0.0235$ & $0.0360$ & $0.0790$  \\
\multicolumn{5}{l|}{}&\multicolumn{3}{l}{} \\
\multirow{3}{*}{1.42}
& RT & $0.0295$ & $0.0290$ & $0.0345$ & $0.0205$ & $0.0730$ & $0.1740$  \\
& CR & $0.0335$ & $0.0150$ & $0.0150$ & $0.0245$ & $0.0385$ & $0.0640$  \\
& PCVE & $0.0340$ & $0.0150$ & $0.0155$ & $0.0250$ & $0.0365$ & $0.0675$  \\
\multicolumn{5}{l|}{}&\multicolumn{3}{l}{} \\
\multirow{3}{*}{1.99}
& RT & $0.0345$ & $0.0325$ & $0.0415$ & $0.0200$ & $0.0665$ & $0.1655$  \\
& CR & $0.0350$ & $0.0175$ & $0.0115$ & $0.0225$ & $0.0310$ & $0.0600$  \\
& PCVE & $0.0330$ & $0.0185$ & $0.0120$ & $0.0230$ & $0.0320$ & $0.0610$  \\
\multicolumn{5}{l|}{}&\multicolumn{3}{l}{} \\
\multirow{3}{*}{3.31}
& RT & $0.0390$ & $0.0390$ & $0.0340$ & $0.0150$ & $0.0590$ & $0.1415$  \\
& CR & $0.0420$ & $0.0205$ & $0.0110$ & $0.0220$ & $0.0295$ & $0.0610$  \\
& PCVE & $0.0420$ & $0.0200$ & $0.0110$ & $0.0210$ & $0.0310$ & $0.0595$  \\
\multicolumn{5}{l|}{}&\multicolumn{3}{l}{} \\
\multirow{3}{*}{9.80}
& RT & $0.0555$ & $0.0445$ & $0.0415$ & $0.0260$ & $0.0405$ & $0.1180$  \\
& CR & $0.0590$ & $0.0235$ & $0.0155$ & $0.0295$ & $0.0270$ & $0.0505$  \\
& PCVE & $0.0570$ & $0.0245$ & $0.0170$ & $0.0295$ & $0.0265$ & $0.0510$  \\
\hline\hline
\end{tabular}
\begin{tablenotes}
\item [*] Number of clusters $=2G$ with $G=12, 26, 50$. Number of replications for each $G$ is $2000$. $N_{max}=500$.
\end{tablenotes}
\end{threeparttable}
\end{table} 

\begin{table}[htbp]\centering
\begin{threeparttable}[b]
\def\sym#1{\ifmmode^{#1}\else\(^{#1}\)\fi}
\caption{Covariate Adjustment - Matching on $X_g$\tnote{*}}
\begin{tabular}{c*{9}{c}}
\hline\hline
    \multicolumn{1}{c}{$N_{max}/N_{min}$}&\multicolumn{1}{c}{$\psi_g$}
    &\multicolumn{1}{c}{$G=12$}&\multicolumn{1}{c}{$G=26$}&\multicolumn{1}{c}{$G=50$}
    &\multicolumn{1}{c}{$G=100$}&\multicolumn{1}{c}{$G=150$}
    &\multicolumn{1}{c}{$G=200$}&\multicolumn{1}{c}{$G=250$}\\
\hline
\multicolumn{9}{c}{} \\
\multicolumn{9}{c}{\bf Coverage} \\
\hline
1.11
& -                 & 0.9015 & 0.9235 & 0.9435 & 0.9395 & 0.9365 & 0.9445 & 0.9485 \\
& $C_g$ & 0.8305 & 0.9025 & 0.9240 & 0.9410 & 0.9435 & 0.9455 & 0.9430 \\
\multicolumn{9}{l}{} \\
1.42
& -                 & 0.9070 & 0.9315 & 0.9365 & 0.9405 & 0.9455 & 0.9490 & 0.9525 \\
& $C_g$ & 0.8415 & 0.9060 & 0.9280 & 0.9430 & 0.9450 & 0.9455 & 0.9515 \\
\multicolumn{9}{l}{} \\
1.99
& -                 & 0.9050 & 0.9310 & 0.9450 & 0.9450 & 0.9480 & 0.9530 & 0.9465 \\
& $C_g$ & 0.8380 & 0.9025 & 0.9310 & 0.9395 & 0.9450 & 0.9480 & 0.9495 \\
\multicolumn{9}{l}{} \\
3.31
& -                 & 0.9100 & 0.9340 & 0.9410 & 0.9535 & 0.9520 & 0.9490 & 0.9485 \\
& $C_g$ & 0.8475 & 0.9065 & 0.9335 & 0.9400 & 0.9450 & 0.9450 & 0.9465 \\
\multicolumn{9}{l}{} \\
9.80
& -                 & 0.8975 & 0.9305 & 0.9410 & 0.9435 & 0.9420 & 0.9430 & 0.9545 \\
& $C_g$ & 0.8290 & 0.8885 & 0.9365 & 0.9405 & 0.9415 & 0.9430 & 0.9475 \\
\hline
\multicolumn{9}{c}{} \\
\multicolumn{9}{c}{{\bf Average Length}} \\
\hline
1.11
& -                 & 1.86744 & 1.31289 & 0.95830 & 0.68388 & 0.55761 & 0.48368 & 0.43289 \\
& $C_g$ & 1.24948 & 0.91803 & 0.68139 & 0.49245 & 0.40117 & 0.34947 & 0.31297 \\
\multicolumn{9}{l}{} \\
1.42
& -                 & 1.86822 & 1.30105 & 0.95121 & 0.67677 & 0.55462 & 0.48111 & 0.43046 \\
& $C_g$ & 1.27135 & 0.91549 & 0.67994 & 0.48916 & 0.40149 & 0.34852 & 0.31232 \\
\multicolumn{9}{l}{} \\
1.99
& -                 & 1.85639 & 1.29289 & 0.94626 & 0.67421 & 0.55160 & 0.47822 & 0.42849 \\
& $C_g$ & 1.26315 & 0.91509 & 0.68035 & 0.48902 & 0.40081 & 0.34844 & 0.31184 \\
\multicolumn{9}{l}{} \\
3.31
& -                 & 1.83716 & 1.29155 & 0.94173 & 0.67099 & 0.54871 & 0.47588 & 0.42645 \\
& $C_g$ & 1.24978 & 0.92179 & 0.68201 & 0.48944 & 0.40179 & 0.34984 & 0.31320 \\
\multicolumn{9}{l}{} \\
9.80
& -                 & 1.83555 & 1.28894 & 0.93697 & 0.66756 & 0.54602 & 0.47402 & 0.42411 \\
& $C_g$ & 1.27637 & 0.92561 & 0.68705 & 0.49519 & 0.40581 & 0.35303 & 0.31622 \\
\hline\hline
\end{tabular}
\label{tab:modelH}
\begin{tablenotes}
\item [*] Number of clusters $=2G$ with $G=12, 26, 50, 100, 150, 200, 250$. Number of replications for each $G$ is $2000$. $N_{max}=500$.
\end{tablenotes}
\end{threeparttable}
\end{table}

\begin{table}[htbp]\centering
\begin{threeparttable}[b]
\def\sym#1{\ifmmode^{#1}\else\(^{#1}\)\fi}
\caption{Covariate Adjustment - Matching on $X_g$ and $N_g$\tnote{*}}
\begin{tabular}{c*{9}{c}}
\hline\hline
    \multicolumn{1}{c}{$N_{max}/N_{min}$}&\multicolumn{1}{c}{$\psi_g$}
    &\multicolumn{1}{c}{$G=12$}&\multicolumn{1}{c}{$G=26$}&\multicolumn{1}{c}{$G=50$}
    &\multicolumn{1}{c}{$G=100$}&\multicolumn{1}{c}{$G=150$}
    &\multicolumn{1}{c}{$G=200$}&\multicolumn{1}{c}{$G=250$}\\
\hline
\multicolumn{9}{c}{} \\
\multicolumn{9}{c}{\bf Coverage} \\
\hline
1.11
& -                 & 0.9120 & 0.9275 & 0.9475 & 0.9395 & 0.9425 & 0.9510 & 0.9425 \\
& $C_g$ & 0.8385 & 0.8920 & 0.9335 & 0.9400 & 0.9465 & 0.9475 & 0.9495 \\
\multicolumn{9}{l}{} \\
1.42
& -                 & 0.9135 & 0.9245 & 0.9415 & 0.9445 & 0.9495 & 0.9425 & 0.9425 \\
& $C_g$ & 0.8485 & 0.9000 & 0.9285 & 0.9435 & 0.9470 & 0.9490 & 0.9475 \\
\multicolumn{9}{l}{} \\
1.99
& -                 & 0.9085 & 0.9250 & 0.9420 & 0.9470 & 0.9455 & 0.9545 & 0.9520 \\
& $C_g$ & 0.8425 & 0.9035 & 0.9345 & 0.9410 & 0.9505 & 0.9460 & 0.9470 \\
\multicolumn{9}{l}{} \\
3.31
& -                 & 0.9090 & 0.9265 & 0.9340 & 0.9515 & 0.9465 & 0.9465 & 0.9535 \\
& $C_g$ & 0.8410 & 0.9075 & 0.9365 & 0.9390 & 0.9435 & 0.9490 & 0.9500 \\
\multicolumn{9}{l}{} \\
9.80
& -                 & 0.9070 & 0.9245 & 0.9330 & 0.9375 & 0.9510 & 0.9455 & 0.9440 \\
& $C_g$ & 0.8440 & 0.9015 & 0.9275 & 0.9415 & 0.9510 & 0.9400 & 0.9475 \\
\hline
\multicolumn{9}{c}{} \\
\multicolumn{9}{c}{{\bf Average Length}} \\
\hline
1.11
& -                 & 1.77556 & 1.21499 & 0.88201 & 0.62584 & 0.51123 & 0.44346 & 0.39699 \\
& $C_g$ & 1.31267 & 0.93535 & 0.68999 & 0.49308 & 0.40413 & 0.35129 & 0.31419 \\
\multicolumn{9}{l}{} \\
1.42
& -                 & 1.74117 & 1.20501 & 0.87067 & 0.62002 & 0.50712 & 0.43888 & 0.39274 \\
& $C_g$ & 1.31317 & 0.92993 & 0.68771 & 0.49157 & 0.40238 & 0.34915 & 0.31221 \\
\multicolumn{9}{l}{} \\
1.99
& -                 & 1.72916 & 1.19588 & 0.86887 & 0.61669 & 0.50509 & 0.43677 & 0.39112 \\
& $C_g$ & 1.30301 & 0.93106 & 0.68850 & 0.49048 & 0.40134 & 0.34801 & 0.31173 \\
\multicolumn{9}{l}{} \\
3.31
& -                 & 1.71004 & 1.19463 & 0.86708 & 0.61577 & 0.50301 & 0.43573 & 0.39127 \\
& $C_g$ & 1.30080 & 0.93384 & 0.68661 & 0.48951 & 0.40075 & 0.34720 & 0.31157 \\
\multicolumn{9}{l}{} \\
9.80
& -                 & 1.72505 & 1.19952 & 0.86484 & 0.61768 & 0.50429 & 0.43672 & 0.39197 \\
& $C_g$ & 1.31500 & 0.93975 & 0.68887 & 0.49150 & 0.40285 & 0.34975 & 0.31339 \\
\hline\hline
\end{tabular}
\label{tab:modelHN}
\begin{tablenotes}
\item [*] Number of clusters $=2G$ with $G=12, 26, 50, 100, 150, 200, 250$. Number of replications for each $G$ is $2000$. $N_{max}=500$.
\end{tablenotes}
\end{threeparttable}
\end{table}

\clearpage
\bibliography{mp-clusters}

\clearpage
\newpage

\appendix

\begin{center}
    \Large Supplemental Appendix: For Online Publication
\end{center}

\section{Sufficient Conditions for Assumptions \ref{ass:pair_formX} and \ref{ass:pair_form}}\label{sec:match_sufficient}
We only lay out the argument for Assumption \ref{ass:pair_formX} and an identical argument applies to Assumption \ref{ass:pair_form}. Let $k_x = \mathrm{dim}(X_g)$. Note
\begin{equation}\label{eq:nbp}
    \frac{1}{G}\sum_{1 \le j \le G}\|X_{\pi(2j)} - X_{\pi(2j-1)}\|^r \leq \Big ( 1 \vee \max_{1 \leq g \leq 2G} \|X_g\|^r \Big ) \frac{1}{G}\sum_{1 \le j \le G} \bigg \| \frac{X_{\pi(2j)} - X_{\pi(2j-1)}}{1 \vee \max_{1 \leq g \leq 2G} \|X_g\|} \bigg \|^r~.
\end{equation}
Consider a non-bipartite matching algorithm that minimizes the left-hand side of \eqref{eq:nbp} for $ r = 2$ for Assumption \ref{ass:pair_formX} (or $r = 4$ for Assumption \ref{ass:pair_form}). Because
\[ X_g / \max_{1 \leq g \leq 2G} \|X_g\| \in [0, 1]^{k_x}~, \]
to study
\begin{equation} \label{eq:normalized-dist}
    \frac{1}{G}\sum_{1 \le j \le G} \bigg \| \frac{X_{\pi(2j)} - X_{\pi(2j-1)}}{1 \vee \max_{1 \leq g \leq 2G} \|X_g\|} \bigg \|^r~,
\end{equation}
we can assume without loss of generality that $X_g \in [0, 1]^{k_x}$ for $1 \leq g \leq 2G$. Consider as an auxiliary proof device the block-path algorithm in the proof of Theorem 4.2 in \cite{bai2022inference} with blocks of side lengths $1/m$. Using the inequality $c^r \leq c$ if $r \geq 1$ and $c \in [0, 1]$, note if $x_1, x_2 \in [0, 1]^{k_x}$, then
\[ \|x_1 - x_2\|^r = k_x^{2/r} (\|x_1 - x_2\| / \sqrt{k_x})^r \leq k_x^{2/r} \|x_1 - x_2\| / \sqrt{k_x} = k_x^{2/r - 1/2} \|x_1 - x_2\|~. \]
Therefore, following the proof of Theorem 4.2 in \cite{bai2022inference} or Lemma A.1 in \cite{cytrynbaum2021designing},
\[ \frac{1}{G} \sum_{1 \leq j \leq G} \|X_{\pi(2j)} - X_{\pi(2j - 1)}\|^r \leq \bigg ( \frac{\sqrt k_x}{m} \bigg )^r + \frac{2}{G} k_x^{2/r} m^{k_x - 1}~. \]
Taking $m \asymp G^{1/ (r + k_x - 1)}$, \eqref{eq:normalized-dist} is of order $G^{-r/(r + k_x - 1)}$. On the other hand, if $E[\|X_g\|^d] < \infty$, Lemma S.1.1 in \cite{bai2022inference} implies $\max_{1 \leq g \leq 2G} \|X_g\|^r = o_P(G^{r/d})$. Therefore, as long as $d \geq r + k_x - 1$, the left-hand side of \eqref{eq:nbp} converges to zero in probability.

Note further that, when verifying Assumption \ref{ass:pair_form}, if $\|W_g\|$ is bounded, then
\[ \frac{1}{G}\sum_{1 \le j \le G}\|W_{\pi(2j)} - W_{\pi(2j-1)}\|^4 \lesssim \frac{1}{G}\sum_{1 \le j \le G}\|W_{\pi(2j)} - W_{\pi(2j-1)}\|^2~, \]
and therefore any algorithm that minimizes the right-hand of the above display will satisfy Assumption \ref{ass:pair_form}.

\section{Proofs of Main Results}
Please note that in what follows we will use the notation $a \lesssim b$ to denote $a \le cb$ for some constant $c$.

\subsection{Proof of Theorem \ref{thm:normal_X}}
\begin{proof}
We have that
\[ \hat \Delta_G = \frac{\frac{1}{G}\sum_{1 \leq g \leq 2G} \bar{Y}_g(1) N_g D_g}{\frac{1}{G}\sum_{1 \leq g \leq 2G} N_g D_g} - \frac{\frac{1}{G}\sum_{1 \leq g \leq 2G} \bar{Y}_g(0) N_g (1 - D_g)}{\frac{1}{G}\sum_{1 \leq g \leq 2G} N_g (1 - D_g)}~. \]
In particular, for $h(x, y, z, w) = \frac{x}{y} - \frac{z}{w}$, observe that
\begin{align*}
\hat \Delta_G &= h \left( \frac{1}{G}\sum_{1 \leq g \leq 2G}\bar{Y}_g(1) N_g D_g, \frac{1}{G}\sum_{1 \leq g \leq 2G} N_g D_g, \frac{1}{G}\sum_{1 \leq g \leq 2G} \bar{Y}_g(0) N_g (1 - D_g), \frac{1}{G}\sum_{1 \leq g \leq 2G} N_g (1 - D_g) \right)~,
\end{align*}
and by Assumption \ref{ass:indep_pairsX},
\begin{multline*}
\Delta = h \Bigg( \frac{1}{G}\sum_{1 \leq g \leq 2G}E[\bar{Y}_g(1) N_g] D_g, \frac{1}{G}\sum_{1 \leq g \leq 2G} E[N_g] D_g, \\
\frac{1}{G}\sum_{1 \leq g \leq 2G} E[\bar{Y}_g(0) N_g] (1 - D_g), \frac{1}{G}\sum_{1 \leq g \leq 2G} E[N_g] (1 - D_g) \Bigg)~.
\end{multline*}
The Jacobian of $h(\cdot)$ is
\[ D_h(x, y, z, w) = \begin{pmatrix} \frac{1}{y} & - \frac{x}{y^2} & - \frac{1}{w} & \frac{z}{w^2} \end{pmatrix}~. \]
By Lemma \ref{lem:L_X} and the Delta method,
\[\sqrt{G}(\hat{\Delta}_G - \Delta) \xrightarrow{d} N(0, D_{h0}\mathbb{V}D_{h0}')~,\]
where
\[ D_{h0} = \begin{pmatrix} \frac{1}{E[N_g]} & -\frac{E[\bar{Y}_g(1)N_g]}{E[N_g]^2} & -\frac{1}{E[N_g]} & \frac{E[\bar{Y}_g(0)N_g]}{E[N_g]^2} \end{pmatrix} \]
and $\mathbb{V}$ is defined in Lemma \ref{lem:L_X}. It then follows from  Lemma \ref{lem:L_X_algebra} that
\[D_{h0}\mathbb{V}D_{h0}' = \omega^2~,\]
as desired.
\end{proof}

\subsection{Proof of Theorem \ref{thm:normal_N}}
\begin{proof}
This proof follows from an identical argument to Theorem \ref{thm:normal_X}, but this time invoking Lemmas \ref{lem:L_N} and \ref{lem:L_N_algebra}.
\end{proof}

\subsection{Proof of Theorem \ref{thm:variance-estimator}}
The desired conclusion follows immediately from Lemmas \ref{lemma:mu_G}-\ref{lemma:lambda_G} and the continuous mapping theorem. \qed

\subsection{Proof of Theorem \ref{thm:CR_limit}}
By the first result in Theorem 3.6 in \cite{bugni2024inference}, 
\begin{equation}\label{eq:omega_CR}
\hat{\omega}^2_{\rm CR,G} = \frac{1}{2}\left(\hat{\omega}^2_{\rm CR,G}(1) +  \hat{\omega}^2_{\rm CR,G}(0)\right)~,
\end{equation}
(where we note that the factor of $1/2$ appears since we are normalizing by the number of \emph{pairs}), and
\begin{equation*}
\hat{\omega}^2_{\rm CR,G}(d) := \frac{1}{\left(\frac{1}{2G}\sum_{1\le g \le 2G}{N_g}I\{D_g = d\}\right)^2}\frac{1}{2G}\sum_{1\le g \le 2G}\left[\left(\frac{N_g}{|\mathcal{M}_g|}\right)^2I\{D_g = d\}\left(\sum_{i\in \mathcal{M}_g} \hat{\epsilon}_{i,g}(d)\right)^2\right]~,
\end{equation*}
with
\[\hat{\epsilon}_{i,g}(d) := Y_{i,g} - \frac{1}{\sum_{1\le g \le 2G}N_gI\{D_g = d\}}\sum_{1 \le g \le 2G}N_g\bar{Y}_gI\{D_g = d\}~.\]
Fix $d \in \{0, 1\}$, $r \in \{0, 1, 2\}$, $\ell \in \{1, 2\}$ arbitrarily. Then by Lemmas \ref{lem:E_bounded} and \ref{lem:wlln}, 
\[\frac{1}{2G}\sum_{1 \le g \le 2G}N^{\ell}_g\bar{Y}_g^r(d)I\{D_g = d\} \xrightarrow{P} \frac{E[N^l\bar{Y}^r_g(d)]}{2}~.\]
The result then follows from additional algebra and repeated applications of the continuous mapping theorem; an identical derivation appears as the second result in Theorem 3.6 of \cite{bugni2024inference}. \qed

\subsection{Proof of Theorem \ref{thm:PCVE_limit}}
Let ${\bf 1}_{K}$ denote a column of ones of length $K$. Then consider the following cluster-robust variance estimator where clusters are defined at the level of the \emph{pair}:
\begin{equation}\label{eq:PCVE}
\Bigg(\frac{1}{G}\sum_{1\le j \le G}\sum_{g \in \lambda_j}X_g'X_g\Bigg)^{-1} \Bigg(\frac{1}{G}\sum_{1 \le j \le G}\left(\sum_{g \in \lambda_j}X_g' \hat{\epsilon}_g\right)\left(\sum_{g \in \lambda_j}X_g' \hat{\epsilon}_g\right)' \Bigg) \Bigg(\frac{1}{G}\sum_{1 \le g \le G}\sum_{g \in \lambda_j}X_g'X_g\Bigg)^{-1},
\end{equation}
where $\lambda_j: = \{\pi(2j-1),\pi(2j)\}$, and  
\begin{align*}
X_g~&:=~\left(\begin{array}{cc}
{\bf 1}_{|\mathcal{M}_g|} \cdot \sqrt{\frac{N_g}{|\mathcal{M}_g|}}, & ~~~~{\bf 1}_{|\mathcal{M}_g|} \cdot \sqrt{\frac{N_g}{|\mathcal{M}_g|}}D_g
\end{array}%
\right)\\
\hat{\epsilon}_g ~&:=~ \sqrt{\frac{N_g}{|\mathcal{M}_g|}}\left(Y_{i,g} - (\hat{\mu}_G(1) - \hat{\mu}_G(0))D_g - \hat{\mu}_G(0)~:~ i \in \mathcal{M}_g\right)'~.
\end{align*}
Imposing the condition that $N_g = n$ are equal and fixed and $|\mathcal{M}_g| = N_g$, and then following the algebra in, for instance, the proof of Theorem 3.4 in \cite{bai2024inference}, it can be shown that 
\[\hat{\omega}^2_{\rm PCVE, G} = \frac{1}{G} \sum_{1 \leq j \leq G} \left ( \sum_{g \in \lambda_j} \bar{Y}_g I \{D_g = 1\} - \sum_{g \in \lambda_j} \bar{Y}_g I \{D_g = 0\} \right )^2 - (\hat \mu_G(1) - \hat \mu_G(0))^2~.\]
By some additional algebra and repeated applications of Lemmas \ref{lem:wlln}, \ref{lem:cross}, and the continuous mapping theorem we thus obtain that
\begin{multline*}
    \hat{\omega}^2_{\rm PCVE, G} \xrightarrow{P} E[\var[\bar{Y}_g(1)|X_g]] + E[\var[\bar{Y}_g(1)|X_g]] \\
    + E[\left((E[\bar{Y}_g(1)|X_g] - E[\bar{Y}_g(1)]) - (E[\bar{Y}_g(0)|X_g] - E[\bar{Y}_g(0)])\right)^2]~.
\end{multline*}
Simplifying using the law of total variance and the fact that $\tilde{Y}_g(d) = \bar{Y}_g(d) - E[\bar{Y}_g(d)]$ once we impose that $N_g = n$, we then obtain
\[\hat{\omega}^2_{\rm PCVE, G} \xrightarrow{P} E[\tilde{Y}^2_g(1)] + E[\tilde{Y}^2_g(0)] - \frac{1}{2}E[(E[\tilde{Y}_g(1) + \tilde{Y}_g(0)|X_g])^2] + \frac{1}{2}E\left[(E[\tilde{Y}_g(1) - \tilde{Y}_g(0)|X_g])^2\right]~.\]
The conclusion then follows. \qed

\subsection{Proof of Theorem \ref{thm:rand_finite}}
\begin{proof}
Note that the null hypothesis \eqref{eq:dist_XN} combined with Assumption \ref{ass:QG}(e) implies that
\begin{align}\label{eq:bar_null}
\bar{Y}_g(1)|(X_g, N_g) \stackrel{d}{=} \bar{Y}_g(0)|(X_g, N_g)~.
\end{align}
If the assignment mechanism satisfies Assumption \ref{ass:indep_pairs}, the result then follows by applying Theorem 3.4 in \cite{bai2022inference} to the cluster-level outcomes $\{(\bar{Y}_g,D_g,X_g,N_g):1 \le g \le 2G\}$. If instead the assignment mechanism satisfies Assumption \ref{ass:indep_pairsX}, then note that \eqref{eq:bar_null} is in fact equivalent to the statement
\begin{align}\label{eq:bar_null2}
(\bar{Y}_g(1), N_g)|X_g \stackrel{d}{=} (\bar{Y}_g(0),N_g)|X_g~.
\end{align}
The result then follows by applying Theorem 3.4 in \cite{bai2022inference} using \eqref{eq:bar_null2} as the null hypothesis. To establish this equivalence, we first begin with \eqref{eq:bar_null} and verify that for any Borel sets $A$ and $B$,
\[ P \{\bar{Y}_g(1) \in A, N_g \in B | X_g\} = P \{\bar{Y}_g(0) \in A, N_g \in B | X_g\} \text{ a.s. } \]
By the definition of a conditional expectation, note we only need to verify for all Borel sets $C$,
\[ E[P \{\bar{Y}_g(1) \in A,  N_g \in B| X_g\} I \{X_g \in C\}] = P \{\bar{Y}_g(0) \in A, N_g \in B, X_g \in C \}~. \]
We have
\begin{align*}
& E[P \{\bar{Y}_g(1) \in A,  N_g \in B| X_g\} I \{X_g \in C\}] \\
& = P \{\bar{Y}_g(1) \in A, N_g \in B, X_g \in C\} \\
& = E[P \{\bar{Y}_g(1) \in A |X_g, N_g\}I\{N_g \in B\}I \{X_g \in C\}] \\
& = E[P \{\bar{Y}_g(0) \in A |X_g, N_g\}I\{N_g \in B\}I \{X_g \in C\}] \\
& = P \{\bar{Y}_g(0) \in A, N_g \in B, X_g \in C \}~,
\end{align*}
where the first and second equalities follow from the definition of conditional expectations, the the third follows from \eqref{eq:bar_null}, and the last follows again from the definition of a conditional expectation. The opposite implication follows from a similar argument and is thus omitted.
\end{proof}

\subsection{Proof of Theorem \ref{thm:rand_large}}
Note that
\begin{align*}
\sqrt{G}\hat{\Delta}_G =&\sqrt{G}\left(\frac{1}{N(1)}\sum_{1\le g \le 2G}D_gN_g\bar{Y}_g - \frac{1}{N(0)}\sum_{1 \le g \le 2G}(1 - D_g)N_g\bar{Y}_g\right) \\
&= \frac{1}{N(1)}\sqrt{G}\sum_{1 \le g \le 2G}\left(D_gN_g\bar{Y}_g - (1 - D_g)N_g\bar{Y}_g\right) + \left(\frac{1}{N(1)} - \frac{1}{N(0)}\right)\sqrt{G}\sum_{1 \le g \le 2G}(1 - D_g)N_g\bar{Y}_g \\
&= \frac{1}{N(1)/G}\frac{1}{\sqrt{G}}\sum_{1 \le j \le G}\left(N_{\pi(2j)}\bar{Y}_{\pi(2j)} - N_{\pi(2j-1)}\bar{Y}_{\pi(2j-1)}\right)(D_{\pi(2j)} - D_{\pi(2j-1)}) \\
& \hspace{3cm} + \frac{\frac{1}{\sqrt{G}}(N(0) - N(1))}{\frac{N(1)}{G}\frac{N(0)}{G}}\frac{1}{G}\sum_{1 \le g \le 2G}(1 - D_g)N_g\bar{Y}_g \\
&= \frac{1}{N(1)/G}\frac{1}{\sqrt{G}}\sum_{1 \le j \le G}\left(N_{\pi(2j)}\bar{Y}_{\pi(2j)} - N_{\pi(2j-1)}\bar{Y}_{\pi(2j-1)}\right)(D_{\pi(2j)} - D_{\pi(2j-1)}) \\ & \hspace{2cm}- \frac{\frac{1}{\sqrt{G}}\sum_{1 \le j \le G}(N_{\pi(2j)} - N_{\pi(2j-1)})(D_{\pi(2j)} - D_{\pi(2j-1)})}{\frac{N(1)}{G}\frac{N(0)}{G}}\frac{1}{G}\sum_{1 \le g \le 2G}(1 - D_g)N_g\bar{Y}_g~.
\end{align*}
Hence the randomization distribution of $\sqrt{G}\hat{\Delta}_G$ is given by
\begin{equation}\label{eq:rand_num}
\tilde{R}_G(t) :=P\Bigg\{\sqrt{G}\check\Delta(\epsilon_1, \ldots, \epsilon_G) \le t\Bigg|Z^{(G)}\Bigg\}~,
\end{equation}
where 
\begin{multline*}\sqrt{G}\check\Delta(\epsilon_1, \ldots, \epsilon_G) = \frac{1}{\tilde{N}(1)/G}\frac{1}{\sqrt{G}}\sum_{1 \le j \le G}\epsilon_j\left(N_{\pi(2j)}\bar{Y}_{\pi(2j)} - N_{\pi(2j-1)}\bar{Y}_{\pi(2j-1)}\right)(D_{\pi(2j)} - D_{\pi(2j-1)}) \\ 
- \frac{\frac{1}{\sqrt{G}}\sum_{1 \le j \le G}\epsilon_j(N_{\pi(2j)} - N_{\pi(2j-1)})(D_{\pi(2j)} - D_{\pi(2j-1)})}{\frac{\tilde{N}(1)}{G}\frac{\tilde{N}(0)}{G}}\frac{1}{G}\sum_{1 \le g \le 2G}(1 - \tilde{D}_g)N_g\bar{Y}_g~,
\end{multline*}
$\epsilon_j$, $j = 1,\ldots,G$ are i.i.d.\ Rademacher random variables generated independently of $Z^{(G)}$, $\{\tilde{D}_g: 1 \le g \le 2G\}$ denotes the assignment of cluster $g$ after applying the transformation implied by $\{\epsilon_j: 1 \le j \le G\}$, and 
\[\tilde{N}(d) = \sum_{1 \le g \le 2G}N_gI\{\tilde{D}_g = d\}~.\]
By construction, $\hat{v}_G^2$ evaluated at the transformation of the data implied by $\{\epsilon_j: 1 \le j \le G\}$ is given by 
\begin{equation}\label{eq:check_v}
    \check{v}_G^2(\epsilon_1,\dots, \epsilon_G) = \hat{\tau}_G^2-\frac{1}{2}\check{\lambda}_G^2\left(\epsilon_1, \ldots, \epsilon_G\right)
\end{equation}
where $\hat{\tau}_G^2$ is defined in (\ref{eqn:define-variance-estimator}), and
\begin{multline*}
    \check{\lambda}_G^2\left(\epsilon_1, \ldots, \epsilon_G\right) = \frac{2}{G} \sum_{1 \leq j \leq\left\lfloor G/2\right\rfloor}\epsilon_{2j-1}\epsilon_{2j}\left(\hat Y_{\pi(4 j-3)}-\hat Y_{\pi(4 j-2)}\right)\left(\hat Y_{\pi(4 j-1)}- \hat Y_{\pi(4 j)}\right) \\
    \times \left(D_{\pi(4 j-3)}-D_{\pi(4 j-2)}\right)\left(D_{\pi(4 j-1)}-D_{\pi(4 j)}\right)~.
\end{multline*}
The desired conclusion then follows from Lemmas \ref{lem:rand_numerator} and \ref{lem:rand_denom}, along with Theorem 5.2 in \cite{chung2013exact}. \qed

\subsection{Proof of Theorem \ref{thm:adj}}
\noindent \underline{Step 1: Limit of $\hat\beta_G$}

We first establish that $\hat \beta_G \xrightarrow{P} \beta^\ast$ for $\beta^\ast$ in \eqref{eq:betastar}. Recall that $\hat \beta_G$ is the OLS estimator of the slope coefficient in the linear regression of $(\hat Y_{\pi(2g - 1)}  \bar N_G - \hat Y_{\pi(2g)} \bar N_G )(D_{\pi(2g - 1)} - D_{\pi(2g)})$ on a constant and $(\psi_{\pi(2g - 1)} - \psi_{\pi(2g)})(D_{\pi(2g - 1)} - D_{\pi(2g)})$, where $\bar N_G = \frac{1}{2 G} \sum_{1\leq g \leq 2 G} N_g$. Equivalently, we have $\hat \beta_G$ as the OLS estimator of the slope coefficient in the linear regression of $\hat\mu_{1,j} - \hat\mu_{0,j}$ on a constant and $\hat\psi_{1,j} - \hat\psi_{0,j}$, where
\begin{align*}
    \hat\mu_{1,j} &= \left(\bar Y_{\pi(2j-1)}(1) -  \frac{\frac{1}{G}\sum_{1\leq g \leq 2G} \bar{Y}_g D_g  N_g}{\frac{1}{G}\sum_{1\leq g \leq 2G} D_g  N_g}\right) N_{\pi(2j-1)} D_{\pi(2j-1)} \\
    &\hspace{3em} + \left(\bar Y_{\pi(2j)}(1) -  \frac{\frac{1}{G}\sum_{1\leq g \leq 2G} \bar{Y}_g D_g  N_g}{\frac{1}{G}\sum_{1\leq g \leq 2G} D_g  N_g}\right) N_{\pi(2j)} D_{\pi(2j)} \\
    \hat\mu_{0,j} &= \left(\bar Y_{\pi(2j-1)}(0) - \frac{\frac{1}{G}\sum_{1\leq g \leq 2G} \bar{Y}_g (1-D_g)  N_g}{\frac{1}{G}\sum_{1\leq g \leq 2G} (1-D_g)  N_g} \right) N_{\pi(2j-1)} (1-D_{\pi(2j-1)}) \\
    &\hspace{3em}+ \left(\bar Y_{\pi(2j)}(0) - \frac{\frac{1}{G}\sum_{1\leq g \leq 2G} \bar{Y}_g (1-D_g)  N_g}{\frac{1}{G}\sum_{1\leq g \leq 2G} (1-D_g)  N_g}\right)  N_{\pi(2j)} (1-D_{\pi(2j)}) ~. \\
    \hat\psi_{1,j} &= \psi_{\pi(2j-1)}  D_{\pi(2j-1)} + \psi_{\pi(2j)}  D_{\pi(2j)}\\
    \hat\psi_{0,j} &= \psi_{\pi(2j-1)} (1- D_{\pi(2j-1)} )+ \psi_{\pi(2j)}  (1-D_{\pi(2j)}) ~.
\end{align*}
We start by studying an infeasible version of $\hat \beta_G$. Let $\tilde \beta_G$ denote the OLS estimator of the slope coefficient in the linear regression of $\tilde\mu_{1,j} - \tilde\mu_{0,j}$ on a constant and $\hat\psi_{1,j} - \hat\psi_{0,j}$ with $j$ denoting the pair, where
\begin{align*}
    \tilde\mu_{1,j} &= \left(\bar Y_{\pi(2j-1)}(1) -  \frac{E[\bar Y_{g}(1) N_g]}{E[N_g]}\right) N_{\pi(2j-1)} D_{\pi(2j-1)} \\
    & \hspace{3em} + \left(\bar Y_{\pi(2j)}(1) -  \frac{E[\bar Y_{g}(1) N_g]}{E[N_g]}\right) N_{\pi(2j)} D_{\pi(2j)} \\
    \tilde\mu_{0,j} &= \left(\bar Y_{\pi(2j-1)}(0) -  \frac{E[\bar Y_{g}(0) N_g]}{E[N_g]}\right) N_{\pi(2j-1)} (1-D_{\pi(2j-1)}) \\
    & \hspace{3em} + \left(\bar Y_{\pi(2j)}(0) -  \frac{E[\bar Y_{g}(0) N_g]}{E[N_g]}\right)  N_{\pi(2j)} (1-D_{\pi(2j)})~.
\end{align*}
Lemma \ref{lem:betatilde} then implies $\tilde\beta_G \xrightarrow{P} \beta^\ast$ for $\beta^\ast$ in \eqref{eq:betastar}. Lemma \ref{lem:tilde-hat} shows $\tilde \beta_G - \hat \beta_G \xrightarrow{P} 0$. Therefore, $\hat \beta_G \xrightarrow{P} \beta^\ast$.

\noindent \underline{Step 2: Improvement in Efficiency}

We first establish the limiting distribution of $\hat \Delta_G^{\rm adj}$. Define
\[ \bar \psi_{d, G} = \frac{1}{G} \sum_{1 \leq g \leq 2G} \psi_g I \{D_g = d\} \]
for $d \in \{0, 1\}$. Note that
\begin{align*}
    & \frac{1}{G} \sum_{1 \leq g \leq 2G} (\bar Y_g(1) N_g - (\psi_g - \bar \psi_G)' \hat \beta_G) D_g \\
    & = \frac{1}{G} \sum_{1 \leq g \leq 2G} (\bar Y_g(1) N_g - (\psi_g - \bar \psi_G)' \beta^\ast) D_g - \frac{1}{G} \sum_{1 \leq g \leq 2G} (\psi_g - \bar \psi_{1, G})' (\hat \beta_G - \beta^\ast) D_g - (\bar \psi_{1, G} - \bar \psi_G)' (\hat \beta_G - \beta^\ast) \\
    & = \frac{1}{G} \sum_{1 \leq g \leq 2G} (\bar Y_g(1) N_g - (\psi_g - \bar \psi_G)' \beta^\ast) D_g - O_P(G^{-1/2}) o_P(1) \\
    & = \frac{1}{G} \sum_{1 \leq g \leq 2G} (\bar Y_g(1) N_g - (\psi_g - \bar \psi_G)' \beta^\ast) D_g + o_P(G^{-1/2}) \\
    & = \frac{1}{G} \sum_{1 \leq g \leq 2G} (\bar Y_g(1) N_g - (\psi_g - E[\psi_g])' \beta^\ast) D_g - (\bar \psi_G - E[\psi_g])' \beta^\ast + o_P(G^{-1/2})~.
\end{align*}
where the second equality follows because $\hat \beta_G - \beta^\ast = o_P(1)$,
\[ \frac{1}{G} \sum_{1 \leq g \leq 2G} (\psi_g - \bar \psi_{1, G})D_g = 0~, \]
and
\[ \sqrt G(\bar \psi_{1, G} - \bar \psi_G) = O_P(1)~. \]
The last equality follows from the arguments that establish (A.24) in \cite{bai2024covariate}. Define
\begin{align*}
    \tilde \Delta_G^{\rm adj} & = \frac{\frac{1}{G} \sum_{1 \leq g \leq 2G} (\bar Y_g(1) N_g - (\psi_g - E[\psi_g])' \beta^\ast) D_g}{\frac{1}{G} \sum_{1 \leq g \leq 2G} N_g D_g} \\
    & \hspace{3em} - \frac{\frac{1}{G} \sum_{1 \leq g \leq 2G} (\bar Y_g(0) N_g - (\psi_g - E[\psi_g])' \beta^\ast) (1 - D_g)}{\frac{1}{G} \sum_{1 \leq g \leq 2G} N_g (1 - D_g)}~.
\end{align*}
It follows from previous arguments that
\begin{align*}
    & \sqrt G(\hat \Delta_G^{\rm adj} - \Delta) - \sqrt G(\tilde \Delta_G^{\rm adj} - \Delta) \\
    & = \sqrt G (\bar \psi_G - E[\psi_g])' \beta^\ast \left ( \frac{1}{\frac{1}{G} \sum_{1 \leq g \leq 2G} N_g D_g} - \frac{1}{\frac{1}{G} \sum_{1 \leq g \leq 2G} N_g (1 - D_g)} \right ) + o_P(1) \\
    & = o_P(1)~.
\end{align*}
It follows from the proof of Theorem \ref{thm:normal_N} applied to $\bar Y_g(d) N_g - (\psi_g - E[\psi_g])' \beta^\ast$ instead of $\bar Y_g(d) N_g$ and Assumptions \ref{ass:QG}, \ref{ass:pair_form}, \ref{ass:DGP}, \ref{ass:indep_xz}, and \ref{ass:psi} that $\sqrt G(\tilde \Delta_G^{\rm adj} - \Delta) \stackrel{d}{\to} N(0, \varsigma^2)$ for $\varsigma^2$ in \eqref{eq:varsigma}.

Finally, we show that $\varsigma^2 \leq \nu^2$. First note that by definition it follows immediately that 
\[ E[(E[Y_g^\ast(1) - Y_g^\ast(0) | W_g] - \Delta)^2] = E[(E[\tilde Y_g(1) - \tilde Y_g(0) | W_g] - \Delta)^2] ~.\]
It thus remains to show that 
\[E[\var[Y_g^\ast(1) | W_g]] + E[\var[Y_g^\ast(0) | W_g]] \le E[\var[\tilde Y_g(1) | W_g]] + E[\var[\tilde Y_g(0) | W_g]]~.\]
To that end,
\begin{align*}
   & E[\var[Y_g^\ast(1) | W_g]] + E[\var[Y_g^\ast(0) | W_g]] \\
   & = E\left[\var\left[\tilde{Y}_g(1) - \frac{(\psi_g - E[\psi_g])' \beta^\ast}{E[N_g]}\Big|W_g\right]\right] + E\left[\var\left[\tilde{Y}_g(0) - \frac{(\psi_g - E[\psi_g])' \beta^\ast}{E[N_g]}\Big|W_g\right]\right] \\
   & = E[\var[\tilde{Y}_g(1) | W_g]] + E[\var[\tilde{Y}_g(0) | W_g]] + 2 E\left[\var\left[\frac{(\psi_g - E[\psi_g])' \beta^\ast}{E[N_g]}\Big|W_g\right]\right] \\
   &\hspace{3em} - 2 E\left[\cov\left[\tilde{Y}_g(1) + \tilde{Y}_g(0), \frac{(\psi_g - E[\psi_g])' \beta^\ast}{E[N_g]}\Big|W_g\right]\right]\\
   &= E[\var[\tilde{Y}_g(1) | W_g]] + E[\var[\tilde{Y}_g(0) | W_g]] + \frac{2}{E[N_g]^2} E\left[\var\left[\psi_g^\prime \beta^\ast |W_g\right]\right]\\
   &\hspace{3em} - \frac{2}{E[N_g]} E\left[\cov\left[\tilde{Y}_g(1) + \tilde{Y}_g(0), \psi_g' \beta^\ast \Big|W_g\right]\right] \\
   &= E[\var[\tilde{Y}_g(1) | W_g]] + E[\var[\tilde{Y}_g(0) | W_g]] - \frac{2}{E[N_g]^2} E\left[\var\left[\psi_g^\prime \beta^\ast |W_g\right]\right]
\end{align*}
where the first equality follows by definition, the last  equality by noting that $\beta^\ast$ is the projection coefficient of $\frac{E[N_g]}{2}(\tilde Y_g(1)  + \tilde Y_g(0)  - E[\tilde Y_g(1)  + \tilde Y_g(0) | W_g])$ on $\psi_g - E[\psi_g | W_g]$, 
\[ E[N_g]E[(\tilde Y_g(1)  + \tilde Y_g(0)  - E[\tilde Y_g(1)  + \tilde Y_g(0) | W_g]) (\psi_g - E[\psi_g | W_g])' \beta^\ast] = 2 E[((\psi_g - E[\psi_g | W_g])' \beta^\ast)^2]~, \]
or equivalently,
\begin{equation} \label{eq:projection}
   E[N_g] E[\cov[\tilde Y_g(1) + \tilde Y_g(0) , \psi_g' \beta^\ast | W_g]] = 2 E[\var[\psi_g' \beta^\ast | W_g]]~.
\end{equation}
We thus obtain
\[\varsigma^2 = \nu^2 - \kappa^2~,\]where
\begin{equation*}
    \kappa^2 = \frac{2}{E[N_g]^2} E\left[\var\left[\psi_g^\prime \beta^\ast |W_g\right]\right]~,
\end{equation*}
and the desired result follows. \qed

\subsection{Proof of Theorem \ref{thm:adj-var}}
The desired result follows from combining the arguments used to establish Theorem \ref{thm:variance-estimator} and those used to establish Theorem 3.2 in \cite{bai2024covariate}. \qed

\section{Auxiliary Lemmas}

\begin{lemma} \label{lem:L_X}
Suppose $Q$ satisfies Assumptions \ref{ass:QG} and \ref{ass:DGP2} and the treatment assignment mechanism satisfies Assumptions \ref{ass:indep_pairsX}--\ref{ass:pair_formX}. Define
\begin{align*}
\mathbb L_G^{\rm YN1} & = \frac{1}{\sqrt G} \sum_{1 \leq g \leq 2G} (\bar{Y}_g(1) N_g D_g - E[\bar{Y}_g(1) N_g] D_g) \\
\mathbb L_G^{\rm N1} & = \frac{1}{\sqrt G} \sum_{1 \leq g \leq 2G} (N_g D_g - E[N_g] D_g) \\
\mathbb L_G^{\rm YN0} & = \frac{1}{\sqrt G} \sum_{1 \leq g \leq 2G} (\bar{Y}_g(0) N_g (1 - D_g) - E[\bar{Y}_g(0) N_g] (1 - D_g)) \\
\mathbb L_G^{\rm N0} & = \frac{1}{\sqrt G} \sum_{1 \leq g \leq 2G} (N_g (1 - D_g) - E[N_g] (1 - D_g))~.
\end{align*}
Then, as $G \to \infty$,
\[ (\mathbb L_G^{\rm YN1}, \mathbb L_G^{\rm N1}, \mathbb L_G^{\rm YN0}, \mathbb L_G^{\rm N0})' \stackrel{d}{\to} N(0, \mathbb V)~, \] where
\[ \mathbb V = \mathbb V_1 + \mathbb V_2 \]
for
\[ \mathbb V_1 = \begin{pmatrix}
\mathbb V_1^1 & 0 \\
0 & \mathbb V_1^0
\end{pmatrix} \]
\begin{align*}
\mathbb V_1^1 & = \begin{pmatrix}
E[\var[\bar{Y}_g(1) N_g | X_g]] & E[\cov[\bar{Y}_g(1) N_g, N_g | X_g]] \\
E[\cov[\bar{Y}_g(1) N_g, N_g | X_g]] & E[\var[N_g | X_g]]
\end{pmatrix} \\
\mathbb V_1^0 & = \begin{pmatrix}
E[\var[\bar{Y}_g(0) N_g | X_g]] & E[\cov[\bar{Y}_g(0) N_g, N_g | X_g]] \\
E[\cov[\bar{Y}_g(0) N_g, N_g | X_g]] & E[\var[N_g | X_g]]
\end{pmatrix}
\end{align*}
\[ \mathbb V_2 = \frac{1}{2} \var[(E[\bar{Y}_g(1) N_g | X_g], E[N_g | X_g], E[\bar{Y}_g(0) N_g | X_g], E[N_g | X_g])']~. \]
\end{lemma}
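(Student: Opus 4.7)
My plan is to decompose each $\mathbb L_G^k$ into a pair-symmetric part (depending only on the cluster-level data) and an antisymmetric part driven by the within-pair randomization, then establish joint asymptotic normality via the Cramer-Wold device. Let $f_1(g) = \bar Y_g(1) N_g$, $f_2(g) = N_g$, $f_3(g) = \bar Y_g(0) N_g$, $f_4(g) = N_g$ and $\xi_k(g) = f_k(g) - E[f_k(g)]$, with signs $\epsilon_k = +1$ for $k \in \{1,2\}$ and $\epsilon_k = -1$ for $k \in \{3,4\}$. Since exactly one of $D_{\pi(2j-1)}, D_{\pi(2j)}$ equals one, the identity
\[ a D_{\pi(2j-1)} + b D_{\pi(2j)} = \tfrac{a+b}{2} + \tfrac{a-b}{2}(D_{\pi(2j-1)} - D_{\pi(2j)}) \]
yields
\[ \mathbb L_G^k = U_G^k + \epsilon_k V_G^k, \quad U_G^k := \tfrac{1}{2\sqrt G}\sum_{g=1}^{2G}\xi_k(g), \quad V_G^k := \tfrac{1}{2\sqrt G}\sum_{j=1}^{G}[\xi_k(\pi(2j-1)) - \xi_k(\pi(2j))] T_j, \]
where $T_j := D_{\pi(2j-1)} - D_{\pi(2j)}$. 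Writing $\xi_k(g) = e_k(X_g) + \eta_k(g)$ with $e_k(x) := E[\xi_k(g)\,|\,X_g = x]$ (Lipschitz by Assumption \ref{ass:DGP2}(a)) and $\eta_k(g) := \xi_k(g) - e_k(X_g)$, the conditional variance (given $X^{(G)}$) of the $e_k$-difference portion of $V_G^k$ is bounded by a constant times $\tfrac{1}{G}\sum_j |X_{\pi(2j-1)} - X_{\pi(2j)}|^2 = o_P(1)$ under Assumption \ref{ass:pair_formX}, so $\mathbb L_G^k = U_G^k + \epsilon_k \tilde V_G^k + o_P(1)$, where $\tilde V_G^k$ replaces $\xi_k$ by $\eta_k$ in $V_G^k$.

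For the joint CLT I invoke the Cramer-Wold device and study $\sum_k a_k U_G^k + \sum_k b_k \tilde V_G^k$ for arbitrary fixed $(a, b) \in \mathbf{R}^8$. The $U$-part is an i.i.d.\ mean-zero sum with finite second moment (Assumption \ref{ass:QG}(c)-(d) together with a Jensen bound on $\bar Y_g(d)^2$), and by the multivariate CLT converges to $N(0, \tfrac{1}{2}\sum_{k,l} a_k a_l \cov[f_k, f_l])$. Condition on $\mathcal F := \sigma(X^{(G)}, N^{(G)}, S^{(G)}, \{(Y_{i,g}(1), Y_{i,g}(0))\})$: under Assumption \ref{ass:indep_pairsX} the $T_j$'s are i.i.d.\ uniform on $\{-1,+1\}$ given $\mathcal F$, so $\sum_k b_k \tilde V_G^k = \tfrac{1}{2\sqrt G}\sum_j Q_j T_j$ is a conditional Rademacher sum with $\mathcal F$-measurable coefficients $Q_j := \sum_k b_k(\eta_k(\pi(2j-1)) - \eta_k(\pi(2j)))$. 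A direct calculation using $E[\eta_k(g)\,|\,X_g] = 0$ and cross-cluster independence gives $E[Q_j^2\,|\,X^{(G)}] = \sum_{k,l} b_k b_l(\cov[f_k, f_l\,|\,X_{\pi(2j-1)}] + \cov[f_k, f_l\,|\,X_{\pi(2j)}])$, hence $\tfrac{1}{4G}\sum_j E[Q_j^2\,|\,X^{(G)}] = \sum_{k,l} b_k b_l \cdot \tfrac{1}{4G}\sum_g \cov[f_k, f_l\,|\,X_g] \xrightarrow{P} \tfrac{1}{2}\sum_{k,l} b_k b_l E[\cov[f_k, f_l\,|\,X]]$ by the ordinary LLN. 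Combined with Lindeberg's condition (derived from $E[\eta_k^2] < \infty$ and $\max_g \eta_k(g)^2 / G \xrightarrow{P} 0$), the conditional Lindeberg-Feller CLT yields an $N(0, \tfrac{1}{2}\sum_{k,l} b_k b_l E[\cov[f_k, f_l\,|\,X]])$ limit. Because this conditional limit is deterministic, $U$ and $\tilde V$ are asymptotically independent and the joint limit variance is additive.

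Reading off the covariance of $\mathbb L_G^k$'s limit gives $\tfrac{1}{2}\cov[f_k, f_l] + \epsilon_k \epsilon_l \cdot \tfrac{1}{2} E[\cov[f_k, f_l\,|\,X]]$, which the law of total covariance rewrites as $\mathbb V_1[k,l] + \mathbb V_2[k,l]$: when $\epsilon_k \epsilon_l = +1$ (within a treated or control block) the within-$X$ term survives and gives $E[\cov[f_k, f_l\,|\,X]] + \tfrac{1}{2}\cov[E[f_k\,|\,X], E[f_l\,|\,X]]$, and when $\epsilon_k \epsilon_l = -1$ (across blocks) the conditional covariance cancels, leaving only $\tfrac{1}{2}\cov[E[f_k\,|\,X], E[f_l\,|\,X]]$ and producing the block-diagonal structure of $\mathbb V_1$. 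The main obstacle I expect is upgrading the conditional-mean convergence to the full LLN $\tfrac{1}{4G}\sum_j Q_j^2 \xrightarrow{P} \tfrac{1}{2}\sum_{k,l} b_k b_l E[\cov[f_k, f_l\,|\,X]]$: showing $\tfrac{1}{G}\sum_j(Q_j^2 - E[Q_j^2\,|\,X^{(G)}]) \xrightarrow{P} 0$ requires control of conditional second moments of $Q_j^2$ using only the moment hypotheses in Assumption \ref{ass:QG}(c)-(d), which is delicate and will likely invoke a truncation or an $L^1$ weak-law argument on the triangular array rather than a straightforward $L^2$ bound.
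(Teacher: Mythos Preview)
Your decomposition and variance bookkeeping are correct, and the route is close in spirit to the paper's but differs in one key choice: the conditioning $\sigma$-field. The paper splits $\mathbb L_G = \mathbb L_{1,G} + \mathbb L_{2,G}$ with $\mathbb L_{1,G}^{\rm YN1} = \tfrac{1}{\sqrt G}\sum_g(\bar Y_g(1)N_gD_g - E[\bar Y_g(1)N_gD_g\,|\,X^{(G)},D^{(G)}])$ and its analogues, then runs the conditional CLT \emph{given $(X^{(G)},D^{(G)})$}. Under that conditioning the variance of $\mathbb L_{1,G}$ is already the ``nice'' object $\tfrac{1}{G}\sum_g D_g\var[\bar Y_g(1)N_g\,|\,X_g]$, i.e.\ an average of conditional second moments as functions of $X_g$ alone, and its convergence follows from the LLN together with the Lipschitz/pair-closeness assumptions; no realized cross-products of residuals ever appear. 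The $\mathbb L_{2,G}$ part is handled by showing $\var[\mathbb L_{2,G}\,|\,X^{(G)}]\to 0$ (again via Lipschitz + Assumption~\ref{ass:pair_formX}), reducing it to an i.i.d.\ CLT. The pieces are then glued by the same ``conditional limit is deterministic'' device you invoke.

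By contrast, you condition on the full $\mathcal F$ (everything but $D^{(G)}$), which makes the Rademacher structure transparent but forces you to confront the \emph{realized} $\tfrac{1}{4G}\sum_j Q_j^2$. The obstacle you flag is exactly the cost of this choice: expanding $Q_j^2$ produces cross-pair terms $\eta_k(\pi(2j-1))\eta_l(\pi(2j))$ whose second moments need not be finite under Assumptions~\ref{ass:QG}(c)--(d) alone, so a naive $L^2$ argument fails and you are driven to truncation. That can be made to work, but the paper's conditioning on $(X^{(G)},D^{(G)})$ rather than $\mathcal F$ sidesteps the issue entirely, since the conditional variance is then an average of $\var[\cdot\,|\,X_g]$'s rather than of realized $\eta$-products. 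In short: your plan is valid, but the paper's choice of conditioning buys a cleaner verification of the variance convergence at no loss elsewhere.
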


\begin{proof}
We break the proof into the following steps:

\noindent \underline{Step 1: Decomposition by conditioning on $X^{(G)}$ and $D^{(G)}$}

Note
\[ (\mathbb L_G^{\rm YN1}, \mathbb L_G^{\rm N1}, \mathbb L_G^{\rm YN0}, \mathbb L_G^{\rm N0}) = (\mathbb L_{1, G}^{\rm YN1}, \mathbb L_{1, G}^{\rm N1}, \mathbb L_{1, G}^{\rm YN0}, \mathbb L_{1, G}^{\rm N0}) + (\mathbb L_{2, G}^{\rm YN1}, \mathbb L_{2, G}^{\rm N1}, \mathbb L_{2, G}^{\rm YN0}, \mathbb L_{2, G}^{\rm N0})~, \]
where
\begin{align*}
\mathbb L_{1, G}^{\rm YN1} & = \frac{1}{\sqrt G} \sum_{1 \leq g \leq 2G} (\bar{Y}_g(1) N_g D_g - E[\bar{Y}_g(1) N_g D_g | X^{(G)}, D^{(G)}]) \\
\mathbb L_{2, G}^{\rm YN1} & = \frac{1}{\sqrt G} \sum_{1 \leq g \leq 2G} (E[\bar{Y}_g(1) N_g D_g | X^{(G)}, D^{(G)}] - E[\bar{Y}_g(1) N_g] D_g)
\end{align*}
and similarly for the rest. Next, note $(\mathbb L_{1, G}^{\rm YN1}, \mathbb L_{1, G}^{\rm N1}, \mathbb L_{1, G}^{\rm YN0}, \mathbb L_{1, G}^{\rm N0}), G \geq 1$ is a triangular array of mean-zero random vectors. Conditional on $X^{(G)}, D^{(G)}$, $(\mathbb L_{1, G}^{\rm YN1}, \mathbb L_{1, G}^{\rm N1}) \independent (\mathbb L_{1, G}^{\rm YN0}, \mathbb L_{1, G}^{\rm N0})$. Moreover, it follows from $Q_G = Q^{2G}$ and Assumption \ref{ass:indep_pairsX} that
\begin{align*}
& \var \left [ \begin{pmatrix}
\mathbb L_{1, G}^{\rm YN1} \\
\mathbb L_{1, G}^{\rm N1}
\end{pmatrix} \Bigg |  X^{(G)}, D^{(G)} \right ] \\
& \hspace{3em} = \begin{pmatrix}
\frac{1}{G} \sum_{1 \leq g \leq 2G} \var[\bar{Y}_g(1) N_g | X_g] D_g & \frac{1}{G} \sum_{1 \leq g \leq 2G} \cov[\bar{Y}_g(1) N_g, N_g | X_g] D_g \\
\frac{1}{G} \sum_{1 \leq g \leq 2G} \cov[\bar{Y}_g(1) N_g, N_g | X_g] D_g & \frac{1}{G} \sum_{1 \leq g \leq 2G} \var[N_g | X_g] D_g
\end{pmatrix}~.   
\end{align*}

\noindent \underline{Step 2: Limits of conditional variances}

For the upper left component, we have
\begin{equation} \label{eq:condlip}
\frac{1}{G} \sum_{1 \leq g \leq 2G} \var[\bar{Y}_g(1) N_g | X_g] D_g = \frac{1}{G} \sum_{1 \leq g \leq 2G} E[\bar{Y}^2_g(1) N_g^2 | X_g] D_g - \frac{1}{G} \sum_{1 \leq g \leq 2G} E[\bar{Y}_g(1) N_g | X_g]^2 D_g~.
\end{equation}
Note
\begin{align*}
& \frac{1}{G} \sum_{1 \leq g \leq 2G} E[\bar{Y}^2_g(1) N_g^2 | X_g] D_g \\
& = \frac{1}{2G} \sum_{1 \leq g \leq 2G} E[\bar{Y}^2_g(1) N_g^2 | X_g] + \frac{1}{2} \Big ( \frac{1}{G} \sum_{1 \leq g \leq 2G: D_g = 1} E[\bar{Y}^2_g(1) N_g^2 | X_g] - \frac{1}{G} \sum_{1 \leq g \leq 2G: D_g = 0} E[\bar{Y}^2_g(1) N_g^2 | X_g] \Big )~.
\end{align*}
It follows from the weak law of large numbers, the application of which is permitted by Lemma \ref{lem:E_bounded}, that
\[ \frac{1}{2G} \sum_{1 \leq g \leq 2G} E[\bar{Y}^2_g(1) N_g^2 | X_g] \stackrel{P}{\to} E[\bar{Y}^2_g(1) N_g^2]~. \]
On the other hand, it follows from Assumptions \ref{ass:pair_formX} and \ref{ass:DGP2}(a) that
\begin{align*}
& \Big | \frac{1}{G} \sum_{1 \leq g \leq 2G: D_g = 1} E[\bar{Y}^2_g(1) N_g^2 | X_g] - \frac{1}{G} \sum_{1 \leq g \leq 2G: D_g = 0} E[\bar{Y}^2_g(1) N_g^2 | X_g] \Big | \\
& \leq \frac{1}{G} \sum_{1 \leq j \leq G} |E[\bar{Y}_{\pi(2j - 1)}^2(1) N_{\pi(2j - 1)}^2 | X_{\pi(2j - 1)}] - E[\bar{Y}_{\pi(2j)}^2(1) N_{\pi(2j)}^2 | X_{\pi(2j)}]| \\
& \lesssim \frac{1}{G} \sum_{1 \leq j \leq G} \|X_{\pi(2j - 1)} - X_{\pi(2j)}\| \stackrel{P}{\to} 0~.
\end{align*}
Therefore,
\[ \frac{1}{G} \sum_{1 \leq g \leq 2G} E[\bar{Y}^2_g(1) N_g^2 | X_g] D_g \stackrel{P}{\to} E[\bar{Y}^2_g(1) N_g^2]~. \]
Meanwhile,
\begin{align*}
& \frac{1}{G} \sum_{1 \leq g \leq 2G} E[\bar{Y}_g(1) N_g | X_g]^2 D_g \\
& = \frac{1}{2G} \sum_{1 \leq g \leq 2G} E[\bar{Y}_g(1) N_g | X_g]^2 + \frac{1}{2} \Big ( \frac{1}{G} \sum_{1 \leq g \leq 2G: D_g = 1} E[\bar{Y}_g(1) N_g | X_g]^2 - \frac{1}{G} \sum_{1 \leq g \leq 2G: D_g = 0} E[\bar{Y}_g(1) N_g | X_g]^2 \Big )~.
\end{align*}
It follows from the weak law of large numbers, the application of which is permitted by Lemma \ref{lem:E_bounded}, that
\[ \frac{1}{2G} \sum_{1 \leq g \leq 2G} E[\bar{Y}_g(1) N_g | X_g]^2 \stackrel{P}{\to} E[E[\bar{Y}_g(1) N_g | X_g]^2]~. \]
Next,
\begin{align*}
& \Big | \frac{1}{G} \sum_{1 \leq g \leq 2G: D_g = 1} E[\bar{Y}_g(1) N_g | X_g]^2 - \frac{1}{G} \sum_{1 \leq g \leq 2G: D_g = 0} E[\bar{Y}_g(1) N_g | X_g]^2 \Big | \\
& \leq \frac{1}{G} \sum_{1 \leq j \leq G} |E[\bar{Y}_{\pi(2j - 1)}(1) N_{\pi(2j - 1)} | X_{\pi(2j - 1)}] - E[\bar{Y}_{\pi(2j)}(1) N_{\pi(2j)} | X_{\pi(2j)}]| \\
& \hspace{10em} \times |E[\bar{Y}_{\pi(2j - 1)}(1) N_{\pi(2j - 1)} | X_{\pi(2j - 1)}] + E[\bar{Y}_{\pi(2j)}(1) N_{\pi(2j)} | X_{\pi(2j)}]| \\
& \lesssim \Big ( \frac{1}{G} \sum_{1 \leq j \leq G} \|X_{\pi(2j - 1)} - X_{\pi(2j)}\|^2 \Big )^{1/2} \\
& \hspace{3em} \times \Big (\frac{1}{G} \sum_{1 \leq j \leq G} (|E[\bar{Y}_{\pi(2j - 1)}(1) N_{\pi(2j - 1)} | X_{\pi(2j - 1)}] + E[\bar{Y}_{\pi(2j)}(1) N_{\pi(2j)} | X_{\pi(2j)}]|)^2 \Big )^{1/2} \\
& \lesssim \Big ( \frac{1}{G} \sum_{1 \leq j \leq G} \|X_{\pi(2j - 1)} - X_{\pi(2j)}\|^2 \Big )^{1/2} \\
& \hspace{3em} \times \Big (\frac{1}{G} \sum_{1 \leq j \leq G} (|E[\bar{Y}_{\pi(2j - 1)}(1) N_{\pi(2j - 1)} | X_{\pi(2j - 1)}]|^2 + |E[\bar{Y}_{\pi(2j)}(1) N_{\pi(2j)} | X_{\pi(2j)}]|^2) \Big )^{1/2} \\
& \leq \Big ( \frac{1}{G} \sum_{1 \leq j \leq G} \|X_{\pi(2j - 1)} - X_{\pi(2j)}\|^2 \Big )^{1/2} \Big ( \frac{1}{G} \sum_{1 \leq g \leq 2G} E[\bar{Y}_g(1) N_g | X_g]^2 \Big )^{1/2} \stackrel{P}{\to} 0~,
\end{align*}
where the first inequality follows by inspection, the second follows from Assumption \ref{ass:DGP2}(a) and the Cauchy-Schwarz inequality, the third follows from $(a + b)^2 \leq 2 a^2 + 2 b^2$, the last follows by inspection again and the convergence in probability follows from Assumption \ref{ass:pair_formX} and the law of large numbers. Therefore, 
\[\frac{1}{G} \sum_{1 \leq g \leq 2G} E[\bar{Y}_g(1) N_g | X_g]^2 D_g \stackrel{P}{\to} E\left[E[\bar{Y}_g(1)N_g|X_g]^2\right]~,\]
and hence it follows from \eqref{eq:condlip} that
\[ \frac{1}{G} \sum_{1 \leq g \leq 2G} \var[\bar{Y}_g(1) N_g | X_g] D_g \stackrel{P}{\to} E[\var[\bar{Y}_g(1) N_g | X_g]]~. \]
An identical argument establishes that 
\[ \frac{1}{G} \sum_{1 \leq g \leq 2G} \var[N_g | X_g] D_g \stackrel{P}{\to} E[\var[N_g | X_g]]~. \]
To study the off-diagonal components, note that 
\begin{multline} \label{eq:condlipcovariance}
\frac{1}{G} \sum_{1 \leq g \leq 2G} \cov[\bar{Y}_g(1)N_g, N_g | X_g] D_g \\
= \frac{1}{G} \sum_{1 \leq g \leq 2G} E[\bar{Y}_g(1) N_g^2 | X_g] D_g - \frac{1}{G} \sum_{1 \leq g \leq 2G} E[\bar{Y}_g(1) N_g | X_g]E[N_g|X_g] D_g~.
\end{multline}
By a similar argument to that used above, it can be shown that 
\[ \frac{1}{G} \sum_{1 \leq g \leq 2G} E[\bar{Y}_g(1) N_g^2 | X_g] D_g \stackrel{P}{\to} E[\bar{Y}_g(1) N_g^2]~. \]
Meanwhile,
\begin{align*}
& \frac{1}{G} \sum_{1 \leq g \leq 2G} E[\bar{Y}_g(1) N_g | X_g]E[N_g|X_g] D_g \\
& = \frac{1}{2G} \sum_{1 \leq g \leq 2G} E[\bar{Y}_g(1) N_g | X_g]E[N_g|X_g] \\
& \hspace{3em} + \frac{1}{2} \Big ( \frac{1}{G} \sum_{1 \leq g \leq 2G: D_g = 1} E[\bar{Y}_g(1) N_g | X_g]E[N_g|X_g] - \frac{1}{G} \sum_{1 \leq g \leq 2G: D_g = 0} E[\bar{Y}_g(1) N_g | X_g]E[N_g|X_g] \Big )~.
\end{align*}
Note that
\[E[E[\bar{Y}_g(1) N_g | X_g]E[N_g|X_g]] = E[[N_gE[\bar{Y}_g(1)|W_g]|X_g]E[N_g|X_g]] \lesssim E[N_g^2] < \infty~,\]
where the equality follows by the law of iterated expectations and the inequality by Lemma \ref{lem:E_bounded} and Jensen's inequality, and the law of iterated expectations. Thus by the weak law of large numbers,
\[ \frac{1}{2G} \sum_{1 \leq g \leq 2G} E[\bar{Y}_g(1) N_g | X_g]E[N_g|X_g] \stackrel{P}{\to} E[E[\bar{Y}_g(1) N_g | X_g]E[N_g|X_g]]~. \]
Next, by the triangle inequality
\begin{align*}
& \Big | \frac{1}{G} \sum_{1 \leq g \leq 2G: D_g = 1} E[\bar{Y}_g(1) N_g | X_g]E[N_g|X_g]- \frac{1}{G} \sum_{1 \leq g \leq 2G: D_g = 0} E[\bar{Y}_g(1) N_g | X_g]E[N_g|X_g] \Big | \\
& \leq \frac{1}{G}\sum_{1 \le j \le G}\left|E[\bar{Y}_{\pi(2j - 1)}(1) N_{\pi(2j - 1)} | X_{\pi(2j - 1)}] E[N_{\pi(2j-1)}|X_{\pi(2j-1)}] \right. \\
& \hspace{6em} \left. - E[\bar{Y}_{\pi(2j)}(1) N_{\pi(2j)} | X_{\pi(2j)}] E[N_{\pi(2j)}|X_{\pi(2j)}]\right|~,
\end{align*}
and for each $j$,
\begin{align*}
& \left|E[\bar{Y}_{\pi(2j - 1)}(1) N_{\pi(2j - 1)} | X_{\pi(2j - 1)}] E[N_{\pi(2j-1)}|X_{\pi(2j-1)}] - E[\bar{Y}_{\pi(2j)}(1) N_{\pi(2j)} | X_{\pi(2j)}] E[N_{\pi(2j)}|X_{\pi(2j)}]\right|\\ 
& = \Big|(E[\bar{Y}_{\pi(2j - 1)}(1) N_{\pi(2j - 1)} | X_{\pi(2j - 1)}] - E[\bar{Y}_{\pi(2j)}(1) N_{\pi(2j)} | X_{\pi(2j)}])E[N_{\pi(2j)}|X_{\pi(2j)}] \\
& \hspace{3em} + (E[N_{\pi(2j-1)}|X_{\pi(2j-1)}] - E[N_{\pi(2j)}|X_{\pi(2j)}])E[\bar{Y}_{\pi(2j - 1)}(1) N_{\pi(2j - 1)} | X_{\pi(2j - 1)}]\Big|\\
&\lesssim \left|E[\bar{Y}_{\pi(2j - 1)}(1) N_{\pi(2j - 1)} | X_{\pi(2j - 1)}] - E[\bar{Y}_{\pi(2j)}(1) N_{\pi(2j)} | X_{\pi(2j)}]\right| \\
& \hspace{3em} +  \left|E[N_{\pi(2j - 1)} | X_{\pi(2j - 1)}] - E[N_{\pi(2j)} | X_{\pi(2j)}]\right|~,
\end{align*}
where the final inequality follows from the triangle inequality, Assumption \ref{ass:DGP2}(b) and Lemma \ref{lem:E_bounded}. Therefore,
\begin{align*}
& \Big | \frac{1}{G} \sum_{1 \leq g \leq 2G: D_g = 1} E[\bar{Y}_g(1) N_g | X_g]E[N_g|X_g]- \frac{1}{G} \sum_{1 \leq g \leq 2G: D_g = 0} E[\bar{Y}_g(1) N_g | X_g]E[N_g|X_g] \Big | \\
& \lesssim \frac{1}{G}\sum_{1 \le j \le G} \big ( \left|E[\bar{Y}_{\pi(2j - 1)}(1) N_{\pi(2j - 1)} | X_{\pi(2j - 1)}] - E[\bar{Y}_{\pi(2j)}(1) N_{\pi(2j)} | X_{\pi(2j)}]\right| \\
& \hspace{6em} + \left|E[N_{\pi(2j - 1)} | X_{\pi(2j - 1)}] - E[N_{\pi(2j)} | X_{\pi(2j)}]\right| \big ) \\
& \lesssim \frac{1}{G} \sum_{1 \leq j \leq G} \|X_{\pi(2j - 1)} - X_{\pi(2j)}\| \stackrel{P}{\to} 0~,
\end{align*}
where the final inequality follows from Assumptions \ref{ass:DGP2} and the convergence in probability follows from Assumption \ref{ass:indep_pairsX}.
Proceeding as in the case of the upper left component, we obtain that
\[ \frac{1}{G} \sum_{1 \leq g \leq 2G} \cov[\bar{Y}_g(1)N_g, N_g | X_g] D_g \stackrel{P}{\to} E[ \cov[\bar{Y}_g(1)N_g, N_g | X_g]]~. \]
Thus we have established that
\[ \var \left [ \begin{pmatrix}
\mathbb L_{1, G}^{\rm YN1} \\
\mathbb L_{1, G}^{\rm N1}
\end{pmatrix} \Bigg |  X^{(G)}, D^{(G)} \right ] \stackrel{P}{\to} \mathbb V_1^1~. \]
Similarly,
\[ \var \left [ \begin{pmatrix}
\mathbb L_{1, G}^{\rm YN0} \\
\mathbb L_{1, G}^{\rm N0}
\end{pmatrix} \Bigg |  X^{(G)}, D^{(G)} \right ] \stackrel{P}{\to} \mathbb V_1^0~. \]

\noindent \underline{Step 3: Conditional CLT}

We now establish 
\begin{equation} \label{eq:cond}
\rho(\mathcal L((\mathbb L_{1, G}^{\rm YN1}, \mathbb L_{1, G}^{\rm N1}, \mathbb L_{1, G}^{\rm YN0}, \mathbb L_{1, G}^{\rm N0})' | X^{(G)}, D^{(G)}),  N(0, \mathbb V_1)) \stackrel{P}{\to} 0~,
\end{equation}
where $\mathcal L(\cdot)$ is used to denote the law of a random variable and $\rho$ is any metric that metrizes weak convergence. For that purpose, note that we only need to show that for any subsequence $\{G_k\}$ there exists a further subsequence $\{G_{k_l}\}$ along which
\begin{equation} \label{eq:cond-as}
\rho(\mathcal L((\mathbb L_{1, G_{k_l}}^{\rm YN1}, \mathbb L_{1, G_{k_l}}^{\rm N1}, \mathbb L_{1, G_{k_l}}^{\rm YN0}, L_{1, G_{k_l}}^{\rm N0}) | X^{(G_{k_l})}, D^{(G^{k_l})},  N(0, \mathbb V_1)) \rightarrow 0 \text{ with probability one}~.
\end{equation}
In order to extract such a subsequence, we verify the conditions in the Lindeberg central limit theorem in Proposition 2.27 of \cite{van_der_vaart1998asymptotic} are satisfied in probability for the original sequence, because then we can extract a subsequence along which the conditions in that proposition hold almost surely. The second condition in that proposition is satisfied because we have shown
\[ \var[ (\mathbb L_{1, G}^{\rm YN1}, \mathbb L_{1, G}^{\rm N1}, \mathbb L_{1, G}^{\rm YN0}, \mathbb L_{1, G}^{\rm N0})' | X^{(G)}, D^{(G)}] \stackrel{P}{\to} \mathbb V_1~. \]
The first condition in that proposition can be verified component wise because of the following inequality:
\begin{equation} \label{eq:indicator}
    \left | \sum_{1 \leq j \leq k} a_j \right | I \left \{ \left | \sum_{1 \leq j \leq k} a_j \right | > \epsilon \right \} \leq \sum_{1 \leq j \leq k} k |a_j| I \left \{ |a_j| > \frac{\epsilon}{k} \right \}~.
\end{equation}
Therefore, we will only verify that 
\begin{multline} \label{eq:vdv}
    \frac{1}{G} \sum_{1 \leq g \leq 2G} E[(D_g(\bar{Y}_g(1) N_g - E[\bar{Y}_g(1) N_g | X_g]))^2 \\
    \times I \{(D_g(\bar{Y}_g(1) N_g - E[\bar{Y}_g(1) N_g | X_g]))^2 > \epsilon^2 G\} | X^{(G)}, D^{(G)} ] \xrightarrow{P} 0
\end{multline}

To verify \eqref{eq:vdv}, note it follows from \eqref{eq:indicator} that
\begin{align*}
    & \frac{1}{G} \sum_{1 \leq g \leq 2G} E[(D_g(\bar{Y}_g(1) N_g - E[\bar{Y}_g(1) N_g | X_g]))^2 I \{(D_g(\bar{Y}_g(1) N_g - E[\bar{Y}_g(1) N_g | X_g]))^2 > \epsilon^2 G\} | X^{(G)}, D^{(G)} ] \\
    & \lesssim \frac{1}{G} \sum_{1 \leq g \leq 2G} E[D_g(\bar{Y}_g(1) N_g - E[\bar{Y}_g(1) N_g | X_g])^2 I \{D_g(\bar{Y}_g(1) N_g - E[\bar{Y}_g(1) N_g | X_g])^2 > \epsilon^2 G / 2\} | X^{(G)}, D^{(G)}] \\
    & \leq \frac{1}{G} \sum_{1 \leq g \leq 2G} E[(\bar{Y}_g(1) N_g - E[\bar{Y}_g(1) N_g | X_g])^2 I \{|\bar{Y}_g(1) N_g - E[\bar{Y}_g(1) N_g | X_g]| > \epsilon \sqrt{G} / \sqrt{2}\} | X_g]~.
\end{align*}
Fix any $m > 0$. For $G$ large enough, the previous line
\begin{align*}
    & \leq \frac{1}{G} \sum_{1 \leq g \leq 2G} E[(\bar{Y}_g(1) N_g - E[\bar{Y}_g(1) N_g | X_g])^2 I \{|\bar{Y}_g(1) N_g - E[\bar{Y}_g(1) N_g | X_g]| > m\} | X_g] \\
    & \stackrel{P}{\to} 2 E[(\bar{Y}_g(1) N_g - E[\bar{Y}_g(1) N_g | X_g])^2 I \{|\bar{Y}_g(1) N_g - E[\bar{Y}_g(1) N_g | X_g]| > m\}]
\end{align*}
because $E[(\bar{Y}_g(1) N_g - E[\bar{Y}_g(1) N_g | X_g])^2] < \infty$. As $m \to \infty$, the last expression goes to $0$. Therefore, it follows from a similar diagonalization argument to that in the proof of Lemma B.3 of \cite{bai2022optimality} that both conditions in Proposition 2.27 of \cite{van_der_vaart1998asymptotic} hold in probability, and therefore there must be a subsequence along which they hold almost surely, so \eqref{eq:cond-as} and hence \eqref{eq:cond} holds.

\noindent \underline{Step 4: Unconditional components}

Next, we study $(\mathbb L_{2, G}^{\rm YN1}, \mathbb L_{2, G}^{\rm N1}, \mathbb L_{2, G}^{\rm YN0}, \mathbb L_{2, G}^{\rm N0})$. It follows from $Q_G = Q^{2G}$ and Assumption \ref{ass:indep_pairsX} that
\[ \begin{pmatrix}
\mathbb L_{2, G}^{\rm YN1} \\
\mathbb L_{2, G}^{\rm N1} \\
\mathbb L_{2, G}^{\rm YN0} \\
\mathbb L_{2, G}^{\rm N0}
\end{pmatrix} = \begin{pmatrix}
\frac{1}{\sqrt G} \sum_{1 \leq g \leq 2G} D_g (E[\bar{Y}_g(1) N_g | X_g] - E[\bar{Y}_g(1) N_g]) \\
\frac{1}{\sqrt G} \sum_{1 \leq g \leq 2G} D_g (E[N_g | X_g] - E[N_g]) \\
\frac{1}{\sqrt G} \sum_{1 \leq g \leq 2G} (1 - D_g) (E[\bar{Y}_g(0) N_g | X_g] - E[\bar{Y}_g(0) N_g]) \\
\frac{1}{\sqrt G} \sum_{1 \leq g \leq 2G} (1 - D_g) (E[N_g | X_g] - E[N_g])
\end{pmatrix}~. \]
For $\mathbb L_{2, G}^{\rm YN1}$, note it follows from Assumption \ref{ass:indep_pairsX} that
\begin{align*}
\var[\mathbb L_{2, G}^{\rm YN1} | X^{(G)}] & = \frac{1}{4G} \sum_{1 \leq j \leq G} (E[\bar{Y}_{\pi(2j - 1)}(1) N_{\pi(2j - 1)} | X_{\pi(2j - 1)}] - E[\bar{Y}_{\pi(2j)}(1) N_{\pi(2j)} | X_{\pi(2j)}])^2 \\
& \lesssim \frac{1}{G} \sum_{1 \leq j \leq G} \|X_{\pi(2j - 1)} - X_{\pi(2j)}\|^2 \stackrel{P}{\to} 0~.
\end{align*}
Therefore, it follows from Markov's inequality conditional on $X^{(G)}$ and $D^{(G)}$, and the fact that probabilities are bounded and hence uniformly integrable, that
\[ \mathbb L_{2, G}^{\rm YN1} = E[\mathbb L_{2, G}^{\rm YN1} | X^{(G)}] + o_P(1)~. \]
Applying a similar argument to each of $L_{2, G}^{\rm N1}$, $L_{2, G}^{\rm YN0}$, $L_{2, G}^{\rm N0}$ allows us to conclude that
\[ \begin{pmatrix}
\mathbb L_{2, G}^{\rm YN1} \\
\mathbb L_{2, G}^{\rm N1} \\
\mathbb L_{2, G}^{\rm YN0} \\
\mathbb L_{2, G}^{\rm N0}
\end{pmatrix} = \begin{pmatrix}
\frac{1}{2\sqrt G} \sum_{1 \leq g \leq 2G} (E[\bar{Y}_g(1) N_g | X_g] - E[\bar{Y}_g(1) N_g]) \\
\frac{1}{2\sqrt G} \sum_{1 \leq g \leq 2G} (E[N_g | X_g] - E[N_g]) \\
\frac{1}{2\sqrt G} \sum_{1 \leq g \leq 2G} (E[\bar{Y}_g(0) N_g | X_g] - E[\bar{Y}_g(0) N_g]) \\
\frac{1}{2\sqrt G} \sum_{1 \leq g \leq 2G} (E[N_g | X_g] - E[N_g])
\end{pmatrix} + o_P(1)~. \]
It thus follows from the central limit theorem, the application of which is justified by Jensen's inequality combined with Assumption \ref{ass:QG}(b) and Lemma \ref{lem:E_bounded}, that
\[ (\mathbb L_{2, G}^{\rm YN1}, \mathbb L_{2, G}^{\rm N1}, \mathbb L_{2, G}^{\rm YN0}, \mathbb L_{2, G}^{\rm N0})' \stackrel{d}{\to} N(0, \mathbb V_2)~. \]

\noindent \underline{Step 5: Combining unconditional and conditional components}

Because \eqref{eq:cond} holds and $(\mathbb L_{2, G}^{\rm YN1}, \mathbb L_{2, G}^{\rm N1}, \mathbb L_{2, G}^{\rm YN0}, \mathbb L_{2, G}^{\rm N0})$ is deterministic conditional on $X^{(G)}, D^{(G)}$, the conclusion of the theorem follows from Lemma S.1.3 in \cite{bai2022inference}.
\end{proof}

\begin{lemma}\label{lem:L_X_algebra}
Let $\mathbb{V}$ be defined as in Lemma \ref{lem:L_X}, and $D_{h0}$ be defined as in the proof of Theorem \ref{thm:normal_X}, then
\[D_{h0}\mathbb{V}D_{h0}' = \omega^2~,\]
where
\[\omega^2 = E[\tilde Y_g^2(1)] + E[\tilde Y_g^2(0)] - \frac{1}{2} E[(E[\tilde{Y}_g(1) + \tilde{Y}_g(0) | X_g])^2]~.\]
\end{lemma}
\begin{proof}
To see this, note by the laws of total variance and total covariance that $\mathbb V$ in Lemma \ref{lem:L_X} is symmetric with entries
\begin{align*}
\mathbb V_{11} & = \var[\bar{Y}_g(1) N_g] - \frac{1}{2} \var[E[\bar{Y}_g(1) N_g | X_g]] \\
\mathbb V_{12} & = \cov[\bar{Y}_g(1) N_g, N_g] - \frac{1}{2} \cov[E[\bar{Y}_g(1) N_g | X_g], E[N_g | X_g]] \\
\mathbb V_{13} & = \frac{1}{2} \cov[E[\bar{Y}_g(1) N_g | X_g], E[\bar{Y}_g(0) N_g | X_g]] \\
\mathbb V_{14} & = \frac{1}{2} \cov[E[\bar{Y}_g(1) N_g | X_g], E[N_g | X_g]] \\
\mathbb V_{22} & = \var[N_g] - \frac{1}{2} \var[E[N_g | X_g]] \\
\mathbb V_{23} & = \frac{1}{2} \cov[E[N_g | X_g], E[\bar{Y}_g(0) N_g | X_g]] \\
\mathbb V_{24} & = \frac{1}{2} \cov[E[N_g | X_g], E[N_g | X_g]] \\
\mathbb V_{33} & = \var[\bar{Y}_g(0) N_g] - \frac{1}{2} \var[E[\bar{Y}_g(0) N_g | X_g]] \\
\mathbb V_{34} & = \cov[\bar{Y}_g(0) N_g, N_g] - \frac{1}{2} \cov[E[\bar{Y}_g(0) N_g | X_g], E[N_g | X_g]] \\
\mathbb V_{44} & = \var[N_g] - \frac{1}{2} \var[E[N_g | X_g]]~.
\end{align*}
We separately calculate the variance terms involving conditional expectations and those that don't. The terms not involving conditional expectations are
\begin{align*}
& \frac{\var[\bar{Y}_g(1) N_g]}{E[N_g]^2} + \frac{\var[N_g] E[\bar{Y}_g(1) N_g]^2}{E[N_g]^4} + \frac{\var[\bar{Y}_g(0) N_g]}{E[N_g]^2} + \frac{\var[N_g] E[\bar{Y}_g(0) N_g]^2}{E[N_g]^4} \\
& \hspace{3em} - \frac{2 \cov[\bar{Y}_g(1) N_g, N_g] E[\bar{Y}_g(1) N_g]}{E[N_g]^3} - \frac{2 \cov[\bar{Y}_g(0) N_g, N_g] E[\bar{Y}_g(0) N_g]}{E[N_g]^3} \\
& = \frac{E[\bar{Y}_g^2(1) N_g^2] - E[\bar{Y}_g(1) N_g]^2}{E[N_g]^2} + \frac{E[N_g^2] E[\bar{Y}_g(1) N_g]^2 - E[N_g]^2 E[\bar{Y}_g(1) N_g]^2}{E[N_g]^4} \\
& \hspace{3em} + \frac{E[\bar{Y}_g^2(0) N_g^2] - E[\bar{Y}_g(0) N_g]^2}{E[N_g]^2} + \frac{E[N_g^2] E[\bar{Y}_g(0) N_g]^2 - E[N_g]^2 E[\bar{Y}_g(0) N_g]^2}{E[N_g]^4} \\
& \hspace{3em} - \frac{2 E[\bar{Y}_g(1) N_g^2] E[\bar{Y}_g(1) N_g]}{E[N_g]^3} + \frac{2 E[\bar{Y}_g(1) N_g] E[N_g] E[\bar{Y}_g(1) N_g]}{E[N_g]^3} \\
& \hspace{3em} - \frac{2 E[\bar{Y}_g(0) N_g^2] E[\bar{Y}_g(0) N_g]}{E[N_g]^3} + \frac{2 E[\bar{Y}_g(0) N_g] E[N_g] E[\bar{Y}_g(0) N_g]}{E[N_g]^3} \\
& = \frac{E[\bar{Y}_g^2(1) N_g^2]}{E[N_g]^2} + \frac{E[\bar{Y}_g^2(0) N_g^2]}{E[N_g]^2} + \frac{E[N_g^2] E[\bar{Y}_g(1) N_g]^2}{E[N_g]^4} + \frac{E[N_g^2] E[\bar{Y}_g(0) N_g]^2}{E[N_g]^4} \\
& \hspace{3em} - \frac{2 E[\bar{Y}_g(1) N_g^2] E[\bar{Y}_g(1) N_g]}{E[N_g]^3} - \frac{2 E[\bar{Y}_g(0) N_g^2] E[\bar{Y}_g(0) N_g]}{E[N_g]^3} \\
& = E[\tilde Y_g^2(1)] + E[\tilde Y_g^2(0)]~,
\end{align*}
where 
\begin{equation*}
    \tilde Y_g(d) = \frac{N_g}{E[N_g]} \left ( \bar{Y}_g(d) - \frac{E[\bar{Y}_g(d) N_g]}{E[N_g]} \right )
\end{equation*}
for $d \in \{0, 1\}$.

Next, the terms involving conditional expectations are
\begin{align*}
& - \frac{\var[E[\bar{Y}_g(1) N_g | X_g]]}{2E[N_g]^2} - \frac{\var[E[N_g | X_g]] E[\bar{Y}_g(1) N_g]^2}{2 E[N_g]^4} \\
& \hspace{3em} - \frac{\var[E[\bar{Y}_g(0) N_g | X_g]]}{2E[N_g]^2} - \frac{\var[E[N_g | X_g]] E[\bar{Y}_g(0) N_g]^2}{2 E[N_g]^4} \\
& \hspace{3em} + \frac{\cov[E[\bar{Y}_g(1) N_g | X_g], E[N_g | X_g]] E[\bar{Y}_g(1) N_g]}{E[N_g]^3} + \frac{\cov[E[\bar{Y}_g(0) N_g | X_g], E[N_g | X_g]] E[\bar{Y}_g(0) N_g]}{E[N_g]^3} \\
& \hspace{3em}- \frac{\cov[E[\bar{Y}_g(1) N_g | X_g], E[\bar{Y}_g(0) N_g | X_g]]}{E[N_g]^2}  + \frac{\cov[E[\bar{Y}_g(1) N_g | X_g], E[N_g | X_g]] E[\bar{Y}_g(0) N_g]}{E[N_g] E[N_g]^2} \\
& \hspace{3em} + \frac{\cov[E[N_g | X_g], E[\bar{Y}_g(0) N_g | X_g]] E[\bar{Y}_g(1) N_g]}{E[N_g]^2 E[N_g]} - \frac{\cov[E[N_g | X_g], E[N_g | X_g]] E[\bar{Y}_g(1) N_g] E[\bar{Y}_g(0) N_g]}{E[N_g]^2 E[N_g]^2} \\
& = - \frac{E[E[\bar{Y}_g(1) N_g | X_g]^2] - E[\bar{Y}_g(1) N_g]^2}{2E[N_g]^2} - \frac{(E[E[N_g | X_g]^2] - E[N_g]^2) E[\bar{Y}_g(1) N_g]^2}{2 E[N_g]^4} \\
& \hspace{3em} - \frac{E[E[\bar{Y}_g(0) N_g | X_g]^2] - E[\bar{Y}_g(0) N_g]^2}{2E[N_g]^2} - \frac{(E[E[N_g | X_g]^2] - E[N_g]^2) E[\bar{Y}_g(0) N_g]^2}{2 E[N_g]^4} \\
& \hspace{3em} + \frac{(E[E[\bar{Y}_g(1) N_g | X_g] E[N_g | X_g]] - E[\bar{Y}_g(1) N_g] E[N_g]) E[\bar{Y}_g(1) N_g]}{E[N_g]^3} \\
& \hspace{3em} + \frac{(E[E[\bar{Y}_g(0) N_g | X_g] E[N_g | X_g]] - E[\bar{Y}_g(0) N_g] E[N_g]) E[\bar{Y}_g(0) N_g]}{E[N_g]^3} \\
& \hspace{3em} - \frac{E[E[\bar{Y}_g(1) N_g | X_g] E[\bar{Y}_g(0) N_g | X_g]] - E[\bar{Y}_g(1) N_g] E[\bar{Y}_g(0) N_g]}{E[N_g] E[N_g]} \\
& \hspace{3em} + \frac{(E[E[\bar{Y}_g(1) N_g | X_g] E[N_g | X_g]] - E[\bar{Y}_g(1) N_g] E[N_g]) E[\bar{Y}_g(0) N_g]}{E[N_g] E[N_g]^2} \\
& \hspace{3em} + \frac{(E[E[\bar{Y}_g(0) N_g | X_g] E[N_g | X_g]] - E[\bar{Y}_g(0) N_g] E[N_g]) E[\bar{Y}_g(1) N_g]}{E[N_g]^2 E[N_g]} \\
& \hspace{3em} - \frac{(E[E[N_g | X_g] E[N_g | X_g]] - E[N_g] E[N_g]) E[\bar{Y}_g(1) N_g] E[\bar{Y}_g(0) N_g]}{E[N_g]^2 E[N_g]^2} \\
& = - \frac{E[E[\bar{Y}_g(1) N_g | X_g]^2] }{2E[N_g]^2} - \frac{E[E[N_g | X_g]^2]E[\bar{Y}_g(1)N_g]^2}{2 E[N_g]^4} - \frac{E[E[\bar{Y}_g(0) N_g | X_g]^2] }{2E[N_g]^2} - \frac{E[E[N_g | X_g]^2]E[\bar{Y}_g(0)N_g]^2}{2 E[N_g]^4} \\
& \hspace{3em} + \frac{E[E[\bar{Y}_g(1) N_g | X_g] E[N_g | X_g]] E[\bar{Y}_g(1) N_g]}{E[N_g]^3} + \frac{E[E[\bar{Y}_g(0) N_g | X_g] E[N_g | X_g]] E[\bar{Y}_g(0) N_g]}{E[N_g]^3} \\
& \hspace{3em} - \frac{E[E[\bar{Y}_g(1) N_g | X_g] E[\bar{Y}_g(0) N_g | X_g]]}{E[N_g]^2} + \frac{E[E[\bar{Y}_g(1) N_g | X_g] E[N_g | X_g]] E[\bar{Y}_g(0) N_g]}{E[N_g]^3} \\
& \hspace{3em} + \frac{E[E[\bar{Y}_g(0) N_g | X_g] E[N_g | X_g]] E[\bar{Y}_g(1) N_g]}{E[N_g]^3} - \frac{E[E[N_g | X_g]^2]E[\bar{Y}_g(1)N_g]E[\bar{Y}_g(0)N_g]}{E[N_g]^4} \\
& = - \frac{1}{2} E[E[\tilde{Y}_g(1) | X_g]^2] - \frac{1}{2} E[E[\tilde{Y}_g(0) | X_g]^2] - E[E[\tilde{Y}_g(1) | X_g] E[\tilde{Y}_g(0) | X_g]]\\
& = - \frac{1}{2} E[(E[\tilde{Y}_g(1) + \tilde{Y}_g(0) | X_g])^2]~,
\end{align*}
as desired.
\end{proof}

\begin{lemma} \label{lem:L_N}
Suppose $Q$ satisfies Assumptions \ref{ass:QG} and \ref{ass:DGP} and the treatment assignment mechanism satisfies Assumptions \ref{ass:indep_pairs}--\ref{ass:pair_form}. Define
\begin{align*}
\mathbb L_G^{\rm YN1} & = \frac{1}{\sqrt G} \sum_{1 \leq g \leq 2G} (\bar{Y}_g(1) N_g D_g - E[\bar{Y}_g(1) N_g] D_g) \\
\mathbb L_G^{\rm N1} & = \frac{1}{\sqrt G} \sum_{1 \leq g \leq 2G} (N_g D_g - E[N_g] D_g) \\
\mathbb L_G^{\rm YN0} & = \frac{1}{\sqrt G} \sum_{1 \leq g \leq 2G} (\bar{Y}_g(0) N_g (1 - D_g) - E[\bar{Y}_g(0) N_g] (1 - D_g)) \\
\mathbb L_G^{\rm N0} & = \frac{1}{\sqrt G} \sum_{1 \leq g \leq 2G} (N_g (1 - D_g) - E[N_g] (1 - D_g))~.
\end{align*}
Then, as $G \to \infty$,
\[ (\mathbb L_G^{\rm YN1}, \mathbb L_G^{\rm N1}, \mathbb L_G^{\rm YN0}, \mathbb L_G^{\rm N0})' \stackrel{d}{\to} N(0, \mathbb V)~, \] where
\[ \mathbb V = \mathbb V_1 + \mathbb V_2 \]
for
\[ \mathbb V_1 = \begin{pmatrix}
\mathbb V_1^1 & 0 \\
0 & \mathbb V_1^0
\end{pmatrix} \]
\begin{align*}
\mathbb V_1^1 & = \begin{pmatrix}
E[\var[\bar{Y}_g(1) N_g | W_g]] & 0 \\
0 &0
\end{pmatrix} \\
\mathbb V_1^0 & = \begin{pmatrix}
E[\var[\bar{Y}_g(0) N_g | W_g]] & 0 \\
0 & 0
\end{pmatrix}
\end{align*}
\[ \mathbb V_2 = \frac{1}{2} \var[(E[\bar{Y}_g(1) N_g | W_g], N_g , E[\bar{Y}_g(0) N_g | W_g], N_g)']~. \]
\end{lemma}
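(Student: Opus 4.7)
The plan is to mirror the proof of Lemma \ref{lem:L_X} closely, with the key simplification arising from the fact that $N_g$ is a component of $W_g$. In particular, $E[N_g\mid W_g]=N_g$ almost surely, so $\var[N_g\mid W_g]=0$ and $\cov[\bar{Y}_g(d)N_g,N_g\mid W_g]=0$, which is precisely what produces the many zero entries in $\mathbb V_1^1$ and $\mathbb V_1^0$. Similarly, $\var[E[N_g\mid W_g]]=\var[N_g]$, accounting for the asymmetry in $\mathbb V_2$ relative to the non-size-matched case.

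First I would perform the same additive decomposition as in Lemma \ref{lem:L_X},
\[(\mathbb L_G^{\rm YN1},\mathbb L_G^{\rm N1},\mathbb L_G^{\rm YN0},\mathbb L_G^{\rm N0})=(\mathbb L_{1,G}^{\cdot})+(\mathbb L_{2,G}^{\cdot}),\]
where $\mathbb L_{1,G}^{\cdot}$ subtracts the conditional expectation given $(W^{(G)},D^{(G)})$ and $\mathbb L_{2,G}^{\cdot}$ is that conditional expectation minus the unconditional one. Using Assumption \ref{ass:indep_pairs}, $\mathbb L_{1,G}^{\cdot}$ is, conditional on $(W^{(G)},D^{(G)})$, a sum of independent centered random vectors with the two treatment blocks independent of each other. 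Directly computing the conditional variance, the block corresponding to treatment $d$ equals
\[\Bigl(\tfrac{1}{G}\sum_{g:D_g=d}\var[\bar{Y}_g(d)N_g\mid W_g],\ 0;\ 0,\ 0\Bigr),\]
and the top-left entry converges in probability to $E[\var[\bar{Y}_g(d)N_g\mid W_g]]$ by the same ``average plus treatment-imbalance'' decomposition used in Lemma \ref{lem:L_X}. The treatment-imbalance part vanishes using Assumption \ref{ass:pair_form} (in particular the $\ell=2$ case, which controls differences of $E[\bar{Y}_g^2(d)N_g^2\mid W_g]$ across pairs via the $N_{\pi(2g)}^2\lvert W_{\pi(2g)}-W_{\pi(2g-1)}\rvert$ bound combined with the Lipschitz property in Assumption \ref{ass:DGP}), while the average-over-all-$g$ part converges by the weak law of large numbers, with integrability assured by Lemma \ref{lem:E_bounded} and $E[N_g^2]<\infty$. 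A Lindeberg-type CLT conditional on $(W^{(G)},D^{(G)})$ then gives weak convergence of $\mathbb L_{1,G}^{\cdot}$ to $N(0,\mathbb V_1)$ in the sense of \eqref{eq:cond} in the proof of Lemma \ref{lem:L_X}.

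For $\mathbb L_{2,G}^{\cdot}$, note that $\mathbb L_{2,G}^{\rm N1}=\tfrac{1}{\sqrt G}\sum D_g(N_g-E[N_g])$ exactly (since $E[N_g\mid W_g]=N_g$), and analogously for $\mathbb L_{2,G}^{\rm N0}$. For the $\bar Y N$ components, the pairwise conditional variance given $W^{(G)}$ is
\[\tfrac{1}{4G}\sum_{j=1}^{G}\bigl(E[\bar{Y}_{\pi(2j-1)}(d)N_{\pi(2j-1)}\mid W_{\pi(2j-1)}]-E[\bar{Y}_{\pi(2j)}(d)N_{\pi(2j)}\mid W_{\pi(2j)}]\bigr)^2,\]
which vanishes in probability by Assumption \ref{ass:pair_form} (with $\ell=0$, $r=2$) and the Lipschitz property in Assumption \ref{ass:DGP} applied to $w\mapsto nE[\bar{Y}_g(d)\mid W_g=w]$ together with the uniform bound implicit in Assumption \ref{ass:DGP2}... strictly, a small extra step here notes that $|N_{\pi(2j)}^2-N_{\pi(2j-1)}^2|\le (N_{\pi(2j)}+N_{\pi(2j-1)})|N_{\pi(2j)}-N_{\pi(2j-1)}|$, which allows the $N^2$-weighted pair closeness to dominate. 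The same Markov-inequality argument used for $\mathbb L_{2,G}^{\rm YN1}$ in Lemma \ref{lem:L_X} then gives
\[\mathbb L_{2,G}^{\cdot}=\tfrac{1}{2\sqrt G}\sum_{g=1}^{2G}(E[\,\cdot\,\mid W_g]-E[\,\cdot\,])+o_P(1),\]
to which the ordinary multivariate CLT applies (integrability follows from $E[N_g^2]<\infty$ and Lemma \ref{lem:E_bounded}), yielding convergence to $N(0,\mathbb V_2)$.

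Finally, since $(\mathbb L_{2,G}^{\cdot})$ is measurable with respect to $(W^{(G)},D^{(G)})$, I would invoke Lemma S.1.3 in \cite{bai2021inference} to combine the conditional convergence of $\mathbb L_{1,G}^{\cdot}$ with the unconditional convergence of $\mathbb L_{2,G}^{\cdot}$ and obtain the joint limit $N(0,\mathbb V_1+\mathbb V_2)$. The main technical obstacle is the first step: controlling $N_g$-weighted conditional-expectation differences across paired clusters with only $E[N_g^2]<\infty$ and pair closeness in $W_g=(N_g,X_g)$. The key tool is Assumption \ref{ass:pair_form}, whose $N^\ell$ weighting (for $\ell\in\{0,1,2\}$, $r\in\{1,2\}$) is precisely calibrated to handle all the $N_g$-factors that appear when expanding variances and covariances of $\bar{Y}_g(d)N_g$ via the Lipschitz control of Assumption \ref{ass:DGP} and the Cauchy-Schwarz splitting used repeatedly in Lemma \ref{lem:L_X}.
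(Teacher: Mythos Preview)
Your proposal is correct and follows essentially the same approach as the paper: the same conditional/unconditional decomposition (with the $N$-components of $\mathbb L_{1,G}$ identically zero since $N_g$ is $W_g$-measurable), the same Lindeberg CLT conditional on $(W^{(G)},D^{(G)})$ for $\mathbb L_{1,G}$, the same Markov-inequality reduction of $\mathbb L_{2,G}$ to an i.i.d.\ sum, and the same appeal to Lemma S.1.3 in \cite{bai2021inference} to combine the two pieces. One minor slip: the uniform bound you invoke from ``Assumption \ref{ass:DGP2}'' is not in force in this lemma; the bound $E[\bar Y_g(d)\mid W_g]\le C$ a.s.\ instead comes from Lemma \ref{lem:E_bounded}, which relies only on Assumption \ref{ass:QG}(d).
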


\begin{proof}
We will only verify Steps 1 and 2 in the proof of Lemma \ref{lem:L_X} because Steps 3--5 are identical. Note
\[ (\mathbb L_G^{\rm YN1}, \mathbb L_G^{\rm N1}, \mathbb L_G^{\rm YN0}, \mathbb L_G^{\rm N0}) = (\mathbb L_{1, G}^{\rm YN1}, 0, \mathbb L_{1, G}^{\rm YN0}, 0) + (\mathbb L_{2, G}^{\rm YN1}, \mathbb L_{G}^{\rm N1}, \mathbb L_{2, G}^{\rm YN0}, \mathbb L_{G}^{\rm N0})~, \]
where
\begin{align*}
\mathbb L_{1, G}^{\rm YN1} & = \frac{1}{\sqrt G} \sum_{1 \leq g \leq 2G} (\bar{Y}_g(1) N_g D_g - E[\bar{Y}_g(1) N_g D_g | N^{(G)}, X^{(G)}, D^{(G)}]) \\
\mathbb L_{2, G}^{\rm YN1} & = \frac{1}{\sqrt G} \sum_{1 \leq g \leq 2G} (E[\bar{Y}_g(1) N_g D_g |N^{(G)}, X^{(G)}, D^{(G)}] - E[\bar{Y}_g(1) N_g] D_g)
\end{align*}
and similarly for $\mathbb L^{\rm YN0}_G$. Next, note $(\mathbb L_{1, G}^{\rm YN1},0, \mathbb L_{1, G}^{\rm YN0}, 0), G \geq 1$ is a triangular array of normalized sums of random vectors. Conditional on $N^{(G)}, X^{(G)}, D^{(G)}$, $\mathbb L_{1, G}^{\rm YN1} \independent \mathbb L_{1, G}^{\rm YN0}$. Moreover, it follows from $Q_G = Q^{2G}$ and Assumption \ref{ass:indep_pairs} that
\[ \var \left [ \mathbb L_{1, G}^{\rm YN1} \Bigg | N^{(G)}, X^{(G)}, D^{(G)} \right ] = \var[\bar{Y}_g(1) N_g | W_g] D_g ~. \]

We have
\begin{equation} \label{eq:condlipN}
\frac{1}{G} \sum_{1 \leq g \leq 2G} \var[\bar{Y}_g(1) N_g | W_g] D_g = \frac{1}{G} \sum_{1 \leq g \leq 2G} E[\bar{Y}^2_g(1) N_g^2 | W_g] D_g - \frac{1}{G} \sum_{1 \leq g \leq 2G} E[\bar{Y}_g(1) N_g | W_g]^2 D_g~.
\end{equation}
Note
\begin{align*}
& \frac{1}{G} \sum_{1 \leq g \leq 2G} E[\bar{Y}^2_g(1) N_g^2 | W_g] D_g \\
& = \frac{1}{2G} \sum_{1 \leq g \leq 2G} E[\bar{Y}^2_g(1) N_g^2 | W_g] + \frac{1}{2} \Big ( \frac{1}{G} \sum_{1 \leq g \leq 2G: D_g = 1} E[\bar{Y}^2_g(1) N_g^2 | W_g] - \frac{1}{G} \sum_{1 \leq g \leq 2G: D_g = 0} E[\bar{Y}^2_g(1) N_g^2 | W_g] \Big )~.
\end{align*}
It follows from the weak law of large numbers, the application of which is permitted by Lemma \ref{lem:E_bounded},
\[ \frac{1}{2G} \sum_{1 \leq g \leq 2G} E[\bar{Y}^2_g(1) N_g^2 | W_g] \stackrel{P}{\to} E[\bar{Y}^2_g(1) N_g^2]~. \]
On the other hand, 
\begin{align*}
& \Big | \frac{1}{G} \sum_{1 \leq g \leq 2G: D_g = 1} E[\bar{Y}^2_g(1) N_g^2 | W_g] - \frac{1}{G} \sum_{1 \leq g \leq 2G: D_g = 0} E[\bar{Y}^2_g(1) N_g^2 | W_g] \Big | \\
& \leq \frac{1}{G} \sum_{1 \leq j \leq G} |N_{\pi(2j - 1)}^2E[\bar{Y}_{\pi(2j - 1)}^2(1)  | W_{\pi(2j - 1)}] - N_{\pi(2j)}^2 E[\bar{Y}_{\pi(2j)}^2(1) | W_{\pi(2j)}]| \\
&\leq \frac{1}{G}\sum_{1 \le j \le G}N^2_{\pi(2j)}|E[\bar{Y}_{\pi(2j - 1)}^2(1)|W_{\pi(2j-1)}] - E[\bar{Y}_{\pi(2j)}^2(1) | W_{\pi(2j)}]| \\
& \hspace{3em} + \frac{1}{G}\sum_{1 \le j \le G}|N^2_{\pi(2j)} - N^2_{\pi(2j-1)}\|E[\bar{Y}_{\pi(2j - 1)}^2(1)|W_{\pi(2j-1)}]|\\
& \lesssim \frac{1}{G}\sum_{1 \le j \le G}N^2_{\pi(2j)}\|W_{\pi(2j-1)} - W_{\pi(2j)}\| + \frac{1}{G}\sum_{1 \le j \le G}|N^2_{\pi(2j)} - N^2_{\pi(2j-1)}|  \stackrel{P}{\to} 0~,
\end{align*}
where the first inequality follows from Assumption \ref{ass:indep_pairs} and the triangle inequality, the second inequality by some algebraic manipulations, the final inequality by Assumption \ref{ass:DGP} and Lemma \ref{lem:E_bounded}, and the convergence in probability follows from Assumption \ref{ass:pair_form} and Lemmas \ref{lem:pair_form} and \ref{lem:pair_squared}.
Therefore,
\[ \frac{1}{G} \sum_{1 \leq g \leq 2G} E[\bar{Y}^2_g(1) N_g^2 | W_g] D_g \stackrel{P}{\to} E[\bar{Y}^2_g(1) N_g^2]~. \]
Meanwhile,
\begin{align*}
& \frac{1}{G} \sum_{1 \leq g \leq 2G} E[\bar{Y}_g(1) N_g | W_g]^2 D_g \\
& = \frac{1}{2G} \sum_{1 \leq g \leq 2G} E[\bar{Y}_g(1) N_g | W_g]^2 + \frac{1}{2} \Big ( \frac{1}{G} \sum_{1 \leq g \leq 2G: D_g = 1} E[\bar{Y}_g(1) N_g | W_g]^2 - \frac{1}{G} \sum_{1 \leq g \leq 2G: D_g = 0} E[\bar{Y}_g(1) N_g | W_g]^2 \Big )~.
\end{align*}
It follows from the weak law of large numbers, the application of which is permitted by Lemma \ref{lem:E_bounded} and Assumption \ref{ass:QG}(c) that
\[ \frac{1}{2G} \sum_{1 \leq g \leq 2G} E[\bar{Y}_g(1) N_g | W_g]^2 \stackrel{P}{\to} E[E[\bar{Y}_g(1) N_g | W_g]^2]~. \]
Next,
\begin{align*}
& \Big | \frac{1}{G} \sum_{1 \leq g \leq 2G: D_g = 1} E[\bar{Y}_g(1) N_g | W_g]^2 - \frac{1}{G} \sum_{1 \leq g \leq 2G: D_g = 0} E[\bar{Y}_g(1) N_g | W_g]^2 \Big | \\
& \leq \frac{1}{G} \sum_{1 \leq j \leq G} |E[\bar{Y}_{\pi(2j - 1)}(1) N_{\pi(2j - 1)} | W_{\pi(2j - 1)}] - E[\bar{Y}_{\pi(2j)}(1) N_{\pi(2j)} | W_{\pi(2j)}]| \\
& \hspace{10em} \times |E[\bar{Y}_{\pi(2j - 1)}(1) N_{\pi(2j - 1)} | W_{\pi(2j - 1)}] + E[\bar{Y}_{\pi(2j)}(1) N_{\pi(2j)} | W_{\pi(2j)}]| \\
& \leq \Big ( \frac{1}{G} \sum_{1 \leq j \leq G} |E[\bar{Y}_{\pi(2j - 1)}(1) N_{\pi(2j - 1)} | W_{\pi(2j - 1)}] - E[\bar{Y}_{\pi(2j)}(1) N_{\pi(2j)} | W_{\pi(2j)}]| ^2 \Big )^{1/2} \\
& \hspace{3em} \times \Big (\frac{1}{G} \sum_{1 \leq j \leq G} |E[\bar{Y}_{\pi(2j - 1)}(1) N_{\pi(2j - 1)} | W_{\pi(2j - 1)}] + E[\bar{Y}_{\pi(2j)}(1) N_{\pi(2j)} | W_{\pi(2j)}]|^2 \Big )^{1/2} \\
& \lesssim \Big ( \frac{1}{G} \sum_{1 \leq j \leq G} |E[\bar{Y}_{\pi(2j - 1)}(1) N_{\pi(2j - 1)} | W_{\pi(2j - 1)}] - E[\bar{Y}_{\pi(2j)}(1) N_{\pi(2j)} | W_{\pi(2j)}]|^2 \Big )^{1/2} \\
& \hspace{3em} \times \Big ( \frac{1}{G} \sum_{1 \leq g \leq 2G} E[\bar{Y}_g(1) N_g | W_g]^2 \Big )^{1/2} \stackrel{P}{\to} 0~,
\end{align*}
where the first inequality follows by inspection, the second follows from Cauchy-Schwarz, the third follows from $(a + b)^2 \leq 2 a^2 + 2 b^2$, and the convergence in probability follows from Assumptions \ref{ass:pair_form}--\ref{ass:DGP}, Lemma \ref{lem:pair_form}, and the weak law of large numbers. Therefore, 
\[\frac{1}{G} \sum_{1 \leq g \leq 2G} E[\bar{Y}_g(1) N_g | W_g]^2 D_g \stackrel{P}{\to} E\left[E[\bar{Y}_g(1)N_g|W_g]^2\right]~,\]
and hence it follows from \eqref{eq:condlipN} that
\[ \frac{1}{G} \sum_{1 \leq g \leq 2G} \var[\bar{Y}_g(1) N_g | W_g] D_g \stackrel{P}{\to} E[\var[\bar{Y}_g(1) N_g | W_g]]~. \]
Similarly,
\[ \frac{1}{G} \sum_{1 \leq g \leq 2G} \var[\bar{Y}_g(0) N_g | W_g] D_g \stackrel{P}{\to} E[\var[\bar{Y}_g(0) N_g | W_g]]~. \]
Putting these results together, we obtain
\[\var[(\mathbb L_{1, G}^{\rm YN1},0, \mathbb L_{1, G}^{\rm YN0}, 0)'|W^{(G)}, D^{(G)}] \xrightarrow{P} \mathbb{V}_1~.\]
The rest of the proof is identical to Steps 3--5 in the proof of Lemma \ref{lem:L_X} and is omitted.
\end{proof}

\begin{lemma} \label{lem:L_N_algebra}
Let $\mathbb{V}$ be defined as in Lemma \ref{lem:L_N}, and $D_{h0}$ be defined as in the proof of Theorem \ref{thm:normal_X}, then
\[D_{h0}\mathbb{V}D_{h0}' = \nu^2~,\]
where
\[\nu^2 = E[\tilde Y_g^2(1)] + E[\tilde Y_g^2(0)] - \frac{1}{2} E[(E[\tilde{Y}_g(1) + \tilde{Y}_g(0) | W_g])^2]~.\]
\end{lemma}
\begin{proof}
$\mathbb V$ in Lemma \ref{lem:L_N} is symmetric with entries
\begin{align*}
\mathbb V_{11} & = \var[\bar{Y}_g(1) N_g] - \frac{1}{2} \var[E[\bar{Y}_g(1) N_g | W_g]] \\
\mathbb V_{12} & = \cov[E[\bar{Y}_g(1)N_g|W_g],N_g] - \frac{1}{2}\cov[E[\bar{Y}_g(1)N_g|W_g],N_g] \\
\mathbb V_{13} & = \frac{1}{2} \cov[E[\bar{Y}_g(1) N_g | W_g], E[\bar{Y}_g(0) N_g | W_g]] \\
\mathbb V_{14} & = \frac{1}{2} \cov[E[\bar{Y}_g(1) N_g | W_g], N_g] \\
\mathbb V_{22} & = \var[N_g] - \frac{1}{2}\var[N_g]  \\
\mathbb V_{23} & = \frac{1}{2} \cov[N_g, E[\bar{Y}_g(0) N_g | X_g]] \\
\mathbb V_{24} & = \frac{1}{2} \var[N_g] \\
\mathbb V_{33} & = \var[\bar{Y}_g(0) N_g] - \frac{1}{2} \var[E[\bar{Y}_g(0) N_g | W_g]] \\
\mathbb V_{34} & = \cov[E[\bar{Y}_g(0) N_g | W_g], N_g] - \frac{1}{2}\cov[E[\bar{Y}_g(0) N_g | W_g], N_g] \\
\mathbb V_{44} & = \var[N_g] - \frac{1}{2}\var[N_g] ~.
\end{align*}
We proceed by mirroring the algebra in Lemma \ref{lem:L_X_algebra}. Expanding and simplifying the first half of the expression:
\begin{align*}
& \frac{\var[\bar{Y}_g(1) N_g]}{E[N_g]^2} + \frac{\var[N_g] E[\bar{Y}_g(1) N_g]^2}{E[N_g]^4} + \frac{\var[\bar{Y}_g(0) N_g]}{E[N_g]^2} + \frac{\var[N_g] E[\bar{Y}_g(0) N_g]^2}{E[N_g]^4} \\
& \hspace{3em} - \frac{2 \cov[E[\bar{Y}_g(1) N_g|W_g], N_g] E[\bar{Y}_g(1) N_g]}{E[N_g]^3} - \frac{2 \cov[E[\bar{Y}_g(0) N_g|W_g], N_g] E[\bar{Y}_g(0) N_g]}{E[N_g]^3} \\
& = \frac{E[\bar{Y}_g^2(1) N_g^2] - E[\bar{Y}_g(1) N_g]^2}{E[N_g]^2} + \frac{E[N_g^2] E[\bar{Y}_g(1) N_g]^2 - E[N_g]^2 E[\bar{Y}_g(1) N_g]^2}{E[N_g]^4} \\
& \hspace{3em} + \frac{E[\bar{Y}_g^2(0) N_g^2] - E[\bar{Y}_g(0) N_g]^2}{E[N_g]^2} + \frac{E[N_g^2] E[\bar{Y}_g(0) N_g]^2 - E[N_g]^2 E[\bar{Y}_g(0) N_g]^2}{E[N_g]^4} \\
& \hspace{3em} - \frac{2 E[\bar{Y}_g(1) N_g^2] E[\bar{Y}_g(1) N_g]}{E[N_g]^3} + \frac{2 E[\bar{Y}_g(1) N_g] E[N_g] E[\bar{Y}_g(1) N_g]}{E[N_g]^3} \\
& \hspace{3em} - \frac{2 E[\bar{Y}_g(0) N_g^2] E[\bar{Y}_g(0) N_g]}{E[N_g]^3} + \frac{2 E[\bar{Y}_g(0) N_g] E[N_g] E[\bar{Y}_g(0) N_g]}{E[N_g]^3} \\
& = \frac{E[\bar{Y}_g^2(1) N_g^2]}{E[N_g]^2} + \frac{E[\bar{Y}_g^2(0) N_g^2]}{E[N_g]^2} + \frac{E[N_g^2] E[\bar{Y}_g(1) N_g]^2}{E[N_g]^4} + \frac{E[N_g^2] E[\bar{Y}_g(0) N_g]^2}{E[N_g]^4} \\
& \hspace{3em} - \frac{2 E[\bar{Y}_g(1) N_g^2] E[\bar{Y}_g(1) N_g]}{E[N_g]^3} - \frac{2 E[\bar{Y}_g(0) N_g^2] E[\bar{Y}_g(0) N_g]}{E[N_g]^3} \\
& = E[\tilde Y_g^2(1)] + E[\tilde Y_g^2(0)]~,
\end{align*}
where 
\begin{equation*}
    \tilde Y_g(d) = \frac{N_g}{E[N_g]} \left ( \bar{Y}_g(d) - \frac{E[\bar{Y}_g(d) N_g]}{E[N_g]} \right )
\end{equation*}
for $d \in \{0, 1\}$.

Expanding the second half of the expression:
\begin{align*}
& - \frac{\var[E[\bar{Y}_g(1) N_g | W_g]]}{2E[N_g]^2} - \frac{\var[N_g] E[\bar{Y}_g(1) N_g]^2}{2 E[N_g]^4} \\
& \hspace{3em} - \frac{\var[E[\bar{Y}_g(0) N_g | W_g]]}{2E[N_g]^2} - \frac{\var[N_g] E[\bar{Y}_g(0) N_g]^2}{2 E[N_g]^4} \\
& \hspace{3em} + \frac{\cov[E[\bar{Y}_g(1) N_g | W_g], N_g] E[\bar{Y}_g(1) N_g]}{E[N_g]^3} + \frac{\cov[E[\bar{Y}_g(0) N_g | W_g], N_g] E[\bar{Y}_g(0) N_g]}{E[N_g]^3} \\
& \hspace{3em}- \frac{\cov[E[\bar{Y}_g(1) N_g | W_g], E[\bar{Y}_g(0) N_g | W_g]]}{E[N_g]^2}  + \frac{\cov[E[\bar{Y}_g(1) N_g | W_g], N_g] E[\bar{Y}_g(0) N_g]}{E[N_g] E[N_g]^2} \\
& \hspace{3em} + \frac{\cov[N_g, E[\bar{Y}_g(0) N_g | W_g]] E[\bar{Y}_g(1) N_g]}{E[N_g]^2 E[N_g]} - \frac{\cov[N_g, N_g] E[\bar{Y}_g(1) N_g] E[\bar{Y}_g(0) N_g]}{E[N_g]^2 E[N_g]^2} \\
& = - \frac{E[E[\bar{Y}_g(1) N_g | W_g]^2] - E[\bar{Y}_g(1) N_g]^2}{2E[N_g]^2} - \frac{(E[N_g^2] - E[N_g]^2) E[\bar{Y}_g(1) N_g]^2}{2 E[N_g]^4} \\
& \hspace{3em} - \frac{E[E[\bar{Y}_g(0) N_g | W_g]^2] - E[\bar{Y}_g(0) N_g]^2}{2E[N_g]^2} - \frac{(E[N_g^2] - E[N_g]^2) E[\bar{Y}_g(0) N_g]^2}{2 E[N_g]^4} \\
& \hspace{3em} + \frac{(E[E[\bar{Y}_g(1) N_g | W_g]N_g] - E[\bar{Y}_g(1) N_g] E[N_g]) E[\bar{Y}_g(1) N_g]}{E[N_g]^3} \\
& \hspace{3em} + \frac{(E[E[\bar{Y}_g(0) N_g | W_g]N_g] - E[\bar{Y}_g(0) N_g] E[N_g]) E[\bar{Y}_g(0) N_g]}{E[N_g]^3} \\
& \hspace{3em} - \frac{E[E[\bar{Y}_g(1) N_g | W_g] E[\bar{Y}_g(0) N_g | W_g]] - E[\bar{Y}_g(1) N_g] E[\bar{Y}_g(0) N_g]}{E[N_g] E[N_g]} \\
& \hspace{3em} + \frac{(E[E[\bar{Y}_g(1) N_g | W_g]N_g] - E[\bar{Y}_g(1) N_g] E[N_g]) E[\bar{Y}_g(0) N_g]}{E[N_g] E[N_g]^2} \\
& \hspace{3em} + \frac{(E[E[\bar{Y}_g(0) N_g | W_g]N_g] - E[\bar{Y}_g(0) N_g] E[N_g]) E[\bar{Y}_g(1) N_g]}{E[N_g]^2 E[N_g]} \\
& \hspace{3em} - \frac{(E[N_g^2] - E[N_g]^2) E[\bar{Y}_g(1) N_g] E[\bar{Y}_g(0) N_g]}{E[N_g]^2 E[N_g]^2} \\
& = - \frac{E[E[\bar{Y}_g(1) N_g | W_g]^2] }{2E[N_g]^2} - \frac{E[N_g^2]E[\bar{Y}_g(1)N_g]^2}{2 E[N_g]^4} - \frac{E[E[\bar{Y}_g(0) N_g | W_g]^2] }{2E[N_g]^2} - \frac{E[N_g^2]E[\bar{Y}_g(0)N_g]^2}{2 E[N_g]^4} \\
& \hspace{3em} + \frac{E[E[\bar{Y}_g(1) N_g | W_g]N_g] E[\bar{Y}_g(1) N_g]}{E[N_g]^3} + \frac{E[E[\bar{Y}_g(0) N_g | W_g]N_g] E[\bar{Y}_g(0) N_g]}{E[N_g]^3} \\
& \hspace{3em} - \frac{E[E[\bar{Y}_g(1) N_g | W_g] E[\bar{Y}_g(0) N_g | W_g]]}{E[N_g]^2} + \frac{E[E[\bar{Y}_g(1) N_g | W_g]N_g] E[\bar{Y}_g(0) N_g]}{E[N_g]^3} \\
& \hspace{3em} + \frac{E[E[\bar{Y}_g(0) N_g | W_g]N_g] E[\bar{Y}_g(1) N_g]}{E[N_g]^3} - \frac{E[N_g^2]E[\bar{Y}_g(1)N_g]E[\bar{Y}_g(0)N_g]}{E[N_g]^4} \\
& = - \frac{1}{2} E[E[\tilde{Y}_g(1) | W_g]^2] - \frac{1}{2} E[E[\tilde{Y}_g(0) | W_g]^2] - E[E[\tilde{Y}_g(1) | W_g] E[\tilde{Y}_g(0) | W_g]]\\
& = - \frac{1}{2} E[(E[\tilde{Y}_g(1) + \tilde{Y}_g(0) | W_g])^2]~,
\end{align*}
as desired.
\end{proof}

\begin{lemma}\label{lemma:mu_G}
Consider the following adjusted potential outcomes:
\begin{align*}
    \hat Y_{g}(d) = \frac{N_{g}}{ \frac{1}{2G}\sum_{1\leq j \leq 2G} N_j }\left(\bar{Y}_{g}(d)-\frac{\frac{1}{G}\sum_{1\leq j \leq 2G} \bar{Y}_{j}(d) I\{D_j = d\} N_j}{\frac{1}{G}\sum_{1\leq j \leq 2G} I\{D_j = d\} N_j}\right)~.
\end{align*}
Note the usual relationship still holds for adjusted outcomes, i.e. $\hat Y_{g} = D_g \hat Y_{g}(1) + (1-D_g)\hat Y_{g}(0)$. If Assumptions \ref{ass:QG} holds, and additionally Assumptions \ref{ass:pair_formX}--\ref{ass:DGP2} (or Assumptions \ref{ass:pair_form}--\ref{ass:DGP}) hold, then
\begin{align*}
    \hat{\mu}_{G}(d)&= \frac{1}{G} \sum_{1 \leq g \leq 2 G} \hat Y_{g}(d) I\left\{D_{g}=d\right\} \stackrel{P}{\rightarrow} 0 \\
    \hat{\sigma}_{G}^{2}(d)&=\frac{1}{G} \sum_{1<g<2 G}\left(\hat Y_{g}-\hat{\mu}_{G}(d)\right)^{2} I\left\{D_{g}=d\right\} \stackrel{P}{\rightarrow} \var \left[\tilde Y_{g}(d)\right]~.
\end{align*}
\end{lemma}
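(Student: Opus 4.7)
The first claim turns out to be an exact algebraic identity rather than a limiting statement. Writing $\bar N := \frac{1}{2G}\sum_{1\leq j\leq 2G} N_j$, $A_d := \frac{1}{G}\sum_{j} I\{D_j=d\} \bar Y_j(d) N_j$, and $B_d := \frac{1}{G}\sum_{j} I\{D_j=d\} N_j$, we have $\hat Y_g(d)=\frac{N_g}{\bar N}\bigl(\bar Y_g(d)-A_d/B_d\bigr)$. Plugging into $\hat\mu_G(d)$, the cross term collapses since $\frac{1}{G}\sum_g I\{D_g=d\} N_g \bar Y_g(d) = A_d$ and $\frac{1}{G}\sum_g I\{D_g=d\} N_g = B_d$, so $\hat\mu_G(d)=\bar N^{-1}(A_d-(A_d/B_d)\cdot B_d)=0$ identically. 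In particular the first convergence is trivial.

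For the variance claim, since $\hat\mu_G(d)=0$ and $\hat Y_g=\hat Y_g(d)$ on $\{D_g=d\}$, I reduce to $\hat\sigma_G^2(d)=\frac{1}{G}\sum_g I\{D_g=d\}\hat Y_g(d)^2$. Expanding the square,
\[
\hat\sigma_G^2(d)=\frac{1}{\bar N^{2}}\Bigl(U_2(d)-2\hat m_d\,U_1(d)+\hat m_d^{2}\,U_0(d)\Bigr),
\quad U_k(d):=\frac{1}{G}\sum_{g} I\{D_g=d\}\, N_g^{2}\,\bar Y_g(d)^{k},
\]
with $\hat m_d:=A_d/B_d$. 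The plan is to show the following auxiliary convergences and then conclude by Slutsky and the continuous mapping theorem: (i) $\bar N\xrightarrow{P} E[N_g]$ by a plain LLN on i.i.d. $N_g$; (ii) $A_d \xrightarrow{P} E[\bar Y_g(d)N_g]$ and $B_d\xrightarrow{P} E[N_g]$, which gives $\hat m_d \xrightarrow{P} m_d:=E[\bar Y_g(d)N_g]/E[N_g]$; and (iii) $U_k(d)\xrightarrow{P} E[N_g^{2}\bar Y_g(d)^{k}]$ for $k\in\{0,1,2\}$. Matching these against $\mathrm{Var}[\tilde Y_g(d)]=E[\tilde Y_g(d)^{2}]=E[N_g^{2}(\bar Y_g(d)-m_d)^{2}]/E[N_g]^{2}$ (since $E[\tilde Y_g(d)]=0$) gives the stated limit.

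Every sum in (ii)--(iii) has the form $V_G:=\frac{1}{G}\sum_{g} I\{D_g=d\}\,f_g$ for some measurable $f_g=f(\bar Y_g(d),N_g,X_g)$ (or with $W_g$), and I would handle them uniformly via the decomposition used in Lemma \ref{lem:L_X} / Lemma \ref{lem:L_N}:
\[
V_G=\frac{1}{2G}\sum_{g=1}^{2G} f_g+\frac{1}{2}\Bigl(\frac{1}{G}\sum_{g:D_g=d}f_g-\frac{1}{G}\sum_{g:D_g\neq d}f_g\Bigr).
\]
The first piece converges by the WLLN, whose moment hypotheses are verified by Lemma \ref{lem:E_bounded} together with Assumption \ref{ass:QG}. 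For the second piece, I add and subtract the conditional expectation $E[f_g\mid X_g]$ (resp.\ $E[f_g\mid W_g]$). The conditional-mean part is controlled exactly as in the proofs of Lemmas \ref{lem:L_X}--\ref{lem:L_N}: pair it up across $\pi(2j-1),\pi(2j)$, apply Assumption \ref{ass:DGP2}(a) (resp.\ \ref{ass:DGP}) to turn the pair-difference into $|X_{\pi(2j-1)}-X_{\pi(2j)}|$ (resp.\ $N_{\pi(2j)}^{2}|W_{\pi(2j-1)}-W_{\pi(2j)}|$ plus $|N_{\pi(2j-1)}^{2}-N_{\pi(2j)}^{2}|$ via the algebra in Lemma \ref{lem:pair_squared}), and conclude by Assumption \ref{ass:pair_formX} (resp.\ \ref{ass:pair_form}).

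The main obstacle will be the residual part $f_g - E[f_g\mid X_g]$ (or $W_g$), because $f_g$ itself is random, not just its conditional mean. I would argue that, conditional on $(X^{(G)},D^{(G)})$ (or $(W^{(G)},D^{(G)})$), Assumption \ref{ass:QG}(a)--(b) makes $f_g-E[f_g\mid X_g]$ independent across $g$ with mean zero, so the corresponding sum has conditional variance bounded by $\frac{1}{G^{2}}\sum_g \mathrm{Var}[f_g\mid X_g]=O_P(1/G)$, using Lemma \ref{lem:E_bounded} to bound the required moments of $\bar Y_g(d)^{2}N_g^{4}$. Conditional Chebyshev (together with boundedness of probabilities, as in the $\mathbb L_{2,G}$ step of Lemma \ref{lem:L_X}) then makes this residual contribution $o_P(1)$, completing (iii) and hence the lemma.
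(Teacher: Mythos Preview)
Your observation that $\hat\mu_G(d)=0$ identically is correct and cleaner than the paper's argument, which only proves convergence to zero. Your overall strategy for $\hat\sigma_G^2(d)$---expanding $\hat Y_g(d)^2$ directly into the $U_k$ terms and establishing convergence of each---is a genuinely different route from the paper's. The paper instead writes $\hat Y_g(d)^r = \tilde Y_g(d)^r + (\hat Y_g(d)^r - \tilde Y_g(d)^r)$, shows the difference is $o_P(1)$ because the sample normalizers $\bar N$ and $A_d/B_d$ converge to their population counterparts, and then invokes Lemma~S.1.5 of \cite{bai2021inference} (together with Lemma~\ref{lemma:assumptions-for-bai-inference}) to obtain $\frac{1}{G}\sum_g \tilde Y_g(d)^r D_g \to E[\tilde Y_g(d)^r]$. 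Your expansion buys you a self-contained argument that does not go through the $\tilde Y_g$ intermediary; the paper's route buys it access to a ready-made matched-pairs WLLN that only requires first moments of the summand.

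There is, however, a genuine moment gap in your Chebyshev step for the residual. For $U_2(d)$ you have $f_g = N_g^2\bar Y_g(d)^2$, so $\frac{1}{G^2}\sum_g\var[f_g\mid X_g]=O_P(1/G)$ would require $E[N_g^4\bar Y_g(d)^4]<\infty$; $U_1$ and $U_0$ similarly require $E[N_g^4\bar Y_g(d)^2]$ and $E[N_g^4]$. None of these follow from Lemma~\ref{lem:E_bounded} or the stated assumptions: Assumption~\ref{ass:QG}(c) only gives $E[N_g^2]<\infty$, and Assumption~\ref{ass:QG}(d) only bounds second conditional moments of $Y_{i,g}(d)$. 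The paper's route via Lemma~S.1.5 avoids this precisely because that lemma needs only the first moment of the summand (here $E[\tilde Y_g(d)^2]<\infty$, which is covered). Your argument can be repaired with a truncation: set $u_g=f_g-E[f_g\mid X_g]$, apply Chebyshev to the bounded, recentered version $u_gI\{|u_g|\le M\}-E[u_gI\{|u_g|\le M\}\mid X_g]$, and control the remainder by $\frac{1}{G}\sum_g|u_g|I\{|u_g|>M\}\xrightarrow{P}E[|u_g|I\{|u_g|>M\}]$ via the i.i.d.\ WLLN on $(u_g)_g$ and $E|u_g|<\infty$, then send $M\to\infty$.
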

\begin{proof}
It suffices to show that
\begin{equation*}
    \frac{1}{G} \sum_{1 \leq g \leq 2 G} \hat Y_{g}^{r}(d) I\left\{D_{g}=d\right\} \stackrel{P}{\rightarrow} E\left[\tilde Y_{g}^{r}(d)\right]
\end{equation*}
for $r\in\{1,2\}$. We prove this result only for $r = 1$ and $d = 1$; the other cases can be proven similarly. To this end, write
\begin{align*}
&\frac{1}{G} \sum_{1 \leq g \leq 2 G} \hat{Y}_g(1) I\left\{D_g=1\right\} =\frac{1}{G} \sum_{1 \leq g \leq 2 G} \hat{Y}_g(1) D_g = \frac{1}{G} \sum_{1 \leq g \leq 2 G} \tilde{Y}_g(1) D_g +  \frac{1}{G} \sum_{1 \leq g \leq 2 G} \left(\hat Y_g(1) - \tilde Y_g(1)\right) D_g~.
\end{align*}
Note that
\begin{align*}
    &\frac{1}{G} \sum_{1 \leq g \leq 2 G} \left(\hat Y_g(1) - \tilde Y_g(1)\right) D_g = \left( \frac{1}{\frac{1}{2G} \sum_{1\leq g \leq 2G} N_g} - \frac{1}{E[N_g]} \right)\left( \frac{1}{G} \sum_{1 \leq g \leq 2 G} \bar Y_g(1) N_g D_g\right) \\
    &\quad - \left( \frac{\frac{1}{G}\sum_{1\leq g \leq 2G} \bar Y_g(d) I\{D_g=d\} N_g}{\left(\frac{1}{2G} \sum_{1\leq g \leq 2G} N_g \right)^2} - \frac{E[\Bar{Y}_g(d) N_g]}{E[N_g]^2} \right)\left( \frac{1}{G} \sum_{1 \leq g \leq 2 G} N_g D_g \right)
\end{align*}
By the weak law of large numbers, Lemma \ref{lem:wlln} and Slutsky's theorem, we have 
\begin{equation*}
    \frac{1}{G} \sum_{1 \leq g \leq 2 G} \left(\hat Y_g(1) - \tilde Y_g(1)\right) D_g \xrightarrow{P} 0~.
\end{equation*}
Lemma \ref{lem:wlln} implies
\begin{equation*}
    \frac{1}{G} \sum_{1 \leq g \leq 2 G} \tilde{Y}_g(d) D_g \stackrel{P}{\rightarrow} E\left[\tilde Y_{g}(d)\right] = 0~.
\end{equation*}
Thus, the result follows. 
\end{proof}

\begin{lemma}\label{lemma:tau_G}
If Assumptions \ref{ass:QG} holds, and Assumptions \ref{ass:pair_formX}-\ref{ass:DGP2} hold, then
\begin{equation*}
    \hat{\tau}_{G}^{2} \stackrel{P}{\rightarrow} E\left[\operatorname{Var}\left[\tilde Y_{g}(1) \middle| X_{g}\right]\right]+E\left[\operatorname{Var}\left[\tilde Y_{g}(0) \middle| X_{g}\right]\right]+E\left[\left(E\left[\tilde Y_{g}(1) \middle| X_{g}\right]-E\left[\tilde Y_{g}(0) \middle| X_{g}\right]\right)^{2}\right]
\end{equation*}
in the case where we match on cluster size. Instead, if Assumptions \ref{ass:QG} and \ref{ass:pair_form}-\ref{ass:DGP} hold, then
\begin{equation*}
    \hat{\tau}_{G}^{2} \stackrel{P}{\rightarrow} E\left[\operatorname{Var}\left[\tilde Y_{g}(1) \middle| W_{g}\right]\right]+E\left[\operatorname{Var}\left[\tilde Y_{g}(0) \middle| W_{g}\right]\right]+E\left[\left(E\left[\tilde Y_{g}(1) \middle| W_{g}\right]-E\left[\tilde Y_{g}(0) \middle| W_{g}\right]\right)^{2}\right]
\end{equation*}
in the case where we do not match on cluster size.
\end{lemma}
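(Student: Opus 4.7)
The plan is to reduce the claim to the analogous variance-estimator consistency result for matched-pair designs with individual-level outcomes from \cite{bai2021inference}, applied to the transformed cluster-level outcomes $\tilde Y_g(d)$. Two observations make this reduction work: the feasible adjusted outcome $\hat Y_g$ is a plug-in version of $\tilde Y_g$ built from sample moments that converge to their population counterparts, and Lemma \ref{lemma:assumptions-for-bai-inference} supplies the regularity conditions that the transformed outcomes inherit from the underlying assumptions.

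First I would handle an infeasible analogue. Set $\tilde \tau_G^2 = G^{-1}\sum_{j=1}^G (\tilde Y_{\pi(2j)} - \tilde Y_{\pi(2j-1)})^2$. Parts (1)--(4) of Lemma \ref{lemma:assumptions-for-bai-inference} verify exactly the finite second moment, conditional independence, Lipschitz-in-pairs, and pair-closeness properties needed to invoke the matched-pair variance-limit machinery of \cite{bai2021inference} (their Lemma S.1.4 / S.1.5 applied to squared pair differences), with potential outcome $\tilde Y_g(d)$ and matching variable $X_g$ in the no-size case (or $W_g$ in the with-size case). This yields the two stated limits for $\tilde \tau_G^2$ directly.

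Second, I would show $\hat \tau_G^2 - \tilde \tau_G^2 \xrightarrow{P} 0$. With $\bar N_G = (2G)^{-1}\sum_j N_j$, $\hat A_G(d) = [G^{-1}\sum_j \bar Y_j I\{D_j=d\} N_j]/[G^{-1}\sum_j I\{D_j=d\} N_j]$, and $A(d) = E[\bar Y_g(d)N_g]/E[N_g]$, direct algebra gives
\[\hat Y_g - \tilde Y_g = N_g\Bigl[\Bigl(\tfrac{1}{\bar N_G} - \tfrac{1}{E[N_g]}\Bigr)\bar Y_g - \Bigl(\tfrac{\hat A_G(D_g)}{\bar N_G} - \tfrac{A(D_g)}{E[N_g]}\Bigr)\Bigr].\]
The bracketed factor is $o_P(1)$ by the weak law of large numbers together with Lemma \ref{lem:L_X} or \ref{lem:L_N}, which deliver consistency of $\bar N_G$ and $\hat A_G(d)$ for their targets. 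Applying $a^2 - b^2 = (a-b)(a+b)$ and Cauchy--Schwarz then bounds $|\hat \tau_G^2 - \tilde \tau_G^2|$ by $2\sqrt{\tilde \tau_G^2}\sqrt{\Delta_G} + \Delta_G$, where $\Delta_G$ is an average of $(\hat Y_{\pi(2j)} - \tilde Y_{\pi(2j)})^2$ and $(\hat Y_{\pi(2j-1)} - \tilde Y_{\pi(2j-1)})^2$. Using the display above, this last average is $o_P(1)$ because the bracketed factor squared is $o_P(1)$ while $G^{-1}\sum_g N_g^2(\bar Y_g^2 + 1) = O_P(1)$ by Assumption \ref{ass:QG}(c)--(d) and Lemma \ref{lem:E_bounded}; together with $\tilde \tau_G^2 = O_P(1)$ from the first step, the claim follows.

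The main obstacle is less in the two steps above than inside Lemma \ref{lemma:assumptions-for-bai-inference}, which must translate the smoothness and pair-closeness hypotheses on the primitives $(X_g, N_g)$ into analogous properties of $\mu_d(\cdot) = E[\tilde Y_g(d) \mid \cdot]$. In the match-on-size case one needs pair-closeness of $\mu_d(W_g)$, and because $\tilde Y_g(d)$ is weighted by $N_g/E[N_g]$ this closeness is only inherited when $|W_{\pi(2g)} - W_{\pi(2g-1)}|$ is damped by powers of $N_g$; this is precisely why Assumption \ref{ass:pair_form} carries the $N_{\pi(2g)}^\ell$ factor and why Assumption \ref{ass:DGP} is stated for $E[\bar Y_g^r(d) \mid W_g]$ rather than directly for $E[\tilde Y_g(d) \mid W_g]$.
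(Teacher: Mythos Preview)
Your proposal is correct and follows essentially the same strategy as the paper: reduce to the infeasible statistic built from $\tilde Y_g$ via Lemma \ref{lemma:assumptions-for-bai-inference} and the matched-pair machinery of \cite{bai2021inference}, then show the feasible--infeasible discrepancy vanishes. The only cosmetic difference is that the paper expands $\hat\tau_G^2 = G^{-1}\sum_g \hat Y_g^2 - 2G^{-1}\sum_j \hat Y_{\pi(2j)}\hat Y_{\pi(2j-1)}$ and swaps $\hat Y$ for $\tilde Y$ term by term (invoking Lemma \ref{lemma:mu_G} for the diagonal and Lemma S.1.6 of \cite{bai2021inference} for the cross term), whereas you bound $|\hat\tau_G^2 - \tilde\tau_G^2|$ in one stroke via Cauchy--Schwarz; both arguments ultimately rest on the same facts that $\bar N_G$ and $\hat A_G(d)$ are consistent and $G^{-1}\sum_g N_g^2(\bar Y_g^2+1) = O_P(1)$.
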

\begin{proof}
Note that
\begin{equation*}
    \hat{\tau}_{G}^{2}=\frac{1}{G} \sum_{1 \leq j \leq G}\left(\hat Y_{\pi(2 j)}-\hat Y_{\pi(2 j-1)}\right)^{2}=\frac{1}{G} \sum_{1 \leq g \leq 2 G} \hat{Y}_{g}^{2}-\frac{2}{G} \sum_{1 \leq j \leq G} \hat Y_{\pi(2 j)}\hat Y_{\pi(2 j-1)} .
\end{equation*}
Since
\begin{equation*}
    \frac{1}{G} \sum_{1 \leq g \leq 2 G} \hat Y_{g}^{2}=\hat{\sigma}_{G}^{2}(1)-\hat{\mu}_{G}^{2}(1)+\hat{\sigma}_{G}^{2}(0)-\hat{\mu}_{G}^{2}(0)
\end{equation*}
It follows from Lemma \ref{lemma:mu_G} that
\begin{equation*}
    \frac{1}{G} \sum_{1 \leq g \leq 2 G} \hat Y_{g}^{2} \xrightarrow{P} E[\tilde Y_g^2(1)] + E[\tilde Y_g^2(0)]
\end{equation*}
Next, we argue that
\begin{equation*}
    \frac{2}{G} \sum_{1 \leq j \leq G} \hat Y_{\pi(2 j)}\hat Y_{\pi(2 j-1)} \xrightarrow{P}  2 E[\mu_1(W_g) \mu_0(W_g) ]~,
\end{equation*}
where we use the notation $\mu_d(W_g)$ to denote $E[\tilde Y_g(d) | W_g]$. To this end, first note that
\begin{equation*}
    \frac{2}{G} \sum_{1 \leq j \leq G} \hat Y_{\pi(2 j)}\hat Y_{\pi(2 j-1)} = \frac{2}{G} \sum_{1 \leq j \leq G} \tilde Y_{\pi(2 j)}\tilde Y_{\pi(2 j-1)} + \frac{2}{G} \sum_{1 \leq j \leq G} \left ( \hat Y_{\pi(2 j)}\hat Y_{\pi(2 j-1)} - \tilde Y_{\pi(2 j)}\tilde Y_{\pi(2 j-1)} \right )~.
\end{equation*}
Note that
\begin{align*}
    &\frac{2}{G} \sum_{1 \leq j \leq G} \left( \hat Y_{\pi(2 j)}(1)\hat Y_{\pi(2 j-1)}(0) - \tilde Y_{\pi(2 j)}(1)\tilde Y_{\pi(2 j-1)}(0)\right) D_{\pi(2 j)}\\
    &= \frac{2}{G} \sum_{1 \leq j \leq G} \left ( \left(\hat Y_{\pi(2 j)}(1) - \tilde Y_{\pi(2 j)}(1)\right)\hat Y_{\pi(2 j-1)}(0) D_{\pi(2 j)} + \left(\hat Y_{\pi(2 j-1)}(0) -\tilde Y_{\pi(2 j-1)}(0) \right)\tilde Y_{\pi(2 j)}(1) D_{\pi(2 j)} \right ) \\
    &= \frac{2}{G} \sum_{1 \leq j \leq G} \bigg ( \left(\hat Y_{\pi(2 j)}(1)  - \tilde Y_{\pi(2 j)}(1)\right)\tilde Y_{\pi(2 j-1)}(0) D_{\pi(2 j)} \\
    & \hspace{3em} + \left(\hat Y_{\pi(2 j)}(1) - \tilde Y_{\pi(2 j)}(1)\right)\left(\hat Y_{\pi(2 j-1)}(0)- \tilde Y_{\pi(2 j-1)}(0)\right) D_{\pi(2 j)} \\
    & \hspace{3em} + \left(\hat Y_{\pi(2 j-1)}(0) -\tilde Y_{\pi(2 j-1)}(0) \right)\tilde Y_{\pi(2 j)}(1) D_{\pi(2 j)} \bigg )~,
\end{align*}
for which the first term is given as follows:
\begin{align*}
    &\frac{2}{G} \sum_{1 \leq j \leq G} \left(\hat Y_{\pi(2 j)}(1) - \tilde Y_{\pi(2 j)}(1)\right)\tilde Y_{\pi(2 j-1)}(0) D_{\pi(2 j)} \\
    &= \left( \frac{1}{\frac{1}{2G} \sum_{1\leq g \leq 2G} N_g} - \frac{1}{E[N_g]} \right)\left(\frac{2}{G} \sum_{1 \leq j \leq G} N_{\pi(2 j)} \bar Y_{\pi(2 j)}(1) \tilde Y_{\pi(2 j-1)}(0)  D_{\pi(2 j)} \right)\\
    &\quad - \left( \frac{\frac{1}{2G}\sum_{1\leq g \leq 2G} \bar Y_g(1) I\{D_g=1\} N_g}{\left(\frac{1}{2G} \sum_{1\leq g \leq 2G} N_g \right)^2} - \frac{E[\Bar{Y}_g(1) N_g]}{E[N_g]^2} \right) \left(\frac{2}{G} \sum_{1 \leq j \leq G} N_{\pi(2 j)} \tilde Y_{\pi(2 j-1)}(0)D_{\pi(2 j)}\right)~.
\end{align*}
Lemma \ref{lem:cross} implies
\begin{align*}
    \frac{2}{G} \sum_{1 \leq j \leq G} N_{\pi(2 j)} \bar Y_{\pi(2 j)}(1) \tilde Y_{\pi(2 j-1)}(0) D_{\pi(2 j)} &\xrightarrow{P} E[E[N_g \Bar Y_g(1)| X_g] E[ \tilde Y_g(0)| X_g]] \\
    \nonumber \frac{2}{G} \sum_{1 \leq j \leq G} N_{\pi(2 j)} \tilde Y_{\pi(2 j-1)}(0) D_{\pi(2 j)} &\xrightarrow{P} E[E[N_g| X_g]E[\tilde Y_g(0)| X_g]]
\end{align*}
for the case of not matching on cluster sizes. For the case where we match on cluster sizes, 
\begin{align*}
    \frac{2}{G} \sum_{1 \leq j \leq G} N_{\pi(2 j)} \bar Y_{\pi(2 j)}(1) \tilde Y_{\pi(2 j-1)}(0) D_{\pi(2 j)} &\xrightarrow{P} E[N_g E[ \Bar Y_g(1)| W_g] E[ \tilde Y_g(0)| W_g]] \\
    \frac{2}{G} \sum_{1 \leq j \leq G} N_{\pi(2 j)} \tilde Y_{\pi(2 j-1)}(0) D_{\pi(2 j)} &\xrightarrow{P} E[N_g E[\tilde Y_g(0)| W_g]]
\end{align*}
Then, by the weak law of large numbers, Lemma \ref{lem:wlln}, and the continuous mapping theorem, we have 
\begin{align*}
    \frac{2}{G} \sum_{1 \leq j \leq G} \left(\hat Y_{\pi(2 j)}(1) - \tilde Y_{\pi(2 j)}(1)\right)\tilde Y_{\pi(2 j-1)}(0) D_{\pi(2 j)} \xrightarrow{P} 0~.
\end{align*}
By repeating the same arguments for the other two terms, we conclude that
\begin{equation*}
    \frac{2}{G} \sum_{1 \leq j \leq G}\left( \hat Y_{\pi(2 j)}(1)\hat Y_{\pi(2 j-1)}(0) - \tilde Y_{\pi(2 j)}(1)\tilde Y_{\pi(2 j-1)}(0)\right)D_{\pi(2 j)} \xrightarrow{P} 0~,
\end{equation*}
which immediately implies
\begin{equation*}
    \frac{2}{G} \sum_{1 \leq j \leq G} \hat Y_{\pi(2 j)}\hat Y_{\pi(2 j-1)} - \tilde Y_{\pi(2 j)}\tilde Y_{\pi(2 j-1)} \xrightarrow{P} 0 ~.
\end{equation*}
Thus, it is left to show that
\begin{equation*}
    \frac{2}{G} \sum_{1 \leq j \leq G} \tilde Y_{\pi(2 j)}\tilde Y_{\pi(2 j-1)} \xrightarrow{P}  2 E[\mu_1(W_g) \mu_0(W_g) ]~,
\end{equation*}
for the case of matching on cluster sizes, and for the case of not matching on cluster size,
\begin{equation*}
    \frac{2}{G} \sum_{1 \leq j \leq G} \tilde Y_{\pi(2 j)}\tilde Y_{\pi(2 j-1)} \xrightarrow{P}  2 E[\mu_1(X_g) \mu_0(X_g) ]~,
\end{equation*}
both of which follow from Lemmas \ref{lem:cross} and \ref{lemma:assumptions-for-bai-inference}. Hence, in the case where we match on cluster size,
\begin{align*}
\hat{\tau}_{n}^{2} & \stackrel{P}{\rightarrow} E\left[\tilde{Y}_{g}^{2}(1)\right]+E\left[\tilde Y_{g}^{2}(0)\right]-2 E\left[\mu_{1}\left(W_{g}\right) \mu_{0}\left(W_{g}\right)\right] \\
&=E\left[\operatorname{Var}\left[\tilde Y_{g}(1) | W_{g}\right]\right]+E\left[\operatorname{Var}\left[\tilde Y_{g}(0) | W_{g}\right]\right]+E\left[\left(\mu_{1}\left(W_{g}\right)-\mu_{0}\left(W_{g}\right)\right)^{2}\right] \\
&=E\left[\operatorname{Var}\left[\tilde Y_{g}(1) | W_{g}\right]\right]+E\left[\operatorname{Var}\left[\tilde Y_{g}(0) | W_{g}\right]\right]+E\left[\left(E\left[\tilde Y_{g}(1) | X_{i}\right]-E\left[\tilde Y_{g}(0) | W_{g}\right]\right)^{2}\right]~.
\end{align*}
And the corresponding result holds in the case where we do not match on cluster size.
\end{proof}

\begin{lemma}\label{lemma:lambda_G}
If Assumptions \ref{ass:QG} holds, and Assumptions \ref{ass:pair_formX}-\ref{ass:DGP2}, \ref{ass:close-pairs-of-pairsX} hold, then
\begin{equation*}
    \hat{\lambda}_{G}^{2} \stackrel{P}{\rightarrow} E\left[\left(E\left[\tilde Y_{g}(1) | X_{g}\right]-E\left[\tilde Y_{g}(0) | X_{g}\right]\right)^{2}\right]
\end{equation*}
in the case where we do not match on cluster size. Instead, if Assumptions \ref{ass:pair_form}-\ref{ass:DGP}, \ref{ass:close-pairs-of-pairs} hold, then
\begin{equation*}
    \hat{\lambda}_{G}^{2} \stackrel{P}{\rightarrow} E\left[\left(E\left[\tilde Y_{g}(1) | W_{g}\right]-E\left[\tilde Y_{g}(0) | W_{g}\right]\right)^{2}\right]
\end{equation*}
in the case where we match on cluster size.
\end{lemma}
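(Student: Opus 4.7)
The plan is to reduce $\hat\lambda_G^2$ to the analogous statistic built from the population-transformed outcomes $\tilde Y_g(d)$, and then invoke the corresponding argument used for matched-pair designs in \cite{bai2021inference} (the analog of their Lemma S.1.7). The hypotheses of that lemma, applied to the transformed outcomes, are precisely what Lemma \ref{lemma:assumptions-for-bai-inference} verifies under either of the two alternative sets of assumptions.

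First, I would decompose the difference between the adjusted outcome and the population-transformed outcome:
\[
\hat Y_g(d) - \tilde Y_g(d) = N_g \bar{Y}_g(d)\Big(\tfrac{1}{\bar N_G} - \tfrac{1}{E[N_g]}\Big) - N_g\Big(\tfrac{\hat B_G(d)}{\bar N_G} - \tfrac{B(d)}{E[N_g]}\Big),
\]
where $\bar N_G = \tfrac{1}{2G}\sum_j N_j$, $B(d) = E[\bar Y_g(d) N_g]/E[N_g]$, and $\hat B_G(d)$ denotes its empirical analog. By the weak law of large numbers combined with Lemmas \ref{lem:L_N} and \ref{lem:L_X}, both scalar correction factors are $o_P(1)$, and importantly they do not depend on $g$.

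Second, I would expand the product $(\hat Y_{\pi(4j-3)}-\hat Y_{\pi(4j-2)})(\hat Y_{\pi(4j-1)}-\hat Y_{\pi(4j)})$ into the analogous product in $\tilde Y$'s plus cross-terms, each of which contains at least one $o_P(1)$ scalar that factors out of the sum over $j$. The remaining averages are of products of at most four quantities of the form $N_g$ or $N_g \bar Y_g(d)$; by Cauchy-Schwarz together with the moment bounds in Lemma \ref{lem:E_bounded} and Assumption \ref{ass:QG}(c), these averages are $O_P(1)$. Multiplying by the bounded sign factor $(D_{\pi(4j-3)}-D_{\pi(4j-2)})(D_{\pi(4j-1)}-D_{\pi(4j)}) \in \{-1,+1\}$ preserves the $o_P(1)$ conclusion, so each cross-term contributes $o_P(1)$ to $\hat\lambda_G^2$.

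Third, I would apply the argument underlying Lemma S.1.7 of \cite{bai2021inference} to $\tilde Y_g(d)$. Parts (1)--(2) and (5) of Lemma \ref{lemma:assumptions-for-bai-inference} furnish respectively the required moment bound, the conditional independence, and the close-pairs-of-pairs condition needed for that argument to go through. This yields
\[
\tfrac{2}{G}\!\!\!\sum_{1 \le j \le \lfloor G/2\rfloor}\!\! (\tilde Y_{\pi(4j-3)}-\tilde Y_{\pi(4j-2)})(\tilde Y_{\pi(4j-1)}-\tilde Y_{\pi(4j)})(D_{\pi(4j-3)}-D_{\pi(4j-2)})(D_{\pi(4j-1)}-D_{\pi(4j)}) \xrightarrow{P} E[(\mu_1-\mu_0)^2],
\]
where the conditioning is on $X_g$ in the first setting and on $W_g$ in the second.

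The main obstacle is the bookkeeping in the second step. Because $\hat Y_g - \tilde Y_g$ carries factors of $N_g$ and $N_g \bar Y_g(d)$, the cross-terms involve averages of products of up to four such quantities, and one must apply Cauchy-Schwarz repeatedly in combination with Lemma \ref{lem:E_bounded} to bound them. The key simplification is that the scalar correction factors do not depend on $g$, so each cross-term factorizes as (scalar $o_P(1)$) $\times$ ($O_P(1)$ average), which makes the argument tractable.
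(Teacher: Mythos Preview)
Your proposal is correct and follows essentially the same route as the paper: decompose $\hat\lambda_G^2$ into the $\tilde Y$-based analog $\tilde\lambda_G^2$ plus a remainder, kill the remainder via the scalar factorization $\hat Y_g(d)-\tilde Y_g(d)=(\text{scalar }o_P(1))\times N_g\bar Y_g(d)-(\text{scalar }o_P(1))\times N_g$, and finish with Lemma S.1.7 of \cite{bai2021inference} together with Lemma \ref{lemma:assumptions-for-bai-inference}. The only minor difference is that the paper, by pointing back to the argument in Lemma \ref{lemma:tau_G}, actually shows the residual averages \emph{converge} (via Lemma S.1.6 of \cite{bai2021inference}) rather than merely being $O_P(1)$; your Cauchy--Schwarz plus Lemma \ref{lem:E_bounded} bound is weaker but entirely sufficient here, and arguably cleaner.
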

\begin{proof}
Note that
\begin{align*}
    \hat{\lambda}_{G}^{2}=& \frac{2}{G} \sum_{1 \leq j \leq\left\lfloor G/2\right\rfloor}\left(\left(\hat Y_{\pi(4 j-3)}-\hat Y_{\pi(4 j-2)}\right)\left(\hat Y_{\pi(4 j-1)}- \hat Y_{\pi(4 j)}\right)\left(D_{\pi(4 j-3)}-D_{\pi(4 j-2)}\right)\left(D_{\pi(4 j-1)}-D_{\pi(4 j)}\right)\right) \\
    =& \underbrace{ \frac{2}{G} \sum_{1 \leq j \leq\left\lfloor G/2\right\rfloor}\left(\left(\tilde Y_{\pi(4 j-3)}-\tilde Y_{\pi(4 j-2)}\right)\left(\tilde Y_{\pi(4 j-1)}- \tilde Y_{\pi(4 j)}\right)\left(D_{\pi(4 j-3)}-D_{\pi(4 j-2)}\right)\left(D_{\pi(4 j-1)}-D_{\pi(4 j)}\right)\right)}_{:= \tilde{\lambda}_{G}^{2}} \\
    &+ \frac{2}{G} \sum_{1 \leq j \leq\left\lfloor G/2\right\rfloor}\left(\left(\left(\hat Y_{\pi(4 j-3)}-\hat Y_{\pi(4 j-2)}\right)\left(\hat Y_{\pi(4 j-1)}- \hat Y_{\pi(4 j)}\right) - \left(\tilde Y_{\pi(4 j-3)}-\tilde Y_{\pi(4 j-2)}\right)\left(\tilde Y_{\pi(4 j-1)}- \tilde Y_{\pi(4 j)}\right) \right)\right.\\
    &\quad \left.\times\left(D_{\pi(4 j-3)}-D_{\pi(4 j-2)}\right)\left(D_{\pi(4 j-1)}-D_{\pi(4 j)}\right)\right)
\end{align*}
Note that
\begin{align*}
    &\left(\hat Y_{\pi(4 j-3)}(1)-\hat Y_{\pi(4 j-2)}(0)\right)\left(\hat Y_{\pi(4 j-1)}(1)- \hat Y_{\pi(4 j)}(0)\right)D_{\pi(4 j-3)}D_{\pi(4 j-1)}\\
    &\quad - \left(\tilde Y_{\pi(4 j-3)}(1)-\tilde Y_{\pi(4 j-2)}(0)\right)\left(\tilde Y_{\pi(4 j-1)}(1)- \tilde Y_{\pi(4 j)}(0)\right) D_{\pi(4 j-3)}D_{\pi(4 j-1)}\\
    &=\left(\hat Y_{\pi(4 j-3)}(1)-\hat Y_{\pi(4 j-2)}(0)-\left(\tilde Y_{\pi(4 j-3)}(1)-\tilde Y_{\pi(4 j-2)}(0)\right)\right)\left(\tilde Y_{\pi(4 j-1)}(1)- \tilde Y_{\pi(4 j)}(0)\right)D_{\pi(4 j-3)}D_{\pi(4 j-1)} \\
    & + \left(\hat Y_{\pi(4 j-3)}(1)-\hat Y_{\pi(4 j-2)}(0)-\left(\tilde Y_{\pi(4 j-3)}(1)-\tilde Y_{\pi(4 j-2)}(0)\right)\right)\\
    &\hspace{3em} \times\left(\hat Y_{\pi(4 j-1)}(1)- \hat Y_{\pi(4 j)}(0)-\left(\tilde Y_{\pi(4 j-1)}(1)- \tilde Y_{\pi(4 j)}(0)\right)\right)D_{\pi(4 j-3)}D_{\pi(4 j-1)}\\
    &+ \left(\hat Y_{\pi(4 j-1)}(1)- \hat Y_{\pi(4 j)}(0) - \left(\tilde Y_{\pi(4 j-1)}(1)- \tilde Y_{\pi(4 j)}(0)\right)\right)\left(\tilde Y_{\pi(4 j-3)}(1)-\tilde Y_{\pi(4 j-2)}(0)\right)D_{\pi(4 j-3)}D_{\pi(4 j-1)}~.
\end{align*}
Then we can show that each term converges to zero in probability by repeating the arguments in Lemma \ref{lemma:tau_G}. Similar arguments imply the same result holds for other cross products, which implies $\hat{\lambda}_{G}^{2} - \tilde{\lambda}_{G}^{2} \xrightarrow{P} 0$. Finally, by Lemma S.1.7 of \cite{bai2022inference} and Lemma \ref{lemma:assumptions-for-bai-inference}, we have 
\begin{equation*}
    \hat{\lambda}_{G}^{2} = \tilde{\lambda}_{G}^{2} + o_P(1) \stackrel{P}{\rightarrow} E\left[\left(E\left[\tilde Y_{g}(1) \middle| W_{g}\right]-E\left[\tilde Y_{g}(0) \middle| W_{g}\right]\right)^{2}\right] 
\end{equation*}
in the case where we match on cluster size, and
\begin{equation*}
    \hat{\lambda}_{G}^{2} = \tilde{\lambda}_{G}^{2} + o_P(1) \stackrel{P}{\rightarrow} E\left[\left(E\left[\tilde Y_{g}(1) \middle| X_{g}\right]-E\left[\tilde Y_{g}(0) \middle| X_{g}\right]\right)^{2}\right] 
\end{equation*}
in the case where we do not match on cluster size.
\end{proof}

\begin{lemma}\label{lem:rand_numerator}
Let $\tilde{R}_G(t)$ denote the randomization distribution of $\sqrt{G}\hat{\Delta}_G$ (see equation \eqref{eq:rand_num}). Then under the null hypothesis \eqref{eq:delta_zero}, we have that 
\[\sup_{t \in \mathbf{R}}|\tilde{R}_G(t) - \Phi(t/\tau)|\xrightarrow{P} 0~,\]
where, in the case where we match on cluster size,
\[\tau^2 = E[\var[\tilde{Y}_g(1)|W_g]] + E[\var[\tilde{Y}_g(0)|W_g]] + E\left[(E[\tilde{Y}_g(1)|W_g] - E[\tilde{Y}_g(0)|X_g])^2\right]~,\]
and in the case where we do not match on cluster size,
\[\tau^2 = E[\var[\tilde{Y}_g(1)|X_g]] + E[\var[\tilde{Y}_g(0)|X_g]] + E\left[(E[\tilde{Y}_g(1)|X_g] - E[\tilde{Y}_g(0)|X_g])^2\right]~.\]
\end{lemma}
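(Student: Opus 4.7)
The plan is to show that, under a uniform random draw $h$ from $\mathbf{H}_G(\pi)$, the conditional distribution of $\sqrt{G}\hat{\Delta}_G(hZ^{(G)})$ given $Z^{(G)}$ converges in probability to $N(0,\tau^2)$ in any metric for weak convergence. Combined with the continuity of $\Phi(\cdot/\tau)$ and Polya's theorem, this will deliver the desired uniform convergence $\sup_t|\tilde{R}_G(t)-\Phi(t/\tau)| \xrightarrow{P} 0$.

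The first step is to linearize the randomized statistic. Mirroring the delta-method expansion from the proof of Theorem \ref{thm:normal_X}, but applied with $D_g$ replaced throughout by $D_{h(g)}$, and using the fact that under a uniform $h$ the marginal probability $P\{D_{h(g)}=1\}$ remains $1/2$ so that the four sample averages appearing in the Taylor expansion converge to the same probability limits as in the original sample, one obtains
\begin{equation*}
\sqrt{G}\hat{\Delta}_G(hZ^{(G)}) = \frac{1}{\sqrt{G}}\sum_{j=1}^{G} \epsilon_j^h\bigl(\hat{Y}_{\pi(2j-1)}-\hat{Y}_{\pi(2j)}\bigr) + o_P(1),
\end{equation*}
where $\epsilon_j^h := D_{h(\pi(2j-1))}-D_{h(\pi(2j))}\in\{-1,+1\}$. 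Under a uniform $h$, these signs are i.i.d.\ Rademacher and independent of $Z^{(G)}$. A subtlety is that $\hat Y_g$ is itself built from pooled means over the $\{D_j = D_g\}$ subsamples and therefore depends on the assignment; to handle this I would show that replacing $D_g$ by $D_{h(g)}$ in those pooled means changes them by $o_P(1)$ uniformly over $h$, via first- and second-moment calculations over $h$ combined with Markov's inequality.

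Next, I would apply a conditional CLT to the linearized sum. Conditional on $Z^{(G)}$, the summands are independent and mean zero with total conditional variance
\begin{equation*}
\frac{1}{G}\sum_{j=1}^{G}\bigl(\hat{Y}_{\pi(2j-1)}-\hat{Y}_{\pi(2j)}\bigr)^{2}=\hat{\tau}_G^{2},
\end{equation*}
which by Lemma \ref{lemma:tau_G} converges in probability to $\tau^2$ (and which matches exactly the target variance in both the matched-on-size and not-matched-on-size cases). Lindeberg's condition reduces to $G^{-1}\max_{1\le j\le G}(\hat Y_{\pi(2j-1)}-\hat Y_{\pi(2j)})^2 \xrightarrow{P} 0$, which follows from $E[\hat Y_g^{2}]=O(1)$ (a consequence of Lemma \ref{lem:E_bounded} together with $E[N_g^{2}]<\infty$) and a standard maximal inequality. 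Passing to a subsequence along which these conditions hold almost surely, as in the proof of Lemma \ref{lem:L_N}, yields conditional weak convergence to $N(0,\tau^{2})$ in probability; Polya's theorem then converts this into the claimed uniform convergence of $\tilde R_G(t)$ to $\Phi(t/\tau)$.

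The main obstacle, I expect, will be the $h$-dependence of the adjusted outcomes $\hat Y_g$ through the pooled-mean corrections in their definition. Verifying that these corrections are stable up to $o_P(1)$ \emph{uniformly over} $h\in\mathbf{H}_G(\pi)$ is the technically delicate step; once this stability is in place, the conditional CLT and Polya's theorem steps are routine.
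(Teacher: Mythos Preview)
Your approach is viable and reaches the correct limit, but it differs from the paper's in an instructive way. You linearize via the delta method to obtain a single Rademacher sum in the \emph{adjusted} outcomes $\hat Y_g$, so that the conditional variance is exactly $\hat\tau_G^2$ and Lemma~\ref{lemma:tau_G} immediately delivers $\tau^2$. The paper instead performs an \emph{exact} algebraic rewriting of $\sqrt{G}\hat\Delta_G$ (no Taylor expansion) as a linear combination of two Rademacher sums in the \emph{raw} within-pair differences $N_{\pi(2j)}\bar Y_{\pi(2j)}-N_{\pi(2j-1)}\bar Y_{\pi(2j-1)}$ and $N_{\pi(2j)}-N_{\pi(2j-1)}$, with $h$-dependent scalar coefficients $\tilde N(d)/G$ and $G^{-1}\sum_g(1-\tilde D_g)N_g\bar Y_g$ that are shown to converge in probability; it then applies a bivariate conditional CLT (Lemma~\ref{lem:rand_joint}) and afterward simplifies the resulting variance via separate algebra (Lemma~\ref{lem:tau_algebra}, where the null $E[N_g\bar Y_g(1)]=E[N_g\bar Y_g(0)]$ enters). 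The trade-off: your route gets the target variance for free via $\hat\tau_G^2$ but must control the $h$-dependence of the centering inside $\hat Y_g$---precisely the obstacle you flag; the paper's route sidesteps that obstacle entirely because its Rademacher summands are functions of $Z^{(G)}$ alone, at the price of a separate joint-CLT lemma and a variance-simplification lemma. One small correction: you do not actually need uniformity over $h$---convergence in probability under the joint law of $(h,Z^{(G)})$ is enough for the subsequencing/conditional CLT step, and that is exactly how the paper handles its own $h$-dependent coefficients.
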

\begin{proof}
For a random transformation of the data, it follows as a consequence of Lemma \ref{lem:wlln} that
\begin{align*}
    \frac{1}{G}\sum_{1 \le g \le 2G}I\{\tilde{D}_g = d\}N_g & \xrightarrow{P} E[N_g]~, \\
    \frac{1}{G}\sum_{1 \le g \le {2G}}(1 - \tilde{D}_g)N_g\bar{Y}_g & \xrightarrow{P} E[N_g\bar{Y}_g(0)]~.
\end{align*}
Combining this with Lemma \ref{lem:rand_joint} and a straightforward modification of Lemma A.3. in \cite{chung2013exact} to two dimensional distributions, we obtain that 
\[\sup_{t \in \mathbf{R}}|\tilde{R}_G(t) - \Phi(t/\tau)|\xrightarrow{P} 0~,\]
where when we match on cluster size
\[\tau^2 = \frac{1}{E[N_g]^2}\left(E[\var(N_g\bar{Y}_g(1)|W_g)] + E[\var(N_g\bar{Y}_g(0)|W_g)] + E\left[(E[N_g\bar{Y}_g(1)|W_g] - E[N_g\bar{Y}_g(0)|W_g])^2\right]\right)~,\]
and when we do \emph{not} match on cluster size
\begin{align*}
\tau^2 &= \frac{1}{E[N_g]^2}\Big(E[\var(N_g\bar{Y}_g(1)|X_g)] + E[\var(N_g\bar{Y}_g(0)|X_g)] + E\left[(E[N_g\bar{Y}_g(1)|X_g] - E[N_g\bar{Y}_g(0)|X_g])^2\right] + \\
& - 2\frac{E[N_g\bar{Y}_g(0)]}{E[N_g]}\left(E[N^2_g\bar{Y}_g(1)] + E[N^2_g\bar{Y}_g(0)] \right. \\
& \left.- \left(E\left[E[N_g\bar{Y}_g(1)|X_g]E[N_g|X_g]\right] + E\left[E[N_g\bar{Y}_g(0)|X_g]E[N_g|X_g]\right]\right)\right) + \left(\frac{E[N_g\bar{Y}_g(0)]}{E[N_g]}\right)^22E[\var(N_g|X_g)]\Big)~.
\end{align*}
Note than, since under the null, $E[N_g \bar Y_g(1)] = E[N_g \bar Y_g(0)]$, we obtain
\begin{align*}
   & E[\var[\tilde Y_g(1) | X_g]] + E[\var[\tilde Y_g(0) | X_g]] + E[(E[\tilde Y_g(1) | X_g] - E[\tilde Y_g(0) | X_g])^2] \\
   & = \frac{E[\var[N_g \bar Y_g(1) | X_g]]}{E[N_g]^2} + \frac{E[\var[N_g \bar Y_g(0) | X_g]]}{E[N_g]^2} + \frac{2 E[\var[N_g | X_g]] E[N_g \bar Y_g(d)]^2}{E[N_g]^4} \\
   & \hspace{3em} + \frac{E[(E[N_g \bar Y_g(1) | X_g] - E[N_g \bar Y_g(0) | X_g])^2}{E[N_g]^2} \\
   & \hspace{3em} - 2 \frac{E[N_g \bar Y_g(1)] (E[N_g^2 \bar Y_g(1)] - E[E[N_g \bar Y_g(1) | X_g] E[N_g | X_g]])}{E[N_g]^3} \\
   & \hspace{3em} - 2 \frac{E[N_g \bar Y_g(0)] (E[N_g^2 \bar Y_g(0)] - E[E[N_g \bar Y_g(0) | X_g] E[N_g | X_g]])}{E[N_g]^3}~.
\end{align*}
The result then follows immediately.
\end{proof}
\begin{lemma}\label{lem:rand_denom}
Let $\check{v}^2_G(\epsilon_1,\ldots, \epsilon_G)$ be defined as in equation \eqref{eq:check_v}. If Assumption \ref{ass:QG} holds, and Assumptions \ref{ass:DGP}-\ref{ass:pair_form} (or Assumptions \ref{ass:DGP2}-\ref{ass:pair_formX}) hold,
\begin{equation*}
    \check{v}_G^2(\epsilon_1,\dots, \epsilon_G) \xrightarrow{P} \tau^2~,
\end{equation*}
where $\tau^2$ is defined in (\ref{lem:rand_numerator}).
\end{lemma}
\begin{proof}
From Lemma \ref{lemma:tau_G}, we see that $\hat{\tau}_G^2 \xrightarrow{P} \tau^2$. It therefore suffices to show that $\check{\lambda}_G^2\left(\epsilon_1, \ldots, \epsilon_G\right) \xrightarrow{P} 0$. In order to do so, note that $\check{\lambda}_G^2\left(\epsilon_1, \ldots, \epsilon_G\right)$ may be decomposed into sums of the form
\begin{equation*}
    \frac{2}{G} \sum_{1 \leq j \leq\left\lfloor\frac{G}{2}\right\rfloor} \epsilon_{2 j-1} \epsilon_{2 j} \hat Y_{\pi(4 j-k)} \hat Y_{\pi(4 j-\ell)} D_{\pi\left(4 j-k^{\prime}\right)} D_{\pi\left(4 j-\ell^{\prime}\right)}~,
\end{equation*}
where $(k,k^\prime) \in \{2,3\}^2$ and $(l,l^\prime)\in \{0,1\}^2$. Note that
\begin{align*}
    &\frac{2}{G} \sum_{1 \leq j \leq\left\lfloor\frac{G}{2}\right\rfloor} \epsilon_{2 j-1} \epsilon_{2 j} \hat Y_{\pi(4 j-k)} \hat Y_{\pi(4 j-\ell)} D_{\pi\left(4 j-k^{\prime}\right)} D_{\pi\left(4 j-\ell^{\prime}\right)}\\
    &= \frac{2}{G} \sum_{1 \leq j \leq\left\lfloor\frac{G}{2}\right\rfloor} \epsilon_{2 j-1} \epsilon_{2 j} \tilde Y_{\pi(4 j-k)} \tilde Y_{\pi(4 j-\ell)} D_{\pi\left(4 j-k^{\prime}\right)} D_{\pi\left(4 j-\ell^{\prime}\right)}\\
    &\quad + \frac{G}{n} \sum_{1 \leq j \leq\left\lfloor\frac{G}{2}\right\rfloor} \epsilon_{2 j-1} \epsilon_{2 j} \left(\hat Y_{\pi(4 j-k)} \hat Y_{\pi(4 j-\ell)}-\tilde Y_{\pi(4 j-k)} \tilde Y_{\pi(4 j-\ell)}\right)  D_{\pi\left(4 j-k^{\prime}\right)} D_{\pi\left(4 j-\ell^{\prime}\right)}~.
\end{align*}
By following the arguments in Lemma S.1.9 of \cite{bai2022inference} and Lemma \ref{lemma:assumptions-for-bai-inference}, we have that
\begin{equation*}
    \frac{2}{G} \sum_{1 \leq j \leq\left\lfloor\frac{G}{2}\right\rfloor} \epsilon_{2 j-1} \epsilon_{2 j} \tilde Y_{\pi(4 j-k)} \tilde Y_{\pi(4 j-\ell)} D_{\pi\left(4 j-k^{\prime}\right)} D_{\pi\left(4 j-\ell^{\prime}\right)} \xrightarrow{P} 0~.
\end{equation*}
As for the second term, we show that it convergences to zero in probability in the case where $k=k^\prime=3$ and $\ell=\ell^\prime=1$. And the other cases should hold by repeating the same arguments.
\begin{align*}
    &\frac{2}{G}  \sum_{1 \leq j \leq\left\lfloor\frac{G}{2}\right\rfloor} \epsilon_{2 j-1} \epsilon_{2 j} \left(\hat Y_{\pi(4 j-3)} \hat Y_{\pi(4 j-1)}-\tilde Y_{\pi(4 j-3)} \tilde Y_{\pi(4 j-1)}\right)  D_{\pi\left(4 j-3\right)} D_{\pi\left(4 j-1^{\prime}\right)}\\
    &=\frac{2}{G}  \sum_{1 \leq j \leq\left\lfloor\frac{G}{2}\right\rfloor} \epsilon_{2 j-1} \epsilon_{2 j} \left(\hat Y_{\pi(4 j-3)}(1) \hat Y_{\pi(4 j-1)}(1)-\tilde Y_{\pi(4 j-3)}(1) \tilde Y_{\pi(4 j-1)}(1)\right)  D_{\pi\left(4 j-3\right)} D_{\pi\left(4 j-1^{\prime}\right)}\\
    &= \frac{2}{G}  \sum_{1 \leq j \leq\left\lfloor\frac{G}{2}\right\rfloor} \epsilon_{2 j-1} \epsilon_{2 j} \left(\hat Y_{\pi(4 j-3)}(1)-\tilde Y_{\pi(4 j-3)}(1)\right)  \tilde Y_{\pi(4 j-1)}(1)  D_{\pi\left(4 j-3\right)} D_{\pi\left(4 j-1^{\prime}\right)} \\
    &\quad + \frac{2}{G}  \sum_{1 \leq j \leq\left\lfloor\frac{G}{2}\right\rfloor} \epsilon_{2 j-1} \epsilon_{2 j} \left(\hat Y_{\pi(4 j-3)}(1)-\tilde Y_{\pi(4 j-3)}(1)\right) \left(\hat Y_{\pi(4 j-1)}(1) - \tilde Y_{\pi(4 j-1)}(1)\right)  D_{\pi\left(4 j-3\right)} D_{\pi\left(4 j-1^{\prime}\right)}\\
    &\quad + \frac{2}{G}  \sum_{1 \leq j \leq\left\lfloor\frac{G}{2}\right\rfloor} \epsilon_{2 j-1} \epsilon_{2 j} \left(\hat Y_{\pi(4 j-1)}(1) - \tilde Y_{\pi(4 j-1)}(1)\right) \tilde Y_{\pi(4 j-3)}(1)  D_{\pi\left(4 j-3\right)} D_{\pi\left(4 j-1^{\prime}\right)}~,
\end{align*}
for which the first term is given as follows:
\begin{align*}
    &\frac{2}{G}  \sum_{1 \leq j \leq\left\lfloor\frac{G}{2}\right\rfloor} \epsilon_{2 j-1} \epsilon_{2 j} \left(\hat Y_{\pi(4 j-3)}(1)-\tilde Y_{\pi(4 j-3)}(1)\right)  \tilde Y_{\pi(4 j-1)}(1)  D_{\pi\left(4 j-3\right)} D_{\pi\left(4 j-1^{\prime}\right)} \\
    &= \left( \frac{1}{\frac{1}{2G} \sum_{1\leq g \leq 2G} N_g} - \frac{1}{E[N_g]} \right)\left(\frac{2}{G} \sum_{1 \leq j \leq \left\lfloor\frac{G}{2}\right\rfloor} \epsilon_{2 j-1} \epsilon_{2 j} N_{\pi(4 j-3)} \bar Y_{\pi(4 j-3)}(1) \right. \\
    & \hspace{3em} \left. \times \tilde Y_{\pi(4 j-1)}(1)  D_{\pi\left(4 j-3\right)} D_{\pi\left(4 j-1^{\prime}\right)} \right)\\
    &\quad - \left( \frac{\frac{1}{2G}\sum_{1\leq g \leq 2G} \bar Y_g(1) I\{D_g=1\} N_g}{\left(\frac{1}{2G} \sum_{1\leq g \leq 2G} N_g \right)^2} - \frac{E[\Bar{Y}_g(1) N_g]}{E[N_g]^2} \right) \left(\frac{2}{G} \sum_{1 \leq j \leq \left\lfloor\frac{G}{2}\right\rfloor} \epsilon_{2 j-1} \epsilon_{2 j} N_{\pi(4 j-3)} \right. \\
    & \hspace{3em} \left. \times \tilde Y_{\pi(4 j-1)}(1)  D_{\pi\left(4 j-3\right)} D_{\pi\left(4 j-1^{\prime}\right)}\right)~.
\end{align*}
by following the same argument in Lemma S.1.7 from \cite{bai2022inference} and Lemma \ref{lemma:assumptions-for-bai-inference}, we have
\begin{align*}
    &\frac{2}{G} \sum_{1 \leq j \leq \left\lfloor\frac{G}{2}\right\rfloor} \epsilon_{2 j-1} \epsilon_{2 j} N_{\pi(4 j-3)} \bar Y_{\pi(4 j-3)}(1) \tilde Y_{\pi(4 j-1)}(1)  D_{\pi\left(4 j-3\right)} D_{\pi\left(4 j-1^{\prime}\right)} \xrightarrow{P} 0 \\
    &\frac{2}{G} \sum_{1 \leq j \leq \left\lfloor\frac{G}{2}\right\rfloor} \epsilon_{2 j-1} \epsilon_{2 j} N_{\pi(4 j-3)} \tilde Y_{\pi(4 j-1)}(1)  D_{\pi\left(4 j-3\right)} D_{\pi\left(4 j-1^{\prime}\right)} \xrightarrow{P} 0 ~.
\end{align*}
Then, by the weak law of large numbers, Lemma \ref{lem:wlln} and the continuous mapping theorem, we have 
\begin{equation*}
    \frac{2}{G}  \sum_{1 \leq j \leq\left\lfloor\frac{G}{2}\right\rfloor} \epsilon_{2 j-1} \epsilon_{2 j} \left(\hat Y_{\pi(4 j-3)}(1)-\tilde Y_{\pi(4 j-3)}(1)\right)  \tilde Y_{\pi(4 j-1)}(1)  D_{\pi\left(4 j-3\right)} D_{\pi\left(4 j-1^{\prime}\right)} \xrightarrow{P} 0~.
\end{equation*}
By repeating the same arguments for the other two terms, we conclude that
\begin{equation*}
    \frac{2}{G}  \sum_{1 \leq j \leq\left\lfloor\frac{G}{2}\right\rfloor} \epsilon_{2 j-1} \epsilon_{2 j} \left(\hat Y_{\pi(4 j-3)} \hat Y_{\pi(4 j-1)}-\tilde Y_{\pi(4 j-3)} \tilde Y_{\pi(4 j-1)}\right)  D_{\pi\left(4 j-3\right)} D_{\pi\left(4 j-1^{\prime}\right)} \xrightarrow{P} 0~.
\end{equation*}
Therefore, for $(k,k^\prime) \in \{2,3\}^2$ and $(l,l^\prime)\in \{0,1\}^2$,
\begin{equation*}
    \frac{2}{G} \sum_{1 \leq j \leq\left\lfloor\frac{G}{2}\right\rfloor} \epsilon_{2 j-1} \epsilon_{2 j} \hat Y_{\pi(4 j-k)} \hat Y_{\pi(4 j-\ell)} D_{\pi\left(4 j-k^{\prime}\right)} D_{\pi\left(4 j-\ell^{\prime}\right)} \xrightarrow{P} 0 ~,
\end{equation*}
which implies $\check{\lambda}_G^2\left(\epsilon_1, \ldots, \epsilon_G\right) \xrightarrow{P} 0$, and thus $\check{\nu}_G^2(\epsilon_1,\dots, \epsilon_G) \xrightarrow{P} \tau^2$.
\end{proof}

\begin{lemma} \label{lem:betatilde}
    Suppose all assumptions in Theorem \ref{thm:adj} hold. Then,
    \begin{align*}
        \frac{1}{G} \sum_{1\leq j \leq G} (\hat\psi_{1,j} - \hat\psi_{0,j}) (\hat\psi_{1,j} - \hat\psi_{0,j})^{\prime} & \xrightarrow{P} 2 E[\psi_g \psi_g^\prime] - 2 E[E[\psi_g| W_g][\psi_g| W_g]'] = 2 E[\var[\psi_g | W_g]] \\
        \frac{1}{G} \sum_{1\leq j \leq G} (\hat\psi_{1,j} - \hat\psi_{0,j}) (\tilde\mu_{1,j} - \tilde\mu_{0,j}) & \xrightarrow{P} E\left[  \cov \left[ \tilde{Y}_g(1) + \tilde{Y}_g(0),  \psi_g \middle | W_g\right] \right] E[N_g]
    \end{align*}
\end{lemma}

\begin{proof}
     Note that 
\begin{align*}
    &\frac{1}{G} \sum_{1\leq j \leq G} (\hat\psi_{1,j} - \hat\psi_{0,j}) (\hat\psi_{1,j} - \hat\psi_{0,j})^{\prime} \\
    &= \frac{1}{G} \sum_{1\leq j \leq G}  \hat\psi_{1,j} \hat\psi_{1,j}^\prime + \hat\psi_{0,j} \hat\psi_{0,j}^\prime - \hat\psi_{1,j} \hat\psi_{0,j}^\prime - \hat\psi_{0,j} \hat\psi_{1,j}^\prime \\
    &= \frac{1}{G} \sum_{1\leq g \leq 2 G} \psi_g \psi_g^\prime D_g +  \frac{1}{G} \sum_{1\leq g \leq 2 G} \psi_g \psi_g^\prime (1-D_g) \\
    &\hspace{1em} -  \frac{1}{G} \sum_{1\leq j \leq G} \psi_{\pi(2j)} \psi_{\pi(2j-1)}^\prime  D_{\pi(2j)}  -  \frac{1}{G} \sum_{1\leq j \leq G}\psi_{\pi(2j-1)} \psi_{\pi(2j)}^\prime  D_{\pi(2j-1)}\\
    &\hspace{1em} -\frac{1}{G} \sum_{1\leq j \leq G}\psi_{\pi(2j)} \psi_{\pi(2j-1)}^\prime  D_{\pi(2j-1)} - \frac{1}{G} \sum_{1\leq j \leq G}\psi_{\pi(2j-1)} \psi_{\pi(2j)}^\prime  D_{\pi(2j)} \\
    &= \frac{1}{G} \sum_{1\leq g \leq 2 G} \psi_g \psi_g^\prime -   \frac{1}{G} \sum_{1\leq j \leq G} (\psi_{\pi(2j)} \psi_{\pi(2j-1)}^\prime + \psi_{\pi(2j-1)} \psi_{\pi(2j)}^\prime) ~.
\end{align*}
Assumptions \ref{ass:QG}, \ref{ass:pair_form}, \ref{ass:DGP}, \ref{ass:indep_xz}, \ref{ass:psi} and Lemma \ref{lem:cross} imply
\[ \frac{1}{G} \sum_{1\leq j \leq G} (\hat\psi_{1,j} - \hat\psi_{0,j}) (\hat\psi_{1,j} - \hat\psi_{0,j})^{\prime} \xrightarrow{P} 2 E[\psi_g \psi_g^\prime] - 2 E[E[\psi_g| W_g][\psi_g| W_g]'] = 2 E[\var[\psi_g | W_g]]~. \]
On the other hand,
\begin{align*}
    &\frac{1}{G} \sum_{1\leq j \leq G}   (\hat\psi_{1,j} - \hat\psi_{0,j}) (\tilde\mu_{1,j} - \tilde\mu_{0,j})  \\
    &= \frac{1}{G} \sum_{1\leq j \leq G}  \hat\psi_{1,j} \tilde\mu_{1,j} + \hat\psi_{0,j} \tilde\mu_{0,j} - \tilde\mu_{1,j} \hat\psi_{0,j} - \tilde\mu_{0,j} \hat\psi_{1,j}\\
    &= \frac{1}{G} \sum_{1\leq g \leq 2 G}  \left(\bar Y_{g}(1) -  \frac{E[\bar Y_{g}(1) N_g]}{E[N_g]}\right) N_{g} \psi_{g} D_{g} + \frac{1}{G} \sum_{1\leq g \leq 2 G} \left(\bar Y_{g}(0) -  \frac{E[\bar Y_{g}(0) N_g]}{E[N_g]}\right) N_{g} \psi_{g} (1-D_{g}) \\
    &\hspace{3em} - \frac{1}{G} \sum_{1\leq j \leq G} \left(\bar Y_{\pi(2j-1)}(1) -  \frac{E[\bar Y_{g}(1) N_g]}{E[N_g]}\right) N_{\pi(2j-1)} \psi_{\pi(2j)} D_{\pi(2j-1)} \\
    &\hspace{3em} - \frac{1}{G} \sum_{1\leq j \leq G} \left(\bar Y_{\pi(2j)}(1) -  \frac{E[\bar Y_{g}(1) N_g]}{E[N_g]}\right) N_{\pi(2j)} \psi_{\pi(2j-1)}D_{\pi(2j)} \\
    &\hspace{3em} - \frac{1}{G} \sum_{1\leq j \leq G} \left(\bar Y_{\pi(2j-1)}(0) -  \frac{E[\bar Y_{g}(0) N_g]}{E[N_g]}\right) N_{\pi(2j-1)} \psi_{\pi(2j)} (1-D_{\pi(2j-1)} ) \\
    &\hspace{3em} - \frac{1}{G} \sum_{1\leq j \leq G} \left(\bar Y_{\pi(2j)}(0) -  \frac{E[\bar Y_{g}(0) N_g]}{E[N_g]}\right) N_{\pi(2j)} \psi_{\pi(2j-1)} (1- D_{\pi(2j)}) ~. 
\end{align*}
Lemma \ref{lem:cross} implies that under Assumptions \ref{ass:QG}, \ref{ass:pair_form}, \ref{ass:DGP}, \ref{ass:indep_xz}, and \ref{ass:psi}, we have
\begin{align*}
    &\frac{1}{G} \sum_{1\leq g \leq 2 G}  \left(\bar Y_{g}(1) -  \frac{E[\bar Y_{g}(1) N_g]}{E[N_g]}\right) N_{g} \psi_{g} D_{g} \xrightarrow{P} E[\bar Y_{g}(1) N_{g} \psi_{g}] - \frac{E[\bar Y_{g}(1) N_g]}{E[N_g]} E[N_{g} \psi_{g}] \\
    &\frac{1}{G} \sum_{1\leq g \leq 2 G} \left(\bar Y_{g}(0) -  \frac{E[\bar Y_{g}(0) N_g]}{E[N_g]}\right) N_{g} \psi_{g} (1-D_{g}) \xrightarrow{P} E[\bar Y_{g}(0) N_{g} \psi_{g}] - \frac{E[\bar Y_{g}(0) N_g]}{E[N_g]} E[N_{g} \psi_{g}] \\
    &\frac{1}{G} \sum_{1\leq j \leq G} \left(\bar Y_{\pi(2j-1)}(1) -  \frac{E[\bar Y_{g}(1) N_g]}{E[N_g]}\right) N_{\pi(2j-1)} \psi_{\pi(2j)} D_{\pi(2j-1)} \\
    & \hspace{3em} \xrightarrow{P} \frac{1}{2} E\left[E\left[ \bar Y_g(1) N_g | W_g\right] E\left[\psi_g | W_g\right]\right] - \frac{1}{2} \frac{E[\bar Y_{g}(1) N_g]}{E[N_g]} E\left[E\left[ N_g | W_g\right] E\left[\psi_g | W_g\right]\right]~.
\end{align*}
Therefore,
\begin{align*}
    & \frac{1}{G} \sum_{1\leq j \leq G}   (\hat\psi_{1,j} - \hat\psi_{0,j}) (\tilde\mu_{1,j} - \tilde\mu_{0,j}) \\
    &\xrightarrow{P} E[(\bar Y_{g}(1) + \bar Y_{g}(0)) N_{g} \psi_{g}] -E\left[E\left[ (\bar Y_g(1) + \bar Y_g(0)) N_g | W_g\right] E\left[\psi_g | W_g\right]\right] \\
    &\hspace{3em} - \frac{E[(\bar Y_{g}(1) + \bar Y_{g}(0)) N_{g}]}{E[N_g]} E[N_{g} \psi_{g}] + \frac{E[(\bar Y_{g}(1) + \bar Y_{g}(0)) N_{g}]}{E[N_g]} E\left[E\left[ N_g | W_g\right] E\left[\psi_g | W_g\right]\right]\\
    & = E\left[  \cov \left[(\bar Y_{g}(1) + \bar Y_{g}(0)) N_{g} - \frac{E[(\bar Y_{g}(1) + \bar Y_{g}(0)) N_{g}]}{E[N_g]} N_g,  \psi_g \middle | W_g\right] \right] \\
    & = E\left[  \cov \left[ \tilde{Y}_g(1) + \tilde{Y}_g(0),  \psi_g \middle | W_g\right] \right] E[N_g]~,
\end{align*}
as desired.
\end{proof}

\begin{lemma} \label{lem:tilde-hat}
    Suppose all assumptions in Theorem \ref{thm:adj} hold. Then, $\tilde \beta_G - \hat \beta_G \xrightarrow{P} 0$.
\end{lemma}

\begin{proof}
    Note that
\begin{align*}
    \tilde \beta_G - \hat \beta_G &= \left(\frac{1}{G} \sum_{1\leq j \leq G} (\hat\psi_{1,j} - \hat\psi_{0,j}) (\hat\psi_{1,j} - \hat\psi_{0,j})^{\prime} \right)^{-1} \left(\frac{1}{G} \sum_{1\leq j \leq G}   (\hat\psi_{1,j} - \hat\psi_{0,j}) (\tilde\mu_{1,j} - \tilde\mu_{0,j} - (\hat\mu_{1,j} - \hat\mu_{0,j}))   \right) ~.
\end{align*}
We want to show that the following term converges to zero:
\begin{align*}
    &\frac{1}{G} \sum_{1\leq j \leq G}   (\hat\psi_{1,j} - \hat\psi_{0,j}) (\tilde\mu_{1,j}  - \hat\mu_{1,j})  \\
    &=\frac{1}{G} \sum_{1\leq j \leq G}   (\hat\psi_{1,j} - \hat\psi_{0,j})\left( \frac{\frac{1}{G}\sum_{1\leq g \leq 2G} \bar{Y}_{g} D_{g}  N_{g}}{\frac{1}{G}\sum_{1\leq g \leq 2G} D_{g}  N_{g}} -  \frac{E[\bar Y_{g}(1) N_g]}{E[N_g]}\right) N_{\pi(2j-1)} D_{\pi(2j-1)}  \\
    &\hspace{3em} + \frac{1}{G} \sum_{1\leq j \leq G}   (\hat\psi_{1,j} - \hat\psi_{0,j})\left(\frac{\frac{1}{G}\sum_{1\leq g \leq 2G} \bar{Y}_{g} D_g  N_g}{\frac{1}{G}\sum_{1\leq g \leq 2G} D_g  N_g}-  \frac{E[\bar Y_{g}(1) N_g]}{E[N_g]}\right) N_{\pi(2j)} D_{\pi(2j)} \\
    &= \left(\frac{\frac{1}{G}\sum_{1\leq g \leq 2G} \bar{Y}_{g} D_g  N_g}{\frac{1}{G}\sum_{1\leq g \leq 2G} D_g N_g}-  \frac{E[\bar Y_{g}(1) N_g]}{E[N_g]}\right) \frac{1}{G} \sum_{1\leq j \leq G}   (\hat\psi_{1,j} - \hat\psi_{0,j}) (N_{\pi(2j-1)} D_{\pi(2j-1)}+N_{\pi(2j)} D_{\pi(2j)}) ~.
\end{align*}
Lemma \ref{lem:cross} implies
\begin{align*}
    & \frac{1}{G} \sum_{1\leq j \leq G} (\hat\psi_{1,j} - \hat\psi_{0,j}) (N_{\pi(2j-1)} D_{\pi(2j-1)}+N_{\pi(2j)} D_{\pi(2j)}) \\
    & = \frac{1}{G} \sum_{1 \leq g \leq 2G} \psi_g N_g D_g - \frac{1}{G} \sum_{1 \leq j \leq G} \psi_{\pi(2j)} N_{\pi(2j - 1)} D_{\pi(2j - 1)} - \frac{1}{G} \sum_{1 \leq j \leq G} \psi_{\pi(2j - 1)} N_{\pi(2j)} D_{\pi(2j)} \\
    & \xrightarrow{P} E[\psi_g N_g] - E[E[\psi_g | W_g] E[N_g | W_g]] \\
    & = E\left[\operatorname{Cov}\left[\psi_g, N_g | W_g\right] \right] ~.
\end{align*}
By Lemma \ref{lem:wlln} and the continuous mapping theorem,
\begin{equation*}
    \left(\frac{\frac{1}{G}\sum_{1\leq g \leq 2G} \bar{Y}_{g} D_g  N_g}{\frac{1}{G}\sum_{1\leq g \leq 2G} D_g  N_g}-  \frac{E[\bar Y_{g}(1) N_g]}{E[N_g]}\right) \xrightarrow{P} 0~,
\end{equation*}
which implies that
\[ \frac{1}{G} \sum_{1\leq j \leq G}   (\hat\psi_{1,j} - \hat\psi_{0,j}) (\tilde\mu_{1,j}  - \hat\mu_{1,j})\xrightarrow{P} 0~. \]
Similarly,
\[ \frac{1}{G} \sum_{1\leq j \leq G}   (\hat\psi_{1,j} - \hat\psi_{0,j}) (\tilde\mu_{0,j}  - \hat\mu_{0,j})\xrightarrow{P} 0~. \]
The result then follows.
\end{proof}

\begin{lemma}\label{lem:E_bounded}
If Assumption \ref{ass:QG} holds, then
\[\left|E[\bar{Y}^r_g(d)|X_g, N_g]\right| \le C \hspace{3mm} a.s.~,\]
for $r \in \{1, 2\}$ for some constant $C > 0$,  
\[E\left[\bar{Y}_g^r(d)N_g^\ell\right] < \infty~,\]
for $r \in \{1, 2\}, \ell \in \{0, 1, 2\}$, and
\[E\left[E[\bar{Y}_g(d)N_g|X_g]^2\right] < \infty~.\]
\end{lemma}
\begin{proof}
We show the first statement for $r = 2$, since the case $r = 1$ follows similarly. By the Cauchy-Schwarz inequality,
\[\bar{Y}_g(d)^2 = \left(\frac{1}{|\mathcal{M}_g|}\sum_{i\in \mathcal{M}_g}Y_{i,g}(d)\right)^2 \le \frac{1}{|\mathcal{M}_g|}\sum_{i\in \mathcal{M}_g}Y_{i,g}(d)^2~,\]
and hence
\[\left|E[\bar{Y}_g(d)^2|X_g, N_g]\right| \le E\left[\frac{1}{|\mathcal{M}_g|}\sum_{i \in \mathcal{M}_g}E[Y_{i,g}(d)^2|X_g, N_g]\Bigg|X_g,N_g\right] \le C~,\]
where the first inequality follows from the above derivation, Assumption \ref{ass:QG}(e) and the law of iterated expectations, and final inequality follows from Assumption \ref{ass:QG}(d).
We show the next statement for $r = \ell = 2$, since the other cases follow similarly. By the law of iterated expectations,
\begin{align*}
E\left[\bar{Y}^2_g(d)N_g^2\right] &= E\left[N_g^2E[\bar{Y}^2_g(d)|X_g,N_g]\right] \\
&\lesssim E\left[N_g^2\right] < \infty~,
\end{align*}
where the final line follows by Assumption \ref{ass:QG}(c). Finally,
\begin{align*}
E\left[E[\bar{Y}_g(d)N_g|X_g]^2\right] &= E\left[E[N_gE[\bar{Y}_g(d)|X_g, N_g]|X_g]^2\right] \\
&\lesssim E\left[E[N_g|X_g]^2\right] < \infty~,
\end{align*}
where the final line follows from Jensen's inequality and Assumption \ref{ass:QG}(c).
\end{proof}

\begin{lemma}\label{lem:pair_form}
Suppose Assumption \ref{ass:pair_form} holds. Then,
\[\frac{1}{G}\sum_{g=1}^GN_{\pi(2g)}^{\ell}\left\|W_{\pi(2g)} - W_{\pi(2g-1)}\right\|^r \xrightarrow{P} 0~,\]
for $\ell \in \{0, 1, 2\}$, $r \in \{1, 2\}$.
\end{lemma}

\begin{proof}
By the Cauchy-Schwarz inequality
\[\frac{1}{G}\sum_{g=1}^GN^{\ell}_{\pi(2g)}|W_{\pi(2g)} - W_{\pi(2g-1)}|^r \le \left[\left(\frac{1}{G}\sum_{g=1}^GN_{\pi(2g)}^{2\ell}\right)\left(\frac{1}{G}\sum_{g=1}^G|W_{\pi(2g)} - W_{\pi(2g-1)}|^{2r}\right)\right]^{1/2}~,\]
$\frac{1}{G}\sum_{g=1}^{G} N^{2\ell}_{\pi(2g)} \le \frac{1}{G}\sum_{g=1}^{2G} N_{g}^{2\ell} = O_P(1)$ by the law of large numbers, $\frac{1}{G}\sum_g\|W_{\pi(2g)} - W_{\pi(2g-1)}\|^{2r} \xrightarrow{P} 0$ by assumption, hence the result follows.
\end{proof}

\begin{lemma}\label{lem:pair_squared}
If Assumptions \ref{ass:QG} and \ref{ass:pair_form} hold,
\[\frac{1}{G}\sum_{g=1}^G\left|N^2_{\pi(2g)} - N^2_{\pi(2g-1)}\right| \xrightarrow{P} 0~.\]
\end{lemma}
\begin{proof}
\begin{align*}
\frac{1}{G}\sum_{g=1}^G\left|N^2_{\pi(2g)} - N^2_{\pi(2g-1)}\right| &= \frac{1}{G}\sum_{g=1}^G\left|N_{\pi(2g)} - N_{\pi(2g-1)}\right|\left|N_{\pi(2g)} + N_{\pi(2g-1)}\right| \\
&\le \left[\left(\frac{1}{G}\sum_{g=1}^G\left|N_{\pi(2g)} - N_{\pi(2g-1)}\right|^2\right)\left(\frac{1}{G}\sum_{g=1}^G\left|N_{\pi(2g)} + N_{\pi(2g-1)}\right|^2\right)\right]^{1/2}~,
\end{align*}
where the inequality follows by Cauchy-Schwarz. It follows from an argument similar to the proof of Lemma \ref{lem:pair_form} that $\frac{1}{G}\sum_{g=1}^G\left|N_{\pi(2g)} + N_{\pi(2g-1)}\right|^2 = O_P(1)$. By Assumption \ref{ass:pair_form}, $\frac{1}{G}\sum_{g=1}^G\left|N_{\pi(2g)} - N_{\pi(2g-1)}\right|^2 \xrightarrow{P} 0$. Hence the result follows.
\end{proof}

\begin{lemma} \label{lem:wlln}
    Let $Z_1, Z_2, \ldots, Z_G$ be i.i.d random variables. Then,
    \begin{enumerate}[\rm (a)]
        \item Suppose $E[|Z_g|] < \infty$, $E[Z_g | X_g = x]$ is Lipschitz,
        \[ Z^{(G)} \independent D^{(G)} | X^{(G)}~, \]
        and conditional on $X^{(G)}$, $(D_{\pi(2j-1)}, D_{\pi(2j)})$, $j = 1, ..., G$ are i.i.d.\ and each uniformly distributed over $\{(0,1), (1,0)\}$, and
        \[ \frac{1}{G} \sum_{1 \leq j \leq G} \|X_{\pi(2j - 1)} - X_{\pi(2j)}\| \stackrel{P}{\to} 0~. \]
        Then, as $G \to \infty$,
        \[ \frac{1}{G} \sum_{1 \leq g \leq 2G} Z_g D_g \stackrel{P}{\to} E[Z_g]~. \]
        \item Suppose $E[Z_g^2] < \infty$, $E[Z_g | W_g = w]$ is Lipschitz, $E[N_g^{2\ell}] < \infty$,
        \[ Z^{(G)} \independent D^{(G)} | W^{(G)}~, \]
        and conditional on $W^{(G)}$, $(D_{\pi(2j-1)}, D_{\pi(2j)})$, $j = 1, ..., G$ are i.i.d.\ and each uniformly distributed over $\{(0,1), (1,0)\}$, and
        \[ \frac{1}{G}  \sum_{1 \leq j \leq G} \|W_{\pi(2j - 1)} - W_{\pi(2j)}\|^2 \stackrel{P}{\to} 0~. \]
        Then, as $G \to \infty$,
        \[ \frac{1}{G} \sum_{1 \leq g \leq 2G} Z_g N_g^\ell D_g \stackrel{P}{\to} E[Z_g N_g^\ell]~. \]
    \end{enumerate}
\end{lemma}

\begin{proof}
    (a) follows from Lemma S.1.5 in \cite{bai2022inference}. (b) follows by combining the arguments in the proofs of that lemma and the proof of Lemma \ref{lem:L_N}.
\end{proof}

\begin{lemma} \label{lem:cross}
    Let $(Z_1, \tilde Z_1), \ldots, (Z_G, \tilde Z_G)$ be i.i.d random vectors. Suppose Assumption \ref{ass:QG} holds, $E[|Z_g|] < \infty$, $E[Z_g | X_g = x]$ and $E[\tilde{Z}_g | X_g = x]$ are Lipschitz, \[ (Z^{(G)}, \tilde Z^{(G)}) \independent D^{(G)} | X^{(G)}~, \]
        and conditional on $X^{(G)}$, $(D_{\pi(2j-1)}, D_{\pi(2j)})$, $j = 1, ..., G$ are i.i.d.\ and each uniformly distributed over $\{(0,1), (1,0)\}$, and
        \[ \frac{1}{G} \sum_{1 \leq j \leq G} \|X_{\pi(2j - 1)} - X_{\pi(2j)}\|^2 \stackrel{P}{\to} 0~, \]
    Then,
    \begin{align*}
        \frac{1}{n} \sum_{1 \leq j \leq G} Z_{\pi(2j - 1)} \tilde Z_{\pi(2j)} & \stackrel{P}{\to} E[E[Z_g | X_g] E[\tilde Z_g | X_g]] \\
        \frac{1}{n} \sum_{1 \leq j \leq G} Z_{\pi(2j - 1)} \tilde Z_{\pi(2j)} D_{\pi(2j - 1)} & \stackrel{P}{\to} \frac{1}{2} E[E[Z_g | X_g] E[\tilde Z_g | X_g]]~.
    \end{align*}
\end{lemma}

\begin{proof}
    The proof is identical to the proof of Lemma S.1.6 in \cite{bai2022inference} and is therefore omitted.
\end{proof}

\begin{lemma}\label{lemma:assumptions-for-bai-inference}
If Assumptions \ref{ass:QG} holds, and additionally Assumptions \ref{ass:pair_formX}-\ref{ass:DGP2}, \ref{ass:close-pairs-of-pairsX} (or Assumptions \ref{ass:pair_form}-\ref{ass:DGP}, \ref{ass:close-pairs-of-pairs}) hold, then
\begin{enumerate}
    \item $E\left[\tilde Y_{g}^{2}(d)\right] < \infty$ for $d \in \{0,1\}$.
    \item $((\tilde Y_{g}(1), \tilde Y_{g}(0)):1\leq g\leq 2G) \independent D^{(G)} | X^{(G)}$ or $((\tilde Y_{g}(1), \tilde Y_{g}(0)):1\leq g\leq 2G) \independent D^{(G)} | W^{(G)}$.
     \item When not matching on cluster size, $\frac{1}{G} \sum_{1 \leq j \leq G}\left|\mu_d(X_{\pi(2j)}) - \mu_d(X_{\pi(2j-1)})\right| \xrightarrow{P} 0$, where we use $\mu_d(X_g)$ to denote $E[\tilde Y_g(d) | X_g]$ for $d\in\{0,1\}$ or when matching on cluster size 
     \[\frac{1}{G} \sum_{1 \leq j \leq G}\left|\mu_d(W_{\pi(2j)}) - \mu_d(W_{\pi(2j-1)})\right| \xrightarrow{P} 0~.\]
    \item When not matching on cluster size, 
    \[\frac{1}{G} \sum_{1 \leq j \leq G}\left|\left(\mu_1(X_{\pi(2j)}) - \mu_1(X_{\pi(2j-1)})\right)\left(\mu_0(X_{\pi(2j)}) - \mu_0(X_{\pi(2j-1)})\right)\right| \xrightarrow{P} 0~,\]
    
    or when matching on cluster size 
    \[\frac{1}{G} \sum_{1 \leq j \leq G}\left|\left(\mu_1(W_{\pi(2j)}) - \mu_1(W_{\pi(2j-1)})\right)\left(\mu_0(W_{\pi(2j)}) - \mu_0(W_{\pi(2j-1)})\right)\right| \xrightarrow{P} 0~.\]
    \item When not matching on cluster size 
    \[\frac{1}{4 G} \sum_{k \in\{2,3\}, \ell \in\{0,1\}} \sum_{1 \leq j \leq \frac{G}{2}}\left(\mu_{d}\left(X_{\pi(4 j-\ell)}\right)-\mu_{d}\left(X_{\pi(4 j-k)}\right)\right)^{2} \xrightarrow{P} 0~,\]
    
    or when matching on cluster size 
    \[\frac{1}{4 G} \sum_{k \in\{2,3\}, \ell \in\{0,1\}} \sum_{1 \leq j \leq \frac{G}{2}}\left(\mu_{d}\left(W_{\pi(4 j-\ell)}\right)-\mu_{d}\left(W_{\pi(4 j-k)}\right)\right)^{2} \xrightarrow{P} 0~.\]
\end{enumerate}
\end{lemma}
\begin{proof}
Note that
\begin{align*}
    E\left[\tilde Y_{g}^{2}(d)\right] &\leq E\left[ N_g^2 \left(\bar{Y}_{g}(d)-\frac{E\left[\bar{Y}_{g}(d) N_{g}\right]}{E\left[N_{g}\right]}\right)^2 \right] \\
    &\lesssim E\left[ N_g^2 \bar{Y}_{g}^2(d)\right] + \left(\frac{E\left[\bar{Y}_{g}(d) N_{g}\right]}{E\left[N_{g}\right]}\right)^2 E[N_g^2] < \infty
\end{align*}
where the inequality follows by Lemma \ref{lem:E_bounded}. The second result follows directly by inspection and Assumption \ref{ass:indep_pairsX} (or Assumption \ref{ass:indep_pairs}). In terms of the third result, by Assumption  \ref{ass:pair_formX} and \ref{ass:DGP2},
\begin{align*}
    \frac{1}{G} \sum_{1 \leq j \leq G}\left|\mu_1(X_{\pi(2j)}) - \mu_1(X_{\pi(2j-1)})\right| \lesssim \frac{1}{G} \sum_{1 \leq j \leq G}  \left\|X_{\pi(2j)}- X_{\pi(2j-1)} \right\| \xrightarrow{P} 0~.
\end{align*}
Meanwhile,
\begin{align*}
    &\frac{1}{G} \sum_{1 \leq j \leq G}\left|\mu_1(W_{\pi(2j)}) - \mu_1(W_{\pi(2j-1)})\right| \\
    & \lesssim \frac{1}{G} \sum_{1 \leq j \leq G}\left| E[N_{\pi(2j)}  \bar{Y}_{\pi(2j)}(d)| W_{\pi(2j)}] - E[N_{\pi(2j-1)}  \bar{Y}_{\pi(2j-1)}(d)| W_{\pi(2j-1)}]\right| \\
    &\hspace{3em} + \frac{1}{G} \sum_{1 \leq j \leq G}\left| E[N_{\pi(2j)} | W_{\pi(2j)}] - E[N_{\pi(2j-1)}  | W_{\pi(2j-1)}]\right| \\
    &\lesssim \frac{1}{G} \sum_{1 \leq j \leq G}\left| N_{\pi(2j)} \left(E[  \bar{Y}_{\pi(2j)}(d)| W_{\pi(2j)}] - E[\bar{Y}_{\pi(2j-1)}(d)| W_{\pi(2j-1)}]\right)\right| + \frac{1}{G} \sum_{1 \leq j \leq G}\left| N_{\pi(2j)}  - N_{\pi(2j-1)} \right| \\
    &\hspace{3em} + \frac{1}{G} \sum_{1 \leq j \leq G}\left| (N_{\pi(2j)}-N_{\pi(2j-1)})  E[  \bar{Y}_{\pi(2j-1)}(d)| W_{\pi(2j-1)}]\right| \\
    &\lesssim \frac{1}{G} \sum_{1 \leq j \leq G} N_{\pi(2j)} \left\|W_{\pi(2j)}- W_{\pi(2j-1)} \right\|~,
\end{align*}
which converges to zero in probability by Assumption \ref{ass:pair_form} and Lemma \ref{lem:pair_form}. To prove the fourth result, by Assumption \ref{ass:pair_formX} and \ref{ass:DGP2}, 
\begin{equation*}
    \frac{1}{G} \sum_{1 \leq j \leq G}\left|\left(\mu_1(X_{\pi(2j)}) - \mu_1(X_{\pi(2j-1)})\right)\left(\mu_0(X_{\pi(2j)}) - \mu_0(X_{\pi(2j-1)})\right)\right| \lesssim \frac{1}{G} \sum_{1 \leq j \leq G}  \left\|X_{\pi(2j)}- X_{\pi(2j-1)} \right\|^2 \xrightarrow{P} 0~.
\end{equation*}
Similarly, 
\begin{align*}
    &\frac{1}{G} \sum_{1 \leq j \leq G}\left|\left(\mu_1(W_{\pi(2j)}) - \mu_1(W_{\pi(2j-1)})\right)\left(\mu_0(W_{\pi(2j)}) - \mu_0(W_{\pi(2j-1)})\right)\right| \\
    & \leq \frac{1}{G} \sum_{1 \leq j \leq G}\left|\mu_1(W_{\pi(2j)}) - \mu_1(W_{\pi(2j-1)})\right|\left|\mu_0(W_{\pi(2j)}) - \mu_0(W_{\pi(2j-1)})\right| \\
    & \lesssim\frac{1}{G} \sum_{1 \leq j \leq G} N_{\pi(2j)}^2 \left\|W_{\pi(2j)}- W_{\pi(2j-1)} \right\|^2 \xrightarrow{P} 0~,
\end{align*}
where the last step follows by Assumption \ref{ass:pair_form} and Lemma \ref{lem:pair_form}. Finally, the fifth result follows the same argument by Assumption \ref{ass:close-pairs-of-pairsX} (or Assumption \ref{ass:close-pairs-of-pairs}).
\end{proof}

\begin{lemma}\label{lem:rand_joint}
\[\rho\left(\mathcal{L}\left((\mathbb{K}_G^{YN}, \mathbb{K}_G^{N})'|Z^{(G)}\right) , N\left(0, \mathbb{V}_R\right)\right) \xrightarrow{P} 0~,\]
where
\[\begin{pmatrix}\mathbb{K}_G^{YN} \\ \mathbb{K}_G^{N} \end{pmatrix} = \begin{pmatrix}
 \frac{1}{\sqrt{G}}\sum_{1 \le j \le G}\epsilon_j\left(N_{\pi(2j)}\bar{Y}_{\pi(2j)} - N_{\pi(2j-1)}\bar{Y}_{\pi(2j-1)}\right)(D_{\pi(2j)} - D_{\pi(2j-1)}) \\
\frac{1}{\sqrt{G}}\sum_{1 \le j \le G}\epsilon_j(N_{\pi(2j)} - N_{\pi(2j-1)})(D_{\pi(2j)} - D_{\pi(2j-1)}) 
\end{pmatrix}~,\]
and where, in the case where we match on cluster size,
\[ \mathbb V_R = \begin{pmatrix}
\mathbb V_R^1 & 0 \\
0 & 0
\end{pmatrix}~,\]
with
\[\mathbb V_R^1 = E[\var(N_g\bar{Y}_g(1)|W_g)] + E[\var(N_g\bar{Y}_g(0)|W_g)] + E\left[(E[N_g\bar{Y}_g(1)|W_g] - E[N_g\bar{Y}_g(0)|W_g])^2\right]~,\]
and when we do not match on cluster size,
\[ \mathbb V_R = \begin{pmatrix}
\mathbb V_R^{1,1} & \mathbb{V}_R^{1,2} \\
\mathbb V_R^{1,2} & \mathbb{V}_R^{2,2}
\end{pmatrix}~,\]
with
\begin{align*}
    \mathbb V_R^{1,1} &= E[\var(N_g\bar{Y}_g(1)|X_g)] + E[\var(N_g\bar{Y}_g(0)|X_g)] + E\left[(E[N_g\bar{Y}_g(1)|X_g] - E[N_g\bar{Y}_g(0)|X_g])^2\right] \\
        \mathbb V_R^{1,2} &= E[N^2_g\bar{Y}_g(1)] + E[N^2_g\bar{Y}_g(0)] - \left(E\left[E[N_g\bar{Y}_g(1)|X_g]E[N_g|X_g]\right] + E\left[E[N_g\bar{Y}_g(0)|X_g]E[N_g|X_g]\right]\right) \\
        \mathbb{V}_R^{2,2} &= 2E[\var(N_g|X_g)]~.
\end{align*}
\end{lemma}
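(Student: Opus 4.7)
The plan is to apply a conditional Lindeberg-Feller central limit theorem given $Z^{(G)}$. Because the Rademacher weights $\epsilon_1,\ldots,\epsilon_G$ are independent of $Z^{(G)}$ and i.i.d., both components of $(\mathbb{K}_G^{YN}, \mathbb{K}_G^N)$ are, conditional on $Z^{(G)}$, sums of $G$ independent mean-zero terms. Using $(D_{\pi(2j)} - D_{\pi(2j-1)})^2 = 1$, a direct computation yields
\begin{align*}
\var[\mathbb{K}_G^{YN} \mid Z^{(G)}] &= \frac{1}{G}\sum_{1 \leq j \leq G}\big(N_{\pi(2j)}\bar{Y}_{\pi(2j)} - N_{\pi(2j-1)}\bar{Y}_{\pi(2j-1)}\big)^2 \\
\var[\mathbb{K}_G^{N} \mid Z^{(G)}] &= \frac{1}{G}\sum_{1 \leq j \leq G}\big(N_{\pi(2j)} - N_{\pi(2j-1)}\big)^2 \\
\cov[\mathbb{K}_G^{YN}, \mathbb{K}_G^{N} \mid Z^{(G)}] &= \frac{1}{G}\sum_{1 \leq j \leq G}\big(N_{\pi(2j)}\bar{Y}_{\pi(2j)} - N_{\pi(2j-1)}\bar{Y}_{\pi(2j-1)}\big)\big(N_{\pi(2j)} - N_{\pi(2j-1)}\big)~.
\end{align*}

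The bulk of the work is to show these converge in probability to the entries of $\mathbb{V}_R$. For $\var[\mathbb{K}_G^{YN}\mid Z^{(G)}]$, expand the square as
\[
\frac{1}{G}\sum_{1 \leq g \leq 2G}(N_g\bar{Y}_g)^2 \; - \; \frac{2}{G}\sum_{1 \leq j \leq G}N_{\pi(2j)}\bar{Y}_{\pi(2j)}N_{\pi(2j-1)}\bar{Y}_{\pi(2j-1)}~.
\]
The first sum, after splitting by $D_g$ and invoking the close-pairs assumption and Lipschitz smoothness (exactly as in the derivation of \eqref{eq:condlip} in Lemma \ref{lem:L_X} and \eqref{eq:condlipN} in Lemma \ref{lem:L_N}), converges to $E[N_g^2\bar{Y}_g^2(1)] + E[N_g^2\bar{Y}_g^2(0)]$; moment integrability is supplied by Lemma \ref{lem:E_bounded}. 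The cross term, in which exactly one member of each pair is treated, is the direct analogue of $\frac{2}{G}\sum \hat Y_{\pi(2j)}\hat Y_{\pi(2j-1)}$ studied in Lemma \ref{lemma:tau_G}, but with unadjusted outcomes $N_g\bar{Y}_g$. By the same close-pairs argument (and Lemma \ref{lemma:assumptions-for-bai-inference}), it converges to $2E[E[N_g\bar{Y}_g(1)\mid C]E[N_g\bar{Y}_g(0)\mid C]]$, where $C = X_g$ when not matching on cluster size and $C = W_g$ when matching. Subtracting and rewriting in variance/covariance form recovers exactly $\mathbb{V}_R^{1,1}$ (respectively $\mathbb{V}_R^1$). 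The $\var[\mathbb{K}_G^N\mid Z^{(G)}]$ entry collapses to $0$ when matching on cluster size (directly from Assumption \ref{ass:pair_form} with $\ell=0$, $r=2$) and to $2E[\var[N_g\mid X_g]]$ otherwise by the same splitting. The off-diagonal entry is handled by the same playbook (combined with Cauchy-Schwarz to dispose of the matching-on-size case).

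Once the conditional covariance convergence is established, the Lindeberg condition is routine: each summand of $\mathbb{K}_G^{YN}$ is dominated by $|N_{\pi(2j)}\bar{Y}_{\pi(2j)}| + |N_{\pi(2j-1)}\bar{Y}_{\pi(2j-1)}|$ (and similarly for $\mathbb{K}_G^N$), whose second moments are finite by Lemma \ref{lem:E_bounded} and Assumption \ref{ass:QG}(c). Truncation combined with the weak law of large numbers produces the Lindeberg tail bound, as in the subsequence argument used in Lemma \ref{lem:L_N}. Passing to a subsequence along which variance convergence and the Lindeberg condition both hold almost surely, Proposition 2.27 of \cite{van_der_vaart1998asymptotic} delivers almost-sure conditional weak convergence to $N(0,\mathbb{V}_R)$; since every subsequence admits such a further subsequence, $\rho \xrightarrow{P} 0$ as claimed.

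The main obstacle is the identification of the cross-term limit $\frac{2}{G}\sum_j N_{\pi(2j)}\bar{Y}_{\pi(2j)}N_{\pi(2j-1)}\bar{Y}_{\pi(2j-1)} \to 2E[E[N_g\bar{Y}_g(1)\mid C]E[N_g\bar{Y}_g(0)\mid C]]$: this requires coupling the treated and control members of each pair via the close-pairs assumption and Lipschitz smoothness, which is precisely the delicate machinery developed in Lemmas \ref{lem:L_X}, \ref{lem:L_N}, and \ref{lemma:tau_G}, and must be re-run carefully because there is no $\hat\mu_G(d)$ recentering here and because the matching versus non-matching cases yield different conditional variances that must align with the rank-one versus rank-two structure of $\mathbb{V}_R$.
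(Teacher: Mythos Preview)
Your proposal is correct and follows essentially the same route as the paper: compute the conditional covariance matrix of the Rademacher-weighted sums, show it converges in probability to $\mathbb V_R$ via the close-pairs machinery (the paper invokes Lemma~S.1.6 of \cite{bai2021inference} directly rather than re-expanding the square, but the content is the same), verify Lindeberg, and finish with a subsequence argument plus Proposition~2.27 of \cite{van_der_vaart1998asymptotic}. The only cosmetic difference is that the paper first replaces $\epsilon_j(D_{\pi(2j)}-D_{\pi(2j-1)})$ by $\epsilon_j$ (same conditional law), whereas you instead use $(D_{\pi(2j)}-D_{\pi(2j-1)})^2=1$ when computing variances; and the paper verifies Lindeberg via a $\frac{1}{G}\max_j(\cdot)^2\to 0$ argument rather than your truncation-plus-WLLN route---both are standard and equivalent here.
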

\begin{proof}
Using the fact that $\epsilon_j$, $j = 1, \ldots, G$ and $\epsilon_j(D_{\pi(2j)} - D_{\pi(2j-1)})$, $j = 1, \ldots, G$ have the same distribution conditional on $Z^{(G)}$, it suffices to study the limiting distribution of $(\tilde{\mathbb{K}}_{G}^{YN}, \tilde{\mathbb{K}}_G^{N})'$ conditional on $Z^{(G)}$, where 
\begin{align*}
\tilde{\mathbb{K}}_G^{YN} &:= \frac{1}{\sqrt{G}}\sum_{1 \le j \le G}\epsilon_j\left(N_{\pi(2j)}\bar{Y}_{\pi(2j)} - N_{\pi(2j-1)}\bar{Y}_{\pi(2j-1)}\right)~,\\
\tilde{\mathbb{K}}_G^{N} &:= \frac{1}{\sqrt{G}}\sum_{1 \le j \le G}\epsilon_j\left(N_{\pi(2j)} - N_{\pi(2j-1)}\right)~.
\end{align*}
We will show
\begin{equation}\label{eq:rand_converge}
\rho\left(\mathcal{L}\left((\tilde{\mathbb{K}}_G^{YN}, \tilde{\mathbb{K}}_G^{N})'|Z^{(G)}\right), N(0, \mathbb{V}_R)\right) \xrightarrow{P} 0~,
\end{equation}
where $\mathcal L(\cdot)$ denote the law and $\rho$ is any metric that metrizes weak convergence. To that end, we will employ the Lindeberg central limit theorem in Proposition 2.27 of \cite{van_der_vaart1998asymptotic} and a subsequencing argument. Indeed, to verify \eqref{eq:rand_converge}, note we need only show that for any subsequence $\{G_k\}$ there exists a further subsequence $\{G_{k_l}\}$ such that
\begin{equation}\label{eq:rand_converge_as}
\rho\left(\mathcal{L}\left((\tilde{\mathbb{K}}_{G_{k_l}}^{YN}, \tilde{\mathbb{K}}_{G_{k_l}}^{N})'|Z^{(G_{k_l})}\right), N(0, \mathbb{V}_R)\right) \to 0 \text{ with probability one}~.
\end{equation}
To that end, define
\[ \mathbb V_{R, n} = \begin{pmatrix} \mathbb V_{R, n}^{1, 1} & \mathbb V_{R, n}^{1, 2} \\
\mathbb V_{R, n}^{1, 2} & \mathbb V_{R, n}^{2, 2}
\end{pmatrix} = \var[(\tilde{\mathbb{K}}_G^{YN}, \tilde{\mathbb{K}}_G^{N})' | Z^{(G)}]~, \]
where
\begin{align*}
    \mathbb V_{R, n}^{1, 1} & = \frac{1}{G} \sum_{1 \leq j \leq G} (N_{\pi(2j)} \bar Y_{\pi(2j)} - N_{\pi(2j - 1)} \bar Y_{\pi(2j - 1)})^2 \\
    \mathbb V_{R, n}^{1, 2} & = \frac{1}{G} \sum_{1 \leq j \leq G} (N_{\pi(2j)} \bar Y_{\pi(2j)} - N_{\pi(2j - 1)} \bar Y_{\pi(2j - 1)}) (N_{\pi(2j)} - N_{\pi(2j - 1)}) \\
    \mathbb V_{R, n}^{2, 2} & = \frac{1}{G} \sum_{1 \leq j \leq G} (N_{\pi(2j)} - N_{\pi(2j - 1)})^2~.
\end{align*}
We first show that
\begin{equation} \label{eq:rand-var}
    \mathbb V_{R, n} \stackrel{P}{\to} \mathbb V_R~.
\end{equation}
Consider the case where we match on cluster size. The weak law of large numbers and Lemma \ref{lem:cross} imply 
\[\mathbb V_{R, n}^{1, 1} \xrightarrow{P} E[\var[N_g\bar{Y}_g(1)]|W_g] + E[\var[N_g\bar{Y}_g(0)]|W_g] + E\left[(E[N_g\bar{Y}_g(1)|W_g] - E[N_g\bar{Y}_g(0)|W_g])^2\right]~.\]
Next, we show that in this case $\mathbb V_{R, n}^{1, 2}$ and $\mathbb V_{R, n}^{2, 2}$ are $o_P(1)$. For $\mathbb V_{R, n}^{2, 2}$ this follows immediately from Assumption \ref{ass:pair_form}. For $\mathbb V_{R, n}^{1, 2}$ note that by the Cauchy-Schwarz inequality,
\begin{align*}
&\frac{1}{G}\sum_{1 \le j \le G}\left(\left(N_{\pi(2j)}\bar{Y}_{\pi(2j)} - N_{\pi(2j-1)}\bar{Y}_{\pi(2j-1)}\right)\left(N_{\pi(2j)} - N_{\pi(2j-1)}\right)\right) \\
&\le \left(\left(\frac{1}{G}\sum_{1 \le j \le G}\left(N_{\pi(2j)}\bar{Y}_{\pi(2j)} - N_{\pi(2j-1)}\bar{Y}_{\pi(2j-1)}\right)^2\right)\left(\frac{1}{G}\sum_{1 \le j \le G}\left(N_{\pi(2j)} - N_{\pi(2j-1)}\right)^2\right)\right)^{1/2}~.
\end{align*}
The second term of the product on the RHS is $o_P(1)$ by Assumption \ref{ass:pair_form}. The first term is $O_P(1)$ since
\[\frac{1}{G}\sum_{1 \le j \le G}\left(N_{\pi(2j)}\bar{Y}_{\pi(2j)} - N_{\pi(2j-1)}\bar{Y}_{\pi(2j-1)}\right)^2 \lesssim \frac{1}{G}\sum_{1 \le g \le 2G}N_g^2\bar{Y}_g(1)^2 + \frac{1}{G}\sum_{1 \le g \le 2G}N_g^2\bar{Y}_g(0)^2 = O_P(1)~,\]
where the first inequality follows from exploiting the fact that $|a - b|^2 \le 2(a^2 + b^2)$ and the definition of $\bar{Y}_g$, and the final equality follows from Lemma \ref{lem:E_bounded} and the law of large numbers. We can thus conclude that $\mathbb V_{R, n}^{1, 2} = o_P(1)$ when matching on cluster size.

In the case where we do \emph{not} match on cluster size, again by the weak law of large numbers and Lemma \ref{lem:cross}, it can be shown that \eqref{eq:rand-var} holds. Next, we verify the Lindeberg condition in Proposition 2.27 of \cite{van_der_vaart1998asymptotic}. Note that for an arbitrary $\delta > 0$,
\begin{align*}
    &\frac{1}{G} \sum_{1 \leq j \leq G} E[((\epsilon_j (N_{\pi(2j)}\bar{Y}_{\pi(2j)} - N_{\pi(2j-1)}\bar{Y}_{\pi(2j-1)} ))^2 + (\epsilon_j (N_{\pi(2j)} - N_{\pi(2j-1)} ))^2) \\
    & \hspace{3em} \times I \{((\epsilon_j (N_{\pi(2j)}\bar{Y}_{\pi(2j)} - N_{\pi(2j-1)}\bar{Y}_{\pi(2j-1)} ))^2 + (\epsilon_j (N_{\pi(2j)} - N_{\pi(2j-1)} ))^2) > \delta^2 G\} | Z^{(G)} ] \\
    & = \frac{1}{G} \sum_{1 \leq j \leq G} E[((N_{\pi(2j)}\bar{Y}_{\pi(2j)} - N_{\pi(2j-1)}\bar{Y}_{\pi(2j-1)})^2 + (N_{\pi(2j)} - N_{\pi(2j-1)} )^2) \\
    & \hspace{3em} \times I \{((N_{\pi(2j)}\bar{Y}_{\pi(2j)} - N_{\pi(2j-1)}\bar{Y}_{\pi(2j-1)})^2 + (N_{\pi(2j)} - N_{\pi(2j-1)} )^2) > \delta^2 G\} | Z^{(G)} ] \\
    & \lesssim \frac{1}{G} \sum_{1 \leq j \leq G} (N_{\pi(2j)}\bar{Y}_{\pi(2j)} - N_{\pi(2j-1)}\bar{Y}_{\pi(2j-1)})^2 I \{(N_{\pi(2j)}\bar{Y}_{\pi(2j)} - N_{\pi(2j-1)}\bar{Y}_{\pi(2j-1)})^2 > \delta^2 G / 2\} \\
    & \hspace{3em} + \frac{1}{G} \sum_{1 \leq j \leq G} (N_{\pi(2j)} - N_{\pi(2j-1)})^2 I \{(N_{\pi(2j)} - N_{\pi(2j-1)})^2 > \delta^2 G / 2\}~.
\end{align*}
where the inequality follows from \eqref{eq:indicator} and the fact that $(N_g, \bar Y_g), 1 \leq g \leq 2G$ are all constants conditional on $Z^{(G)}$. The last line converges in probability to zero as long as we can show
\begin{align*}
    & \frac{1}{G} \max_{1 \leq j \leq G} (N_{\pi(2j)}\bar{Y}_{\pi(2j)} - N_{\pi(2j-1)}\bar{Y}_{\pi(2j-1)})^2 \stackrel{P}{\to} 0 \\
    & \frac{1}{G} \max_{1 \leq j \leq G} (N_{\pi(2j)} - N_{\pi(2j-1)})^2 \stackrel{P}{\to} 0~.
\end{align*}
Note
\begin{align*}
\frac{1}{G} \max_{1 \leq j \leq G} (N_{\pi(2j)}\bar{Y}_{\pi(2j)} - N_{\pi(2j-1)}\bar{Y}_{\pi(2j-1)})^2 & \lesssim \frac{1}{G}\max_{1 \le j \le G}\left(N^2_{\pi_{(2j-1)}}\bar{Y}^2_{\pi(2j-1)} + N^2_{\pi_{(2j)}}\bar{Y}^2_{\pi(2j)}\right) \\ 
& \lesssim \frac{1}{G}\max_{1 \le g \le 2G}\left(N^2_{g}\bar{Y}^2_g(1) + N^2_{g}\bar{Y}^2_g(0)\right) \stackrel{P}{\to} 0
\end{align*}
Where the first inequality follows from the fact that $|a - b|^2 \le 2(a^2 + b^2)$, the second by inspection, and the convergence by Lemma S.1.1 in \cite{bai2022inference} along with Assumption \ref{ass:QG}(c) and Lemma \ref{lem:E_bounded}. The second statement follows similarly. Therefore, we have verified both conditions in Proposition 2.27 of \cite{van_der_vaart1998asymptotic} hold in probability, and therefore for each subsequence there must exists a further subsequence along which both conditions hold with probability one, so \eqref{eq:rand_converge_as} holds, and the conclusion of the lemma follows.
\end{proof}


\section{Addtional Simulations}

\subsection{Simulation Results in Finite Populations}\label{sec:sims-finpop}
In this section, we compare the finite population design-based coverage properties of confidence intervals constructed using our proposed variance estimator $\hat{v}^2_G$ versus the estimators $\hat{\omega}^2_{\rm CR, G}$ and $\hat{\omega}^2_{\rm PCVE, G}$ introduced in Section \ref{sec:variance-estimate}. We revisit the simulation setting considered in Tables \ref{tab:model1X}--\ref{tab:model4XN} in Section \ref{sec:sims-unadj}, but now use each DGP to generate the covariates and outcomes only \emph{once}, and then fix these in repeated samples. 

Tables \ref{tab:FPmodel1X}--\ref{tab:FPmodel2XN} present our results. From Tables \ref{tab:FPmodel1X} and \ref{tab:FPmodel1XN}, we see that both $\hat{\nu}^2_G$ and $\hat{\omega}^2_{\rm PCVE, G}$ are consistent in large populations when there is sufficient ``homogeneity" in treatment effects, but undercover in small populations. This behavior is not surprising given that asymptotically exact inference is often feasible even in the design-based paradigm as long as treatment effects are sufficiently homogeneous; see for instance \cite{bai2024primer} for a discussion in the context of completely randomized experiments. On the other hand, Tables \ref{tab:FPmodel2X} and \ref{tab:FPmodel2XN} illustrate that when there is treatment effect heterogeneity, all three estimators are conservative, leading to a coverage probability of $1$ for all population sizes. However, although all three estimators over-cover, our proposed variance estimator $\hat{v}^2_G$ produces confidence intervals with the shortest average length in all cases.

\begin{table}[htbp]\centering
\begin{threeparttable}[b]
\def\sym#1{\ifmmode^{#1}\else\(^{#1}\)\fi}
\caption{Model 1 - Finite Population - Matching on $X_g$\tnote{1}}\label{cov_rate_m1}
\begin{tabular}{c*{9}{c}}
\hline\hline
    \multicolumn{1}{c}{$N_{max}/N_{min}$}&\multicolumn{1}{c}{VCE}
    &\multicolumn{1}{c}{$G=12$}&\multicolumn{1}{c}{$G=26$}&\multicolumn{1}{c}{$G=50$}
    &\multicolumn{1}{c}{$G=100$}&\multicolumn{1}{c}{$G=150$}
    &\multicolumn{1}{c}{$G=200$}&\multicolumn{1}{c}{$G=250$}\\
\hline
\multicolumn{9}{c}{} \\
\multicolumn{9}{c}{\bf Coverage} \\
\hline
\multirow{3}{*}{1.11}
& $\hat{v}^2$ & $0.8990$ & $0.9295$ & $0.9460$ & $0.9380$ & $0.9470$ & $0.9340$ & $0.9505$ \\
& CR & $1$ & $1$ & $0.9990$ & $1$ & $1$ & $1$ & $0.9995$ \\
& PCVE & $0.9095$ & $0.9270$ & $0.9450$ & $0.9365$ & $0.9470$ & $0.9325$ & $0.9480$ \\
\multicolumn{9}{l}{} \\
\multirow{3}{*}{1.42}
& $\hat{v}^2$ & $0.9060$ & $0.9315$ & $0.9475$ & $0.9375$ & $0.9515$ & $0.9330$ & $0.9465$ \\
& CR & $1$ & $1$ & $0.9990$ & $1$ & $1$ & $1$ & $0.9990$ \\
& PCVE & $0.9085$ & $0.9305$ & $0.9450$ & $0.9370$ & $0.9530$ & $0.9320$ & $0.9480$ \\
\multicolumn{9}{l}{} \\
\multirow{3}{*}{1.99}
& $\hat{v}^2$ & $0.9030$ & $0.9260$ & $0.9450$ & $0.9370$ & $0.9480$ & $0.9375$ & $0.9495$ \\
& CR & $1$ & $1$ & $1$ & $1$ & $1$ & $1$ & $0.9980$ \\
& PCVE & $0.9170$ & $0.9250$ & $0.9450$ & $0.9360$ & $0.9485$ & $0.9330$ & $0.9480$ \\
\multicolumn{9}{l}{} \\
\multirow{3}{*}{3.31}
& $\hat{v}^2$ & $0.8775$ & $0.9190$ & $0.9395$ & $0.9430$ & $0.9425$ & $0.9385$ & $0.9485$ \\
& CR & $1$ & $1$ & $1$ & $1$ & $1$ & $0.9995$ & $0.9965$ \\
& PCVE & $0.9075$ & $0.9175$ & $0.9435$ & $0.9395$ & $0.9435$ & $0.9360$ & $0.9470$ \\
\multicolumn{9}{l}{} \\
\multirow{3}{*}{9.80}
& $\hat{v}^2$ & $0.8880$ & $0.9085$ & $0.9440$ & $0.9390$ & $0.9415$ & $0.9455$ & $0.9405$ \\
& CR & $1$ & $1$ & $1$ & $0.9995$ & $1$ & $0.9965$ & $0.9925$ \\
& PCVE & $0.9075$ & $0.9100$ & $0.9465$ & $0.9400$ & $0.9420$ & $0.9455$ & $0.9410$ \\
\hline
\multicolumn{9}{c}{} \\
\multicolumn{9}{c}{{\bf Average Length}} \\
\hline
\multirow{3}{*}{1.11}
& $\hat{v}^2$ & $1.12824$ & $1.05815$ & $0.84888$ & $0.59101$ & $0.44808$ & $0.41502$ & $0.38434$ \\
& CR & $2.93266$ & $2.25955$ & $1.56492$ & $1.20447$ & $0.90146$ & $0.79447$ & $0.72000$ \\
& PCVE & $1.11395$ & $1.04746$ & $0.84517$ & $0.58917$ & $0.44726$ & $0.41469$ & $0.38418$ \\
\multicolumn{9}{l}{} \\
\multirow{3}{*}{1.42}
& $\hat{v}^2$ & $1.07152$ & $1.06921$ & $0.84835$ & $0.60402$ & $0.45275$ & $0.42419$ & $0.40010$ \\
& CR & $2.98019$ & $2.30454$ & $1.56619$ & $1.21866$ & $0.90291$ & $0.79774$ & $0.72714$ \\
& PCVE & $1.06215$ & $1.05823$ & $0.84533$ & $0.60213$ & $0.45189$ & $0.42370$ & $0.39987$ \\
\multicolumn{9}{l}{} \\
\multirow{3}{*}{1.99}
& $\hat{v}^2$ & $1.05214$ & $1.08426$ & $0.82321$ & $0.62589$ & $0.46226$ & $0.44162$ & $0.42537$ \\
& CR & $3.02136$ & $2.38754$ & $1.56696$ & $1.24393$ & $0.90557$ & $0.80431$ & $0.73828$ \\
& PCVE & $1.04815$ & $1.07367$ & $0.82097$ & $0.62399$ & $0.46142$ & $0.44114$ & $0.42500$ \\
\multicolumn{9}{l}{} \\
\multirow{3}{*}{3.31}
& $\hat{v}^2$ & $1.04528$ & $1.11925$ & $0.82767$ & $0.64119$ & $0.47469$ & $0.47427$ & $0.46200$ \\
& CR & $3.09726$ & $2.42478$ & $1.56226$ & $1.29434$ & $0.91920$ & $0.82070$ & $0.75534$ \\
& PCVE & $1.04739$ & $1.11017$ & $0.82627$ & $0.63952$ & $0.47380$ & $0.47367$ & $0.46149$ \\
\multicolumn{9}{l}{} \\
\multirow{3}{*}{9.80}
& $\hat{v}^2$ & $1.19775$ & $1.19395$ & $0.82358$ & $0.70239$ & $0.51101$ & $0.53635$ & $0.53192$ \\
& CR & $3.19729$ & $2.59330$ & $1.55023$ & $1.39250$ & $0.94697$ & $0.85953$ & $0.79422$ \\
& PCVE & $1.20833$ & $1.18286$ & $0.82301$ & $0.70132$ & $0.51043$ & $0.53549$ & $0.53114$ \\
\hline\hline
\end{tabular}
\label{tab:FPmodel1X}
\begin{tablenotes}
\item [1] Number of clusters $=2G$ with $G=12, 26, 50, 100, 150, 200, 250$. Number of replications for each $G$ is $2000$. $N_{max}=500$.
\end{tablenotes}
\end{threeparttable}
\end{table}

\begin{table}[htbp]\centering
\begin{threeparttable}[b]
\def\sym#1{\ifmmode^{#1}\else\(^{#1}\)\fi}
\caption{Model 1 - Finite Population - Matching on $X_g$ and $N_g$\tnote{1}}\label{CI_length_m1}
\begin{tabular}{c*{9}{c}}
\hline\hline
    \multicolumn{1}{c}{$N_{max}/N_{min}$}&\multicolumn{1}{c}{VCE}
    &\multicolumn{1}{c}{$G=12$}&\multicolumn{1}{c}{$G=26$}&\multicolumn{1}{c}{$G=50$}
    &\multicolumn{1}{c}{$G=100$}&\multicolumn{1}{c}{$G=150$}
    &\multicolumn{1}{c}{$G=200$}&\multicolumn{1}{c}{$G=250$}\\
\hline
\multicolumn{9}{c}{} \\
\multicolumn{9}{c}{\bf Coverage} \\
\hline
\multirow{3}{*}{1.11}
& $\hat{v}^2$ & $0.9225$ & $0.8930$ & $0.9365$ & $0.9475$ & $0.9500$ & $0.9560$ & $0.9505$ \\
& CR & $1$ & $1$ & $1$ & $1$ & $1$ & $1$ & $1$ \\
& PCVE & $0.9055$ & $0.9405$ & $0.9360$ & $0.9470$ & $0.9465$ & $0.9590$ & $0.9510$ \\
\multicolumn{9}{l}{} \\
\multirow{3}{*}{1.42}
& $\hat{v}^2$ & $0.9245$ & $0.9220$ & $0.9410$ & $0.9480$ & $0.9455$ & $0.9530$ & $0.9475$ \\
& CR & $1$ & $1$ & $1$ & $1$ & $1$ & $1$ & $1$ \\
& PCVE & $0.9115$ & $0.9230$ & $0.9390$ & $0.9460$ & $0.9540$ & $0.9545$ & $0.9485$ \\
\multicolumn{9}{l}{} \\
\multirow{3}{*}{1.99}
& $\hat{v}^2$ & $0.9370$ & $0.8555$ & $0.9490$ & $0.9455$ & $0.9515$ & $0.9480$ & $0.9540$ \\
& CR & $1$ & $1$ & $1$ & $1$ & $1$ & $1$ & $1$ \\
& PCVE & $0.9225$ & $0.9290$ & $0.9505$ & $0.9465$ & $0.9490$ & $0.9495$ & $0.9555$ \\
\multicolumn{9}{l}{} \\
\multirow{3}{*}{3.31}
& $\hat{v}^2$ & $0.9070$ & $0.8475$ & $0.9515$ & $0.9610$ & $0.9625$ & $0.9665$ & $0.9545$ \\
& CR & $1$ & $1$ & $1$ & $1$ & $1$ & $1$ & $1$ \\
& PCVE & $0.9035$ & $0.9425$ & $0.9515$ & $0.9595$ & $0.9610$ & $0.9615$ & $0.9550$ \\
\multicolumn{9}{l}{} \\
\multirow{3}{*}{9.80}
& $\hat{v}^2$ & $0.9020$ & $0.8175$ & $0.9415$ & $0.9580$ & $0.9665$ & $0.9635$ & $0.9645$ \\
& CR & $1$ & $1$ & $1$ & $1$ & $1$ & $1$ & $1$ \\
& PCVE & $0.8980$ & $0.9155$ & $0.9475$ & $0.9580$ & $0.9635$ & $0.9655$ & $0.9640$ \\
\hline
\multicolumn{9}{c}{} \\
\multicolumn{9}{c}{{\bf Average Length}} \\
\hline
\multirow{3}{*}{1.11}
& $\hat{v}^2$ & $1.06353$ & $0.54293$ & $0.39449$ & $0.26347$ & $0.20698$ & $0.14455$ & $0.13742$ \\
& CR & $2.93374$ & $2.26348$ & $1.56660$ & $1.20512$ & $0.90170$ & $0.79480$ & $0.72016$ \\
& PCVE & $1.04531$ & $0.54223$ & $0.39200$ & $0.26298$ & $0.20627$ & $0.14448$ & $0.13730$ \\
\multicolumn{9}{l}{} \\
\multirow{3}{*}{1.42}
& $\hat{v}^2$ & $1.03963$ & $0.80633$ & $0.29493$ & $0.19849$ & $0.16039$ & $0.12190$ & $0.09622$ \\
& CR & $2.98061$ & $2.30678$ & $1.56787$ & $1.21943$ & $0.90334$ & $0.79804$ & $0.72736$ \\
& PCVE & $1.01824$ & $0.80046$ & $0.29340$ & $0.19762$ & $0.16023$ & $0.12191$ & $0.09628$  \\
\multicolumn{9}{l}{} \\
\multirow{3}{*}{1.99}
& $\hat{v}^2$ & $1.09840$ & $0.63621$ & $0.25458$ & $0.16747$ & $0.14000$ & $0.12914$ & $0.09993$ \\
& CR & $3.01789$ & $2.38973$ & $1.56826$ & $1.24480$ & $0.90602$ & $0.80477$ & $0.73865$ \\
& PCVE & $1.08265$ & $0.63690$ & $0.25379$ & $0.16716$ & $0.13959$ & $0.12888$ & $0.09985$ \\
\multicolumn{9}{l}{} \\
\multirow{3}{*}{3.31}
& $\hat{v}^2$ & $1.02165$ & $0.71836$ & $0.26920$ & $0.21766$ & $0.17826$ & $0.13358$ & $0.09376$ \\
& CR & $3.09474$ & $2.42593$ & $1.56316$ & $1.29498$ & $0.91953$ & $0.82124$ & $0.75591$ \\
& PCVE & $1.00793$ & $0.71943$ & $0.26743$ & $0.21631$ & $0.17711$ & $0.13257$ & $0.09323$ \\
\multicolumn{9}{l}{} \\
\multirow{3}{*}{9.80}
& $\hat{v}^2$ & $1.13033$ & $0.88192$ & $0.28810$ & $0.26255$ & $0.18366$ & $0.12748$ & $0.10254$ \\
& CR & $3.19046$ & $2.59270$ & $1.55106$ & $1.39307$ & $0.94746$ & $0.86046$ & $0.79523$ \\
& PCVE & $1.11007$ & $0.87854$ & $0.28778$ & $0.26048$ & $0.18279$ & $0.12726$ & $0.10232$ \\
\hline\hline
\end{tabular}
\label{tab:FPmodel1XN}
\begin{tablenotes}
\item [1] Number of clusters $=2G$ with $G=12, 26, 50, 100, 150, 200, 250$. Number of replications for each $G$ is $2000$. $N_{max}=500$.
\end{tablenotes}
\end{threeparttable}
\end{table} 

\begin{table}[htbp]\centering
\begin{threeparttable}[b]
\def\sym#1{\ifmmode^{#1}\else\(^{#1}\)\fi}
\caption{Model 2 - Finite Population - Matching on $X_g$\tnote{1}}\label{cov_rate_m2}
\begin{tabular}{c*{9}{c}}
\hline\hline
    \multicolumn{1}{c}{$N_{max}/N_{min}$}&\multicolumn{1}{c}{VCE}
    &\multicolumn{1}{c}{$G=12$}&\multicolumn{1}{c}{$G=26$}&\multicolumn{1}{c}{$G=50$}
    &\multicolumn{1}{c}{$G=100$}&\multicolumn{1}{c}{$G=150$}
    &\multicolumn{1}{c}{$G=200$}&\multicolumn{1}{c}{$G=250$}\\
\hline
\multicolumn{9}{c}{} \\
\multicolumn{9}{c}{\bf Coverage} \\
\hline
\multirow{3}{*}{1.11}
& $\hat{v}^2$ & $1$ & $1$ & $0.9995$ & $1$ & $1$ & $1$ & $0.9990$ \\
& CR & $1$ & $1$ & $1$ & $1$ & $1$ & $1$ & $1$ \\
& PCVE & $1$ & $1$ & $1$ & $1$ & $1$ & $1$ & $1$ \\
\multicolumn{9}{l}{} \\
\multirow{3}{*}{1.42}
& $\hat{v}^2$ & $1$ & $1$ & $0.9990$ & $1$ & $1$ & $1$ & $0.9990$ \\
& CR & $1$ & $1$ & $1$ & $1$ & $1$ & $1$ & $1$ \\
& PCVE & $1$ & $1$ & $1$ & $1$ & $1$ & $1$ & $1$ \\
\multicolumn{9}{l}{} \\
\multirow{3}{*}{1.99}
& $\hat{v}^2$ & $1$ & $1$ & $0.9995$ & $1$ & $1$ & $1$ & $0.9985$ \\
& CR & $1$ & $1$ & $1$ & $1$ & $1$ & $1$ & $0.9995$ \\
& PCVE & $1$ & $1$ & $1$ & $1$ & $1$ & $1$ & $0.9995$ \\
\multicolumn{9}{l}{} \\
\multirow{3}{*}{3.31}
& $\hat{v}^2$ & $1$ & $1$ & $0.9990$ & $1$ & $0.9990$ & $0.9985$ & $0.9970$ \\
& CR & $1$ & $1$ & $1$ & $1$ & $1$ & $1$ & $0.9995$ \\
& PCVE & $1$ & $1$ & $1$ & $1$ & $1$ & $1$ & $0.9995$ \\
\multicolumn{9}{l}{} \\
\multirow{3}{*}{9.80}
& $\hat{v}^2$ & $1$ & $1$ & $1$ & $0.9995$ & $0.9990$ & $0.9965$ & $0.9960$ \\
& CR & $1$ & $1$ & $1$ & $1$ & $1$ & $0.9995$ & $0.9985$ \\
& PCVE & $1$ & $1$ & $1$ & $1$ & $1$ & $0.9995$ & $0.9985$ \\
\hline
\multicolumn{9}{c}{} \\
\multicolumn{9}{c}{{\bf Average Length}} \\
\hline
\multirow{3}{*}{1.11}
& $\hat{v}^2$ & $1.51070$ & $1.10752$ & $0.81935$ & $0.63852$ & $0.44747$ & $0.39393$ & $0.35735$ \\
& CR & $1.66339$ & $1.31058$ & $0.92939$ & $0.76490$ & $0.52908$ & $0.47240$ & $0.42471$ \\
& PCVE & $1.67962$ & $1.31421$ & $0.93901$ & $0.76591$ & $0.53029$ & $0.47223$ & $0.42367$ \\
\multicolumn{9}{l}{} \\
\multirow{3}{*}{1.42}
& $\hat{v}^2$ & $1.53829$ & $1.15173$ & $0.81013$ & $0.64981$ & $0.45659$ & $0.39764$ & $0.36359$ \\
& CR & $1.72251$ & $1.36383$ & $0.92401$ & $0.77462$ & $0.53466$ & $0.47511$ & $0.43120$ \\
& PCVE & $1.73073$ & $1.37272$ & $0.92403$ & $0.77627$ & $0.53808$ & $0.47500$ & $0.42954$ \\
\multicolumn{9}{l}{} \\
\multirow{3}{*}{1.99}
& $\hat{v}^2$ & $1.45130$ & $1.16632$ & $0.79474$ & $0.67449$ & $0.45573$ & $0.40764$ & $0.37492$ \\
& CR & $1.69166$ & $1.40618$ & $0.92103$ & $0.79970$ & $0.53349$ & $0.48243$ & $0.44014$ \\
& PCVE & $1.64456$ & $1.39143$ & $0.90836$ & $0.80309$ & $0.53384$ & $0.48525$ & $0.43991$ \\
\multicolumn{9}{l}{} \\
\multirow{3}{*}{3.31}
& $\hat{v}^2$ & $1.51039$ & $1.23004$ & $0.82204$ & $0.71496$ & $0.47173$ & $0.42133$ & $0.38757$ \\
& CR & $1.73747$ & $1.46680$ & $0.92359$ & $0.84163$ & $0.54618$ & $0.49257$ & $0.45049$ \\
& PCVE & $1.72595$ & $1.47169$ & $0.92085$ & $0.84367$ & $0.54881$ & $0.49426$ & $0.45014$ \\
\multicolumn{9}{l}{} \\
\multirow{3}{*}{9.80}
& $\hat{v}^2$ & $1.71776$ & $1.31631$ & $0.80818$ & $0.79366$ & $0.48406$ & $0.44659$ & $0.41584$ \\
& CR & $1.86668$ & $1.60387$ & $0.90901$ & $0.92440$ & $0.55159$ & $0.51513$ & $0.47517$ \\
& PCVE & $1.93059$ & $1.59637$ & $0.89610$ & $0.92753$ & $0.54784$ & $0.51592$ & $0.47486$ \\
\hline\hline
\end{tabular}
\label{tab:FPmodel2X}
\begin{tablenotes}
\item [1] Number of clusters $=2G$ with $G=12, 26, 50, 100, 150, 200, 250$. Number of replications for each $G$ is $2000$. $N_{max}=500$.
\end{tablenotes}
\end{threeparttable}
\end{table}

\begin{table}[htbp]\centering
\begin{threeparttable}[b]
\def\sym#1{\ifmmode^{#1}\else\(^{#1}\)\fi}
\caption{Model 2 - Finite Population - Matching on $X_g$ and $N_g$\tnote{1}}\label{CI_length_m2}
\begin{tabular}{c*{9}{c}}
\hline\hline
    \multicolumn{1}{c}{$N_{max}/N_{min}$}&\multicolumn{1}{c}{VCE}
    &\multicolumn{1}{c}{$G=12$}&\multicolumn{1}{c}{$G=26$}&\multicolumn{1}{c}{$G=50$}
    &\multicolumn{1}{c}{$G=100$}&\multicolumn{1}{c}{$G=150$}
    &\multicolumn{1}{c}{$G=200$}&\multicolumn{1}{c}{$G=250$}\\
\hline
\multicolumn{9}{c}{} \\
\multicolumn{9}{c}{\bf Coverage} \\
\hline
\multirow{3}{*}{1.11}
& $\hat{v}^2$ & $1$ & $1$ & $1$ & $1$ & $1$ & $1$ & $1$ \\
& CR & $1$ & $1$ & $1$ & $1$ & $1$ & $1$ & $1$ \\
& PCVE & $1$ & $1$ & $1$ & $1$ & $1$ & $1$ & $1$ \\
\multicolumn{9}{l}{} \\
\multirow{3}{*}{1.42}
& $\hat{v}^2$ & $1$ & $1$ & $1$ & $1$ & $1$ & $1$ & $1$ \\
& CR & $1$ & $1$ & $1$ & $1$ & $1$ & $1$ & $1$ \\
& PCVE & $1$ & $1$ & $1$ & $1$ & $1$ & $1$ & $1$ \\
\multicolumn{9}{l}{} \\
\multirow{3}{*}{1.99}
& $\hat{v}^2$ & $1$ & $1$ & $1$ & $1$ & $1$ & $1$ & $1$ \\
& CR & $1$ & $1$ & $1$ & $1$ & $1$ & $1$ & $1$ \\
& PCVE & $1$ & $1$ & $1$ & $1$ & $1$ & $1$ & $1$ \\
\multicolumn{9}{l}{} \\
\multirow{3}{*}{3.31}
& $\hat{v}^2$ & $1$ & $1$ & $1$ & $1$ & $1$ & $1$ & $1$ \\
& CR & $1$ & $1$ & $1$ & $1$ & $1$ & $1$ & $1$ \\
& PCVE & $1$ & $1$ & $1$ & $1$ & $1$ & $1$ & $1$ \\
\multicolumn{9}{l}{} \\
\multirow{3}{*}{9.80}
& $\hat{v}^2$ & $1$ & $1$ & $1$ & $1$ & $1$ & $1$ & $1$ \\
& CR & $1$ & $1$ & $1$ & $1$ & $1$ & $1$ & $1$ \\
& PCVE & $1$ & $1$ & $1$ & $1$ & $1$ & $1$ & $1$ \\
\hline
\multicolumn{9}{c}{} \\
\multicolumn{9}{c}{{\bf Average Length}} \\
\hline
\multirow{3}{*}{1.11}
& $\hat{v}^2$ & $1.43001$ & $0.98768$ & $0.71225$ & $0.57552$ & $0.38898$ & $0.34712$ & $0.30917$ \\
& CR & $1.66632$ & $1.31199$ & $0.93037$ & $0.76575$ & $0.52947$ & $0.47207$ & $0.42466$ \\
& PCVE & $1.66130$ & $1.30434$ & $0.94045$ & $0.76739$ & $0.52682$ & $0.47164$ & $0.42340$ \\
\multicolumn{9}{l}{} \\
\multirow{3}{*}{1.42}
& $\hat{v}^2$ & $1.35210$ & $1.08903$ & $0.68790$ & $0.58216$ & $0.39551$ & $0.34579$ & $0.31063$ \\
& CR & $1.71907$ & $1.36641$ & $0.92554$ & $0.77532$ & $0.53521$ & $0.47431$ & $0.43152$ \\
& PCVE & $1.71252$ & $1.36754$ & $0.92406$ & $0.77703$ & $0.53714$ & $0.47354$ & $0.43086$ \\
\multicolumn{9}{l}{} \\
\multirow{3}{*}{1.99}
& $\hat{v}^2$ & $1.36855$ & $1.04579$ & $0.68163$ & $0.60169$ & $0.38793$ & $0.35447$ & $0.31701$ \\
& CR & $1.68436$ & $1.40552$ & $0.92186$ & $0.80133$ & $0.53444$ & $0.48159$ & $0.44058$ \\
& PCVE & $1.64990$ & $1.37699$ & $0.91400$ & $0.80336$ & $0.53179$ & $0.48544$ & $0.43940$ \\
\multicolumn{9}{l}{} \\
\multirow{3}{*}{3.31}
& $\hat{v}^2$ & $1.43146$ & $1.11080$ & $0.69613$ & $0.64046$ & $0.40438$ & $0.36571$ & $0.32568$ \\
& CR & $1.73042$ & $1.46136$ & $0.92487$ & $0.84401$ & $0.54673$ & $0.49209$ & $0.45137$ \\
& PCVE & $1.71754$ & $1.45545$ & $0.92046$ & $0.84452$ & $0.55122$ & $0.49459$ & $0.44999$ \\
\multicolumn{9}{l}{} \\
\multirow{3}{*}{9.80}
& $\hat{v}^2$ & $1.62023$ & $1.24723$ & $0.68231$ & $0.71972$ & $0.41039$ & $0.37921$ & $0.34731$ \\
& CR & $1.85014$ & $1.59673$ & $0.91020$ & $0.92797$ & $0.55260$ & $0.51529$ & $0.47639$ \\
& PCVE & $1.92935$ & $1.60166$ & $0.90340$ & $0.93148$ & $0.54945$ & $0.51515$ & $0.47589$ \\
\hline\hline
\end{tabular}
\label{tab:FPmodel2XN}
\begin{tablenotes}
\item [1] Number of clusters $=2G$ with $G=12, 26, 50, 100, 150, 200, 250$. Number of replications for each $G$ is $2000$. $N_{max}=500$.
\end{tablenotes}
\end{threeparttable}
\end{table} 

\subsection{Simulation Results for Different Choices of $|\mathcal{M}_g|$}\label{sec:sims-sample}
In this section, we repeat the simulation exercise from Section \ref{sec:sims-unadj} for different choices of the second stage sample size $|\mathcal{M}_g| = \lfloor \rho\cdot N_g \rfloor$ for $\rho \in \{0.5, 0.6, 0.7, 0.8, 0.9\}$. In each case, we generate samples as in Section \ref{sec:sims-unadj}, but sample a fraction $\rho\cdot N_g$ of each cluster without replacement when computing $\hat{\Delta}_G$ and $\hat{v}^2_G$. Results for $G = 50$ and $G = 250$ are presented in Tables \ref{tab:rho_50_1}--\ref{tab:rho_250_2}. In each table, the results stay roughly the same across different values of $\rho$, with the average lengths of the confidence intervals slightly decreasing when $\rho$ increases. The stability across $\rho$ is not surprising in our model given the heavy dependence across the units within the same cluster.

\begin{table}[htbp]\centering
\begin{threeparttable}[b]
\def\sym#1{\ifmmode^{#1}\else\(^{#1}\)\fi}
\caption{Model 1 - $|\mathcal{M}_g| = \rho\cdot N_g$ with $G = 50$ - Matching on $X_g$ and  $N_g$ \tnote{1}}
\begin{tabular}{c*{7}{c}}
\hline\hline
    \multicolumn{1}{c}{$N_{max}/N_{min}$}&\multicolumn{1}{c}{VCE}
    &\multicolumn{1}{c}{$\rho=0.5$}&\multicolumn{1}{c}{$\rho=0.6$}&\multicolumn{1}{c}{$\rho=0.7$}
    &\multicolumn{1}{c}{$\rho=0.8$}&\multicolumn{1}{c}{$\rho=0.9$}\\
\hline
\multicolumn{7}{c}{} \\
\multicolumn{7}{c}{\bf Coverage} \\
\hline
\multirow{3}{*}{1.11}
& $\hat{v}^2$ & $0.9435$ & $0.9315$ & $0.9335$ & $0.9335$ & $0.9405$ \\
& CR & 1 & 1 & 1 & 1 & 1 \\
& PCVE & $0.9440$ & $0.9335$ & $0.9330$ & $0.9360$ & $0.9420$ \\
\multicolumn{7}{l}{} \\
\multirow{3}{*}{1.42}
& $\hat{v}^2$ & $0.9455$ & $0.9320$ & $0.9455$ & $0.9385$ & $0.9405$ \\
& CR & 1 & 1 & 1 & 1 & 1 \\
& PCVE & $0.9485$ & $0.9325$ & $0.9465$ & $0.9365$ & $0.9405$ \\
\multicolumn{7}{l}{} \\
\multirow{3}{*}{1.99}
& $\hat{v}^2$ & $0.9345$ & $0.9350$ & $0.9450$ & $0.9400$ & $0.9380$ \\
& CR & 1 & 1 & 1 & 1 & 1 \\
& PCVE & $0.9380$ & $0.9400$ & $0.9460$ & $0.9430$ & $0.9410$ \\
\multicolumn{7}{l}{} \\
\multirow{3}{*}{3.31}
& $\hat{v}^2$ & $0.9395$ & $0.9370$ & $0.9345$ & $0.9420$ & $0.9395$ \\
& CR & 1 & 1 & 1 & 1 & 1 \\
& PCVE & $0.9400$ & $0.9405$ & $0.9380$ & $0.9480$ & $0.9380$ \\
\multicolumn{7}{l}{} \\
\multirow{3}{*}{9.80}
& $\hat{v}^2$ & $0.9425$ & $0.9410$ & $0.9495$ & $0.9385$ & $0.9270$ \\
& CR & 1 & 1 & 1 & 1 & 1 \\
& PCVE & $0.9435$ & $0.9445$ & $0.9505$ & $0.9370$ & $0.9325$ \\
\hline
\multicolumn{7}{c}{} \\
\multicolumn{7}{c}{{\bf Average Length}} \\
\hline
\multirow{3}{*}{1.11}
& $\hat{v}^2$ & $0.40207$ & $0.39959$ & $0.39833$ & $0.39692$ & $0.39552$ \\
& CR & $1.62141$ & $1.62149$ & $1.62140$ & $1.62101$ & $1.62086$ \\
& PCVE & $0.39996$ & $0.39815$ & $0.39623$ & $0.39516$ & $0.39415$ \\
\multicolumn{7}{l}{} \\
\multirow{3}{*}{1.42}
& $\hat{v}^2$ & $0.35158$ & $0.34891$ & $0.34777$ & $0.34562$ & $0.34375$ \\
& CR & $1.63392$ & $1.63325$ & $1.63252$ & $1.63225$ & $1.63232$ \\
& PCVE & $0.35029$ & $0.34731$ & $0.34535$ & $0.34384$ & $0.34229$ \\
\multicolumn{7}{l}{} \\
\multirow{3}{*}{1.99}
& $\hat{v}^2$ & $0.35889$ & $0.35386$ & $0.35342$ & $0.35057$ & $0.34797$ \\
& CR & $1.65320$ & $1.65233$ & $1.65205$ & $1.65185$ & $1.65086$ \\
& PCVE & $0.35634$ & $0.35293$ & $0.35167$ & $0.34858$ & $0.34715$ \\
\multicolumn{7}{l}{} \\
\multirow{3}{*}{3.31}
& $\hat{v}^2$ & $0.38701$ & $0.38306$ & $0.37956$ & $0.37682$ & $0.37493$ \\
& CR & $1.68841$ & $1.68715$ & $1.68610$ & $1.68575$ & $1.68542$ \\
& PCVE & $0.38437$ & $0.37985$ & $0.37746$ & $0.37493$ & $0.37263$ \\
\multicolumn{7}{l}{} \\
\multirow{3}{*}{9.80}
& $\hat{v}^2$ & $0.44908$ & $0.44416$ & $0.44082$ & $0.43789$ & $0.43528$ \\
& CR & $1.75885$ & $1.75848$ & $1.75757$ & $1.75719$ & $1.75705$ \\
& PCVE & $0.44459$ & $0.43984$ & $0.43769$ & $0.43398$ & $0.43209$ \\
\hline\hline
\end{tabular}
\label{tab:rho_50_1}
\begin{tablenotes}
\item [1] Number of clusters $=2G$ with $G=50$ throughout. Number of replications for each $\rho$ is $2000$. $N_{max}=500$.
\end{tablenotes}
\end{threeparttable}
\end{table} 

\begin{table}[htbp]\centering
\begin{threeparttable}[b]
\def\sym#1{\ifmmode^{#1}\else\(^{#1}\)\fi}
\caption{Model 2 - $|\mathcal{M}_g| = \rho\cdot N_g$ with $G = 50$ - Matching on $X_g$ and  $N_g$\tnote{1}}
\begin{tabular}{c*{7}{c}}
\hline\hline
    \multicolumn{1}{c}{$N_{max}/N_{min}$}&\multicolumn{1}{c}{VCE}
    &\multicolumn{1}{c}{$\rho=0.5$}&\multicolumn{1}{c}{$\rho=0.6$}&\multicolumn{1}{c}{$\rho=0.7$}
    &\multicolumn{1}{c}{$\rho=0.8$}&\multicolumn{1}{c}{$\rho=0.9$}\\
\hline
\multicolumn{7}{c}{} \\
\multicolumn{7}{c}{\bf Coverage} \\
\hline
\multirow{3}{*}{1.11}
& $\hat{v}^2$ & $0.9540$ & $0.9540$ & $0.9455$ & $0.9530$ & $0.9515$ \\
& CR & $0.9870$ & $0.9880$ & $0.9890$ & $0.9895$ & $0.9890$ \\
& PCVE & $0.9870$ & $0.9875$ & $0.9890$ & $0.9895$ & $0.9890$ \\
\multicolumn{7}{l}{} \\
\multirow{3}{*}{1.42}
& $\hat{v}^2$ & $0.9530$ & $0.9525$ & $0.9525$ & $0.9560$ & $0.9565$ \\
& CR & $0.9865$ & $0.9900$ & $0.9880$ & $0.9865$ & $0.9890$ \\
& PCVE & $0.9870$ & $0.9900$ & $0.9875$ & $0.9870$ & $0.9890$ \\
\multicolumn{7}{l}{} \\
\multirow{3}{*}{1.99}
& $\hat{v}^2$ & $0.9500$ & $0.9485$ & $0.9475$ & $0.9455$ & $0.9520$ \\
& CR & $0.9860$ & $0.9885$ & $0.9870$ & $0.9890$ & $0.9880$ \\
& PCVE & $0.9860$ & $0.9895$ & $0.9870$ & $0.9885$ & $0.9880$ \\
\multicolumn{7}{l}{} \\
\multirow{3}{*}{3.31}
& $\hat{v}^2$ & $0.9460$ & $0.9470$ & $0.9475$ & $0.9480$ & $0.9470$ \\
& CR & $0.9850$ & $0.9890$ & $0.9875$ & $0.9840$ & $0.9845$ \\
& PCVE & $0.9845$ & $0.9905$ & $0.9870$ & $0.9845$ & $0.9850$ \\
\multicolumn{7}{l}{} \\
\multirow{3}{*}{9.80}
& $\hat{v}^2$ & $0.9475$ & $0.9420$ & $0.9450$ & $0.9475$ & $0.9455$ \\
& CR & $0.9790$ & $0.9850$ & $0.9820$ & $0.9860$ & $0.9835$ \\
& PCVE & $0.9785$ & $0.9855$ & $0.9820$ & $0.9865$ & $0.9835$ \\
\hline
\multicolumn{7}{c}{} \\
\multicolumn{7}{c}{{\bf Average Length}} \\
\hline
\multirow{3}{*}{1.11}
& $\hat{v}^2$ & $0.73376$ & $0.73321$ & $0.73231$ & $0.73105$ & $0.72948$ \\
& CR & $0.96896$ & $0.96879$ & $0.96889$ & $0.96688$ & $0.96575$ \\
& PCVE & $0.96936$ & $0.96884$ & $0.96858$ & $0.96676$ & $0.96545$ \\
\multicolumn{7}{l}{} \\
\multirow{3}{*}{1.42}
& $\hat{v}^2$ & $0.73555$ & $0.73355$ & $0.73364$ & $0.73220$ & $0.73197$ \\
& CR & $0.97830$ & $0.97690$ & $0.97677$ & $0.97497$ & $0.97623$ \\
& PCVE & $0.97814$ & $0.97681$ & $0.97712$ & $0.97551$ & $0.97590$ \\
\multicolumn{7}{l}{} \\
\multirow{3}{*}{1.99}
& $\hat{v}^2$ & $0.74875$ & $0.74732$ & $0.74460$ & $0.74303$ & $0.74426$ \\
& CR & $0.99345$ & $0.99257$ & $0.98995$ & $0.98866$ & $0.99003$ \\
& PCVE & $0.99326$ & $0.99258$ & $0.99005$ & $0.98826$ & $0.99013$ \\
\multicolumn{7}{l}{} \\
\multirow{3}{*}{3.31}
& $\hat{v}^2$ & $0.77167$ & $0.77033$ & $0.76704$ & $0.76421$ & $0.76631$ \\
& CR & $1.01607$ & $1.01609$ & $1.01194$ & $1.00929$ & $1.01166$ \\
& PCVE & $1.01571$ & $1.01559$ & $1.01192$ & $1.00913$ & $1.01135$ \\
\multicolumn{7}{l}{} \\
\multirow{3}{*}{9.80}
& $\hat{v}^2$ & $0.81196$ & $0.81261$ & $0.81153$ & $0.80961$ & $0.80766$ \\
& CR & $1.05338$ & $1.05499$ & $1.05399$ & $1.05304$ & $1.05132$ \\
& PCVE & $1.05429$ & $1.05492$ & $1.05480$ & $1.05326$ & $1.05128$ \\
\hline\hline
\end{tabular}
\label{tab:rho_50_2}
\begin{tablenotes}
\item [1] Number of clusters $=2G$ with $G=50$ throughout. Number of replications for each $\rho$ is $2000$. $N_{max}=500$.
\end{tablenotes}
\end{threeparttable}
\end{table} 

\begin{table}[htbp]\centering
\begin{threeparttable}[b]
\def\sym#1{\ifmmode^{#1}\else\(^{#1}\)\fi}
\caption{Model 1 - $|\mathcal{M}_g| = \rho\cdot N_g$ with $G = 250$ - Matching on $X_g$ and  $N_g$\tnote{1}}
\begin{tabular}{c*{7}{c}}
\hline\hline
    \multicolumn{1}{c}{$N_{max}/N_{min}$}&\multicolumn{1}{c}{VCE}
    &\multicolumn{1}{c}{$\rho=0.5$}&\multicolumn{1}{c}{$\rho=0.6$}&\multicolumn{1}{c}{$\rho=0.7$}
    &\multicolumn{1}{c}{$\rho=0.8$}&\multicolumn{1}{c}{$\rho=0.9$}\\
\hline
\multicolumn{7}{c}{} \\
\multicolumn{7}{c}{\bf Coverage} \\
\hline
\multirow{3}{*}{1.11}
& $\hat{v}^2$ & $0.9460$ & $0.9385$ & $0.9540$ & $0.9550$ & $0.9535$ \\
& CR & 1 & 1 & 1 & 1 & 1 \\
& PCVE & $0.9460$ & $0.9395$ & $0.9530$ & $0.9540$ & $0.9510$ \\
\multicolumn{7}{l}{} \\
\multirow{3}{*}{1.42}
& $\hat{v}^2$ & $0.9505$ & $0.9455$ & $0.9570$ & $0.9425$ & $0.9555$ \\
& CR & 1 & 1 & 1 & 1 & 1 \\
& PCVE & $0.9530$ & $0.9470$ & $0.9570$ & $0.9400$ & $0.9555$ \\
\multicolumn{7}{l}{} \\
\multirow{3}{*}{1.99}
& $\hat{v}^2$ & $0.9505$ & $0.9470$ & $0.9530$ & $0.9565$ & $0.9365$ \\
& CR & 1 & 1 & 1 & 1 & 1 \\
& PCVE & $0.9495$ & $0.9500$ & $0.9555$ & $0.9575$ & $0.9370$ \\
\multicolumn{7}{l}{} \\
\multirow{3}{*}{3.31}
& $\hat{v}^2$ & $0.9410$ & $0.9475$ & $0.9400$ & $0.9450$ & $0.9455$ \\
& CR & 1 & 1 & 1 & 1 & 1 \\
& PCVE & $0.9425$ & $0.9465$ & $0.9395$ & $0.9440$ & $0.9465$ \\
\multicolumn{7}{l}{} \\
\multirow{3}{*}{9.80}
& $\hat{v}^2$ & $0.9510$ & $0.9485$ & $0.9455$ & $0.9495$ & $0.9405$ \\
& CR & 1 & 1 & 1 & 1 & 1 \\
& PCVE & $0.9470$ & $0.9480$ & $0.9500$ & $0.9510$ & $0.9430$ \\
\hline
\multicolumn{7}{c}{} \\
\multicolumn{7}{c}{{\bf Average Length}} \\
\hline
\multirow{3}{*}{1.11}
& $\hat{v}^2$ & $0.14449$ & $0.14312$ & $0.14249$ & $0.14173$ & $0.14127$ \\
& CR & $0.73103$ & $0.73070$ & $0.73057$ & $0.73044$ & $0.73034$ \\
& PCVE & $0.14444$ & $0.14309$ & $0.14229$ & $0.14165$ & $0.14116$ \\
\multicolumn{7}{l}{} \\
\multirow{3}{*}{1.42}
& $\hat{v}^2$ & $0.10899$ & $0.10714$ & $0.10574$ & $0.10481$ & $0.10393$ \\
& CR & $0.73644$ & $0.73611$ & $0.73590$ & $0.73575$ & $0.73559$ \\
& PCVE & $0.10897$ & $0.10709$ & $0.10560$ & $0.10464$ & $0.10387$ \\
\multicolumn{7}{l}{} \\
\multirow{3}{*}{1.99}
& $\hat{v}^2$ & $0.10480$ & $0.10230$ & $0.10073$ & $0.09930$ & $0.09825$ \\
& CR & $0.74537$ & $0.74501$ & $0.74487$ & $0.74471$ & $0.74447$ \\
& PCVE & $0.10477$ & $0.10234$ & $0.10059$ & $0.09919$ & $0.09814$ \\
\multicolumn{7}{l}{} \\
\multirow{3}{*}{3.31}
& $\hat{v}^2$ & $0.11023$ & $0.10740$ & $0.10511$ & $0.10385$ & $0.10256$ \\
& CR & $0.76179$ & $0.76141$ & $0.76113$ & $0.76098$ & $0.76078$ \\
& PCVE & $0.11014$ & $0.10734$ & $0.10523$ & $0.10372$ & $0.10248$ \\
\multicolumn{7}{l}{} \\
\multirow{3}{*}{9.80}
& $\hat{v}^2$ & $0.12613$ & $0.12277$ & $0.12007$ & $0.11823$ & $0.11673$ \\
& CR & $0.79667$ & $0.79620$ & $0.79573$ & $0.79560$ & $0.79545$ \\
& PCVE & $0.12599$ & $0.12262$ & $0.12008$ & $0.11836$ & $0.11675$ \\
\hline\hline
\end{tabular}
\label{tab:rho_250_1}
\begin{tablenotes}
\item [1] Number of clusters $=2G$ with $G=250$ throughout. Number of replications for each $\rho$ is $2000$. $N_{max}=500$.
\end{tablenotes}
\end{threeparttable}
\end{table} 

\begin{table}[htbp]\centering
\begin{threeparttable}[b]
\def\sym#1{\ifmmode^{#1}\else\(^{#1}\)\fi}
\caption{Model 2 - $|\mathcal{M}_g| = \rho\cdot N_g$ with $G = 250$ - Matching on $X_g$ and  $N_g$\tnote{1}}
\begin{tabular}{c*{7}{c}}
\hline\hline
    \multicolumn{1}{c}{$N_{max}/N_{min}$}&\multicolumn{1}{c}{VCE}
    &\multicolumn{1}{c}{$\rho=0.5$}&\multicolumn{1}{c}{$\rho=0.6$}&\multicolumn{1}{c}{$\rho=0.7$}
    &\multicolumn{1}{c}{$\rho=0.8$}&\multicolumn{1}{c}{$\rho=0.9$}\\
\hline
\multicolumn{7}{c}{} \\
\multicolumn{7}{c}{\bf Coverage} \\
\hline
\multirow{3}{*}{1.11}
& $\hat{v}^2$ & $0.9525$ & $0.9500$ & $0.9515$ & $0.9545$ & $0.9490$ \\
& CR & $0.9935$ & $0.9935$ & $0.9940$ & $0.9940$ & $0.9955$ \\
& PCVE & $0.9930$ & $0.9930$ & $0.9940$ & $0.9940$ & $0.9955$ \\
\multicolumn{7}{l}{} \\
\multirow{3}{*}{1.42}
& $\hat{v}^2$ & $0.9525$ & $0.9520$ & $0.9505$ & $0.9545$ & $0.9515$ \\
& CR & $0.9935$ & $0.9945$ & $0.9965$ & $0.9935$ & $0.9950$ \\
& PCVE & $0.9935$ & $0.9945$ & $0.9970$ & $0.9935$ & $0.9960$ \\
\multicolumn{7}{l}{} \\
\multirow{3}{*}{1.99}
& $\hat{v}^2$ & $0.9490$ & $0.9480$ & $0.9535$ & $0.9555$ & $0.9515$ \\
& CR & $0.9950$ & $0.9940$ & $0.9945$ & $0.9925$ & $0.9950$ \\
& PCVE & $0.9945$ & $0.9940$ & $0.9945$ & $0.9925$ & $0.9940$ \\
\multicolumn{7}{l}{} \\
\multirow{3}{*}{3.31}
& $\hat{v}^2$ & $0.9470$ & $0.9510$ & $0.9480$ & $0.9480$ & $0.9465$ \\
& CR & $0.9930$ & $0.9925$ & $0.9940$ & $0.9950$ & $0.9935$ \\
& PCVE & $0.9925$ & $0.9925$ & $0.9935$ & $0.9950$ & $0.9935$ \\
\multicolumn{7}{l}{} \\
\multirow{3}{*}{9.80}
& $\hat{v}^2$ & $0.9505$ & $0.9510$ & $0.9520$ & $0.9550$ & $0.9470$ \\
& CR & $0.9935$ & $0.9915$ & $0.9935$ & $0.9935$ & $0.9935$ \\
& PCVE & $0.9925$ & $0.9915$ & $0.9935$ & $0.9930$ & $0.9940$ \\
\hline
\multicolumn{7}{c}{} \\
\multicolumn{7}{c}{{\bf Average Length}} \\
\hline
\multirow{3}{*}{1.11}
& $\hat{v}^2$ & $0.32094$ & $0.32029$ & $0.31989$ & $0.31952$ & $0.31931$ \\
& CR & $0.43789$ & $0.43732$ & $0.43698$ & $0.43672$ & $0.43658$ \\
& PCVE & $0.43788$ & $0.43732$ & $0.43700$ & $0.43676$ & $0.43657$ \\
\multicolumn{7}{l}{} \\
\multirow{3}{*}{1.42}
& $\hat{v}^2$ & $0.32054$ & $0.32012$ & $0.31967$ & $0.31917$ & $0.31898$ \\
& CR & $0.44196$ & $0.44168$ & $0.44144$ & $0.44098$ & $0.44075$ \\
& PCVE & $0.44193$ & $0.44176$ & $0.44142$ & $0.44099$ & $0.44083$ \\
\multicolumn{7}{l}{} \\
\multirow{3}{*}{1.99}
& $\hat{v}^2$ & $0.32540$ & $0.32455$ & $0.32406$ & $0.32367$ & $0.32335$ \\
& CR & $0.44862$ & $0.44792$ & $0.44768$ & $0.44744$ & $0.44705$ \\
& PCVE & $0.44870$ & $0.44802$ & $0.44771$ & $0.44746$ & $0.44718$ \\
\multicolumn{7}{l}{} \\
\multirow{3}{*}{3.31}
& $\hat{v}^2$ & $0.33416$ & $0.33324$ & $0.33299$ & $0.33244$ & $0.33192$ \\
& CR & $0.45933$ & $0.45865$ & $0.45869$ & $0.45818$ & $0.45777$ \\
& PCVE & $0.45940$ & $0.45876$ & $0.45880$ & $0.45823$ & $0.45785$ \\
\multicolumn{7}{l}{} \\
\multirow{3}{*}{9.80}
& $\hat{v}^2$ & $0.35255$ & $0.35044$ & $0.34980$ & $0.34945$ & $0.34852$ \\
& CR & $0.48124$ & $0.47943$ & $0.47896$ & $0.47883$ & $0.47811$ \\
& PCVE & $0.48147$ & $0.47949$ & $0.47922$ & $0.47898$ & $0.47819$ \\
\hline\hline
\end{tabular}
\label{tab:rho_250_2}
\begin{tablenotes}
\item [1] Number of clusters $=2G$ with $G=250$ throughout. Number of replications for each $\rho$ is $2000$. $N_{max}=500$.
\end{tablenotes}
\end{threeparttable}
\end{table} 

\end{document}